\newif\ifprocs
\newtheorem{theorem}{Theorem}[section]
\newtheorem{lemma}[theorem]{Lemma}
\newtheorem{definition}[theorem]{Definition}
\newtheorem{corollary}[theorem]{Corollary}
\newtheorem{claim}[theorem]{Claim}
\newtheorem{observation}[theorem]{Observation}
\newtheorem{remark}[theorem]{Remark}
\definecolor{darkgreen}{rgb}{0,0.5,0}
\crefname{theorem}{Theorem}{Theorems}
\Crefname{lemma}{Lemma}{Lemmas}
\Crefname{claim}{Claim}{Claims}
\Crefname{observation}{Observation}{Observations}
\DeclareMathOperator*{\argmax}{argmax}
\DeclareMathOperator*{\argmin}{argmin}
\theoremstyle{plain}
\theoremstyle{plain}
\theoremstyle{plain}
\theoremstyle{plain}
\theoremstyle{plain}
\theoremstyle{definition}
\theoremstyle{definition}
\theoremstyle{definition}
\theoremstyle{plain}
\providecommand{\lemmaname}{Lemma}
\providecommand{\propositionname}{Proposition}
\providecommand{\theoremname}{Theorem}
\providecommand{\corollaryname}{Corollary}
\providecommand{\definitionname}{Definition}
\providecommand{\assumptionname}{Assumption}
\providecommand{\remarkname}{Remark}
\global\long\def\RR{\mathbb{R}}
\newcommand{\EX}{{\mathbb E}}
\newcommand{\set}[1]{\{#1\}}
\newcommand{\tO}{\tilde{O}}
 \gdef\xxxmark{%
   \expandafter\ifx\csname @mpargs\endcsname\relax % in minipage?
     \expandafter\ifx\csname @captype\endcsname\relax % in figure/caption?
       \marginpar{xxx}% not in a caption or minipage, can use marginpar
     \else
       xxx % notice trailing space
     \fi
   \else
     xxx % notice trailing space
   \fi}
 \gdef\xxx{\@ifnextchar[\xxx@lab\xxx@nolab}
 \long\gdef\xxx@lab[#1]#2{{\bf [\xxxmark #2 ---{\sc #1}]}}
 \long\gdef\xxx@nolab#1{{\bf [\xxxmark #1]}}
\newenvironment{proofof}[1]{\noindent{\bf Proof of #1:}}{$\qed$\par}
\title{Towards Tight Bounds for Spectral Sparsification of Hypergraphs}
\author{Michael Kapralov}
\affiliation{%
  \institution{\'Ecole Polytechnique F\'ed\'erale de Lausanne}
  \city{}
  \country{}
}
\email{michael.kapralov@epfl.ch}
\author{Robert Krauthgamer}
\affiliation{%
  \institution{Weizmann Institute of Science}
  \city{}
  \country{}
}
\email{robert.krauthgamer@weizmann.ac.il}
\author{Jakab Tardos}
\affiliation{%
  \institution{\'Ecole Polytechnique F\'ed\'erale de Lausanne}
  \city{}
  \country{}
}
\email{jakab.tardos@epfl.ch}
\author{Yuichi Yoshida}
\affiliation{%
  \institution{National Institute of Informatics}
  \city{}
  \country{}
}
\email{yyoshida@nii.ac.jp}
\author{
  Michael Kapralov\thanks{Supported in part by ERC Starting Grant 759471.}\\
  \'Ecole Polytechnique F\'ed\'erale de Lausanne\\
  \texttt{michael.kapralov@epfl.ch}
  \and
  Robert Krauthgamer%
  \thanks{Work partially supported by ONR Award N00014-18-1-2364,
  the Israel Science Foundation grant \#1086/18,
  and a Minerva Foundation grant.
  }\\
  Weizmann Institute of Science\\
  \texttt{robert.krauthgamer@weizmann.ac.il}
  \and
  Jakab Tardos\thanks{Supported by ERC Starting Grant 759471.}\\
  \'Ecole Polytechnique F\'ed\'erale de Lausanne\\
  \texttt{jakab.tardos@epfl.ch}
  \\
  \and
  Yuichi Yoshida\\
  National Institute of Informatics\\
  \texttt{yyoshida@nii.ac.jp}
}
\begin{document}

\ifprocs\else
\maketitle
\fi

\begin{abstract}
Cut and spectral sparsification of graphs have numerous applications,
including e.g.~speeding up algorithms for cuts and Laplacian solvers.
These powerful notions have recently been extended to hypergraphs,
which are much richer and may offer new applications. However, the current bounds on the size of hypergraph sparsifiers are not as tight as the corresponding bounds for graphs.

Our first result is a polynomial-time algorithm that,
given a hypergraph on $n$ vertices with maximum hyperedge size $r$,
outputs an  $\epsilon$-spectral sparsifier
with $O^*(nr)$ hyperedges, where $O^*$ suppresses $(\epsilon^{-1} \log n)^{O(1)}$ factors.
This size bound improves the two previous bounds:
$O^*(n^3)$ [Soma and Yoshida, SODA'19]
and $O^*(nr^3)$ [Bansal, Svensson and Trevisan, FOCS'19]. Our main technical tool is a new method for proving concentration of the nonlinear analogue of the quadratic form of the Laplacians for hypergraph expanders.

We complement this with lower bounds on the bit complexity of any compression scheme that $(1+\epsilon)$-approximates all the cuts in a given hypergraph,
and hence also on the bit complexity of every $\epsilon$-cut/spectral sparsifier.
These lower bounds are based on Ruzsa-Szemer\'edi graphs,
and a particular instantiation yields an $\Omega(nr)$ lower bound on the bit complexity even for fixed constant $\epsilon$. In the case of hypergraph cut sparsifiers, this is tight up to polylogarithmic factors in $n$, due to recent result of [Chen, Khanna and Nagda, FOCS'20]. For spectral sparsifiers it narrows the gap to $O^*(r)$.

Finally, for directed hypergraphs,
we present an algorithm that computes an $\epsilon$-spectral sparsifier
with $O^*(n^2r^3)$ hyperarcs,
where $r$ is the maximum size of a hyperarc.
For small $r$, this improves over $O^*(n^3)$
known from [Soma and Yoshida, SODA'19],
and is getting close to the trivial lower bound of $\Omega(n^2)$ hyperarcs.
\end{abstract}

\ifprocs
\begin{CCSXML}
<ccs2012>
   <concept>
       <concept_id>10003752.10003809</concept_id>
       <concept_desc>Theory of computation~Design and analysis of algorithms</concept_desc>
       <concept_significance>500</concept_significance>
       </concept>
   <concept>
       <concept_id>10003752.10003809.10003635.10010036</concept_id>
       <concept_desc>Theory of computation~Sparsification and spanners</concept_desc>
       <concept_significance>500</concept_significance>
       </concept>
   <concept>
       <concept_id>10003752.10003809.10010055.10010057</concept_id>
       <concept_desc>Theory of computation~Sketching and sampling</concept_desc>
       <concept_significance>500</concept_significance>
       </concept>
 </ccs2012>
\end{CCSXML}
\ccsdesc[500]{Theory of computation~Sparsification and spanners}
\ccsdesc[500]{Theory of computation~Design and analysis of algorithms}
\ccsdesc[500]{Theory of computation~Sketching and sampling}
%%
%% Keywords. The author(s) should pick words that accurately describe
%% the work being presented. Separate the keywords with commas.
\keywords{hypergraphs, edge sparsification, spectral sparsification}
\fi %\ifprocs

\ifprocs
\maketitle
\fi

\ifprocs\else
\thispagestyle{empty}
\newpage
\thispagestyle{empty}
\tableofcontents
\setcounter{page}{0}
\newpage
\fi

%!TEX root=./000-main.tex

\section{Introduction}

Sparsification is an algorithmic paradigm
where a dense object is replaced by a sparse one with similar features,
which often leads to significant improvements in efficiency of algorithms,
including running time, space complexity, and communication.
We study edge-sparsification of hypergraphs,
which replaces a hypergraph $G=(V,E,w)$ with a sparse hypergraph $\widetilde G$
that has the same vertex set $V$ and only a few hyperedges,
often a reweighted subset of $E$.
This is a natural extension of edge-sparsification of ordinary graphs,
which includes key concepts such as
cut sparsifiers,  
spectral sparsifiers, 
and flow sparsifiers. 
These were studied extensively from numerous angles, including various constructions, tight size bounds, related variants, and practical applications.
As this literature is too vast to cover here,
we quickly recap the basics for graphs before discussing hypergraphs, which are our focus here.

\paragraph{Graphs.}

Let $G=(V,E,w)$ be an edge-weighted graph, where $w \in \mathbb{R}_+^E$.
The \emph{energy} of a vector $x \in \mathbb{R}^V$ in $G$ is defined as
\[
  Q_G(x) = \sum_{uv \in E}w_{uv}{(x_u-x_v)}^2,
\]
and can also be written as $x^\top L_G x$,
where $L_G$ is the Laplacian matrix of $G$.
Spielman and Teng~\cite{Spielman2011} introduced the notion of
an \emph{$\epsilon$-spectral sparsifier} of $G$,
which is a graph $\widetilde{G}=(V,\widetilde{E},\widetilde{w})$
that satisfies (for $0\le \epsilon \leq 1/2$)
\begin{align}  \label{eq:spectral-sparsification}
  \forall x \in \mathbb{R}^V,
  \qquad
  Q_{\widetilde{G}}(x) \in (1\pm \epsilon) Q_G(x) .
\end{align}
 The \emph{size} of a spectral sparsifier $\widetilde{G}$ is $|\widetilde{E}|$.

We say that an edge $e \in E$ is \emph{cut} by $S \subseteq V$ if one endpoint of $e$ belongs to $S$ and another one belongs to $V\setminus S$.
The total weight of edges cut by $S$ is clearly $Q_G(1_S)$, where $1_S \in \mathbb{R}^V$ denotes the characteristic vector of a set $S \subseteq V$.

A spectral sparsifier $\widetilde{G}$ of a graph $G$ preserves many important properties of $G$: its cuts have approximately the same weight as those in $G$;
its Laplacian $L_{\widetilde G}$ approximates every eigenvalue of $L_G$;
electrical flows in $\widetilde{G}$ approximate those in $G$.
It is extremely useful to have a spectral sparsifier with a small number of edges
because algorithms that involve these quantities
can be applied on the sparsifier $\widetilde{G}$ instead of on $G$,
with only a small loss in accuracy.

A spectral sparsifier of size $O(n/\epsilon^2)$ can be computed
in almost linear time~\cite{Lee2018},
where $n$ is the number of vertices in $G$.

\paragraph{Hypergraphs.}
A hypergraph is a natural extension of a graph, which can represent relations between three or more entities, and has proved useful to solve problems in practical areas such as computer vision~\cite{Huang2009,Ochs2012}, bioinformatics~\cite{Klamt2009}, and information retrieval~\cite{Gibson2000}.
Many of those problems, such as semi-supervised learning~\cite{Hein2013,Yadati2019,Zhang2020} and link prediction~\cite{Yadati2020}, involve the notion of energy for hypergraphs, where the \emph{energy} of a vector $x \in \mathbb{R}^V$ in an edge-weighted hypergraph $G=(V,E,w)$ is defined as
\begin{align} \label{eq:hypergraph-energy}
  Q_G(x) = \sum_{e \in E}w_e \max_{u,v \in e}{(x_u - x_v)}^2.
\end{align}
This definition matches the one for graphs when every hyperedge in $G$ is of size two.
As before, $Q_G(1_S)$ gives the total weight of hyperedges cut by $S$,
where we regard a hyperedge $e \in E$ as \emph{cut} if $e \cap S \neq \emptyset$ and $e \cap (V\setminus S) \neq \emptyset$.

Spectral sparsification of hypergraphs was first defined
by Soma and Yoshida~\cite{Soma2019}, as follows.
Similarly to graphs, an \emph{$\epsilon$-spectral sparsifier} of $G$
is a hypergraph $G=(V,\widetilde{E},\widetilde{w})$
that satisfies~\eqref{eq:spectral-sparsification}. This is a strictly stronger notion than that of the hypergraph cut sparsifier which has been previously studied in~\cite{Newman2013} and~\cite{Kogan2015}.

Besides the applications mentioned above, spectral sparsifiers for hypergraphs
were used to show agnostic learnability of a certain subclass of submodular functions~\cite{Soma2019}.

Soma and Yoshida~\cite{Soma2019} showed that every hypergraph $G$
admits an $\epsilon$-spectral sparsifier
with $\widetilde{O}(n^3/\epsilon^2)$ hyperedges,%
\footnote{Throughout, we write $\widetilde{O}(\cdot)$ to suppress a factor of $\log^{O(1)} n$.}
which is non-trivial because a general hypergraph can have $2^n-1$ (non-empty) hyperedges.
Moreover, they provide an algorithm recovering this sparsifier, that runs in close to linear time
(in the input size).
Later, Bansal, Svensson and Trevisan~\cite{Bansal2019}
showed that every hypergraph $G$ admits a spectral sparsifier with $\widetilde{O}(nr^3/\epsilon^2)$ hyperedges,
where $r$ is the maximum size of a hyperedge in $G$.
Note that this bound is incomparable to~\cite{Soma2019} because $r$ could be as large as $n$.

\subsection{Results}\label{sec:results}
\paragraph{Spectral sparsification of undirected hypergraphs.}
Our first contribution is an algorithm that constructs an $\epsilon$-spectral sparsifier of a hypergraph with only $\widetilde{O}(nr/\epsilon^{O(1)})$ hyperedges,
which improves upon the previous constructions mentioned above.
(See Table~\ref{tab:results} for known bounds for hypergraph sparsification.)

\begin{theorem}\label{thm:general-sparsification}
Given an $r$-uniform hypergraph $G=(V,E,w)$ and $1/n\le\epsilon\le1/2$,
one can compute in polynomial time 
with probability $1-o(1)$
an $\epsilon$-spectral sparsifier of $G$ with $nr(\epsilon^{-1}\log n)^{O(1)}$ hyperedges. The running time is $O(mr^2) + n^{O(1)}$,
where $m=|E|$.
\end{theorem}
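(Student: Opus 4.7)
The plan is to combine a hypergraph expander decomposition with a new concentration inequality for the nonlinear energy $Q_G$ restricted to expanders. First I would apply a hypergraph analogue of expander decomposition to partition the hyperedges of $G$ into $O(\log n)$ groups, each of which is either (a) a $\phi$-expander sub-hypergraph on some $V_i\subseteq V$ for some $\phi=1/\polylog(n)$, or (b) a set of at most $\widetilde O(n)$ ``leftover'' hyperedges that are placed into $\widetilde G$ unchanged. Since the total leftover cost is $\widetilde O(n)\le \widetilde O(nr)$, it suffices to construct, for each expander component $H$ on $n_H$ vertices, a spectral sparsifier of $H$ with $\widetilde O(n_H r/\epsilon^{O(1)})$ hyperedges.

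For a single expander component $H$ the construction samples each hyperedge $e$ independently with probability $p_e$ set so that $\sum_e p_e = \widetilde O(n_H r/\epsilon^{O(1)})$, and reweights the sampled hyperedges by $1/p_e$. The expander property ensures the natural importance scores (analogues of effective resistance) are balanced up to $\polylog(n)$ factors, so essentially uniform sampling at this rate is unbiased and has the correct total size. The task then reduces to the uniform concentration statement
\[
\forall x\in\RR^{V_H},\qquad Q_{\widetilde H}(x)=(1\pm\epsilon)\,Q_H(x).
\]

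The main obstacle is the nonlinearity $\max_{u,v\in e}(x_u-x_v)^2$, which rules out direct application of matrix Chernoff and makes a naive union bound over all $\binom{r}{2}^{m_H}$ argmax-pair assignments prohibitively expensive. I would attack this via a chaining argument: normalize $Q_H(x)=1$, use the expander's spectral gap to upper bound $\|x-\bar x\,\mathbf{1}\|_2$, and cover the admissible set of $x$ by a hierarchy of $\epsilon$-nets at geometrically decreasing scales. For a fixed $x$ in the net, one identifies the argmax pair $(u_e^\star,v_e^\star)$ for each hyperedge and applies Bernstein's inequality to the resulting sum of bounded independent scalars; the Lipschitz increment between neighbouring scales is controlled by observing that moving $x$ slightly can change the argmax only on a small subset of hyperedges, whose aggregate contribution to $Q_H$ is itself small by the expander property.

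The hard part will be executing this per-scale estimate without losing an extra factor of $r$ in the variance --- this is precisely where previous arguments paid the $r^3$ overhead, by implicitly treating each hyperedge as an $r$-clique. I expect to avoid it by upper-bounding $\max_{u,v\in e}(x_u-x_v)^2$ by a convex combination over a carefully chosen subset of pairs in $e$ that is tight up to a constant on expanders, so that the variance in the per-scale matrix-Bernstein estimate scales as $O(r)$ per hyperedge rather than $O(r^2)$. Summing the per-expander sparsifier guarantees over the $O(\log n)$ groups and union-bounding the failure probabilities yields the claimed $\widetilde O(nr/\epsilon^{O(1)})$ size; the running time $O(mr^2)+n^{O(1)}$ follows because the expander decomposition and sampling are near-linear in the input size $O(mr)$, with the polynomial term absorbing the within-expander algorithmic work.
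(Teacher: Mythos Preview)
Your proposal has two genuine gaps relative to the paper's argument.

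\textbf{Recursion depth.} You assert that expander decomposition yields $O(\log n)$ groups. In an $r$-uniform hypergraph $m$ can be as large as $n^{\Theta(r)}$, so the standard ``halve the remaining edges and recurse'' scheme needs $\Theta(\log m)=\Theta(r\log n)$ levels, not $O(\log n)$. Since each vertex may appear in one expander per level, your sparsifier size picks up an extra factor of $r$, giving $\widetilde O(nr^2)$ rather than $\widetilde O(nr)$. The paper's fix is nontrivial: after sparsifying a cluster it \emph{contracts} that cluster to a single supernode (with a $10\log n$-level delay), so that the number of distinct (super)vertices any original vertex can belong to over the whole recursion is $O(n\log n)$ regardless of $r$. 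Proving that a sparsifier computed on the contracted graph remains a valid sparsifier on the original vertex set is the main work of their Section~6 and is absent from your plan.

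\textbf{Concentration on expanders.} Your chaining sketch relies on two unsubstantiated steps: that a small perturbation of $x$ changes the argmax pair on only a small set of hyperedges, and that $\max_{u,v\in e}(x_u-x_v)^2$ can be replaced by a convex combination over a sparse set of pairs with variance $O(r)$ per hyperedge. Neither is obvious, and the paper does neither. Instead it partitions hyperedges into $O(\log n)$ levels $E_i$ by the quantity $\max_{v\in e}x_v^2\cdot\min_{v\in e}d(v)$, and for each level rounds $x$ to a vector $x^{(i)}$ that zeroes out all vertices with $x_v^2 d(v)$ below a level-dependent threshold while finely rounding the rest. The point is that at level $i$ only $O(2^i/\epsilon^2)$ coordinates survive, so the net has size $\exp(\widetilde O(2^i))$, while the additive-multiplicative Chernoff failure probability at that level is $\exp(-\widetilde\Omega(\lambda 2^i r\Phi^2))$. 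Balancing these forces $\lambda\approx 1/(r\Phi^2)$ and, since a good hypergraph expander only has $\Phi=\Theta(1/r)$, this is where the single factor of $r$ enters. Your ``essentially uniform sampling'' and Bernstein-at-fixed-argmax approach does not exhibit this structure, and without it there is no reason the union bound closes.

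Finally, the claimed running time $O(mr^2)+n^{O(1)}$ is not because the expander decomposition is near-linear; the paper first applies the $\widetilde O(n^3)$-size sparsifier of Soma--Yoshida (which runs in $O(mr^2)+n^{O(1)}$) to reduce to a polynomial-size instance, and only then runs its own polynomial-time algorithm.
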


To simplify notation, our entire technical analysis considers
a hypergraph $G=(V,E)$ that is unweighted (i.e., unit weight hyperedges),
reserving the letter $w$ for the edge weights in the sparsifier.

This is actually without loss of generality,
see
\ifprocs
the full version of the paper.
\else
Section~\ref{subsec:weighted}.
\fi

We stress that Theorem~\ref{thm:general-sparsification} in fact applies
to hypergraphs with maximum size of a hyperedge at most $r$.
Indeed, in our analysis every hyperedge is a multiset of vertices,
and therefore a hyperedge with less than $r$ vertices can be trivially extended
to a multiset of exactly $r$ vertices by copying an arbitrary vertex,
without changing the energy (but it might affect vertex degrees).

\begin{table}[t]
  \centering
  \caption{Bounds on the size of hypergraph sparsifiers }
  \begin{tabular}[t]{lll}
    \toprule
    cut sparsification
    & spectral sparsification
    & reference
    \\
    \midrule
    $\widetilde{O}(n^2/\epsilon^2)$
    & & \cite{Newman2013} implicitly
    \\
    $\widetilde{O}(nr/\epsilon^2)$
    & & \cite{Kogan2015}
    \\
    & $\widetilde{O}(n^3/\epsilon^2)$
    & \cite{Soma2019}
    \\
    & $\widetilde{O}(nr^3/\epsilon^2)$
    & \cite{Bansal2019}
    \\
    $\widetilde{O}(n/\epsilon^2)$
    & & \cite{Chen20}
    \\
    & $\widetilde{O}(nr/\epsilon^{O(1)})$
    & Theorem~\ref{thm:general-sparsification}\\
    \bottomrule
    \end{tabular}
    \label{tab:results}
\end{table}

\paragraph{Bit-complexity lower bound.}

To complement Theorem~\ref{thm:general-sparsification}, we consider lower bounds on the bit complexity of sparsifiers.
Here, we consider \emph{$\epsilon$-cut sparsifiers},
which require that~\eqref{eq:spectral-sparsification} holds
only for vectors of the form $x=1_S$.
This notion actually predates spectral sparsification and was first defined
by Bencz{\'u}r and Karger~\cite{BK15} for graphs,
and by Kogan and Krauthgamer~\cite{Kogan2015} for hypergraphs.
Obviously, lower bounds for cut sparsifiers directly imply
the same lower bounds also for spectral sparsifiers.

The second contribution of this work is a surprising connection between \emph{a Ruzsa-Szemer\'{e}di (RS) graph}~\cite{ruzsa1978triple}, which is a well-studied notion in extremal graph theory, and a lower bound on the bit complexity of a hypergraph cut sparsifier.
Here, an (ordinary) graph is called a \emph{$(t,a)$-RS graph} if its edge set is the union of $t$ induced matchings of size $a$.
Then, we show the following.

\begin{theorem}\label{thm:lower-bound}
Suppose that there exists a $(t,a)$-Ruzsa-Szemer\'{e}di graph on $n$ vertices with $a \geq 6000\sqrt{n\log n}$.
Assume also one can compress unweighted $(t+1)$-uniform hypergraphs $G=(V,E)$ on $2n$ vertices into $k$ bits,
from which $Q_G(1_S)$ can be approximated for every $S \subseteq V$
within factor $1\pm \epsilon$, where $\epsilon = O(a/n)$.
Then, $k = \Omega(at)$.
\end{theorem}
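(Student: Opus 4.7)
The plan is a standard encoding argument. I will exhibit a family $\mathcal{F}$ of unweighted $(t{+}1)$-uniform hypergraphs on $2n$ vertices with $|\mathcal{F}|=2^{\Omega(at)}$ whose members are pairwise distinguishable under any $(1\pm\epsilon)$-approximate cut oracle with $\epsilon=O(a/n)$. Since such an oracle is derived from a $k$-bit sketch, this forces $k\ge\log_2|\mathcal{F}|=\Omega(at)$, yielding the theorem.

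\textbf{Construction of $\mathcal{F}$.} Let $G_{\mathrm{RS}}=(U,E_{\mathrm{RS}})$ be the assumed $(t,a)$-RS graph with $|U|=n$ and induced matchings $M_1,\dots,M_t$. Take a disjoint fresh vertex set $W=\{w_1,\dots,w_n\}$, set $V=U\cup W$, and fix a common ``padding'' $W_0\subseteq W$ of size $t-1$. For each vector $\vec{S}=(S_1,\dots,S_t)$ with $S_i\subseteq M_i$, let $F_{\vec{S}}=\bigcup_i S_i$ and define $H_{\vec{S}}$ to be the $(t{+}1)$-uniform hypergraph on $V$ whose hyperedges are the sets $\{a,b\}\cup W_0$ for each $(a,b)\in F_{\vec{S}}$; thus $\mathcal{F}=\{H_{\vec{S}}\}$ has size $2^{at}$. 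The key feature of the construction is that for any cut $T\subseteq U$ (in particular disjoint from $W$), the common tag $W_0\subseteq V\setminus T$ is nonempty, so each hyperedge $\{a,b\}\cup W_0$ is monochromatic iff $\{a,b\}\cap T=\emptyset$; hence $Q_{H_{\vec{S}}}(1_T)$ equals the number of edges of $F_{\vec{S}}$ incident to $T$.

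\textbf{Extracting each bit via a three-cut identity.} For any edge $e^*=(a^*,b^*)\in E_{\mathrm{RS}}$, consider the cuts $T_1=\{a^*\}$, $T_2=\{b^*\}$, $T_3=\{a^*,b^*\}$. By the previous observation, $Q_{H_{\vec{S}}}(1_{T_1})=d_{F_{\vec{S}}}(a^*)$, $Q_{H_{\vec{S}}}(1_{T_2})=d_{F_{\vec{S}}}(b^*)$, and $Q_{H_{\vec{S}}}(1_{T_3})=d_{F_{\vec{S}}}(a^*)+d_{F_{\vec{S}}}(b^*)-\mathbbm{1}[e^*\in F_{\vec{S}}]$, since $e^*$ is the unique $G_{\mathrm{RS}}$-edge containing both $a^*$ and $b^*$. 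Inclusion--exclusion then yields
\begin{align*}
Q_{H_{\vec{S}}}(1_{T_1})+Q_{H_{\vec{S}}}(1_{T_2})-Q_{H_{\vec{S}}}(1_{T_3}) \;=\; \mathbbm{1}[e^*\in F_{\vec{S}}].
\end{align*}
A $(1\pm\epsilon)$-approximate oracle recovers this bit up to additive error $\epsilon\sum_j Q_{H_{\vec{S}}}(1_{T_j})=O(\epsilon t)=O(at/n)$, since $d_{F_{\vec{S}}}(\cdot)\le t$ (at most one edge per matching at each vertex). In the RS parameter regime enabled by $a\ge 6000\sqrt{n\log n}$ (which via standard Ruzsa--Szemer\'edi upper bounds forces $t\le O(n/a)$, hence $at=O(n)$), this additive error is a small constant; choosing the hidden constant in $\epsilon=O(a/n)$ suitably, rounding recovers the bit exactly. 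Applying this for every edge $e^*$ decodes $\vec{S}$, so distinct members of $\mathcal{F}$ cannot share a sketch, and $k\ge at$.

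\textbf{Main obstacle.} The delicate point is the quantitative interplay between the RS parameters and $\epsilon$: the three-cut identity decodes bitwise only when $t\le O(n/a)=O(1/\epsilon)$, a condition that the calibration $a\ge 6000\sqrt{n\log n}$ is designed to guarantee via the Ruzsa--Szemer\'edi bound. If the RS parameters were to push $at$ substantially beyond $n$, the per-bit error $O(at/n)$ would dwarf the unit signal; one would then have to replace the per-edge decoding by an aggregate argument, restricting $\mathcal{F}$ to a subfamily whose indicator vectors form a binary code of Hamming distance $\Omega(at)$ and size $2^{\Omega(at)}$, and using the induced-matching property of each $M_{i^*}$ together with an analogous identity on the cuts $(A,\,V(M_{i^*})\setminus A,\,V(M_{i^*}))$ to extract $|F_{\vec{S}}\cap M_{i^*}|$ with $o(a)$ additive error. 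Carrying out this coding-theoretic refinement so that the calibration $a\ge 6000\sqrt{n\log n}$ still suffices to keep the aggregate noise below the signal is where I expect the main technical effort to concentrate.
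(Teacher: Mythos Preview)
Your proposal contains a fatal gap. The assertion that ``$a\ge 6000\sqrt{n\log n}$ forces $t\le O(n/a)$ via standard Ruzsa--Szemer\'edi upper bounds'' is simply false: the Fischer et al.\ construction cited in the paper gives $(t,a)$-RS graphs with $a=n/3-o(n)$ and $t=n^{\Omega(1/\log\log n)}$, so $at=n^{1+\Omega(1/\log\log n)}\gg n$. This is precisely the regime that yields the interesting $\Omega(nr)$ lower bound, and in it your per-edge additive error $O(\epsilon t)=O(at/n)$ swamps the unit signal, so the three-cut identity recovers nothing.

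The problem is not just quantitative. Your hypergraph $H_{\vec{S}}$ does not use the induced-matching structure at all: each hyperedge is a graph edge $(a,b)$ padded by the \emph{same} set $W_0$. For any cut $T$, the value $Q_{H_{\vec{S}}}(1_T)$ is determined by $|F_{\vec{S}}|$, by which part of $W_0$ lies in $T$, and by the quantity ``number of edges of $F_{\vec{S}}$ touching $T\cap U$,'' which equals $\tfrac12\sum_{v\in T\cap U}d_{F_{\vec{S}}}(v)+\tfrac12|E_{F_{\vec{S}}}(T\cap U,U\setminus T)|$. Hence storing $W_0$, the degree sequence of $F_{\vec{S}}$, and an ordinary $(1\pm\epsilon)$-cut sparsifier of the graph $F_{\vec{S}}$ already yields a $(1\pm\epsilon)$-HCSS for $H_{\vec{S}}$ using $\tilde O(n/\epsilon^2+t)$ bits. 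With $\epsilon=\Theta(a/n)$ and the Fischer et al.\ parameters this is $\tilde O(n+t)\ll at$, so your family $\mathcal{F}$ is \emph{not} pairwise distinguishable by $(1\pm\epsilon)$-cut oracles, and no encoding argument on it can yield $k=\Omega(at)$.

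The paper's construction is genuinely different: it makes the RS graph bipartite on $P\cup Q$, encodes a string $s$ as a subgraph $G_s$, and then forms one hyperedge \emph{per vertex} $u\in P$, namely $\{u\}\cup\Gamma_s(u)$. The induced-matching property is now essential: for a cut that takes $A\subseteq P_j$ on the $P$-side and $Q\setminus Q_j$ on the $Q$-side, a hyperedge rooted in $A$ crosses iff the corresponding $M_j$-edge survives in $G_s$ (because all other neighbours of $u$ lie outside $Q_j$). The remaining ``noise'' hyperedges are handled not by exact cancellation but probabilistically over a random $s$, via Chernoff bounds, and the whole thing is reduced to a string-compression lower bound. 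This is where both hypotheses $a\ge 6000\sqrt{n\log n}$ and $\epsilon\le a/(60n)$ are actually used; the first controls the Chernoff error and the second the sparsifier error, and neither forces $at=O(n)$.
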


For example, by instantiating Theorem~\ref{thm:lower-bound} with the  $(n^{\Omega(1/\log \log n)},\allowbreak n/3-o(n))$-Ruzsa-Szemer\'{e}di graphs known due to Fischer~et~al. \cite{Fischer2002}, we deduce that $\Omega(nr)$ bits
are necessary to encode all the cut values
of an arbitrary $r$-uniform hypergraph with $r = n^{O(1/\log \log n)}$,
even within a fixed constant ratio $1+\epsilon$.

This lower bound is in fact near-tight.
Indeed, Chen, Khanna, and Nagda~\cite{Chen20} showed very recently
that every hypergraph $G$ admits an $\epsilon$-cut sparsifier with $O(n\log n/\epsilon^2)$ hyperedges, which are actually sampled from $G$.
Applying this construction with fixed $\epsilon$ and $r = n^{O(1/\log \log n)}$
yields a sparsifier of $G$ with $O(n\log n)$ hyperedges;
encoding a hyperedge (including its weight, which is bounded by $n^r$)
takes at most $O(r\log n)$ bits,
and thus one can encode all the cuts of $G$ using $O(nr\log^2 n)$ bits.
It follows that our lower bound is optimal up to a lower order factor $O(\log^2 n)$. Instantiating our lower bound with the original construction of Ruzsa and Szemer\'{e}di~\cite{ruzsa1978triple}, we can rule out the possibility of compressing the cut structure of a hypergraph with $n$ vertices and maximum hyperedge size $r$ with significantly less than $nr$ space, and a polynomial scaling in the error (that is with $nr^{1-\Omega(1)}\varepsilon^{-O(1)}$ space), {\em for any $r$}. 
\ifprocs
See the full version of the paper for more details.
\else
See~Corollaries~\ref{corollary1},~\ref{corollary2} and~\ref{corollary3} in Section~\ref{sec:lower-bound} for more details.
\fi

In fact, our space lower bound for hypergraphs far exceeds
the $O(n\log n/\epsilon^2)$ bits that suffices to approximately represent
all the cuts of an (ordinary) graph by simply storing a cut sparsifier.
We thus obtain the first provable separation between the bit complexity
of approximating all the cuts of a graph vs.~of a hypergraph.

\paragraph{Spectral sparsification of directed hypergraphs.}
We also consider spectral sparsification of directed hypergraphs.
Here, a hyperarc $e$ consists of two disjoint sets,
called the \emph{head} $h(e) \subseteq V$
and the \emph{tail} $t(e) \subseteq V$,
and the \emph{size} of the hyperarc is $|t(e)|+|h(e)|$.
A directed hypergraph $G=(V,E)$ then consists of a vertex set $V$
and a hyperarc set $E$.
For an edge-weighted directed hypergraph $G=(V,E,w)$ and a vector $x \in \mathbb{R}^V$, the \emph{energy} of $x$ in $G$ is defined as
\begin{align}
  Q_G(x) = \sum_{e \in E}w_e \max_{u \in t(e),v \in h(e)}{(x_u-x_v)}_+^2, \label{eq:directed-hypergraph-energy}
\end{align}
where $(a)_+ = \max\{a,0\}$.
Again, it is defined so that $Q_G(1_S)$ is the total weight of hyperarcs that are cut by $S$, where a hyperarc $e$ is \emph{cut} if $t(e) \cap S \neq \emptyset$ and $h(e) \cap (V \setminus S) \neq \emptyset$.

It is not difficult to see that a spectral sparsifier might require
(in the worst-case) at least $\Omega(n^2)$ hyperarcs,
even for an ordinary directed graph.
Indeed, consider a balanced bipartite clique directed from one side of the bipartition towards the other.
Here, every arc is the unique arc crossing some particular directed cut, and hence a sparsifier must keep all the $\Omega(n^2)$ arcs
(see also~\cite{IT18, CPS20}).
However, Soma and Yoshida~\cite{Soma2019} showed that every directed hypergraph
admits an $\epsilon$-spectral sparsifier with $\widetilde{O}(n^3/\epsilon^2)$ hyperarcs.
We tighten this gap by showing that $\widetilde O(n^2/\epsilon^2)$ hyperarcs are sufficient when every hyperarc is of constant size.

\begin{theorem}\label{thm:directed-hypergraph-sparsification}
Given a directed hypergraph $G=(V,E)$ with maximum hyperarc size at most $r$ such that $11r\le\sqrt{\epsilon n}$, and a value $\epsilon\le1/2$,
one can compute in polynomial time with probability $1-o(1)$
an $\epsilon$-spectral sparsifier of $G$ with $O(n^2r^3\log^2 n/\epsilon^2)$ hyperarcs.
\end{theorem}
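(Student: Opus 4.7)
The plan is to use importance sampling of hyperarcs together with a scalar concentration argument over test vectors $x\in\RR^V$, adapting the effective-resistance framework for spectral sparsification to the non-symmetric, non-linear energy in~\eqref{eq:directed-hypergraph-energy}. First I would construct an undirected surrogate multi-graph $H$ on $V$: each directed hyperarc $e=(t(e),h(e))$ of weight $w_e$ is replaced by $|t(e)|\cdot|h(e)|\le r^2$ parallel undirected edges, one per pair $(u,v)\in t(e)\times h(e)$, with appropriately normalized weight. A short calculation yields a sandwich $Q_H(x) \le Q_G(x) \le r^2 Q_H(x)$, so $H$ captures the directed energy up to a factor $r^2$. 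I then compute effective resistances $R^H_{u,v}$ in $H$ and define per-hyperarc importances $\tau_e := w_e\cdot\max_{u\in t(e),\,v\in h(e)} R^H_{u,v}$.

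Next I would sample each hyperarc independently with probability $p_e := \min\{1,\, C\,\tau_e\,r\,n\,\log^2 n/\epsilon^2\}$ and reweight survivors by $w_e/p_e$. A Foster-style identity on $H$ together with $|t(e)|\cdot|h(e)|\le r^2$ gives $\sum_e\tau_e = O(nr^2)$, hence expected sparsifier size $O(n^2 r^3 \log^2 n/\epsilon^2)$. For concentration at a fixed $x\in\RR^V$, I would apply scalar Bernstein to
\[
\widetilde Q_G(x) \;=\; \sum_{e\in E} (X_e/p_e)\,w_e\,f_e(x), \qquad f_e(x):=\max_{u\in t(e),\,v\in h(e)}(x_u-x_v)_+^2,
\]
bounding both the variance $\sum_e (w_e^2/p_e)\,f_e(x)^2$ and the per-hyperarc maximum $w_e f_e(x)/p_e$ in terms of $\tau_e$ via the sandwich above. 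This yields $(1\pm\epsilon)$-concentration at a single $x$ with failure probability polynomially small in $n$; lifting to all $x$ simultaneously uses an $\epsilon/n^{O(1)}$-net on the unit ball in a norm making $Q_G$ Lipschitz, and the resulting $\exp(O(n\log(n/\epsilon)))$-sized net is absorbed by the Bernstein tail, accounting for the $\log^2 n$ factor in the sample count.

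The main obstacle is the non-linearity of $f_e(x)=\max_{u,v}(x_u-x_v)_+^2$: since this is not a quadratic form in $x$, standard matrix Bernstein for PSD matrices does not apply and one must argue at each $x$ separately, paying for a dense $\epsilon$-net. The hypothesis $11r\le\sqrt{\epsilon n}$ is precisely what is needed so that the $r^2$-factor sandwich between $Q_G$ and $Q_H$ is compatible with $(1\pm\epsilon)$-accuracy after sampling: the approximation slack at each coordinate must dominate the loss from $H$ for effective resistances to remain faithful certificates of importance. Shrinking either the $r^3$ overhead or the net-induced $\log$ factor would require a chaining argument tailored to the max, which is beyond the scope of this cruder approach and explains the gap between the stated bound and the trivial $\Omega(n^2)$ lower bound.
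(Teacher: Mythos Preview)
Your proposal has a fundamental error in the sandwich inequality. You claim $Q_H(x)\le Q_G(x)\le r^2 Q_H(x)$ for the undirected surrogate $H$, but the lower bound is false. The directed energy uses $(x_u-x_v)_+^2$, which vanishes whenever $x_u\le x_v$, while the undirected $(x_u-x_v)^2$ does not. Concretely, for a single hyperarc $e$ with $t(e)=\{u\}$, $h(e)=\{v\}$ and $x_u=0,\ x_v=1$, we have $Q_G(x)=0$ but $Q_H(x)=1$. More generally, for any directed hypergraph one can take $x$ to respect a topological-style ordering and make $Q_G(x)=0$ while $Q_H(x)$ is large. Without the lower sandwich, effective resistances in $H$ are not certificates of importance for the directed energy, so your bound on the per-hyperarc maximum $w_e f_e(x)/p_e$ and on the Bernstein variance in terms of $\tau_e$ both collapse. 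This is not a technicality: the directed energy is genuinely degenerate (it can vanish on a cone of non-constant vectors), and no undirected Laplacian controls it from below.

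The same degeneracy also breaks your net argument. To union-bound over an $\exp(\Theta(n\log n))$-sized net you need the Bernstein tail at each fixed $x$ to be $\exp(-\Omega(n\log n))$, which in turn needs a uniform lower bound on $Q_G(x)$ over the normalized ball. There is no such bound for directed hypergraphs. The paper addresses exactly this by normalizing to $Q^C(x)=1$ (the energy of the clique multigraph $C(G)$, which \emph{is} coercive) and observing $Q_G(x)\ge Q^C(x)/r^2$; but this only works because the subsequent discretization is of the arc-energy vector $Q_x^C\in\RR^{C(E)}$, not of $x\in\RR^V$. The paper says explicitly that rounding $x$ itself does not seem to suffice here. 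Its actual sampling probabilities are also unrelated to effective resistances: they are inverse to a combinatorial quantity called the \emph{overlap} $k(e)$ (the largest $k$ for which $e$ lies in a hyperarc set whose clique graph has every arc with multiplicity $\ge k$), with the key bound $\sum_e 1/k(e)\le n^2$ replacing Foster's identity. The hypergraph is then peeled into $O(r\log n)$ layers of nearly equal overlap, each sparsified by uniform sampling with a level-wise Chernoff plus union bound over discretized energy vectors. Your approach does not recover any of these ingredients.
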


We note that Theorem~\ref{thm:directed-hypergraph-sparsification} is stated under the assumption $11 r\leq \sqrt{\epsilon n}$, which is useful for our analysis for technical reasons. For larger values of $r$ the result of~\cite{Soma2019} gives a better bound on the number of hyperedges in the sparsifier, and therefore this assumption is not restrictive.

\subsection{Related Work}\label{sec:related}

The first construction of cut sparsifiers for hypergraphs was given by Kogan and Krauthgamer~\cite{Kogan2015}
and uses $O(n(r+\log n)/\epsilon^2)$ hyperedges.
They also mention that an upper bound of $O(n^2 \log n/\epsilon^2)$ hyperedges
follows implicitly from the results of Newman and Rabinovich~\cite{Newman2013}.
Very recently (and independent of our work),
Chen, Khanna, and Nagda~\cite{Chen20} improved this bound
to $O(n\log n/\epsilon^2)$ hyperedges, which is near-optimal
because the current lower bound is $\Omega(n/\epsilon^2)$ edges,
and actually holds for (ordinary) graphs~\cite{ACKQWZ16,CKST19}.

Louis~\cite{Louis2015} (later merged with Chan~et~al.~\cite{Chan2018}) initiated the spectral theory for hypergraphs, in which the Laplacian operator $L:\mathbb{R}^V \to \mathbb{R}^V$ of a hypergraph is defined so that its ``quadratic form'' $x L(x)$ coincides with the energy~\eqref{eq:hypergraph-energy}.
As opposed to the graph case, here the Laplacian operator is merely piecewise linear, and hence computing its eigenvalues/vectors is hard.
He showed that $O(\log r)$-approximation is possible,
and that obtaining a better approximation ratio is NP-hard assuming the Small-Set Expansion (SSE) hypothesis~\cite{Raghavendra_2010}.
He further showed a Cheeger inequality for hypergraphs, which implies that, given a vector $x \in \mathbb{R}^V$ with a small energy, one can efficiently find a set $S \subseteq V$ of small expansion, which roughly means that the number of hyperedges cut by $S$ is small relative to the number of hyperedges incident to vertices in $S$ (see Section~\ref{sec:pre} for details).
Since then, several other algorithms for finding sets of small expansion have been proposed~\cite{Takai2020,ikeda2018finding}.

Yoshida~\cite{Yoshida2016} proposed another piecewise linear Laplacian for directed graphs and used it to study structures of real-world networks.
Generalizing the Laplacians for hypergraphs and directed graphs, Laplacian $L$ for directed hypergraphs was proposed~\cite{Li2018,Yoshida2019}, whose quadratic form $x^\top L(x)$ coincides with~\eqref{eq:directed-hypergraph-energy}.

\subsection{Discussion}
An obvious open question is the existence of a spectral sparsifier with $\widetilde{O}(n)$ hyperedges.
As we will see in Section~\ref{sec:technical-overview}, our overall strategy to construct a spectral sparsifier is decomposing the input hypergraph into good expanders (in a non-trivial way) and then sparsifying each expander. Here a \emph{good expander} is a hypergraph with the maximum possible expansion up to a constant factor (see Section~\ref{subsec:hypergraph-and-expansion} for the details).
However, we do not even know whether we can spectrally sparsify hypergraph expanders with $\widetilde{O}(n)$ hyperedges.
To see the difficulty, note that a graph expander has expansion $\Theta(1)$ whereas an $r$-uniform hypergraph expander has expansion $\Theta(1/r)$.
Let $x \in \mathbb{R}^V$ be a vector with $\sum_{v \in V} x_v^2d(v) = 1$, where $d(v)$ is the degree of a vertex $v \in V$.
Then by the Cheeger inequality for hypergraphs (Theorem~\ref{thm:hypergraph-cheeger}), the energy of $x$ in a graph expander is $\Omega(1)$ whereas that in an $r$-uniform hypergraph expander is merely $\Omega(1/r)$.
Hence preserving the latter energy is seemingly a harder problem.

%%% Local Variables:
%%% mode: latex
%%% TeX-master: "000-main"
%%% End:
%!TEX root=./000-main.tex

\section{Preliminaries}\label{sec:pre}
In the paper, we will often need to deal with additive or multiplicative errors of various approximations. For simplicity of notation we use $\widetilde A = A\pm\delta$ to denote $A-\delta \le\widetilde A\le A+\delta$, and we use $\widetilde A=(1\pm\epsilon)A$ to denote $(1-\epsilon)A\le\widetilde A\le(1+\epsilon)A$.

\subsection{Hypergraph and Expansion}\label{subsec:hypergraph-and-expansion}

A hypergraph $G=(V,E)$ on a vertex set $V$ is usually defined so that $E$ is a set of hyperedges, each of which is an arbitrary (non-empty) subset of $V$ (as opposed to ordinary graphs, where it is a subset of size two). In a slight departure from the norm, we allow the hyperedges in $E$ to be multisets instead.
That is, a hyperedge may contain certain vertices multiple times. This may be thought of as a generalization of the use of self-loops in ordinary graphs, which can be considered as multisets containing a single vertex with multiplicity two --- and thus having size two.
This slight change in the definition allows us to consider $r$-uniform hypergraphs throughout the paper without loss of generality, which makes the analysis
\ifprocs
\else
in Section~\ref{sec:general-sparsification}
\fi
 much simpler. We call a hypergraph \emph{$r$-uniform} if all of its hyperedges have size $r$.

Let us denote the multiplicity of a vertex $v\in V$ in $e\in E$ by $\mu_e(v)$.
Then the \emph{size} of $e$ is $\sum_{v\in V}\mu_e(v)$ (as is normal for multisets).
The \emph{degree} of a vertex $v$ is $d(v)=\sum_{e\in E}\mu_e(v)$.

Furthermore, we also allow hyperedges in $E$ to appear with multiplicity, i.e., parallel edges. This means that $E$ itself is a multiset. We call a hypergraph that has neither multiset edges nor multiple instances of the same edge a \textit{simple hypergraph}.

For a hypergraph $G=(V,E)$ and a set $S \subseteq V$, let $E(S) \subseteq E$ be the multiset of hyperedges $e \in E$ such that every vertex in $e$ belongs to $S$.
Then, let $G[S] = (S, E(S))$ denote the subgraph of $G$ induced by $S$.

Let $G=(V,E)$ be a hypergraph and $S \subseteq V$ be a vertex set.
The \emph{volume} of $S$, denoted by $\mathrm{vol}(S)$, is $\sum_{v \in S}d(v)$.
We say that a hyperedge $e \in E$ is \emph{cut} by $S$ if $e\cap S\neq\emptyset$ and $e\cap(V \setminus S)\neq\emptyset$.
In this context, we often call a pair $(S,V \setminus S)$ a \emph{cut}.
Let $E(S,V \setminus S)$ denote the set of hyperedges cut by $S$.
Then, the \emph{expansion} of $S$ (or a cut $(S,V \setminus S)$) is
$$\Phi(S)=\frac{|E(S,V \setminus S)|}{\min\left\{\mathrm{vol}(S),\mathrm{vol}(V \setminus S)\right\}}.$$
The expansion of a hypergraph $G=(V,E)$ is defined to be $\Phi(G) :=  \min_{S\subseteq V} \Phi(S)$.
For $\Phi \geq 0$, we say that $G$ is a \emph{$\Phi$-expander} if $\Phi(G) \geq \Phi$.

\subsection{Spectral Hypergraph Theory}\label{sec:prelim-spectral}
We briefly review spectral theory for hypergraphs.
See, e.g.,~\cite{Chan2018,Yoshida2019} for more details.

\begin{definition}
Let $G=(V,E)$ be a hypergraph and $x \in \mathbb{R}^V$ be a vector.
The \emph{energy} of a hyperedge $e\in E$ with respect to $x$ is defined as
$Q_x(e)=\max_{a,b\in e}{(x_a-x_b)}^2,$
 and the energy of a subset of hyperedges $E'\subseteq E$ is $Q_x(E')=\sum_{e\in E'}Q_x(e)$, respectively.
Finally, the \emph{entire energy} of $x$ is defined as the energy of all hyperedges combined, that is, $Q(x)=Q_x(E)$.
If the underlying hypergraph $G$ is unclear from context, we specify by writing $Q_G(x)$.
\end{definition}

\begin{definition}\label{def:spectral-sparsifier}
	Let $G=(V,E)$ be a hypergraph and $\epsilon > 0$.
	$\widetilde{G}=(V,\widetilde{E},w)$ is a weighted subgraph of $G$ if $w$ is a vector in $\mathbb R_+^E$, mapping each hyperedge $e\in E$ to a non-negative value, and $\widetilde E$ denotes $\{e\in E \mid w_e>0\}$. Such a weighted subgraph is called an $\epsilon$-spectral sparsifier if for any vector $x\in\mathbb R^V$, $\widetilde Q(x)=(1\pm\epsilon)\cdot Q(x)$, where $\widetilde Q$ denotes energy with respect to the graph $\widetilde G$, that is
	$$\widetilde Q(x)=\sum_{e\in\widetilde E}w_e\cdot Q_x(e).$$
	The size of such a sparsifier is $|\widetilde E|$.
\end{definition}

Given a hypergraph $G=(V,E)$ and a vector $x \in \mathbb{R}^V$, we can define an ordinary graph $G_x=(V,E_x)$ so that the energy of $x$ on $G$ and that on $G_x$ are equal.
Specifically, we define $E_x$	as the multiset
$$
E_x
= \left\{\Big(\argmax_{a\in e}x_a,\argmin_{b\in e}x_b\Big)\;\middle|\; e\in E\right\},
$$
where ties are broken arbitrarily.

The following Cheeger's inequality is a cornerstone of spectral hypergraph theory. Although a similar theorem has been proven in~\cite[Theorem 6.1]{Chan2018}, we include the proof in Appendix~\ref{app:A} for completeness because we do not know whether their proof goes through when we allow for multiset hyperedges.
\begin{theorem}[Hypergraph Cheeger's inequality]\label{thm:hypergraph-cheeger}
	Let $G=(V,E)$ be an $r$-uniform hypergraph with expansion at least $\Phi\le2/r$.
	Then for any vector $x \in \mathbb R^V$ with $\sum_{v\in V}x_vd(v)=0$, we have
	$$Q(x)\ge\frac{r\Phi^2}{32}\sum_{v\in V}x_v^2d(v).$$
\end{theorem}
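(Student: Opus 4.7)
I would adapt the standard Cheeger sweep for graphs, exploiting the $r$-uniform identity $|E|=\mathrm{vol}(V)/r$ to recover the factor~$r$.

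\emph{Reduction to the non-negative case.} Given $x$ with $\sum_v x_v d(v)=0$, pick a degree-weighted median $t$ of $x$ so that both $\{v:x_v>t\}$ and $\{v:x_v<t\}$ have volume at most $\mathrm{vol}(V)/2$, and set $y^+_v=(x_v-t)_+$, $y^-_v=(t-x_v)_+$. A hyperedge-by-hyperedge check gives $Q(y^+)+Q(y^-)\le Q(x)$ (the only non-trivial case is a hyperedge straddling~$t$, where one uses $(\max y^+ + \max y^-)^2\ge (\max y^+)^2+(\max y^-)^2$); separately, $\sum_v(x_v-t)^2 d(v)=\sum_v x_v^2 d(v)+t^2\mathrm{vol}(V)\ge\sum_v x_v^2 d(v)$ by the orthogonality condition. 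Averaging, the better of $y^+,y^-$ has Rayleigh ratio at most that of~$x$, so it suffices to prove the bound for a non-negative $y$ with $\mathrm{vol}(\mathrm{supp}(y))\le\mathrm{vol}(V)/2$.

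\emph{Sweep with $r$-uniform Cauchy-Schwarz.} For such $y$, consider the level sets $S_t=\{v:y_v^2>t\}$, which satisfy $\mathrm{vol}(S_t)\le\mathrm{vol}(V)/2$. The expansion hypothesis gives $|E(S_t,V\setminus S_t)|\ge\Phi\cdot\mathrm{vol}(S_t)$; integrating over $t\ge 0$ and applying Cauchy-Schwarz yields
\[
\Phi\sum_v y_v^2 d(v)\le\sum_{e\in E}\Bigl(\max_{v\in e}y_v^2-\min_{v\in e}y_v^2\Bigr)\le\sqrt{Q(y)}\cdot\sqrt{\sum_{e}\bigl(\max_{v\in e}y_v+\min_{v\in e}y_v\bigr)^2}.
\]
In the \emph{spread-out} regime $\max_v y_v^2\le 8\sum_v y_v^2 d(v)/\mathrm{vol}(V)$, the $r$-uniform identity $|E|=\mathrm{vol}(V)/r$ gives $\sum_e(\max_{v\in e}y_v+\min_{v\in e}y_v)^2\le 4|E|\max_v y_v^2\le(32/r)\sum_v y_v^2 d(v)$, which substituted back yields $Q(y)\ge(r\Phi^2/32)\sum_v y_v^2 d(v)$ directly. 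In the complementary \emph{peaked} regime, most of the weighted $\ell_2$-mass sits near a single vertex $v^\star$; applying expansion to the level set $\{v:y_v\ge y_{v^\star}/2\}$ (which has volume less than $\mathrm{vol}(V)/2$ by Markov) together with the observation that many hyperedges incident to $v^\star$ contribute energy $\Omega(y_{v^\star}^2)$ gives $Q(y)/\sum_v y_v^2 d(v)=\Omega(1)\ge r\Phi^2/32$ under the hypothesis $\Phi\le 2/r$.

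\emph{Main obstacle.} The delicate point is the peaked case: the threshold separating the two regimes and the absolute constants must be matched so that the direct energy bound in the peaked case precisely covers the gap left by the sweep, with the stated factor $1/32$. Modulo this case analysis, the remaining steps are routine, and the reduction above lifts the bound from $y$ back to the claimed bound for~$x$.
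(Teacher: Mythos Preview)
Your reduction to a nonnegative vector $y$ with small support and your spread-out case are both correct and pleasant; they give the desired bound cleanly via $|E|=\mathrm{vol}(V)/r$. The gap is entirely in the peaked case. The claim $Q(y)/\sum_v y_v^2 d(v)=\Omega(1)$ is false: take $y=\mathbf 1_{v^\star}$ with $d(v^\star)<\mathrm{vol}(V)/8$ (so you are in the peaked regime), and let $v^\star$ occur with multiplicity close to $r$ in each incident hyperedge. Then $N=d(v^\star)$, while $Q(y)$ equals the number of non-self-loop hyperedges through $v^\star$, which is only $\Theta(d(v^\star)/r)$; hence $Q(y)/N=\Theta(1/r)$. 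More fundamentally, your suggested argument---apply expansion to $S=\{v:y_v\ge y_{v^\star}/2\}$ and note ``many hyperedges incident to $v^\star$ contribute energy $\Omega(y_{v^\star}^2)$''---does not go through: expansion only guarantees hyperedges crossing $S$, not hyperedges through $v^\star$, and a hyperedge crossing $S$ can have $\max_e y$ and $\min_e y$ both arbitrarily close to $y_{v^\star}/2$, hence arbitrarily small energy. One might hope to fall back on the standard hypergraph Cheeger bound $Q(y)\ge(\Phi^2/4)N$ in the peaked case, but $\Phi^2/4\ge r\Phi^2/32$ only when $r\le 8$, so this does not close the gap either.

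The paper takes a different route that avoids the case split entirely. It passes to the ordinary graph $G_x$ obtained by replacing each hyperedge with the single edge joining its $x$-argmax and $x$-argmin, so that $Q_{G_x}(x)=Q_G(x)$. A short chain of inequalities (using $\sum_{v\in e}\tilde x_v^2\le r\max_{v\in e}\tilde x_v^2$) turns the contrapositive hypothesis into a Rayleigh-quotient bound on $G_x$; ordinary graph Cheeger then produces a sweep cut $S$ of expansion below $r\Phi/4$ in $G_x$. Because $S$ is a sweep cut with respect to $x$, the hyperedges of $G$ crossing $S$ are exactly the edges of $G_x$ crossing $S$, while each hyperedge fully inside $S$ contributes $r$ to $\mathrm{vol}_G(S)$ but only $2$ to $\mathrm{vol}_{G_x}(S)$; this volume amplification converts the $G_x$-expansion bound into expansion below $\Phi$ in $G$, a contradiction. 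If you want to salvage your direct-sweep approach, you would need a genuinely new argument for the peaked regime---the paper's $G_x$ trick is one clean way to get the factor $r$ uniformly.
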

\begin{remark}
	In fact, for simple hypergraphs the requirement $\Phi\le2/r$ is unnecessary and the statements holds in full generality. In our setting, this requirement is crucial, as non-simple $r$-uniform hypergraphs may have expansion $\omega(1/r)$, in which case the statement clearly does not hold.
\end{remark}

%%% Local Variables:
%%% mode: latex
%%% TeX-master: "000-main"
%%% End:

%!TEX root=./000-main.tex

\section{Technical Overview}\label{sec:technical-overview}

In this section we briefly outline the techniques used in the proofs of our main results.

\subsection{Spectral Sparsification of Expanders}

We begin by constructing spectral sparsifiers for ``good'' hypergraph expanders, where we call a hypergraph a good expander if it has expansion at least $\widetilde\Omega(1/r)$.
Even in this restricted case, no result better than $\widetilde O(nr^3/\epsilon^2)$~\cite{Bansal2019} was known previously.
Our plan will then be to partition general input hypergraphs into a series of good expanders.
The expansion $\widetilde \Omega(1/r)$ is in some sense the best we can hope for. In fact, $r$-unifrom simple hypergraphs cannot have an expansion better than $\Theta(1/r)$ and consequently no expander decomposition algorithm can guarantee expansion more than that.

To construct our spectral sparsifier for a good expander, we apply importance sampling to the input hypergraph. We sample each hyperedge $e$ independently with some probability $p_e$ and scale it up with weight $1/p_e$ if sampled. This guarantees that $\mathbb E(\widetilde G)=G$ and so for any vector $x\in\mathbb R^V$ we have $\mathbb E\widetilde Q(x)=Q(x)$, where $\widetilde Q$ denotes the energy with respect to the sparsifier.
In our case, $p_e$ is inversely proportional to $\min_{v\in e}d(v)$,
and then the expected number of sampled hyperedges is proportional to $n$
--- simply charge each hyperedge $e$ to a vertex $v\in e$ of minimum degree,
then each vertex is in charge of $O(1)$ sampled hyperedges in expectation.
It remains to prove that the random quantity $Q(x)$ concentrates well around its expectation \emph{for all vectors $x$ simultaneously}.

So far this is a known technique: similar approaches to constructing spectral sparsifiers in ordinary graphs have appeared in many works, starting from~\cite{SpielmanT11,Spielman2011a}. However, all of these rely on concentration inequalities for linear functions of independent random variables related to the matrix Bernstein inequality -- see, e.g.,~\cite{Tropp2015}\footnote{More precisely, the proof of the necessary concentration properties in~\cite{SpielmanT11} heavily relies on linearity of the graph Laplacian (specifically, the proof proceeded by bounding the trace of a high power of a corresponding matrix using combinatorial methods), and the analysis of ~\cite{Spielman2011a} relies on a concentration inequality for linear functions of independent random variables due to Rudelson and Vershynin~\cite{RudelsonV07}. Both of these proofs can also be reproduced using the matrix Bernstein inequality.}. Unfortunately, the energy of a hypergraph is not a linear transformation and such tools cannot be applied to it. Two recent works on spectral sparsification of hypergraphs developed methods for circumventing this problem, namely~\cite{Soma2019} and~\cite{Bansal2019}. The former uses a rather crude union bound plus Chernoff bound argument, and loses a factor of $n$ in the size of the sparsifier, both for undirected and directed hypergraphs. The latter, namely the recent work of~\cite{Bansal2019} uses Talagrand's comparison inequality and generic chaining to compare the hypergraph sampling process to effective resistance sampling of~\cite{Spielman2011a}, and loses a factor of $r^3$ in the size of the sparsifier. In this work we derive a simultaneous concentration inequality for $\widetilde Q(x)$ for all $x\in {\mathbb R}^V$ from more basic principles, and obtain a sparsifier with $\approx n r$ hyperedges as a result -- a bound that is seemingly best that can be obtained through the expander decomposition route.

Note that for a single, fixed vector $x\in\mathbb R^V$, the concentration inequality $\widetilde Q(x)=(1\pm\epsilon)Q(x)$ holds with high probability by the Chernoff bound (Theorem~\ref{thm:chernoff}). Our broad strategy will be to prove concentration over individual choices of $x$, and combine these results through a union bound.
An obvious issue is that $x$ is a continuous variable, making a direct union bound infeasible. We therefore have to discretize it, rounding each $x$ to some $\widetilde x$ from a finite net. Our plan then becomes to prove the chain of approximations
$$Q(x)\cong Q(\widetilde x)\cong\widetilde Q(\widetilde x)\cong\widetilde Q(x),$$
where the second approximation ($Q(\widetilde x)\cong\widetilde Q(\widetilde x)$) utilizes the idea above of a Chernoff bound for each $\widetilde x$ plus a union bound over the net.

This turns out to be too simplistic, and the analysis requires a more technical discretization of $x$. Recall that the energy of the whole hypergraph can be written as a sum of the energies of the individual hyperedges:
$$Q(x)=\sum_{e\in E}Q_x(e).$$
We categorize hyperedges based on a carefully chosen metric $\max_{v\in e}x_v^2\cdot\min_{v\in e}d(v)$, which we will call the hyperedge's \emph{power}. If a hyperedge's power is approximately $2^{-i}$, then it resides in the $i^\text{th}$ category $E_i$ (see Section~\ref{sec:expander-sparsification-correctness}). We have in total a logarithmic number of categories. This categorization is important, because the power of a hyperedge turns out to be closely related to the strength of the Chernoff bound applicable to it, as well as to the required accuracy of the approximation $\widetilde x$. That is, some cruder approximation $\widetilde x$ may be sufficient to guarantee $Q_x(E_1)\cong Q_{\widetilde x}(E_1)$, but it might not be able to guarantee the same for a later category.
Conversely, the Chernoff bound is stronger (i.e., the failure probability is smaller) at larger values of $i$.
Thus, for each $i$ we discretize $x$ into a different vector $x^{(i)}$
(rather than the same $\tilde x$) and we prove individually for each $i$ that
$$Q_x(E_i)\cong Q_{x^{(i)}}(E_i)\cong\widetilde Q_{x^{(i)}}(E_i)\cong\widetilde Q_x(E_i).$$

Here, ``$\cong$'' necessarily covers both multiplicative \emph{and additive} errors. Indeed, we have no guarantee on the sizes of these categories. Some $E_i$ could contain only a single hyperedge, in which case a simple Chernoff bound would yield no concentration whatsoever. This is where we utilize the additive-multiplicative version (Theorem~\ref{thm:am-Chernoff}).
Since we have $\Theta(\log n)$ categories to sum over,
we naturally allow additive error $\Theta(\epsilon Q(x)/\log n)$.

Note that $x^{(i)}$ is a discretization of $x$ specialized to preserve the energies of hyperedges in $E_i$. Intuitively, the energy of such a hyperedge $e$ is dictated by the largest value of $x_v^2$ within it. This value necessarily belongs to a vertex satisfying $x_v^2d(v)\gtrapprox2^{-i}$. Thus, it should be enough for our rounding to preserve the $x$-values of vertices that satisfy this. To this end, we round the $x$-values of vertices with $x_v^2d(v)\gtrapprox2^{-i}$ \emph{carefully} --- by an inverse polynomial amount in $n$. However, we round the $x$-values of all other vertices to $0$ ---
which is obviously a crude (non-careful) rounding.
Thus, if there are only $k_i$ vertices we have to be careful about, the number of possible settings of $x^{(i)}$ becomes $\approx\exp(k_i)$.

Recall the formula of the additive-multiplicative Chernoff bound from Theorem~\ref{thm:am-Chernoff}. In our case, the allowable multiplicative error is always $\approx 1+\epsilon$, while the allowable additive error is always $\approx\epsilon Q(x)$. The only quantity that varies from level to level is the range of the random variables involved. If a specific hyperedge is sampled, it is scaled up by $1/p_e\approx\min_{v\in e}d(v)$, and the energy of this weighted hyperedge can be upper bounded by $\approx\max_{v\in e}x_v^2\cdot\min_{v\in e}d(v)$ --- exactly the power of the hyperedge. Thus, at level $i$, the additive-multiplicative Chernoff bound guarantees a failure probability of $\approx\exp(-2^i Q(x))$. (Here we omit the $\epsilon$ terms, along with others, for simplicity.)

Finally, we want to equate the terms in the exponents of the Chernoff bound with the enumeration of $x^{(i)}$'s, so as to bound the total failure probability. We use hypergraph Cheeger (Theorem~\ref{thm:hypergraph-cheeger}) to relate $k_i$ to $Q(x)$. Suppose that $x$ is normalized in the sense that $\sum_{v\in V}x_v^2d(v)=1$. This immediately gives that $k_i\le2^i$ by definition. On the other hand, we can finally use our assumption that the input hypergraph $G$ was a good expander, since hypergraph Cheeger gives us that $Q(x)\gtrapprox1/r$. This makes the error probability for individual $x^{(i)}$'s $\approx\exp(-2^i/r)$ (from Chernoff bounds), while the enumeration of all $x^{(i)}$ becomes $\approx\exp(2^i)$. To bridge this gap, we must sacrifice a factor $r$ in the sampling ratio $p_e$, and correspondingly in the size of the output sparsifier (see proof of Claim~\ref{claim:2}).

The formal proof is far more involved, and can be found in Section~\ref{sec:expander-sparsification}.

\subsection{General Spectral Sparsification of Hypergraphs}\label{sec:tech-overview-general}

Having constructed spectral sparsifiers for good expanders, we move our attention to arbitrary input hypergraphs. We decompose the vertex set of the input hypergraph $G=(V,E)$ into clusters of good expansion, while being careful not to cut too many hyperedges between the clusters.
We adapt well-known techniques to the setting of hypergraphs,
and is detailed for completeness in
\ifprocs
the full version of the paper.
\else
 Section~\ref{sec:expander-decomposition}.
 \fi
As is common for expander decompositions,  we partition $V$ into clusters $C_1,\ldots,C_k$ such that the internal expansion of each cluster (along with its induced hyperedges) is at least $\widetilde\Omega(1/r)$ while cutting only a constant fraction of the hyperedges between the clusters.

In ordinary graphs, this would immediately yield the desired result: We could simply decompose $G$ into expanders and sparsify these, then repeat this process on the discarded hyperedges. Since the number of hyperedges decreases by a constant factor at each level, this process terminates after $O(\log n)$ levels of expander decomposition; each vertex only participates in $O(\log n)$ expanders, and thus the size bound of the overall sparsifier only suffers a logarithmic factor compared to the sparsifiers of expanders. For hypergraphs, this is not the case. Even simple, $r$-uniform hypergraphs may have up to nearly $n^r$ hyperedges. This means that such a decomposition process could require $r\log n$ levels to terminate, introducing another factor $r$ in the size of the sparsifier.

To combat this problem, we contract clusters into individual supernodes after sparsifying them
\ifprocs
(see the full version of the paper).
\else
(see Algorithm~\ref{alg:main}).
\fi
This allows us to simply bound the number of clusters a single vertex can participate in, and consequently the size of the output sparsifier. However, proving the correctness of this more complicated algorithm introduces new challenges.

We denote the contracted version of the input hypergraph $G$  by $G/\approx$, where $u\approx v$ if the two vertices $u$ and $v$ have been contracted into the same supernode.
We can equate between the hyperedges of $G$ and those of $G/\approx$
using the natural bijection between them (this means that a hyperedge $e$ in $G$
refers also to the corresponding hyperedge in $G/\approx$, and vice versa).
Note that this operation can produce multiple parallel hyperdges, as well as \emph{vertices appearing within the same hyperedge with multiplicity}, even if these phenomena were not allowed in the input hypergraph. It is important to note that our expander sparsification algorithm from Section~\ref{sec:expander-sparsification} works equally well in this setting. Furthermore, by allowing hyperedges to contain vetices with multiplicity higher than $1$, we may continue to work with $r$-uniform hypergraphs throughout this process of repeatedly contracting vertices. This technicality is crucial, since our expander decomposition algorithm is designed for this setting, and does not work when hyperedges have different sizes (by more than a constant factor).

The main technical contribution of
\ifprocs
this section
\else
Section~\ref{sec:general-sparsification}
\fi
is to show that a sparsifier computed \emph{after contraction} still sufficiently approximates the energy of the input hypergraph \emph{before contraction}. Here we take a simplified example: Suppose we wish sparsify a cluster $C\subset V$ and subsequently contract it into a supernode $v_C$. At a later level we might wish to sparsify some other cluster $C'$ that contains $v_C$ as one of its vertices (see Figure~\ref{fig:contraction}). The result is a (weighted) subset of hyperedges that well-approximates the spectral structure of $C'$, but will this still be the case when we un-contract $v_C$?

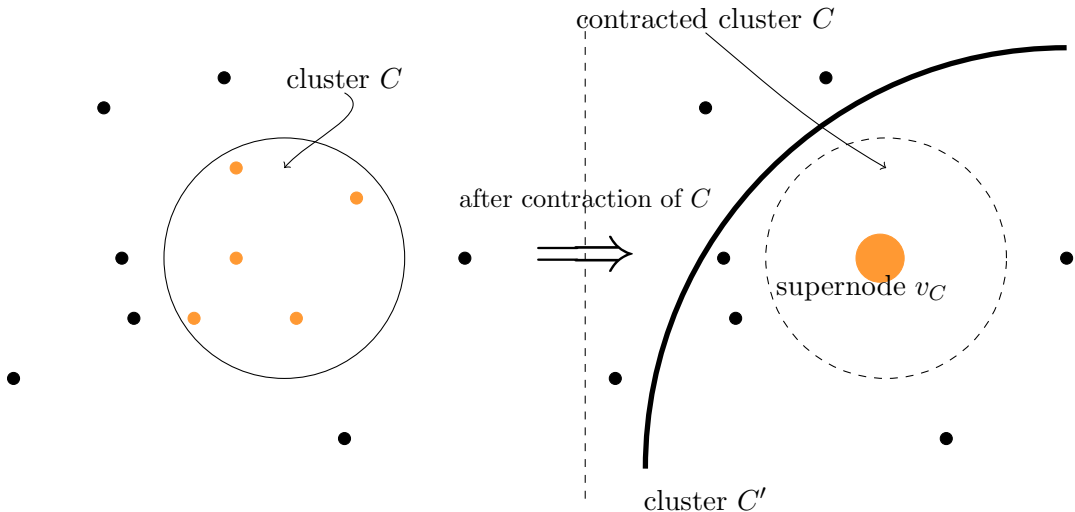
\begin{figure*}[h]
	\begin{center}
		\begin{tikzpicture}[scale=0.8]

		\draw (-5, 0) ellipse (2 and 2);
		\draw[fill] (-7.7, 0) circle (0.1);
		\draw[fill] (-9.5, -2) circle (0.1);
		\draw[fill] (-4, -3) circle (0.1);
		\draw[fill] (-2, 0) circle (0.1);
		\draw[fill] (-7.5, -1) circle (0.1);
		\draw[fill] (-6, +3) circle (0.1);
		\draw[fill] (-8, +2.5) circle (0.1);

		\draw[fill,orange!80] (-5.8, 0) circle (0.1);
		\draw[fill,orange!80] (-3.8, +1) circle (0.1);
		\draw[fill,orange!80] (-4.8, -1) circle (0.1);
		\draw[fill,orange!80] (-5.8, 1.5) circle (0.1);
		\draw[fill,orange!80] (-6.5, -1) circle (0.1);

		\draw (-4, +3) node {cluster $C$};
		\draw[->](-4, 2.75) to[out=-30,in=50] (-5, 1.5);

		\draw[dashed,-] (0, -4) --(0, 4);

		\draw (0, 1) node {\small after contraction of $C$};
		\draw (0, 0) node {\Huge $\Longrightarrow$};

%%%%%%

\begin{scope}[shift={(+10,0)}]
		\draw[dashed] (-5, 0) ellipse (2 and 2);
		\draw[fill] (-7.7, 0) circle (0.1);
		\draw[fill] (-9.5, -2) circle (0.1);
		\draw[fill] (-4, -3) circle (0.1);
		\draw[fill] (-2, 0) circle (0.1);
		\draw[fill] (-7.5, -1) circle (0.1);
		\draw[fill] (-6, +3) circle (0.1);
		\draw[fill] (-8, +2.5) circle (0.1);

		\draw[fill,orange!80] (-5.1, 0) circle (0.4);

		\draw (-8, +4) node {contracted cluster $C$};
		\draw[->](-8, 3.75) to[out=-40,in=150] (-5, 1.5);

		\draw[line width=2pt](-8-1, 0-4+0.5) to[out=90,in=+180] (-1-1, 7-4+0.5);

		\draw (-8, -4) node {cluster $C'$};

		\draw (-5.4, -0.5) node {supernode $v_C$};

\end{scope}
  		\end{tikzpicture}

		\caption{Illustration of the contraction process. Vertices inside $C$ are contracted into a single supernode $v_C$. This is then contained in a later cluster $C'$.}\label{fig:contraction}

	\end{center}

\end{figure*}

Denote the hyperedges of $C'$ by $E'$, and let their sparsifier be $\widetilde E'$ (which is a weighted subset of $E'$). Being a sparsifier with respect to the contracted hypergraph can be viewed as being a sparsifier on the original hypergraph \emph{only when $x\in\RR^V$ is uniform, i.e., takes the same value, on all verices of $C$}, as in this case we can simply assign that same value to $v_C$, and the energy of the original and contracted hypergraphs will be the same. Unfortunately, we have to deal with general vectors $x\in\mathbb R^V$, so we quantify how far $x$ is from satisfing that uniformity requirement.
We consider the maximum discrepancy between the $x$-values of $C$,
defined as $\delta = \max_{u,v\in C}|x_u-x_v|$. We show
\ifprocs
\else
in Section~\ref{sec:general-sparsification-proofs}
\fi
that the additive error introduced by taking $\widetilde E'$ as a sparsifier to $E'$ in the original hypergraph -- as opposed to the contracted hypergraph where it is guaranteed to be a good sparsifier -- is proportional to $\delta^2$ per hyperedge (see the
\ifprocs
full version of the paper).
\else
proof of Claim~\ref{claim:g0}).
\fi

We handle this additive error by arguing that it is dwarfed by energy of $x$ with respect to $C$. On the one hand, we introduce $\delta^2$ error per hyperedge of $C'$ for a total of at most $\approx\delta^2d'n$, where $d'$ is the typical degree in $C'$. On the other hand, we know that the range of $x$ within $C$ is $\delta$, so by hypergraph Cheeger (Theorem~\ref{thm:hypergraph-cheeger}) the energy of $C$ is at least $\approx\delta^2d/r$, where $d$ is the typical degree in $C$. (Here we assume that there are no outlier vertices with extremely low degree, which can be guaranteed by a slight adaptation of the expander decomposition
\ifprocs
subroutine.)
\else
subroutine, Lemma~\ref{lem:partition}.)
\fi Recall that the number of hyperedges --- and therefore the typical degree --- decreases by a constant factor per level. If we can simply guarantee that the sparsifiaction of $C$ precedes the sparsification of $C'$ by at least $\Omega(\log n)$ levels, then $d$ will dwarf $d'$ by an arbitrarily large $n^{\Theta(1)}$ factor. We accomplish this by simply waiting $\Omega(\log n)$ levels to contract a cluster after sparsifying
\ifprocs
it.
\else
it (see Algorithm~\ref{alg:main}).
\fi

The formal proof is far more involved, but relies on the same concept of charging additive errors to previous clusters, until we ultimately achieve the desired overall error of $\epsilon Q(x)$.
The details appear in
\ifprocs
the full version of the paper.
\else
Section~\ref{sec:general-sparsification}.
\fi

\subsection{Lower Bounds}\label{sec:tech-overview-lower-bound}

The most common method for approximating the Laplacian of a (hyper)graph is to take a weighted subset of the original (hyper)edges. While asympotically optimal for graphs~\cite{ACKQWZ16,CKST19}, this method has obvious limitations as a data structure: it is not hard to come up with an example where $\Omega(n)$ hyperedges are required even for the sparsifier to be connected, and if the input hypergraph is $r$-uniform, this translates into $\Omega(nr\log n)$ bit complexity, a linear loss in the arity $r$ of the hypergraph. It is therefore natural to ask whether there are more efficient ways of storing a spectral approximation to a hypergraph. As concrete example, we could permit the inclusion of hyperedges not in the original hypergraph -- could this or another scheme lead to a data structure that can approximate the spectral structure of a hypergraph using $\widetilde O(n)$ space, avoiding a dependence on $r$?

\ifprocs
We
\else
In~Section~\ref{sec:lower-bound}, we
\fi
study this question in full generality:

\begin{center}
\ifprocs
\fbox{
\parbox{0.45\textwidth}{
\begin{center}
Is it possible to compress a hypergraph into a $o(n\cdot r)$ size data structure that can approximate the energy $Q_G(x)$ (defined in~\eqref{eq:hypergraph-energy}) simultaneously for all $x\in {\mathbb R}^V$?
\end{center}
}}
\else
\fbox{
\parbox{0.9\textwidth}{
	\begin{center}
		Is it possible to compress a hypergraph into a $o(n\cdot r)$ size data structure that can approximate the energy $Q_G(x)$ (defined in~\eqref{eq:hypergraph-energy}) simultaneously for all $x\in {\mathbb R}^V$?
	\end{center}
}}
\fi
\end{center}

\ifprocs
We
\else
In Section~\ref{sec:lower-bound}, we
\fi
show a space lower bound of $\Omega(nr)$ for sparsifying a hypergraph on $n$ vertices with maximum hyperedge-size $r$\footnote{With some limits on the range of $r$. For more formal statements of our results see
\ifprocs
the full version of the paper.
\else
Section~\ref{sec:lower-bound-construction}.
\fi}. In fact, our lower bound applies even to the weaker notion of cut sparsification (where one only wants to approximate $Q_G(x)$ for all $x\in \{0, 1\}^V$), and is tight by the recent result of~\cite{Chen20}, who gave a sampling-based cut sparsification algorithm that produces hypergraph sparsifiers with $O(n \log^{O(1)} n)$ hyperedge. In what follows we give an outline of our lower bound.

We start by formally defining the data structure for approximating the cut structure of a hypergraph that we prove a lower bound for. A \emph{hypergraph cut sparsification scheme (HCSS)} is  an algorithm for compressing the cut structure of a hypergraph such that queries on the size of cuts can be answered within a small multiplicative error:
\begin{restatable}{definition}{HCSSdef}
	Let $\mathfrak H(n,r)$ be the set of hypergraphs on a vertex set $[n]$ with each hyperedge having size at most $r$.
A pair of functions $\textsc{Sparsify}:\mathfrak H(n,r)\to\{0,1\}^k$ and $\textsc{Cut}:\{0,1\}^k\times 2^{[n]}\to\mathbb N$ is said to be an $(n,r,k,\varepsilon)$-HCSS if for all inputs $G=(V,E) \in \mathfrak H(n,r)$ the following holds.
	\begin{itemize}
		\item For every query $S\in 2^{[n]}$, $\left|\textsc{Cut}(\textsc{Sparsify}(G),S) - |E(S,\overline S)|\right|\le\varepsilon\cdot|E(S,\overline S)|$.
	\end{itemize}
\end{restatable}

To argue a lower bound on the space requirement (parameter $k$ above), we use a reduction to string compression. It is known that $\{0,1\}$-strings of length $\ell$ cannot be significantly compressed to a small space data structure that allows even extremely crude additive approximations to subset sum queries --- see, e.g., the LP decoding paper of~\cite{DinurN03} (here we only need a lower bound for computationally unbounded adversaries), or
\ifprocs
the full version of this paper.
\else
Section~\ref{sec:lower-bound-string}.
\fi
We manage to encode a $\{0,1\}$-string of length $\ell$ into the cut structure of a hypergraph $H$ with \emph{fewer hyperedges than $\ell$} --- a testament to the higher complexity of hypergraph cut structures, as opposed to the cut structures of ordinary graphs.

Our string encoding construction utilizes Ruzsa-Szemer\'edi graphs. These are (ordinary) graphs whose edge-sets are the union of \textit{induced} matchings. Our construction works generally on any Ruzsa-Szemer\'edi  graphs and as a result we get several lower bounds in various parameter regimes (values of the hyperedge arity $r$ and the precision parameter $\epsilon$) based on the specific Ruzsa-Szemer\'edi  graph constructions we choose to utilize. In particular, for the setting where $r=n^{O(1/\log\log n)}$ we are able to conclude that any hypergraph cut sparsification scheme requires $\Omega(rn)$ bits of space even for constant $\epsilon$, matching the upper bound of~\cite{Chen20} to within logarithmic factors. For larger $r$ we get a lower bound of $n^{1-o(1)} r$ bits of space for $\epsilon=n^{-o(1)}$. The latter in particular rules out the possibility of an $\epsilon$-sparsifier that can be described with asymptotically fewer than $(\epsilon^{-1})^{O(1)} nr$ bits of space.

Here we briefly describe how we encode strings into hypergraphs generated from Ruzsa-Szemer\'edi graphs. Let $G$ be a bipartite Ruzsa-Szemer\'edi-graph (with bipartition $P\cup Q$) composed of $t$ induced matchings of size $a$ each. We can then use the $a\cdot t$ edges of the graph to encode a string $s$ of length $\ell=at$: simply order the edges of $G$ and remove any edges corresponding to $0$ coordinates in $s$, while keeping edges corresponding to $1$'s. This graph --- which we call $G_s$ --- already encodes $s$ when taken as a whole. However, its cut structure is not sufficient for decoding it. For that we need to turn $G_s$ into a hypergraph $H_s$ as follows: For each vertex $u$ on one side of the bipartition, say $P$, we combine all edges adjacent on $u$ into one hyperedge containing $\{u\}\cup\Gamma(u)$. This means that each hyperedge will have only a single vertex in $P$, but many vertices in $Q$ (see Figure~\ref{fig:lower-bounds}).

To decode the original string $s$ from the cut structure of $H$, we must be able to answer subset sum queries $q\subseteq[at]$, that is return how many $1$-coordinates $s$ has, restricted to $q$. (For more details see
\ifprocs
the full version of the paper.)
\else
the definition of string compression -- Definition~\ref{def:SCS} in Section~\ref{sec:lower-bound-string}.)
\fi
To do this, consider each induced matching one at a time and decode $s$ restricted to the corresponding coordinates. We measure the size of a carefully chosen cut in $H_s$. Consider Figure~\ref{fig:lower-bounds}: We restrict our view to a single matching $M_j$ supported on $P_j$ and $Q_j$ in the two sides of the bipartition. Suppose for simplicity that $q$ is entirely contained in this matching, and we are interested in the Hamming-weight of $s$ restricted to a subset of coordinates $q$. To create our cut, in the top half of the hypergraph ($P$), we take the endpoints of edges corresponding to $q$ -- we call this set $A$. In the bottom half ($Q$), we take everything except for $Q_j$. The cut, which we call $S$, is depicted in red in Figure~\ref{fig:lower-bounds}.

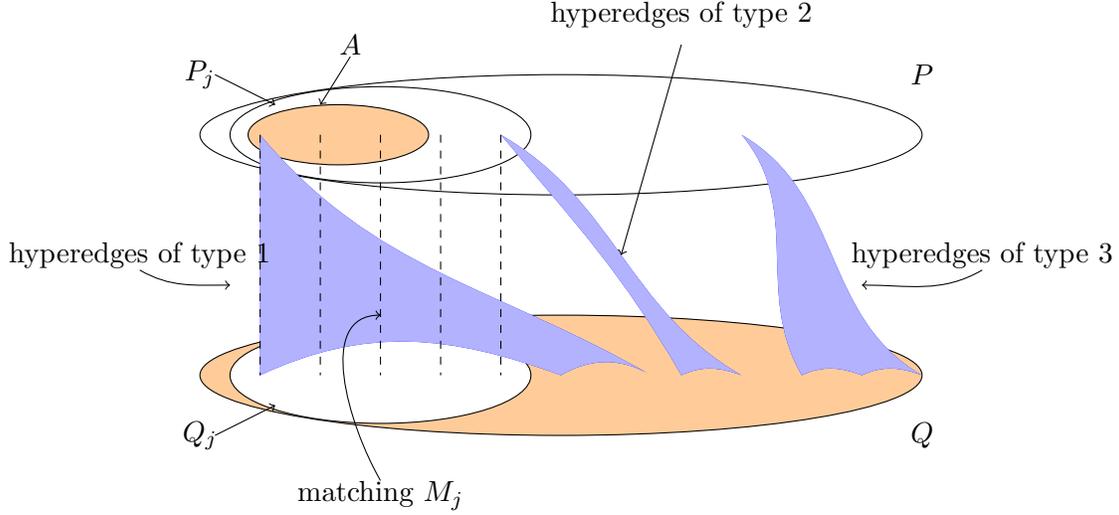
\begin{figure*}[h]
	\begin{center}
		\begin{tikzpicture}[scale=0.8]

		\draw (0, 0) ellipse (6 and 1);
		\draw[fill = orange!40] (0, -4) ellipse (6 and 1);

		\draw (-3, 0) ellipse (2.5 and 0.8);
		\draw[fill = white] (-3, -4) ellipse (2.5 and 0.8);

		\draw[fill = orange!40] (-4+0.3, 0) ellipse (1.5 and 0.5);

		\fill[blue!30,preaction={fill=blue}]  (-5, 0) to[out=-90,in=90] (-5, -4) to [out=+25,in=+160]  (+0, -4)
		     to[out=+30,in=+150] (+1.5, -4) to[out=+150,in=-50] (-5, 0);

		\fill[blue!30,preaction={fill=blue}]  (-1, 0) to[out=-50,in=120] (+2, -4) to [out=+25,in=+160]  (+3, -4) to[out=+150,in=-30] (-1, 0);

		\fill[blue!30,preaction={fill=blue}]  (+3, 0) to[out=-50,in=120] (+4, -4) to [out=+25,in=+160]  (+5, -4) to [out=+25,in=+160]  (+6, -4) to[out=+150,in=-30] (+3, 0);

		\draw (+6, 1) node {$P$};
		\draw (+6, -5) node {$Q$};

		\draw (-6, 1) node {$P_j$};
		\draw (-6, -5) node {$Q_j$};
		\draw[->] (-5.75, 1) --(-4.75, 0.5);
		\draw[->] (-5.75, -5) --(-4.75, -4-0.5);

		\draw (-3.5, 1.5) node {$A$};
		\draw[->] (-3.5, 1.3) --(-4, 0.5);

		\draw (-7, -2) node {hyperedges of type 1};
		\draw[->](-7, -2.25) to[out=-30,in=+180] (-5.5, -2.5);

		\draw (+2, +2) node {hyperedges of type 2};
		\draw[->] (+2, 1.5) --(+1, -2);

		\draw (+7, -2) node {hyperedges of type 3};
		\draw[->](+7, -2.25) to[out=210,in=0] (+5.0, -2.5);

		\draw[dashed, name path=p1] (-5, 0) -- (-5, -4);
		\draw [dashed, name path=p2](-4, 0) -- (-4, -4);
		\draw [dashed, name path=p3] (-3, 0) -- (-3, -4);
		\draw [dashed, name path=p4] (-2, 0) -- (-2, -4);
		\draw [dashed, name path=p5] (-1, 0) -- (-1, -4);

		\draw (-3, -6) node {matching $M_j$};
		\draw[->](-3, -5.75) to[out=120,in=+180] (-3, -3);

		\end{tikzpicture}

		\caption{Illustration of the decoding process. One side of the cut $S$ is depicted in orange.}\label{fig:lower-bounds}
	\end{center}

\end{figure*}

Informally, the crux of the decoding is the observation that the number of hyperedges crossing from $A$ to $Q_j$ is exactly the quantity we want to approximate. Indeed, consider a coordinate in $q$. If it has value $1$ in $s$, the corresponding hyperedge crosses from $A$ to $Q_j$, thus crossing the cut $S$. If however this coordinate is $0$ in $s$, the corresponding hyperedge does not cross to $Q_j$, thus not crossing the cut. These types of hyperedges are denoted by $1$ in Figure~\ref{fig:lower-bounds}.

Unfortunately, there are more hyperedges crossing $S$, adding noise to our measurement of $s$. One might hope to prove that the noise is small, i.e., can be attributed to measurement error, but this is not the case. Instead, we show that while this noise is not small, it is predictable enough to subtract accurately without knowing $s$. Hyperedges denoted $2$ in Figure~\ref{fig:lower-bounds} cross from $P_j\setminus A$ to $Q\setminus Q_j$. Here we observe that nearly all hyperedges from $P_j\setminus A$ do in fact cross the cut, for almost all choices of $s$. Hyperedges denoted $3$ in Figure~\ref{fig:lower-bounds} cross from $P\setminus P_j$ to $Q\setminus Q_j$. Here we cannot say much about the quantity of such hyperedges crossing the cut. However, we observe that this quantity does not depend on $q$, and therefore we can use Chernoff bounds (Theorem~\ref{thm:chernoff}) to prove that it concentrates around its expectation with high probability \textit{over $s$}. This allows us to predict and subtract the noise caused by type $3$ hyperedges, for whatever instance of Ruzsa-Szemer\'edi-graph we use
\ifprocs
.
\else
(see the proof of Theorem~\ref{thm:lower-bound-construction}).
\fi

Ultimately, we show that efficient cut sparsification for such hypergraphs would result in an equally efficient compression of $\{0,1\}$-strings, which implies our lower bounds. For more details see
\ifprocs
the full version of the paper.
\else
Section~\ref{sec:lower-bound}.
\fi

\subsection{Directed Spectral Sparsification of Hypergraphs}

\ifprocs
In the full version of the paper, we also apply
\else
In Section~\ref{sec:directed-hypergraph-sparsification}, we apply
\fi
our discretization technique from Section~\ref{sec:expander-sparsification} to the spectral sparsification of directed hypergraphs. As a testiment to the versitility of this technique, we are able to produce an $O(n^2r^3\log^2n/\epsilon^2)$-sized $\epsilon$-spectral sparsifier. This is a factor $n$ better than the previous state of the art by~\cite{Soma2019}, and nearly optimal in the setting where $r$ is constant.

The broad arc of the proof is very similar to that of Section~\ref{sec:expander-sparsification}: We construct our sparsifier using importance sampling. We then divide the set of hyperarcs into a logarithmic number of categories, $E_i$. For each category separately, we show using discretization that the energy of the proposed sparsifier approximates the energy of the input hypergraph with respect to \emph{all} $x\in\mathbb R^V$ simultaneously with high probability.

However, the details of each of these steps differ from their corresponding step in Section~\ref{sec:expander-sparsification}. Here we mention only a few key differences. Instead of looking at degrees or expansion, we define a novel quantity characterizing each hyperarc we call its \emph{overlap}. Intuitively, this denotes the highest density of an induced subgraph in which the paericular hyperarc resides. We then sample each hyperarc with probability inverse proportional to its overlap. We show that this produces a sufficiently small sparsifier with high probability
\ifprocs
.
\else
(see Lemma~\ref{lem:overlap}).
\fi

Perhaps the most crucial departure from Section~\ref{sec:expander-sparsification} occurs during the discretization step when proving $Q_x(E_i)=\widetilde Q_x(E_i)$. Instead of discretizing the vector $x\in\mathbb R^V$, we discretize the derived vector of energies on the hyperarcs, that is $Q_x\in\mathbb R^E$. So for each $x$ and $i$ we define a vector $Q_x^{(i)}$ --- from a finite set of possibilities --- such that, informally
$$Q_x(E_i)\cong Q_x^{(i)}(E_i)\cong\widetilde Q_x^{(i)}(E_i)\cong\widetilde Q_x(E_i).$$
\ifprocs
\else
For more details on the definition of $Q_x^{(i)}$, see the proof of Lemma~\ref{lem:directed-main}.
\fi
This additional trick is necessary; we do not know of a way to make the discretization argument work by rounding $x$ itself.

For more details on the construction of directed hypergraph sparsifiers and their analysis see
\ifprocs
the full version of the paper.
\else
Section~\ref{sec:directed-hypergraph-sparsification}.
\fi

%%% Local Variables:
%%% mode: latex
%%% TeX-master: "000-main"
%%% End:

%!TEX root=./000-main.tex

\section{Spectral Sparsification of Expanders}\label{sec:expander-sparsification}

In this section, we prove the following.
\begin{theorem}\label{thm:expander-sparsification}
  There is an algorithm that, given a parameter $n$, given $100/n\le\epsilon\le 1/2$ and an $r$-uniform hypergraph $G=(V,E)$ with $|V|\le n$ and expansion at least $350\sqrt{(\log n)/(\epsilon rn)}\le\Phi\le2/r$, outputs an $\epsilon$-spectral sparsifier of $G$ with $O(|V|\cdot(\tfrac1\epsilon \log n)^{O(1)} /(\Phi^2r))$ hyperedges with probability $1-O((\log n)/n^2)$ in $O(r|E|)$ time.

\end{theorem}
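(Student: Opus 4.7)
The construction is by importance sampling: include each hyperedge $e \in E$ independently with probability $p_e := \min\!\left\{1,\, C \cdot \frac{(\epsilon^{-1}\log n)^{O(1)}}{\Phi^2 r \cdot \min_{v\in e} d(v)}\right\}$, and reweight by $1/p_e$ if included. Charging each sampled hyperedge to the vertex $v \in e$ attaining $\min d(v)$ gives the desired size bound $O(|V| \cdot (\epsilon^{-1}\log n)^{O(1)} / (\Phi^2 r))$ in expectation (and with high probability by a standard Chernoff argument on the sum of Bernoullis). For any fixed $x \in \mathbb R^V$, $\mathbb{E}\,\widetilde Q(x) = Q(x)$; the core difficulty is proving $\widetilde Q(x) = (1\pm\epsilon)Q(x)$ simultaneously for all $x$.

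\textbf{Reduction to normalized $x$ and level decomposition.} By scaling we may assume $\sum_v x_v^2 d(v) = 1$ (and by shifting, $\sum_v x_v d(v) = 0$, since adding a constant to $x$ preserves all energies). The hypergraph Cheeger inequality (Theorem~\ref{thm:hypergraph-cheeger}) then gives $Q(x) \ge r\Phi^2/32$, which will be the crucial ``budget'' against which additive errors are absorbed. Partition the hyperedges by their \emph{power}: let $E_i := \{e : \max_{v\in e} x_v^2 \cdot \min_{v\in e} d(v) \in [2^{-i}, 2^{-i+1})\}$ for $i$ ranging over $O(\log n)$ relevant values (outside this range the contribution is negligible and can be handled by a crude argument). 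It suffices to show $\widetilde Q_x(E_i) = (1\pm\epsilon)Q_x(E_i) \pm \tfrac{\epsilon}{\log n}Q(x)$ for each $i$, since summing over $i$ gives the desired $(1\pm O(\epsilon))$-approximation.

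\textbf{Per-level discretization and Chernoff.} For each level $i$, define a net $\mathcal N_i \subseteq \mathbb R^V$ by: coordinates $v$ with $x_v^2 d(v) \gtrsim 2^{-i}$ are rounded to the nearest multiple of $1/\poly(n)$ (of which there are at most $\poly(n)$ candidate values per coordinate), while all other coordinates are rounded to $0$. Since $\sum_v x_v^2 d(v) = 1$, at most $k_i \le 2^{i+O(1)}$ coordinates survive the first case, so $|\mathcal N_i| \le \exp(O(k_i \log n)) = \exp(O(2^i \log n))$. Let $x^{(i)} \in \mathcal N_i$ be the rounding of $x$. The careful choice of the net ensures $Q_x(E_i) = Q_{x^{(i)}}(E_i) \pm \tfrac{\epsilon}{\log n}Q(x)$ and the analogous bound with $\widetilde Q$. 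For the middle step $Q_{x^{(i)}}(E_i) \approx \widetilde Q_{x^{(i)}}(E_i)$, apply the additive-multiplicative Chernoff bound (Theorem~\ref{thm:am-Chernoff}) to the independent reweighted contributions $\widetilde Q_{x^{(i)}}(e)/p_e$: each such term is bounded by $\max_{v\in e} (x^{(i)}_v)^2 \cdot \min_{v\in e} d(v) \cdot \tfrac{\Phi^2 r}{(\epsilon^{-1}\log n)^{O(1)}} \lesssim \tfrac{2^{-i} \Phi^2 r}{(\epsilon^{-1}\log n)^{O(1)}}$, yielding failure probability at most $\exp\!\bigl(-\Omega(2^i \Phi^2 r Q(x) / (\epsilon^{-1}\log n)^{O(1)})\bigr) \le \exp(-\Omega(2^i \cdot \Phi^4 r^2/\polylog(n)))$ using the Cheeger lower bound on $Q(x)$.

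\textbf{Balancing and union bound.} Taking a union bound over $\mathcal N_i$ requires the Chernoff failure probability per net point to beat $|\mathcal N_i| \le \exp(O(2^i \log n))$. Comparing the exponents, this is why the sampling probability must carry an extra $1/(\Phi^2 r)$ factor: with the $(\epsilon^{-1}\log n)^{O(1)}/(\Phi^2 r \min d(v))$ choice above, the exponent in the Chernoff bound becomes $\Omega(2^i (\epsilon^{-1}\log n)^{O(1)})$, which dominates the $O(2^i \log n)$ enumeration with room to spare for the $O(\log n)$ union bound across levels and the desired $1 - O(\log n / n^2)$ success probability. The main obstacle---and the reason for the $1/(\Phi^2 r)$ blow-up in the sparsifier size---is precisely this balancing: the net at level $i$ has size $\exp(\Theta(2^i))$ while Cheeger only guarantees $Q(x) \ge \Omega(r\Phi^2)$, so the sampling rate must be inflated by a matching $1/(\Phi^2 r)$ factor for the Chernoff tail to absorb the net-cardinality penalty at every level simultaneously. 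The hypothesis $\Phi \ge 350\sqrt{(\log n)/(\epsilon r n)}$ ensures the resulting sparsifier size $O(|V|(\epsilon^{-1}\log n)^{O(1)}/(\Phi^2 r))$ is meaningful, and the runtime is dominated by the $O(r|E|)$ cost of computing degrees and performing the sampling.
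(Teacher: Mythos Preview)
Your plan is essentially the same as the paper's: same sampling scheme, same normalization, same power-based level decomposition $E_i$, same per-level rounding $x^{(i)}$ (zero out small coordinates, discretize the rest), and same additive-multiplicative Chernoff plus union-bound balancing. Two small points to tighten: first, in your Chernoff paragraph the factor $\Phi^2 r$ lands in the wrong place---the range of each summand is $\lesssim 2^{-i}\Phi^2 r/(\epsilon^{-1}\log n)^{O(1)}$, so dividing by it puts $\Phi^2 r$ in the \emph{denominator} of the exponent, which then cancels against the Cheeger lower bound $Q(x)\ge r\Phi^2/32$ to give your (correct) final exponent $\Omega(2^i(\epsilon^{-1}\log n)^{O(1)})$; second, the set $E_i$ depends on $x$ and is \emph{not} determined by $x^{(i)}$ alone (rounding loses the information of exactly which degree bucket $\max_{v\in e}x_v^2\cdot\min_{v\in e}d(v)$ falls into), so your union bound must also enumerate the possible $E_i$'s given $x^{(i)}$---the paper handles this by recording, for each surviving coordinate, two degree thresholds from the set $\{d(v):v\in V\}$, contributing another factor of $n^{O(2^i/\epsilon^2)}$ that is absorbed into the same $\exp(O(2^i\log n))$ net size.
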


\begin{remark}
	  Note that $n$ here does not denote the size of $V$ but an arbitrary parameter larger than that. $n$ serves only as an indirect error parameter, as the failure probability of the algorithm is allowed to be $1-O((\log n)/n^2)$. The reason for this notation is that later on,
	  \ifprocs
	  \else
	  in Section~\ref{sec:general-sparsification},
	  \fi
	  we apply Theorem~\ref{thm:expander-sparsification} to subgraphs of the input hypergraph. In this context, $n$ will denote the the size of the input hypergraph, whereas $|V|$ will denote the (potentially much smaller) size of the cluster within it, to be sparsified. Note that the size of the sparsifier scales linearly in the size of the cluster, but only logarithmically in the size of the input hypergraph. The latter is because the desired failure probability is always defined in terms of $n$.
	  
	  \ifprocs
	  For more details on this, see the full version of the paper.
	  \fi
\end{remark}

\begin{remark}
The guarantee of Theorem~\ref{thm:expander-sparsification} translates to $|\widetilde E| = \widetilde O(|V|r)$ when $\Phi(G) = \Omega(1/r)$, i.e., when $G$ is a nearly-optimal expander.
\end{remark}
We show our construction of the sparsifier in Section~\ref{sec:expander-sparsification-construction} and discuss its correctness in Section~\ref{sec:expander-sparsification-correctness}, where some proofs are deferred to Section~\ref{sec:expander-sparsification-claims}.

The following lemma is useful throughout this section.
\begin{lemma}\label{lem:sum-min-d}
  For any hypergraph $G=(V, E)$, we have
  \[
    \sum_{e\in E}\frac1{\min_{v\in e}d(v)}\le|V|.
  \]
\end{lemma}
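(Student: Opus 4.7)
The plan is to prove the lemma via a simple charging argument. For each hyperedge $e \in E$, I would fix a vertex $v_e \in e$ that attains the minimum degree within $e$, i.e., $d(v_e) = \min_{v\in e} d(v)$ (breaking ties arbitrarily). This turns the left-hand side into a sum indexed by vertices:
\[
\sum_{e\in E}\frac{1}{\min_{v\in e} d(v)} = \sum_{v\in V}\ \sum_{\substack{e\in E:\\ v_e=v}} \frac{1}{d(v)} = \sum_{v\in V} \frac{|\{e\in E : v_e=v\}|}{d(v)}.
\]

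The main step is then bounding $|\{e\in E : v_e = v\}|$ by $d(v)$ for every vertex $v$. Since only hyperedges containing $v$ can be charged to $v$, it suffices to show that the number of distinct hyperedges in $E$ containing $v$ is at most $d(v) = \sum_{e\in E}\mu_e(v)$. This holds because each such hyperedge contributes $\mu_e(v)\ge 1$ to $d(v)$. Substituting this bound yields
\[
\sum_{v\in V}\frac{|\{e\in E : v_e=v\}|}{d(v)} \le \sum_{v\in V}\frac{d(v)}{d(v)} \le |V|,
\]
where vertices of degree zero contribute nothing since no hyperedge can be charged to them. This completes the argument.

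The proof is essentially routine; there is no real obstacle, but one should be mindful of the multiset conventions introduced earlier (namely that $d(v)=\sum_e \mu_e(v)$ counts multiplicities, while the charging only needs the number of distinct hyperedges containing $v$), which is exactly why the inequality $|\{e : v_e=v\}| \le d(v)$ goes through even for non-simple hypergraphs.
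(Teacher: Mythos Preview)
Your proof is correct and follows essentially the same charging argument as the paper: direct each hyperedge to a minimum-degree vertex, observe that each vertex $v$ receives at most $d(v)$ hyperedges, and conclude that each vertex contributes at most $1$ to the sum. Your explicit handling of the multiset conventions and zero-degree vertices is a nice touch, but the core idea is identical.
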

\begin{proof}
  Consider each hyperedge $e \in E$ to be directed towards its vertex with the lowest degree, i.e., $\argmin_{v\in e}d(v)$, breaking ties arbitrarily.
  Each vertex $v \in V$ has at most $d(v)$ incoming hyperedges,
  and each such hyperedge contributes to the above sum by $1/d(v)$.
  Hence the total contribution of all the incoming hyperedges to $v$ is at most $1$.
  It follows that the overall summation is at most $|V|$.
\end{proof}

\subsection{Construction}\label{sec:expander-sparsification-construction}
The construction of $\widetilde G$ is quite simple.
Sample each hyperedge $e \in E$ with probability $p_e=\min\left(\frac{\lambda}{\min_{v\in e}d(v)},1\right)$
for
\begin{equation}\label{eq:def-lambda}
\lambda = (\epsilon^{-1}\log n)^{O(1)}/(\Phi^2r).
\end{equation}
Each sampled hyperedge $e$ is given weight $w_e=1/p_e$,
and for every non-sampled hyperedge $e$ define $w_e=0$.
Let $\widetilde G$ contain the sampled hyperedges,
i.e., $\widetilde E = \set{e\in E \mid w_e>0 }$.
Notice that each random variable $w_e$ has expectation $\EX[w_e]=1$,
and thus informally $\mathbb E[\widetilde G]=G$.

Clearly we can compute the output in time $O(r|E|)$.
Also, we can bound the size of the sparsifier with high probability as follows.
\begin{lemma}\label{lem:size-bound}
  We have
  \[
    \mathbb P[|\widetilde E|\ge 2\lambda|V|]
    \le O(1/n^2),
  \]
  when $|\widetilde E|=\Omega(\log n)$.
\end{lemma}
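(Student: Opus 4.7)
\medskip
\noindent\textbf{Proof plan.}
The plan is to bound the expected size of $\widetilde E$ using Lemma~\ref{lem:sum-min-d}, and then apply a standard multiplicative Chernoff inequality to the sum of the independent sampling indicators. Write $X_e = \mathbbm{1}[e \in \widetilde E]$, so that $|\widetilde E| = \sum_{e\in E} X_e$ is a sum of independent Bernoulli random variables with parameters $p_e$. These indicators are independent precisely because the construction in Section~\ref{sec:expander-sparsification-construction} samples each hyperedge independently, which is exactly the setup required by Chernoff.

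First I would show $\mathbb{E}[|\widetilde E|] \le \lambda|V|$. Since $p_e \le \lambda/\min_{v\in e} d(v)$ by construction, Lemma~\ref{lem:sum-min-d} gives
\[
  \mathbb{E}[|\widetilde E|] \;=\; \sum_{e\in E} p_e \;\le\; \lambda \sum_{e\in E} \frac{1}{\min_{v\in e} d(v)} \;\le\; \lambda|V|.
\]
In particular, $2\lambda|V| \ge 2\mathbb{E}[|\widetilde E|]$, so it suffices to upper bound $\Pr\bigl[|\widetilde E| \ge 2\mathbb{E}[|\widetilde E|]\bigr]$.

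Next I would invoke the multiplicative Chernoff bound for sums of independent $\{0,1\}$-variables, which yields
\[
  \Pr\bigl[|\widetilde E| \ge 2\mathbb{E}[|\widetilde E|]\bigr] \;\le\; \exp\!\bigl(-\mathbb{E}[|\widetilde E|]/3\bigr).
\]
The lemma's hypothesis $|\widetilde E| = \Omega(\log n)$ is to be read as the quantitative statement that $\lambda|V| \ge c\log n$ for a sufficiently large absolute constant $c$ (this is the regime we care about, and the only regime where $|\widetilde E|$ is guaranteed to be large); in that regime $\mathbb{E}[|\widetilde E|] \ge 6 \ln n$, so the above bound is at most $n^{-2}$, giving the claimed $O(1/n^2)$ failure probability.

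The only subtlety, and the step I would double-check, is the meaning of the conditional clause ``when $|\widetilde E|=\Omega(\log n)$'': the cleanest reading is that we are asserting the bound only in the parameter regime $\lambda|V|\ge c\log n$, so that the Chernoff estimate is actually non-trivial. This regime is exactly what arises from the choice of $\lambda$ in~\eqref{eq:def-lambda} under the expansion assumption $\Phi \ge 350\sqrt{(\log n)/(\epsilon r n)}$ of Theorem~\ref{thm:expander-sparsification}, so no additional work is needed beyond this observation. Overall the proof is routine: the substance is the degree-counting estimate of Lemma~\ref{lem:sum-min-d}, and Chernoff handles the concentration.
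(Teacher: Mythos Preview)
Your approach is essentially the paper's: bound $\mathbb{E}[|\widetilde E|]\le\lambda|V|$ via Lemma~\ref{lem:sum-min-d} and then apply Chernoff to the independent indicators. One small slip, though: after reducing to $\Pr[|\widetilde E|\ge 2\mathbb{E}[|\widetilde E|]]$ you get the exponent $-\mathbb{E}[|\widetilde E|]/3$ and then assert that $\lambda|V|\ge c\log n$ forces $\mathbb{E}[|\widetilde E|]\ge 6\ln n$. That implication is not valid; you only have the \emph{upper} bound $\mathbb{E}[|\widetilde E|]\le\lambda|V|$. The clean fix (and what the paper does implicitly) is to skip the detour and apply the Chernoff bound of Theorem~\ref{thm:chernoff} directly with $\mu=\lambda|V|\ge\mathbb{E}[|\widetilde E|]$, $\delta=1$, $a=1$, which gives $\Pr[|\widetilde E|\ge 2\lambda|V|]\le 2\exp(-\lambda|V|/3)$; then the hypothesis $\lambda|V|=\Omega(\log n)$ yields $O(1/n^2)$ with no lower bound on the expectation needed.
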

\begin{proof}
  First, we have
  \[
    \EX[|\widetilde E|]
    \le \sum_{e\in E} p_e
    \le \lambda\sum_{e\in E}\frac1{\min_{v\in e}d(v)}
    \le \lambda|V|,
  \]
  where the last inequality is due to Lemma~\ref{lem:sum-min-d}.
  Noting that $|\widetilde{E}|$ is a sum of independent indicator random variables, the claimed inequality is a direct consequence of the Chernoff bound (Theorem~\ref{thm:chernoff}).
\end{proof}

\subsection{Correctness}\label{sec:expander-sparsification-correctness}
Let us now consider the spectral properties of $\widetilde G$. We must prove that with high probability
\begin{equation}\label{eq:main-spectral}
  \forall x\in\mathbb R^V,
  \quad
  \widetilde Q(x)=(1\pm\epsilon)\cdot Q(x).
\end{equation}
We stress that this gives an error bound that holds for all $x$ \textit{simultaneously}.
We may assume without loss of generality that $\sum_{v\in V}x_vd(v)=0$ and $\sum_{v\in V}x_v^2d(v)=1$, because Equation~\eqref{eq:main-spectral} is invariant under translation and scaling of $x$. Let the set of such centered and normalized vectors be $\overline{\mathbb R^V}$.
This guarantees that every non-isolated vertex $v$ has
$x_v^2 \leq 1/d(v) \leq 1$,
and by Theorem~\ref{thm:hypergraph-cheeger} we get $Q(x)\ge\frac{r\Phi^2}{32}$.

Now fix one such vector $x\in\overline{\mathbb R^V}$,
and use it to partition the hyperedge multiset $E$ into $O(\log n)$ subsets as follows.
For each $i=1,\ldots,i^*$, where $i^*=\lceil2\log n\rceil$,
let
\begin{align*}
  E_i
  &= \Big\{e\in E\mid \max_{v\in e}x_v^2\cdot \min_{v\in e}d(v)\in(2^{-i},2^{-i+1}] \Big\},
    \intertext{and let}
    E_*
  &= E\setminus \bigcup_{i=1}^{i^*} E_i
   = \Big\{e\in E\mid\max_{v\in e}x_v^2\cdot\min_{v\in e}d(v)\le2^{-i^*}\Big\} .
\end{align*}
To justify the second equality in the equation above, note that  $\sum_{v\in V}x_v^2d(v)=1$ implies $x(v)^2\leq 1/d(v)$, and therefore for every $e\in E$
$$
\max_{v\in e}x_v^2\cdot \min_{v\in e}d(v)\leq \max_{v\in e} 1/d(v)\cdot  \min_{v\in e}d(v)=1.
$$

Informally, we would like to show that with high probability,
for all $x$ and all $i$ we have $\widetilde Q_x(E_i)\cong Q_x(E_i)$. Note that the multisets $E_i$ and $E_*$ are dependent on $x$, but we omit this from the notation for better readability.
Our plan is to define another vector $x^{(i)}\in\mathbb R^V$
by rounding the coordinates of $x$,
that preserves $Q(E_i)$ up to small multiplicative and additive error.
Using this rounded vector, we will then show
$$\widetilde Q_x(E_i)\cong\widetilde Q_{x^{(i)}}(E_i)\cong Q_{x^{(i)}}(E_i)\cong Q_x(E_i),$$
and similarly also $\widetilde Q_x(E_*)\cong Q_x(E_*)$.

Formally, for each $v \in V$ define $x^{(i)}_v$ as follows:
\begin{itemize}
    \item If $x_v^2 d(v)\ge\epsilon^2 2^{-i}/2500$, then round $x_v$ to the nearest integer multiple of $1/(n^2\sqrt{d(v)})$.
    \item If $x_v^2d(v)<\epsilon^2 2^{-i}/2500$, then round $x_v$ to $0$.
\end{itemize}
We implement the above plan using the following four claims.

First, we show that for every scale $i$ the energy of $E_i$ with respect to the rounded vector $x^{(i)}$ is quite close to the energy of $E_i$ with respect to the original vector $x$:
\begin{restatable}{claim}{claimone}\label{claim:1}
For all $x \in \overline{\mathbb R^V}$ and all $i=1,\ldots,i^*$,
$$
  Q_{x^{(i)}}(E_i)
  = \Big(1\pm\frac{\epsilon}{10}\Big) Q_x(E_i)\pm \frac{20}{n}.
$$
\end{restatable}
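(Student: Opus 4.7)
The plan is to bound the per-edge deviation $|Q_{x^{(i)}}(e) - Q_x(e)|$ for each $e \in E_i$ separately, then sum over $E_i$: the case~1 (nearest-multiple) errors will be absorbed \emph{additively}, while the case~2 (round-to-zero) errors will be absorbed \emph{multiplicatively}. Fix $e \in E_i$, write $d_e := \min_{v \in e} d(v)$, and set $\delta_v := x^{(i)}_v - x_v$. The rounding rule yields $|\delta_v| \le 1/(2n^2\sqrt{d(v)}) \le 1/(2n^2\sqrt{d_e})$ at every case~1 vertex, and $|\delta_v| = |x_v| < \epsilon \cdot 2^{-i/2}/(50\sqrt{d(v)}) \le \epsilon \cdot 2^{-i/2}/(50\sqrt{d_e})$ at every case~2 vertex. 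Let $\delta_1, \delta_2$ denote the maxima of $|\delta_v|$ over case~1 and case~2 vertices of $e$ (each taken to be $0$ if no such vertex exists), and set $D := \sqrt{Q_x(e)}$, $\tilde D := \sqrt{Q_{x^{(i)}}(e)}$.

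Since for any $a,b \in e$ we have $|(x^{(i)}_a - x^{(i)}_b) - (x_a - x_b)| \le |\delta_a|+|\delta_b| \le 2\max_{v\in e}|\delta_v|$, taking the maximum over $a,b$ gives $|\tilde D - D| \le 2(\delta_1 + \delta_2)$. Therefore
\begin{equation*}
    |Q_{x^{(i)}}(e) - Q_x(e)| = |\tilde D - D|\cdot(\tilde D + D) \le 4D(\delta_1+\delta_2) + 4(\delta_1+\delta_2)^2 \le 4D\delta_1 + 4D\delta_2 + 8\delta_1^2 + 8\delta_2^2.
\end{equation*}
The definition of $E_i$ gives $\max_{v\in e} x_v^2 \le 2^{-i+1}/d_e$, so $D \le 2\sqrt{2}\cdot 2^{-i/2}/\sqrt{d_e}$.

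The main obstacle is controlling the case~2 terms, since the a~priori bound on $\delta_2$ looks additive. The key observation is that whenever $e$ contains a case~2 vertex $w$, the vertex $u^* := \argmax_{v\in e} x_v^2$ must itself be case~1: indeed $x_{u^*}^2\, d(u^*) \ge x_{u^*}^2\, d_e > 2^{-i}$ by the definition of $E_i$, which is far above the case~2 threshold $\epsilon^2 2^{-i}/2500$. Consequently
\begin{equation*}
    D \ge |x_{u^*} - x_w| \ge |x_{u^*}| - |x_w| > \bigl(1-\tfrac{\epsilon}{50}\bigr)\cdot \tfrac{2^{-i/2}}{\sqrt{d_e}},
\end{equation*}
which rearranges (using $\epsilon \le 1/2$) to $\delta_2 \le \epsilon D/49$. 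Substituting into $4D\delta_2 + 8\delta_2^2$ yields $(4/49 + 8\epsilon/49^2)\,\epsilon D^2 \le (\epsilon/10)\,Q_x(e)$, with room to spare; for edges without a case~2 vertex this contribution vanishes. Summing over $e\in E_i$ gives a total case~2 contribution of at most $(\epsilon/10)\,Q_x(E_i)$.

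Finally, the case~1 terms are handled additively: $4D\delta_1 + 8\delta_1^2 \le 4\sqrt{2}\cdot 2^{-i/2}/(n^2 d_e) + 2/(n^4 d_e)$ per edge. Summing over $e \in E_i$ and invoking Lemma~\ref{lem:sum-min-d}, which provides $\sum_{e\in E_i} 1/d_e \le |V| \le n$, the case~1 contribution to $|Q_{x^{(i)}}(E_i) - Q_x(E_i)|$ is at most $4\sqrt{2}\cdot 2^{-i/2}/n + 2/n^3 \le 20/n$ for $i\ge 1$. Combining the two contributions gives
\begin{equation*}
    |Q_{x^{(i)}}(E_i) - Q_x(E_i)| \le \tfrac{\epsilon}{10}\,Q_x(E_i) + \tfrac{20}{n},
\end{equation*}
which is the desired claim.
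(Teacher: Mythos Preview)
Your proof is correct and follows essentially the same route as the paper: a per-hyperedge bound based on the key observation that the maximizing vertex $u^*=\argmax_{v\in e}x_v^2$ is never rounded to zero (giving a lower bound on $D$ that makes the round-to-zero perturbation a small fraction of $D$), followed by summing the residual additive errors via Lemma~\ref{lem:sum-min-d}. The only cosmetic difference is that the paper splits into two disjoint cases (no vertex rounded to zero vs.\ at least one), absorbing the tiny case-1 error into the case-2 bound when both are present, whereas you keep the $\delta_1$ and $\delta_2$ contributions separate throughout; your bookkeeping is arguably cleaner since it avoids the side check $\epsilon/\sqrt{2500\cdot 2^i}\ge 1/n^2$.
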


Next, we show that for every scale $i$ our sampling process preserves energy of $E_i$ on rounded version of all $x$ simultaneously:
\begin{restatable}{claim}{claimtwo}\label{claim:2}
For all $i=1,\ldots,i^*$,
$$\mathbb P\left[\forall x\in\overline{\mathbb R^V},\ \widetilde Q_{x^{(i)}}(E_i)
  = \Big(1\pm\frac{\epsilon}{10}\Big) Q_{x^{(i)}}(E_i)\pm\frac{\epsilon Q(x)}{10\log n}\right]
  \ge 1-\frac{1}{n^2}.
$$
\end{restatable}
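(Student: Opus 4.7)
\textbf{Proof plan for Claim~\ref{claim:2}.}
The plan is to fix $i$, apply the additive-multiplicative Chernoff bound for each individual value of $x^{(i)}$, and then take a union bound over the finite set of vectors $x^{(i)}$ can take as $x$ ranges over $\overline{\mathbb R^V}$.

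\emph{Counting $x^{(i)}$.} I begin by bounding the number of distinct values of $x^{(i)}$. Since $\sum_v x_v^2 d(v)=1$ and every $v$ with $x^{(i)}_v\neq 0$ contributes at least $\epsilon^2 2^{-i}/2500$ to this sum, the support of $x^{(i)}$ has size at most $K_i := 2500\cdot 2^i/\epsilon^2$. For each such vertex $v$, the rounded value $x^{(i)}_v$ lies in $[-1/\sqrt{d(v)},1/\sqrt{d(v)}]$ on a grid of spacing $1/(n^2\sqrt{d(v)})$, giving only $O(n^2)$ possible values. Selecting the support in $|V|\le n$ and then the values yields at most $\binom{n}{K_i}(O(n^2))^{K_i} = \exp(O(K_i\log n)) = \exp(O(2^i\log n/\epsilon^2))$ distinct vectors.

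\emph{Concentration for a fixed $x^{(i)}$.} For a fixed pair $(x^{(i)}, E_i)$, the quantity $\widetilde Q_{x^{(i)}}(E_i)=\sum_{e\in E_i}w_e Q_{x^{(i)}}(e)$ is a sum of independent nonnegative random variables whose mean is exactly $Q_{x^{(i)}}(E_i)$. For $e\in E_i$, $\max_{v\in e} x_v^2\cdot\min_{v\in e} d(v)\le 2^{-i+1}$, and because rounding changes $x_v$ by at most $1/(n^2\sqrt{d(v)})$ on significant vertices we get $(x^{(i)}_v)^2 = O(x_v^2)$, so $Q_{x^{(i)}}(e)=O(2^{-i}/\min_{v\in e}d(v))$. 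Together with $w_e\le 1/p_e=\min_{v\in e}d(v)/\lambda$, each summand is bounded by $R := O(2^{-i}/\lambda)$. Applying the additive-multiplicative Chernoff bound (Theorem~\ref{thm:am-Chernoff}) with multiplicative slack $\epsilon/10$ and additive slack $A=\epsilon Q(x)/(10\log n)$ gives failure probability at most
$$2\exp\!\Big(-\Omega\Big(\frac{\epsilon\,A}{R}\Big)\Big) = 2\exp\!\Big(-\Omega\Big(\frac{\epsilon^2\cdot 2^i\lambda\, Q(x)}{\log n}\Big)\Big).$$
Using the hypergraph Cheeger inequality (Theorem~\ref{thm:hypergraph-cheeger}) to lower bound $Q(x)\ge r\Phi^2/32$, this is at most $2\exp\bigl(-\Omega(\epsilon^2\cdot 2^i\lambda r\Phi^2/\log n)\bigr)$.

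\emph{Union bound and choice of $\lambda$.} Multiplying the per-vector failure probability by the $\exp(O(2^i\log n/\epsilon^2))$ count from the first step, the overall failure probability is at most
$$\exp\!\Big(O\big(\tfrac{2^i\log n}{\epsilon^2}\big)-\Omega\big(\tfrac{\epsilon^2\cdot 2^i\lambda r\Phi^2}{\log n}\big)\Big).$$
With $\lambda=(\epsilon^{-1}\log n)^{O(1)}/(\Phi^2 r)$ as in~\eqref{eq:def-lambda}, choosing the constants in the $(\epsilon^{-1}\log n)^{O(1)}$ factor large enough makes the negative term dominate and pushes the overall probability below $1/n^2$.

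\emph{Main obstacle.} Strictly speaking, $E_i$ is a function of $x$ rather than of $x^{(i)}$, since different $x$'s rounding to the same $x^{(i)}$ may disagree on membership of hyperedges whose ``power'' $\max_v x_v^2\cdot\min_v d(v)$ sits near the boundary of $(2^{-i},2^{-i+1}]$. I expect this to be the key technical subtlety. I would address it by replacing $E_i$ in the concentration argument with the enlarged set $\widehat E_i(x^{(i)}):=\{e:\max_v(x^{(i)}_v)^2\cdot\min_v d(v)\in(2^{-i-1},2^{-i+2}]\}$, which depends only on $x^{(i)}$, and then arguing that $E_i\subseteq\widehat E_i(x^{(i)})$ while the contribution of the extra boundary hyperedges $\widehat E_i(x^{(i)})\setminus E_i$ is dominated by the additive slack $A=\epsilon Q(x)/(10\log n)$ and can therefore be absorbed into the error.
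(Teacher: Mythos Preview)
Your overall architecture is right and matches the paper: replace $Q(x)$ by its Cheeger lower bound $r\Phi^2/32$, bound the range of each summand by $O(2^{-i}/\lambda)$, apply Theorem~\ref{thm:am-Chernoff}, count the possible $x^{(i)}$ as $\exp(O(2^i\log n/\epsilon^2))$, and close with the choice of $\lambda$. All of that is correct.

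The gap is exactly where you flagged it, and your proposed fix does not work. With your enlarged set
\[
\widehat E_i(x^{(i)}) = \bigl\{e:\ \max_{v\in e}(x^{(i)}_v)^2\cdot\min_{v\in e}d(v)\in(2^{-i-1},2^{-i+2}]\bigr\},
\]
the difference $\widehat E_i(x^{(i)})\setminus E_i$ is not a thin boundary at all: since the rounded and unrounded ``power'' differ by only $O(1/n^2)$, this difference essentially contains all of $E_{i-1}\cup E_{i+1}$, whose energy can be $\Theta(Q(x))$, not $\epsilon Q(x)/(10\log n)$. Worse, even if $\widetilde Q_{x^{(i)}}(\widehat E_i)$ concentrates, you cannot deduce concentration of $\widetilde Q_{x^{(i)}}(E_i)$ from it, because $\widehat E_i\setminus E_i$ still depends on $x$ (through $E_i$), so its sampled energy $\widetilde Q_{x^{(i)}}(\widehat E_i\setminus E_i)$ is itself an uncontrolled random quantity over an $x$-dependent set.

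The paper's resolution is different and worth internalizing: rather than trying to make $E_i$ a function of $x^{(i)}$, it \emph{enumerates} the possible sets $E_i$ as part of the union bound. The observation is that, given $x^{(i)}$, the set $E_i$ is pinned down once you know, for each of the (at most $2500\cdot 2^i/\epsilon^2$) non-zero coordinates $v$, the two threshold degrees
\[
\min\{d\in D: x_v^2 d > 2^{-i}\}
\quad\text{and}\quad
\max\{d\in D: x_v^2 d \le 2^{-i+1}\},
\]
where $D=\{d(v):v\in V\}$. That is at most $|V|^2$ choices per non-zero vertex, so at most $(n^2)^{2500\cdot 2^i/\epsilon^2}$ possible sets $E_i$ given $x^{(i)}$. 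This extra factor is absorbed into the same $\exp(O(2^i\log n/\epsilon^2))$ count and the rest of your calculation goes through unchanged. (One also uses that the zero-rounded vertices cannot attain $\max_{v\in e}x_v^2$ for $e\in E_i$, so they are irrelevant to membership in $E_i$.)
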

We then relate the energy of the sampled $E_i$  on rounded versions of $x$ to the corresponding energy on original $x$:
\begin{restatable}{claim}{claimthree}\label{claim:3}
For all $i=1,\ldots,i^*$,
$$\mathbb P\left[\forall x \in \overline{\mathbb R^V},\ \widetilde Q_{x^{(i)}}(E_i)
  = \Big(1\pm\frac{\epsilon}{10}\Big) \widetilde Q_x(E_i)\pm \frac{60}{n}\right]
  \ge 1-\frac{1}{n^2}.$$
\end{restatable}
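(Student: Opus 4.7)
The plan is to mimic the deterministic rounding analysis of Claim~\ref{claim:1}, applying the same per-edge bounds to the random weighted energy $\widetilde Q$; the only randomness enters through the sample $\widetilde E$ and is handled by conditioning on a single high-probability event.

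First I would condition on the event $|\widetilde E| \leq 2\lambda|V|$, which holds with probability at least $1-O(1/n^2)$ by Lemma~\ref{lem:size-bound}. For any $x \in \overline{\mathbb R^V}$, the triangle inequality gives
\[ |\widetilde Q_x(E_i) - \widetilde Q_{x^{(i)}}(E_i)| \;\leq\; \sum_{e \in \widetilde E \cap E_i} w_e \, |Q_x(e) - Q_{x^{(i)}}(e)|. \]
Then I would invoke the per-edge bound established in the proof of Claim~\ref{claim:1}, namely $|Q_x(e)-Q_{x^{(i)}}(e)| \leq 4\delta_e\sqrt{Q_x(e)}+4\delta_e^2$ with $\delta_e := \max_{v\in e}|x_v-x_v^{(i)}|$. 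Two features of the discretization carry over verbatim: for any $e\in E_i$ the max vertex $v^*$ satisfies $x_{v^*}^2\, d(v^*) \geq 2^{-i}$ and so always lies in the gentle-rounding case (perturbation at most $1/(n^2\sqrt{d(v^*)})$), and $\delta_e \leq c(i)/\sqrt{\min_{v\in e} d(v)}$ where $c(i):=\max\{1/n^2,\,(\epsilon/50)\,2^{-i/2}\}$.

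Next, applying AM--GM on each edge splits the per-edge contribution into a multiplicative part bounded by $(\epsilon/20)\,w_e\, Q_x(e)$ (absorbing $4w_e\delta_e\sqrt{Q_x(e)}$) and a residual proportional to $w_e\delta_e^2$, exactly as in the proof of Claim~\ref{claim:1}. Summed over $\widetilde E \cap E_i$, the multiplicative parts combine to at most $(\epsilon/20)\widetilde Q_x(E_i)$. For the residual I would use that $w_e/\min_{v\in e} d(v)=1/\lambda$ on sampled edges with $p_e<1$ and $w_e=1$ on edges with $p_e=1$; the first sum is then bounded via $|\widetilde E|\leq 2\lambda|V|$ and the second via Lemma~\ref{lem:sum-min-d}, yielding
\[ \sum_{e\in\widetilde E\cap E_i} w_e\,\delta_e^2 \;\leq\; 3\,c(i)^2\,|V|. \]
In the regime $c(i)=1/n^2$ (small $i$), this residual times the AM--GM factor $O(1/\epsilon)$ contributes $O(|V|/(\epsilon n^4))=O(1/n)$ to the additive error. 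In the regime $c(i)=(\epsilon/50)\,2^{-i/2}$, the same threshold calibration as in Claim~\ref{claim:1} (the factor $2500$ in the definition of $x^{(i)}$) causes the residual to be absorbed into the multiplicative term, giving the final bound $(\epsilon/10)\widetilde Q_x(E_i)+60/n$.

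The main obstacle is ensuring the random weights $w_e$ interact cleanly with the per-edge rounding analysis of Claim~\ref{claim:1}. The key feature that makes it work is that the importance-sampling weights $w_e=1/p_e$ make $w_e\,\delta_e^2$ scale like $1/\lambda$ on sampled edges, so the sampling rate $\lambda$ cancels against the cardinality bound $|\widetilde E|\leq 2\lambda|V|$ and leaves a clean bound proportional to $|V|$. Because the inequality is deterministic once we condition on the cardinality event, it holds simultaneously for all $x\in\overline{\mathbb R^V}$ without any further union bound over $x$.
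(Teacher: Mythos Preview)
Your overall architecture matches the paper's: condition on the event $|\widetilde E|\le 2\lambda|V|$ from Lemma~\ref{lem:size-bound}, apply a per-edge rounding bound, multiply by $w_e$, and sum. The paper does exactly this, but it invokes Claim~\ref{claim:0} directly, which already gives the clean per-edge split
\[
|Q_x(e)-Q_{x^{(i)}}(e)|\le \frac{\epsilon}{10}\,Q_x(e)+\frac{20}{n^2\min_{v\in e}d(v)}\qquad(e\in E_i),
\]
and then bounds $\sum_{e\in E_i}\frac{20w_e}{n^2\min_{v\in e}d(v)}\le 60/n$ using $w_e\le 1+\min_{v\in e}d(v)/\lambda$, Lemma~\ref{lem:sum-min-d}, and the size event.

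Your route through AM--GM with the uniform bound $\delta_e\le c(i)/\sqrt{\min_{v\in e}d(v)}$ has a genuine gap. After summing you obtain $\sum_{e}w_e\delta_e^2\le 3c(i)^2|V|$, and with the AM--GM factor $O(1/\epsilon)$ the residual becomes $O(\epsilon\,2^{-i}|V|)$ in the regime $c(i)=(\epsilon/50)2^{-i/2}$. This quantity is not bounded by $60/n$, and it cannot be ``absorbed into the multiplicative term'' $(\epsilon/10)\widetilde Q_x(E_i)$ either: the random quantity $\widetilde Q_x(E_i)$ may be far smaller than $2^{-i}|V|$ (indeed it can be zero). Moreover, for the relevant range $i\le i^*=\lceil 2\log n\rceil$ and $\epsilon\ge 100/n$ one always has $(\epsilon/50)2^{-i/2}>1/n^2$, so the ``$c(i)=1/n^2$'' regime (which you label ``small $i$'', but would actually correspond to large $i$) never occurs; the broken regime is the only one.

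The threshold calibration you allude to does work, but only \emph{per edge}, not after summing. The point of Claim~\ref{claim:0}'s Case~2 is that whenever some vertex of $e\in E_i$ is rounded to zero, one has the lower bound $\sqrt{Q_x(e)}\ge (49/50)\sqrt{2^{-i}/\min_{v\in e}d(v)}$, which lets you replace $\delta_e$ by a multiple of $\sqrt{Q_x(e)}$ on that edge. If no vertex is rounded to zero, $\delta_e\le 1/(n^2\sqrt{\min_{v\in e}d(v)})$ and the residual lands in the additive $20/(n^2\min_{v\in e}d(v))$ term. Your uniform $c(i)$ bound erases this edge-by-edge dichotomy. The simplest fix is to drop the AM--GM detour and cite Claim~\ref{claim:0} directly, exactly as the paper does.
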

Finally we bound the error introduced on the hyperedges of $E_*$.
\begin{restatable}{claim}{claimfour}\label{claim:4}
$$
  \mathbb P\left[\forall x\in\overline{\mathbb R^V},\ \widetilde Q_x(E_*)
  = Q_x(E_*)\pm\frac{12}n\right]
  \ge  1-\frac{1}{n^2}.
$$
\end{restatable}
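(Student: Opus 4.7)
The plan is to observe that both $Q_x(E_*)$ and $\widetilde Q_x(E_*)$ are already of order $O(1/n)$ individually, \emph{uniformly} over $x \in \overline{\mathbb R^V}$. Since the only random event invoked is the size bound on $\widetilde E$ from Lemma~\ref{lem:size-bound} (which does not depend on $x$), the simultaneous statement follows without any union bound over $x$. The claim is really an ``observation'' style lemma; I do not expect a serious technical obstacle, but some care is needed because $E_*$ depends on $x$.

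\textbf{Step 1: deterministic bound on $Q_x(E_*)$.} The definition of $E_*$ together with $i^* = \lceil 2\log n\rceil$ gives, for every $e \in E_*$,
\[
\max_{v\in e}x_v^2 \le \frac{2^{-i^*}}{\min_{v\in e}d(v)} \le \frac{1}{n^2\min_{v\in e}d(v)}.
\]
Combined with the trivial bound $Q_x(e)=\max_{a,b\in e}(x_a-x_b)^2 \le 4\max_{v\in e}x_v^2$, this yields $Q_x(e)\le 4/(n^2\min_{v\in e}d(v))$. Summing over $e\in E_*\subseteq E$ and invoking Lemma~\ref{lem:sum-min-d} gives $Q_x(E_*) \le 4|V|/n^2 \le 4/n$.

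\textbf{Step 2: uniform random bound on $\widetilde Q_x(E_*)$.} I split the sampled hyperedges in $\widetilde E\cap E_*$ according to whether $p_e=1$ or $p_e<1$. If $p_e=1$ then $w_e=1$, and the per-edge bound from Step 1 together with Lemma~\ref{lem:sum-min-d} contributes at most $4/n$. If $p_e<1$ then $w_e = \min_{v\in e}d(v)/\lambda$, so
\[
w_e Q_x(e) \le \frac{\min_{v\in e}d(v)}{\lambda}\cdot\frac{4}{n^2\min_{v\in e}d(v)} = \frac{4}{\lambda n^2},
\]
and the total contribution of such hyperedges is at most $|\widetilde E|\cdot 4/(\lambda n^2)$. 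On the event $\{|\widetilde E|\le 2\lambda|V|\}$, which holds with probability $\ge 1-O(1/n^2)$ by Lemma~\ref{lem:size-bound}, this is at most $8|V|/n^2 \le 8/n$. Thus $\widetilde Q_x(E_*)\le 12/n$ uniformly in $x$ on this event. (If $\lambda|V|$ is too small for Lemma~\ref{lem:size-bound} to apply, then $|\widetilde E|$ is deterministically $O(\log n)$ and the contribution is even smaller.)

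\textbf{Step 3: combine.} On the high-probability event above, for every $x\in\overline{\mathbb R^V}$ we have $Q_x(E_*),\widetilde Q_x(E_*)\in[0,12/n]$, so $|\widetilde Q_x(E_*)-Q_x(E_*)|\le 12/n$, establishing the claim. The only subtlety is that $E_*=E_*(x)$ depends on $x$, but the per-hyperedge upper bounds in Steps 1--2 hold for \emph{any} $e$ merely satisfying the defining inequality of $E_*$, so this $x$-dependence washes out and no union bound over $x$ is required.
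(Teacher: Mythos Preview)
Your proof is correct and follows essentially the same approach as the paper: bound $Q_x(E_*)$ and $\widetilde Q_x(E_*)$ separately by $O(1/n)$ using the defining inequality of $E_*$, Lemma~\ref{lem:sum-min-d}, and the size bound from Lemma~\ref{lem:size-bound}, then conclude via the triangle inequality. Your explicit case split on $p_e=1$ versus $p_e<1$ is in fact slightly more careful than the paper's presentation, which silently uses $w_e/\min_{v\in e}d(v)\le 1/\lambda$ for all sampled $e$.
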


Before proving these claims,
which we do in the next section,
let us show how to use them to show the correctness of the sparsifier.
\begin{lemma}\label{lem:spectral-property}
  The hypergraph $\widetilde{G}$ is an $\epsilon$-spectral sparsifier of $G$ with probability $1-O((\log n)/n^2)$.
\end{lemma}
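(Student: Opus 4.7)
The plan is to combine Claims~\ref{claim:1}--\ref{claim:4} via a union bound, chain the per-scale approximations for each $x$, and finally absorb the residual additive errors using the Cheeger lower bound on $Q(x)$ afforded by the expansion hypothesis. By a union bound over the $i^*=O(\log n)$ scales in Claim~\ref{claim:2}, the corresponding $i^*$ scales in Claim~\ref{claim:3}, and Claim~\ref{claim:4}, with probability $1-O((\log n)/n^2)$ all of the stated events hold simultaneously. I condition on this event for the rest of the argument and note that it suffices, by translation- and scale-invariance, to verify \eqref{eq:main-spectral} for every $x\in\overline{\mathbb R^V}$.

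Fix such an $x$. For each $i=1,\ldots,i^*$, I chain the three approximations in the order suggested by the claims:
\begin{align*}
\widetilde Q_x(E_i)
&=\Big(1\pm\tfrac{\epsilon}{10}\Big)\widetilde Q_{x^{(i)}}(E_i)\pm\tfrac{60}{n}
&&\text{(Claim~\ref{claim:3})}\\
&=\Big(1\pm\tfrac{\epsilon}{10}\Big)^2 Q_{x^{(i)}}(E_i)\pm\tfrac{\epsilon Q(x)}{5\log n}\pm\tfrac{60}{n}
&&\text{(Claim~\ref{claim:2})}\\
&=\Big(1\pm\tfrac{\epsilon}{10}\Big)^3 Q_x(E_i)\pm\tfrac{\epsilon Q(x)}{5\log n}\pm\tfrac{O(1)}{n}
&&\text{(Claim~\ref{claim:1}),}
\end{align*}
where the factor $(1\pm\epsilon/10)$ in front of the $\epsilon Q(x)/(10\log n)$ term is absorbed using $\epsilon\le 1/2$. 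Summing over the $O(\log n)$ scales and adding the $E_*$ estimate from Claim~\ref{claim:4},
\[
\widetilde Q(x)=\Big(1\pm\tfrac{\epsilon}{10}\Big)^3 Q(x)\pm\tfrac{\epsilon Q(x)}{5}\pm\tfrac{O(\log n)}{n}.
\]
Expanding $(1\pm\epsilon/10)^3\subseteq 1\pm\tfrac{\epsilon}{2}$ for $\epsilon\le 1/2$ yields a combined multiplicative slack of at most $\tfrac{\epsilon}{2}+\tfrac{\epsilon}{5}<\tfrac{3\epsilon}{4}$.

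The last step, which is where the expander assumption is used, is to absorb the additive $O((\log n)/n)$ error into $\epsilon Q(x)$. By Theorem~\ref{thm:hypergraph-cheeger} and the normalization of $x$, we have $Q(x)\ge r\Phi^2/32$, and the hypothesis $\Phi\ge 350\sqrt{(\log n)/(\epsilon r n)}$ then gives $r\Phi^2\ge 350^2(\log n)/(\epsilon n)$, so that $\epsilon Q(x)\ge C(\log n)/n$ for a sufficiently large constant $C$. Hence the $O((\log n)/n)$ additive error is at most $\tfrac{\epsilon}{4}Q(x)$, and together with the $\tfrac{3\epsilon}{4}$ multiplicative slack we conclude $\widetilde Q(x)=(1\pm\epsilon)Q(x)$, proving \eqref{eq:main-spectral} for all $x\in\overline{\mathbb R^V}$ simultaneously on the good event.

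The main obstacle is really bookkeeping: chaining three one-sided approximations per scale and summing $O(\log n)$ scales without blowing up the multiplicative constant, and verifying that the quantitative form of the expansion assumption was chosen precisely so that Cheeger dominates the $O((\log n)/n)$ residue. Everything else—the per-scale concentration (Claim~\ref{claim:2}), the rounding error control (Claims~\ref{claim:1} and \ref{claim:3}), and the tail $E_*$ (Claim~\ref{claim:4})—is supplied by the claims and only needs to be invoked.
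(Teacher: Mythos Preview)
Your proposal is correct and follows essentially the same route as the paper: take a union bound over Claims~\ref{claim:2}--\ref{claim:4} across all $i^*=O(\log n)$ scales, chain the three per-scale approximations (Claims~\ref{claim:3}, \ref{claim:2}, \ref{claim:1}) for each $E_i$, sum and add the $E_*$ contribution from Claim~\ref{claim:4}, then use Theorem~\ref{thm:hypergraph-cheeger} together with the hypothesis $\Phi\ge 350\sqrt{(\log n)/(\epsilon r n)}$ to absorb the residual $O((\log n)/n)$ additive error into $\epsilon Q(x)$. The only differences are cosmetic constant-tracking (the paper writes out one-sided inequalities explicitly rather than the $\pm$ shorthand), and your per-scale/summation constants are slightly loose but easily tightened.
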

\begin{proof}
Assume henceforth that the events in Claims~\ref{claim:2},~\ref{claim:3}, and~\ref{claim:4} all hold simultaneously for every $i$ ---
we know that this happens with probability $1-O(\log n/n^2)$ ---
and let us compare $Q_x(E_i)$ with $\widetilde Q_x(E_i)$ for each $i$.
If the above claims had no additive error, we could conclude that
$\widetilde Q_x(E_i) = (1\pm{4\epsilon}/{10}) Q_x(E_i)$.
Similarly, if they had no multiplicative error, we could conclude that
$
  |\widetilde Q_x(E_i) - Q_x(E_i)|
  \leq \frac{80}{n} + \frac{\epsilon Q(x)}{10\log n}
$;
we could then use the assumed lower bound on $\Phi$
to bound $\frac{80}{n} \leq \frac{\epsilon Q(x)}{10\log n}$,
and sum up these additive errors over all $i=1,\ldots,i^*$
to a total that is bounded by $\frac{4}{10}\epsilon Q(x)$.
These arguments extend easily also to $E_*$.

For the formal calculation, consider first one direction,
\ifprocs
\begin{align*}
&\widetilde Q_x(E_i)\\
&\le{\left(1 - \frac{\epsilon}{10}\right)}^{-1}
\left[\frac{60}{n} +  \widetilde Q_{x^{(i)}}(E_i)\right]
%& \text{By Claim~\ref{claim:3}}
\\
&\le{\left(1 - \frac{\epsilon}{10}\right)}^{-1}
\left[ \frac{60}{n}
+ \frac{\epsilon Q(x)}{10\log n} + \left(1+\frac{\epsilon}{10}\right) Q_{x^{(i)}}(E_i) \right]
%& \text{By Claim~\ref{claim:2}}
\\
&\le{\left(1 - \frac{\epsilon}{10}\right)}^{-1}
\left[\frac{60}{n}
+  \frac{\epsilon Q(x)}{10\log n} + \left(1+\frac{\epsilon}{10}\right) \left[ \frac{20}{n} + \left(1+\frac{\epsilon}{10}\right) Q_x(E_i) \right] \right]
% & \text{By Claim~\ref{claim:1}}
\\
&\le\frac{120}{n} + \frac{2\epsilon Q(x)}{10\log n}
+ \left(1+\frac{4\epsilon}{10}\right) Q_x(E_i).
\end{align*}
Here the second line follows by Claim~\ref{claim:3}, the third line follows by Claim~\ref{claim:2}, the fourth line follows by Claim~\ref{claim:1}, and the last line follows since $\epsilon\le1/2$.
\else
\begin{align*}
\widetilde Q_x(E_i)
&\le {\left(1 - \frac{\epsilon}{10}\right)}^{-1}
\left[\frac{60}{n} +  \widetilde Q_{x^{(i)}}(E_i)\right]
& \text{By Claim~\ref{claim:3}}
\\
&\le {\left(1 - \frac{\epsilon}{10}\right)}^{-1}
\left[ \frac{60}{n}
+ \frac{\epsilon Q(x)}{10\log n} + \left(1+\frac{\epsilon}{10}\right) Q_{x^{(i)}}(E_i) \right]
& \text{By Claim~\ref{claim:2}}
\\
&\le {\left(1 - \frac{\epsilon}{10}\right)}^{-1}
\left[\frac{60}{n}
+  \frac{\epsilon Q(x)}{10\log n} + \left(1+\frac{\epsilon}{10}\right) \left[ \frac{20}{n} + \left(1+\frac{\epsilon}{10}\right) Q_x(E_i) \right] \right]
& \text{By Claim~\ref{claim:1}}
\\
&\le \frac{120}{n} + \frac{2\epsilon Q(x)}{10\log n}
+ \left(1+\frac{4\epsilon}{10}\right) Q_x(E_i) ,
\end{align*}\\
since $\epsilon\le1/2$.
\fi
Now sum this over all $i$ and combine it
with the bound from Claim~\ref{claim:4} on the error introduced by $E_*$,
to get
\begin{align*}
  \widetilde Q(x)
  &=   \widetilde Q_x(E_*) + \sum_{i=1}^{i^*} \widetilde Q_x(E_i) \\
  &\le \left[ Q_x(E_*) + \frac{12}{n}\right]
    + i^* \left[ \frac{120}{n} + \frac{2\epsilon Q(x)}{10\log n} \right]
    + \left(1+\frac{4\epsilon}{10}\right) \sum_{i=1}^{i^*} Q_x(E_i)
  \\
  &\le \frac{250\log n}{n} + \frac{5\epsilon Q(x)}{10} + \left(1+\frac{4\epsilon}{10}\right) Q(x)
  \\
  &\le (1+\epsilon) Q(x).
\end{align*}
The last inequality follows from $Q(x)\ge\Phi^2r/32$ (by Theorem~\ref{thm:hypergraph-cheeger}), and the theorem's assumption that $\Phi\ge350\sqrt{(\log n)/(\epsilon rn)}$.

The other direction,
$\widetilde Q(x) \ge (1-\epsilon) Q(x)$, follows similarly.
\end{proof}

Theorem~\ref{thm:expander-sparsification} then follows by Lemmas~\ref{lem:size-bound} and~\ref{lem:spectral-property} and a union bound.

\subsection{Proofs of Claims~\ref{claim:1},~\ref{claim:2},~\ref{claim:3}, and~\ref{claim:4}}\label{sec:expander-sparsification-claims}
We begin by presenting a preliminary lemma about the effects of approximating $x$ on a general quadratic form, which will be useful in proving the four
\ifprocs
claims.
\else
claims, and will be useful later on in Section~\ref{sec:general-sparsification}.
\fi

\begin{lemma}\label{lem:additive-error}
Let $G=(V,E)$ be a hypergraph and let $x, \widetilde x$ be two vectors in $\mathbb R^V$ such that $|x_v-\widetilde x_v|\le\delta$ on every coordinate $v\in V$ for some $\delta\ge0$.
Then for any $e \in E$,
\[
  \big|Q_x(e)-Q_{\widetilde x}(e)\big|
  \le 4\delta\left(\sqrt{Q_x(e)}+\delta\right).
\]
\end{lemma}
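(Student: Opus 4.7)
The plan is to exploit the fact that $Q_x(e)$ has a clean expression as the square of the diameter of the vertex values on $e$, i.e., $\sqrt{Q_x(e)} = \max_{v \in e} x_v - \min_{v \in e} x_v$. This reduces the problem to controlling how much the diameter of a finite set of reals changes under a $\delta$-perturbation of each element, which is elementary.

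First I would set $M_x = \max_{v \in e} x_v$ and $m_x = \min_{v \in e} x_v$, and analogously $M_{\widetilde x}, m_{\widetilde x}$. Since $|x_v - \widetilde x_v| \le \delta$ for every $v$, and the $\max$ and $\min$ of a finite set are $1$-Lipschitz functions of its entries, we get $|M_x - M_{\widetilde x}| \le \delta$ and $|m_x - m_{\widetilde x}| \le \delta$. By the triangle inequality this gives
\[
\bigl|\sqrt{Q_x(e)} - \sqrt{Q_{\widetilde x}(e)}\bigr|
= \bigl|(M_x - m_x) - (M_{\widetilde x} - m_{\widetilde x})\bigr| \le 2\delta.
\]

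Next, I would use the factorization $a^2 - b^2 = (a-b)(a+b)$ with $a = \sqrt{Q_x(e)}$ and $b = \sqrt{Q_{\widetilde x}(e)}$, which gives
\[
\bigl|Q_x(e) - Q_{\widetilde x}(e)\bigr|
= \bigl|\sqrt{Q_x(e)} - \sqrt{Q_{\widetilde x}(e)}\bigr|\cdot\bigl(\sqrt{Q_x(e)} + \sqrt{Q_{\widetilde x}(e)}\bigr)
\le 2\delta\bigl(\sqrt{Q_x(e)} + \sqrt{Q_{\widetilde x}(e)}\bigr).
\]
Finally I would bound $\sqrt{Q_{\widetilde x}(e)} \le \sqrt{Q_x(e)} + 2\delta$ using the previous step, yielding
\[
\bigl|Q_x(e) - Q_{\widetilde x}(e)\bigr| \le 2\delta\bigl(2\sqrt{Q_x(e)} + 2\delta\bigr) = 4\delta\bigl(\sqrt{Q_x(e)} + \delta\bigr),
\]
which is the claimed bound.

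There is no real obstacle here: the only conceptual step is recognizing that $\sqrt{Q_x(e)}$ is exactly the diameter $M_x - m_x$, after which the proof is a two-line Lipschitz argument combined with the difference-of-squares identity. The asymmetry in the bound (with $\sqrt{Q_x(e)}$ rather than $\sqrt{Q_{\widetilde x}(e)}$) is the reason we end up with the factor $4\delta$ rather than $2\delta$, coming from converting $\sqrt{Q_{\widetilde x}(e)}$ back into $\sqrt{Q_x(e)} + 2\delta$.
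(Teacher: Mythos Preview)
Your proof is correct and is essentially the same argument as the paper's: both use the difference-of-squares factorization together with the fact that a $\delta$-perturbation of the coordinates changes the relevant pairwise difference by at most $2\delta$. Your diameter formulation $\sqrt{Q_x(e)}=M_x-m_x$ is slightly cleaner than the paper's version, which instead does a case split on whether $Q_x(e)\ge Q_{\widetilde x}(e)$ to pick a concrete pair $u^*,v^*$ before carrying out the same factorization.
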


\begin{proof}
Given $e\in E$, we begin by finding two vertices $u^*,v^*\in e$ such that
$$\big|Q_x(e)-Q_{\widetilde x}(e)\big| \le \big|(x_{u^*}-x_{v^*})^2-(\widetilde x_{u^*}-\widetilde x_{v^*})^2 \big|.$$
It is indeed possible to find such vertices.
If $Q_x(e)\ge Q_{\widetilde x}(e)$, set $u^*,v^*$ such that $Q_x(e) = (x_{u^*}-x_{v^*})^2$,
and we get
\[
  \big|Q_x(e)-Q_{\widetilde x}(e)\big|
  = Q_x(e)-Q_{\widetilde x}(e)
  \le (x_{u^*}-x_{v^*})^2- (\widetilde x_{u^*}-\widetilde x_{v^*})^2,
\]
since $Q_{\widetilde x}(e) \geq (\widetilde x_u-\widetilde x_v)^2$ for every $u, v\in e$, and in particular $Q_{\widetilde x}(e) \geq (\widetilde x_{u^*}-\widetilde x_{v^*})^2$. Otherwise, i.e., $Q_{\widetilde x} > Q_x(e)$, similarly set $u^*,v^*$ such that $Q_{\widetilde x}(e) = (\widetilde x_{u^*}-\widetilde x_{v^*})^2 $.

Using these $u^*,v^*$, we have
\begin{align*}
  \big|Q_x(e)-Q_{\widetilde x}(e)\big|
  &\le\big|(x_{u^*}-x_{v^*})^2-(\widetilde x_{u^*}-\widetilde x_{v^*})^2\big|
  \\
  &= |x_{u^*}+\widetilde x_{u^*}-x_{v^*}-\widetilde x_{v^*}|
    \cdot|x_{u^*}-\widetilde x_{u^*}-x_{v^*}+\widetilde x_{v^*}|
\end{align*}
Let us now bound each of these two factors.
The second one is clearly bounded by $2\delta$ by the lemma's assumption.
To bound the first term,
we use that $Q_x(e)=\max_{u,v\in e}(x_u-x_v)^2\ge(x_{u^*}-x_{v^*})^2$,
and therefore
\ifprocs
\begin{align*}
|x_{u^*}+\widetilde x_{u^*}-x_{v^*}-\widetilde x_{v^*}|&\le2|x_{u^*}-x_{v^*}|+|x_{u^*}-\widetilde x_{u^*}|+|x_{v^*}-\widetilde x_{v^*}|\\
&\le2\sqrt{Q_x(e)}+2\delta.
\end{align*}
\else
\begin{align*}
	|x_{u^*}+\widetilde x_{u^*}-x_{v^*}-\widetilde x_{v^*}|&\le2|x_{u^*}-x_{v^*}|+|x_{u^*}-\widetilde x_{u^*}|+|x_{v^*}-\widetilde x_{v^*}|
	\le2\sqrt{Q_x(e)}+2\delta.
\end{align*}
\fi
Putting these two bounds together, we obtain the result of the lemma.
\end{proof}

The following claim examines the effects of rounding from $x$ to $x^{(i)}$ (from the previous section) on a single hyperedge of $E_i$. This is the main technical claim that allows as to then easily prove both Claims~\ref{claim:1} and~\ref{claim:3}.

\begin{claim}\label{claim:0}
For all $x \in \mathbb{R}^V$, all $i=1,\ldots,i^*$, and every hyperedge $e\in E_i$,
$$
  Q_{x^{(i)}}(e)
  = \Big(1\pm\frac{\epsilon}{10}\Big) Q_x(e)\pm\frac{20}{n^2\min_{v\in e}d(v)}.
$$
\end{claim}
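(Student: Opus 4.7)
My plan is to apply Lemma~\ref{lem:additive-error} with a careful bound on $\delta := \max_{v \in e}|x_v - x^{(i)}_v|$. Writing $d_* := \min_{v\in e} d(v)$, the per-vertex rounding satisfies: if $x_v^2 d(v) \ge \epsilon^2 2^{-i}/2500$ then $|x_v - x^{(i)}_v| \le 1/(2n^2\sqrt{d(v)}) \le 1/(2n^2\sqrt{d_*})$; otherwise $x^{(i)}_v = 0$ and $|x_v - x^{(i)}_v| = |x_v| < \epsilon/(50\sqrt{d(v) 2^i}) \le \epsilon/(50\sqrt{d_* 2^i})$. Hence $\delta \le \max\{1/(2n^2\sqrt{d_*}),\; \epsilon/(50\sqrt{d_* 2^i})\}$.

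By Lemma~\ref{lem:additive-error} and AM-GM (bounding $4\delta\sqrt{Q_x(e)} \le \tfrac{\epsilon}{20}Q_x(e) + \tfrac{80\delta^2}{\epsilon}$), I would obtain
\[
  |Q_x(e) - Q_{x^{(i)}}(e)| \le \tfrac{\epsilon}{20}Q_x(e) + \tfrac{84\delta^2}{\epsilon}.
\]
If the first bound on $\delta$ dominates, then $\tfrac{84\delta^2}{\epsilon} \le \tfrac{21}{\epsilon n^4 d_*}$, which by $\epsilon \ge 100/n$ easily fits within the additive slack $\tfrac{20}{n^2 d_*}$. Otherwise the second bound on $\delta$ dominates, which forces at least one vertex $w \in e$ to be rounded to zero; then $\tfrac{84\delta^2}{\epsilon} \le \tfrac{84\epsilon}{2500 \cdot 2^i d_*}$ will exceed the additive budget for small $i$ and must be absorbed into the multiplicative term.

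The hard part will be establishing the lower bound $Q_x(e) \ge (1 - \epsilon/25) \cdot 2^{-i}/d_*$ in this second case, via case analysis on the signs of the $x_v$'s. Let $v^* := \arg\max_{v\in e}|x_v|$; since $e \in E_i$ forces $x_{v^*}^2 d(v^*) \ge x_{v^*}^2 d_* > 2^{-i} > \epsilon^2 2^{-i}/2500$, the vertex $v^*$ is not rounded to zero. Flipping signs if necessary, assume $x_{v^*} > 0$, so $u := \arg\max_v x_v = v^*$ with $x_u \ge 1/\sqrt{d_* 2^i}$, and let $l := \arg\min_v x_v$. If $x_l \le 0$ then $Q_x(e) = (x_u - x_l)^2 \ge x_u^2 \ge 2^{-i}/d_*$. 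Otherwise all $x_v > 0$, and since the zero-rounded vertex $w$ satisfies $x_w \ge x_l$ we get $x_l \le x_w < \epsilon/(50\sqrt{d_* 2^i}) \le \epsilon x_u/50$, so $(x_u - x_l)^2 \ge x_u^2(1 - \epsilon/50)^2 \ge (1 - \epsilon/25)\cdot 2^{-i}/d_*$. Armed with this lower bound, $\tfrac{84\delta^2}{\epsilon} \le \tfrac{84\epsilon}{2500(1 - \epsilon/25)}Q_x(e) \le \tfrac{\epsilon}{20}Q_x(e)$ for $\epsilon \le 1/2$, yielding the target multiplicative error $\tfrac{\epsilon}{10}Q_x(e)$ in total.
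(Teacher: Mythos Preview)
Your approach matches the paper's: apply Lemma~\ref{lem:additive-error} and split into two cases, obtaining a pure additive error when the rounding is only to nearby grid points, and a pure multiplicative error when some vertex of $e$ is rounded to zero (using the resulting lower bound on $Q_x(e)$). The AM--GM step you insert is a clean alternative to the paper's direct substitution of upper/lower bounds on $\sqrt{Q_x(e)}$ into $4\delta(\sqrt{Q_x(e)}+\delta)$, and your sign-based derivation of the lower bound on $Q_x(e)$ is equivalent to the paper's one-line reverse triangle inequality $\sqrt{Q_x(e)} \ge |x_{v^*}| - |x_w|$.

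There is, however, one genuine slip in the case split. You split on which of the two \emph{a priori} upper bounds for $\delta$ is larger, and assert that ``the second bound dominating forces at least one vertex $w\in e$ to be rounded to zero.'' That implication is false: which of the two numbers $1/(2n^2\sqrt{d_*})$ and $\epsilon/(50\sqrt{d_*2^i})$ is larger says nothing about what actually happened in the rounding of $x$ on $e$. In fact, under the standing hypotheses $\epsilon \ge 100/n$ and $i \le i^*=\lceil 2\log n\rceil$ one checks that $\epsilon/(50\sqrt{d_*2^i}) \ge 1/(2n^2\sqrt{d_*})$ \emph{always}, so your ``first bound dominates'' case is vacuous and you would be asserting that every $e\in E_i$ contains a zero-rounded vertex.

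The fix is to split on the right thing: whether or not some $v\in e$ is actually rounded to zero. If none is, the realized $\delta$ satisfies $\delta \le 1/(2n^2\sqrt{d_*})$ and your additive calculation goes through verbatim. If some $w$ is rounded to zero, then (since the zero-rounding bound always majorizes the grid-rounding bound, as just noted) $\delta \le \epsilon/(50\sqrt{d_*2^i})$, the witness $w$ genuinely exists, and your multiplicative calculation goes through. This is exactly the paper's case split.
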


\begin{proof}% [Proof of Lemma~\ref{lemma:0}]
We examine the difference $\big|Q_x(e)-Q_{x^{(i)}}(e)\big|$. Recall that $x^{(i)}_v$ is either rounded to the nearest multiple of $1/(n^2\sqrt{d(v)})$ or rounded to zero.
We consider two cases:
	\begin{enumerate}
		\item No vertex in $e$ is rounded to zero.
		\item At least one vertex in $e$ is rounded to zero.
	\end{enumerate}
For simplicity, denote $x_+ = \max_{v\in e}|x_v|$ and $d_-=\min_{v\in e}d(v)$. Recall that by definition of $E_i$,
	\begin{equation}\label{eq:Ei}
	x_+^2d_-\in\left(2^{-i},2^{-i+1}\right].
	\end{equation}

	In the first case, the value of $x$ on every vertex $v\in e$ changes by at most $1/(n^2\sqrt{d(v)})\le1/(n^2\sqrt{d_-})$. Thus, we can apply Lemma~\ref{lem:additive-error} with $\delta = 1/(n^2\sqrt{d_-})$ to get
        \begin{align*}
\big|Q_x(e)-Q_{x^{(i)}}(e)\big|&\le\frac{4}{n^2\sqrt{d_-}}\cdot\left(\sqrt{Q_x(e)}+\frac{1}{n^2\sqrt{d_-}}\right).
	\end{align*}
	We can use~\eqref{eq:Ei} to bound $Q_x(e)\le 4x_+^2 \le 4\cdot2^{1-i}/d_-\le4/d_-$. Substituting this in, we get
	\begin{align*}
	\big|Q_x(e)-Q_{x^{(i)}}(e)\big|&\le\frac{4}{n^2\sqrt{d_-}}\cdot\left(\frac{2}{\sqrt{d_-}}+\frac{1}{n^2\sqrt{d_-}}\right)
	\le\frac{20}{n^2d_-}.
	\end{align*}

In the second case, the value of $x$ on a vertex in $e$ can still change by at most $1/(n^2\sqrt{d_-})$ by rounding to a non-zero value.
It can additionally be rounded to a zero,
as long as $x_v^2 d(v) < \epsilon^2 2^{-i}/2500$,
which amounts to additive error per coordinate of at most
$|x_v| < \epsilon/\sqrt{2500\cdot2^i d(v)} \le \epsilon/\sqrt{2500\cdot 2^i d_-}$.
Therefore we can apply Lemma~\ref{lem:additive-error} with
$\delta = \epsilon/\sqrt{2500\cdot2^i d_-} \geq 1/(n^2\sqrt{d_-})$,
which gives us that
$$\big|Q_x(e)-Q_{x^{(i)}}(e)\big|\le\frac{4\epsilon}{\sqrt{2500\cdot2^i d_-}}\cdot\left(\sqrt{Q_x(e)}+\frac{\epsilon}{\sqrt{2500\cdot2^i d_-}}\right).$$

This time we use a lower bound on $Q_x(e)$. Recall that we assumed that at least one vertex in $e$ is rounded to zero. Let one such vertex be $v_0$. This means that $x_{v_0}^2 d(v_0) \le \epsilon^2 2^{-i}/2500$,
but at the same time $x_+^2 d(v_0) \ge x_+^2 d_- \ge 2^{-i}$.
Using these two facts, we get our lower bound
\begin{align*}
  \sqrt{Q_x(e)}&\ge x_+-|x_{v_0}|
  \ge x_+-\epsilon x_+/50
  \ge\frac{49}{50\sqrt{2^i d_-}}.
\end{align*}
Substituting this in, we get
\ifprocs
\begin{align*}
\big|Q_x(e)-Q_{x^{(i)}}(e)\big|
&\le	\frac{4\epsilon\sqrt{Q_x(e)}}{49} \cdot\left(\sqrt{Q_x(e)}+\frac{\epsilon\sqrt{Q_x(e)}}{49} \right)\\
&\le\frac{4\epsilon Q_x(e)}{49}\cdot\left(1+\frac{\epsilon}{49}\right)\\
&\le\frac{\epsilon}{10} Q_x(e).
\end{align*}
\else
\begin{align*}
    \big|Q_x(e)-Q_{x^{(i)}}(e)\big|
    &\le	\frac{4\epsilon\sqrt{Q_x(e)}}{49} \cdot\left(\sqrt{Q_x(e)}+\frac{\epsilon\sqrt{Q_x(e)}}{49} \right)
\le\frac{4\epsilon Q_x(e)}{49}\cdot\left(1+\frac{\epsilon}{49}\right)
\le\frac{\epsilon}{10} Q_x(e).
\end{align*}
\fi
	In conclusion, in the first case we get the claimed additive error, while in the second case we get the claimed multiplicative error.
\end{proof}

We are now ready to proceed to proving Claims~\ref{claim:1} and~\ref{claim:3}.

\claimone*
\begin{proof}%[Proof of Claim~\ref{claim:1}]
We can bound
\ifprocs
\begin{align*}
\big|Q_x(E_i)-Q_{x^{(i)}}(E_i)\big|
&\le\sum_{e\in E_i}\big|Q_x(e)-Q_{x^{(i)}}(e)\big| \\
&\le\sum_{e\in E_i}\left[\frac{\epsilon}{10} Q_x(e)+\frac{20}{n^2\min_{v\in e}d(v)}\right]% & \text{by Claim~\ref{claim:0}}
\\
&\le \frac{\epsilon}{10} Q_x(E_i) + \frac{20}{n^2}\sum_{e\in E}\frac{1}{\min_{v\in e}d(v)} \\
&\le \frac{\epsilon}{10} Q_x(E_i) + \frac{20}{n}% & \text{by Lemma~\ref{lem:sum-min-d} and $n\ge|V|$},
\end{align*}
by Lemma~\ref{lem:sum-min-d} and $n\ge|V|$, as claimed. The second line follows by Claim~\ref{claim:0}.
\else
\begin{align*}
  \big|Q_x(E_i)-Q_{x^{(i)}}(E_i)\big|
  &\le\sum_{e\in E_i}\big|Q_x(e)-Q_{x^{(i)}}(e)\big| \\
  &\le\sum_{e\in E_i}\left[\frac{\epsilon}{10} Q_x(e)+\frac{20}{n^2\min_{v\in e}d(v)}\right] & \text{by Claim~\ref{claim:0}} \\
  &\le \frac{\epsilon}{10} Q_x(E_i) + \frac{20}{n^2}\sum_{e\in E}\frac{1}{\min_{v\in e}d(v)} \\
  &\le \frac{\epsilon}{10} Q_x(E_i) + \frac{20}{n} & \text{by Lemma~\ref{lem:sum-min-d} and $n\ge|V|$},
\end{align*}
as claimed.
\fi
\end{proof}

\claimthree*
\begin{proof}%[Proof of Claim~\ref{claim:3}]
Similarly to the previous proof, we first bound
$$\big|\widetilde Q_x(E_i)-\widetilde Q_{x^{(i)}}(E_i)\big|\le\sum_{e\in E_i}\big|\widetilde Q_x(e)-\widetilde Q_{x^{(i)}}(e)\big|.$$
Recall that $\widetilde Q_x(e) = w_e Q_x(e)$ where $w_e$ is a random variable (independent from all others) that takes value $1/p_e$ with probability $p_e$, and value $0$ otherwise. Similarly, $\widetilde Q_{x^{(i)}}(e)=w_e\widetilde Q_{x^{(i)}}(e)$. Applying this along with Claim~\ref{claim:0}, we get
\begin{align*}
	\big|\widetilde Q_x(E_i)-\widetilde Q_{x^{(i)}}(E_i)\big|&\le\sum_{e\in E_i}w_e\big|Q_x(e)-Q_{x^{(i)}}(e)\big|\\
	&\le \sum_{e\in E_i}\left[\frac{\epsilon}{10} w_e Q_x(e) + w_e\cdot\frac{20}{n^2\min_{c\in e}d(v)}\right]\\
	&= \frac{\epsilon}{10}\widetilde Q_x(e) + \sum_{e\in E_i}\frac{20w_e}{n^2\min_{v\in e}d(v)}.
\end{align*}
Note that in the sum here the term corresponding to $e$ is zero unless $e$ is sampled to $\widetilde E$,
in which case $w_e = 1/p_e \leq 1+\min_{v\in e}d(v)/\lambda$. (Recall $\lambda$ from equation~\ref{eq:def-lambda}.)
Using also Lemmas~\ref{lem:sum-min-d} and~\ref{lem:size-bound}, and the fact that $n\ge|V|$,
we have that with high probability
\ifprocs
\begin{align*}
\sum_{e\in E_i}\frac{20w_e}{n^2\min_{v\in e}d(v)}
& \le \sum_{e\in\widetilde  E_i} \frac{20}{n^2\min_{v\in e}d(v)} + \sum_{e\in \widetilde E} \frac{20}{\lambda n^2}\\
&\le \sum_{e\in E_i}\frac{20}{n^2\min_{v\in e}d(v)} + |\widetilde E|\cdot\frac{20}{\lambda n^2}\\
&\le\frac{60}{n} .
\qedhere
\end{align*}
\else
\begin{align*}
  \sum_{e\in E_i}\frac{20w_e}{n^2\min_{v\in e}d(v)}
  & \le \sum_{e\in\widetilde  E_i} \frac{20}{n^2\min_{v\in e}d(v)} + \sum_{e\in \widetilde E} \frac{20}{\lambda n^2}
  \le \sum_{e\in E_i}\frac{20}{n^2\min_{v\in e}d(v)} + |\widetilde E|\cdot\frac{20}{\lambda n^2}
  \le\frac{60}{n} .
  \qedhere
\end{align*}
\fi
\end{proof}

\claimfour*
\begin{proof}%[Proof of Claim~\ref{claim:4}]
Note that
\ifprocs
\begin{align*}
|\widetilde Q_x(E_*)- Q_x(E_*)|&\le\widetilde Q_x(E_*)+Q_x(E_*)\\
=\sum_{e\in E_*} & \max_{u,v\in e}{(x_u-x_v)}^2+\sum_{e\in E_*}w_e\cdot\max_{u,v\in e}{(x_u-x_v)}^2.
\end{align*}
\else
\begin{align*}
    |\widetilde Q_x(E_*)-Q_x(E_*)|&\le\widetilde Q_x(E_*)+Q_x(E_*)
    =\sum_{e\in E_*}\max_{u,v\in e}{(x_u-x_v)}^2+\sum_{e\in E_*}w_e\cdot\max_{u,v\in e}{(x_u-x_v)}^2.
\end{align*}
\fi
Now, we bound each term using that $\max_{v\in e}x_v^2\cdot\min_{v\in e}d(v)\le1/n^2$ by definition of $E_*$.
For the first term, we use Lemma~\ref{lem:sum-min-d},
\begin{align*}
  \sum_{e\in E_*}\max_{u,v\in e}{(x_u-x_v)}^2
  \le 4\sum_{e\in E_*}\max_{v\in e}x_v^2
  \le4\sum_{e\in E^*}\frac{1}{n^2\min_{v\in e}d(v)}
  \le\frac4n .
\end{align*}
For the second term, we have by Lemma~\ref{lem:size-bound}, and the fact that $n\ge|V|$, that with high probability,
\ifprocs
\begin{align*}
\sum_{e\in E_*}w_e\cdot\max_{u,v\in e}{(x_u-x_v)}^2
& \le 4\sum_{e\in E_*}w_e\cdot\max_{v\in e}x_v^2\\
&\le 4\sum_{e\in E_*}\frac{w_e}{n^2\min_{v\in e}d(v)}\\
&\le 4|\widetilde E|\cdot\frac{1}{\lambda n^2}\\
&\le \frac{8}{n} .
\qedhere
\end{align*}
\else
\begin{align*}
  \sum_{e\in E_*}w_e\cdot\max_{u,v\in e}{(x_u-x_v)}^2
  & \le 4\sum_{e\in E_*}w_e\cdot\max_{v\in e}x_v^2
  \le 4\sum_{e\in E_*}\frac{w_e}{n^2\min_{v\in e}d(v)}
  \le 4|\widetilde E|\cdot\frac{1}{\lambda n^2}
  \le \frac{8}{n} .
  \qedhere
\end{align*}
\fi
\end{proof}

Finally, we prove the technical crux of the theorem, Claim~\ref{claim:2}.

\claimtwo*
\begin{proof}%[Proof of Claim~\ref{claim:2}]
	We shall prove the stronger claim
	$$\mathbb P\left[\forall x\in\overline{\mathbb R^V},\ \widetilde Q_{x^{(i)}}(E_i)
	= \Big(1\pm\frac{\epsilon}{10}\Big) Q_{x^{(i)}}(E_i)\pm\frac{\epsilon r\Phi^2/32 }{10\log n}\right]
	\ge 1-\frac{1}{n^2}.
	$$
	This is indeed stronger, since for all $x\in\overline{\mathbb R^V}$, we know that $Q(x)\ge r\Phi^2/32$ by the Hypergraph Cheeger inequality (Theorem~\ref{thm:hypergraph-cheeger}). This allows us to argue that the probabilistic claim depends on $x$ only through $x^{(i)}$ and $E_i$. These are discrete which will allow for the use of union bound later on. We will first prove a deviation bound for a single instance of $(x^{(i)},E_i)$ using an additive-multiplicative Chernoff bound, and then extend it to hold for all instances simultaneously using a union bound.

Fix $i$, $x^{(i)}$, and $E_i$.
Notice that $\widetilde Q_{x^{(i)}}(E_i)=\sum_{e\in E_i}w_e\cdot\max_{a,b\in e}(x_a^{(i)}-x_b^{(i)})^2$ is a sum of independent random variables whose expectation is $Q_{x^{(i)}}(E_i)$.
Let us bound the maximum range of one summand, for some $e\in E_i$.
If $p_e=1$ the range is $0$, and otherwise the range is bounded by
\ifprocs
\begin{align*}
w_e\cdot\max_{a,b\in e}{\left(x_a^{(i)}-x_b^{(i)}\right)}^2
&\le\max_{a,b\in e} \frac{2(x_a^2+x_b^2)}{p_e}\\
&\le\frac{4}{\lambda}\max_{v\in e}x_v^2\cdot\min_{v\in e}d(v)\\
&\le \frac{2^{-i+3}}{\lambda} .
\end{align*}
\else
\begin{align*}
	w_e\cdot\max_{a,b\in e}{\left(x_a^{(i)}-x_b^{(i)}\right)}^2
	\le\max_{a,b\in e} \frac{2(x_a^2+x_b^2)}{p_e}
	\le\frac{4}{\lambda}\max_{v\in e}x_v^2\cdot\min_{v\in e}d(v)
	\le \frac{2^{-i+3}}{\lambda} .
\end{align*}
\fi
We can thus apply Theorem~\ref{thm:am-Chernoff} and get
\ifprocs
\begin{align*}
\mathbb P&\left[|\widetilde Q_{x^{(i)}}(E_i)-Q_{x^{(i)}}(E_i)|
\ge \frac{\epsilon}{10} Q_{x^{(i)}}(E_i) + \frac{\epsilon r\Phi^2/32}{10\log n}\right]\\
&\le 2\exp\left(-\frac{\epsilon/10\cdot(\epsilon r\Phi^2/32)/(10\log n)}{3\cdot2^{-i+3}/\lambda}\right)\\
&= 2\exp\left(-\frac{\lambda 2^i\epsilon^2r \Phi^2}{32\cdot2400\log n}\right) .
\end{align*}
\else
\begin{align*}
  \mathbb P\left[|\widetilde Q_{x^{(i)}}(E_i)-Q_{x^{(i)}}(E_i)|
  \ge \frac{\epsilon}{10} Q_{x^{(i)}}(E_i) + \frac{\epsilon r\Phi^2/32}{10\log n}\right]
  &\le 2\exp\left(-\frac{\epsilon/10\cdot(\epsilon r\Phi^2/32)/(10\log n)}{3\cdot2^{-i+3}/\lambda}\right)\\
  &= 2\exp\left(-\frac{\lambda 2^i\epsilon^2r \Phi^2}{32\cdot2400\log n}\right) .
\end{align*}
\fi

Now we wish to extend this high-probability bound to hold simultaneously for all possible $x^{(i)}$ and $E_i$. How many possible settings of $(x^{(i)},E_i)$ are there? Each non-zero coordinate $v$ of $x^{(i)}$ has $x_v^2d(v)\ge\epsilon^2 2^{-i}/2500$,
so there are at most $2500\cdot2^i/\epsilon^2$ such coordinates.
Furthermore, each such coordinate $x_v^{(i)}$ is an integer multiple of $1/(n^2\sqrt{d(v)})$ in the range $[-1/\sqrt{d(v)},1/\sqrt{d(v)}]$,
so there are only $2n^2$ possibilities per non-zero coordinate.
Thus, the total number of vectors $x^{(i)}$ is at most
\[
  \binom{|V|}{2500\cdot2^i/\epsilon^2}
  \cdot {(2n^2)}^{2500\cdot2^i/\epsilon^2}
  \leq {(2n^3)}^{2500\cdot2^i/\epsilon^2}
\]
We still need to enumerate the number of possible hyperedge multisets $E_i$ given $x^{(i)}$. To know whether some hyperedge $e\in E$ is in $E_i$, we must know whether the value of $\max_{v\in e}x_v^2\min_{v\in e}d(v)$ is in $(2^{-i},2^{-i+1}]$.
Unfortunately, this depends on $\max_{v\in e}x_v^2$, which is not determined by $x^{(i)}$, due to the rounding error between $x$ and $x^{(i)}$.
Let $D=\{d(v) \mid v\in V\}$ be the set of all degrees in $G$.
It suffices to know for each $v$ corresponding to a non-zero coordinate of $x^{(i)}$ the two values
\[
  \min\{d\in D \mid x_v^2d > 2^{-i}\}
  \quad \text{ and } \quad
  \max\{d\in D \mid x_v^2d \le 2^{-i+1}\}.
\]
Indeed, we need not worry about zero coordinates of $x^{(i)}$,
i.e., vertices $v$ with $x_v^2d(v)<\epsilon^2 2^{-i}/2500$,
as these cannot attain $\max_{u\in e} x_u^2$ for a hyperedge $e\in E_i$.
Thus, the total number of possible multisets $E_i$ \textit{given $x^{(i)}$} is at most $(|V|^2)^{2500\cdot2^i/\epsilon^2}\le(n^2)^{2500\cdot2^i/\epsilon^2}$.

We are now ready to apply a union bound,
	\begin{align*}
        \mathbb P& \left[\forall x,\ |\widetilde Q_{x^{(i)}}(E_i)-Q_{x^{(i)}}(E_i)|\le \frac{\epsilon}{10} Q_{x^{(i)}}(E_i)+ \frac{\epsilon(r\Phi^2/32)}{10\log n}\right] \\
          &\le \left(n^2\cdot (2n^3)\right)^{2500\cdot2^i/\epsilon^2}
            \cdot2\exp\left( -\frac{\lambda 2^i\epsilon^2 r\Phi^2} {32\cdot2400\log n} \right)\\
	&\le 2\exp\left(\frac{15000\cdot2^i\log n}{\epsilon^2} - \frac{\lambda 2^i\epsilon^2r \Phi^2}{32\cdot2400\log n}\right)
	\le \frac{1}{n^2},
	\end{align*}
where the last inequality holds as long as
$\lambda \geq 24\cdot10^8\cdot\log^2 n/(\epsilon^4\Phi^2r)$,
which is indeed how we set $\lambda$.
\end{proof}

%%% Local Variables:
%%% mode: latex
%%% TeX-master: "000-main"
%%% End:

\ifprocs
\else
%!TEX root=./000-main.tex

%!TEX root=./000-main.tex

\section{Expander Decomposition}\label{sec:expander-decomposition}

This section provides a procedure to decompose an input hypergraph 
into expanders while cutting a small number of hyperedges.
This is stated in the following lemma, which we fully prove for completeness,
as we cannot find a useful reference for it. 
It is based on the standard technique of iteratively removing a sparse cut,
with slight adaptations like a minimum-degree guarantee in each expander,
and an approximation algorithm for sparsest cut (minimal expansion)
in a hypergraph.

\begin{lemma} [Expander Decomposition] 
  \label{lem:partition}
There exists a polynomial-time algorithm that, given an $r$-uniform hypergraph $G=(V,E)$ with $n$ vertices and $m$ hyperedges, outputs disjoint vertex subsets $C_1,\ldots, C_k\subseteq V$ (not necessarily a partition) that satisfy
\begin{enumerate}
\item \label{item:expansion}
  $\Phi(G[C_j]) = \Omega(1/(r\log^2 n))$ for all $j=1,\ldots,k$;
\item \label{item:min-degree}
  $\min_{v\in C_j}d_{G[C_j]}(v)\ge m/(4n)$ for all $j=1,\ldots,k$; and
\item \label{item:size}
  $|E\setminus\bigcup_{j=1}^k E(G[C_j])|\le m/2$.
\end{enumerate}
\end{lemma}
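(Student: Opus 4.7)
The plan is to iteratively remove sparse cuts, interleaved with the removal of low-degree vertices, using the $O(\log r)$-approximation algorithm for hypergraph sparsest cut due to Louis (mentioned in Section~\ref{sec:related}). Set $\phi := c/(r\log n)$ for a sufficiently small absolute constant $c>0$. Maintain a family of ``active'' clusters (initially just $\{V\}$) and repeatedly pick an active cluster $C$ and do: (a)~if some $v\in C$ has $d_{G[C]}(v)<m/(4n)$, discard $v$ together with all its incident hyperedges in $G[C]$; (b)~otherwise run the sparsest-cut approximation on $G[C]$ and, if it returns a cut $(S, C\setminus S)$ of expansion at most $\phi$ in $G[C]$, replace $C$ by the two pieces $S$ and $C\setminus S$; (c)~else mark $C$ as final and output it as some $C_j$. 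Termination in polynomial time is immediate, since there are at most $O(n)$ invocations of steps~(a) and~(b) combined, and each is polynomial.

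Items~\ref{item:expansion} and~\ref{item:min-degree} follow essentially by construction. For item~\ref{item:expansion}, when $C$ is finalized the approximation algorithm certifies that no cut of expansion at most $\phi$ exists in $G[C]$, so the approximation guarantee gives $\Phi(G[C])\ge \phi/O(\log r)=\Omega(1/(r\log r\log n))=\Omega(1/(r\log^2 n))$, using $r\le n$. For item~\ref{item:min-degree}, at the moment $C$ is finalized no vertex is eligible for step~(a), so $\min_{v\in C_j}d_{G[C_j]}(v)\ge m/(4n)$, and no further modifications occur.

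For item~\ref{item:size}, I bound the lost hyperedges from the two sources separately. Step~(a) is invoked at most $n$ times across the entire algorithm (each vertex is removed at most once) and discards strictly fewer than $m/(4n)$ hyperedges per invocation, for a total below $m/4$. For step~(b), I apply the standard ``charge the smaller side'' potential argument: each split discards at most $\phi\cdot\min(\mathrm{vol}_{G[C]}(S),\mathrm{vol}_{G[C]}(C\setminus S))$ hyperedges. Whenever a vertex $v$ lies on the smaller (in $G[C]$-volume) side of a split, the $G[C]$-volume of its containing cluster at least halves, so this happens at most $O(\log n)$ times per vertex; at each such event the contribution of $v$ to the minimum is $d_{G[C]}(v)\le d_G(v)$. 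Summing, $\sum_{\text{splits}}\min(\mathrm{vol}_{G[C]}(S),\mathrm{vol}_{G[C]}(C\setminus S))\le \mathrm{vol}_G(V)\cdot O(\log n)=O(rm\log n)$, and hence the total number of cut hyperedges is $O(\phi\cdot rm\log n)=O(cm)$. For small enough $c$ this is at most $m/4$, so combined with step~(a) the total discard is at most $m/2$.

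The main technical obstacle is confirming that the potential argument survives the interleaving with low-degree removal. This is handled by working throughout with current induced-subgraph volumes (which only decrease across the recursion, since removing a vertex or splitting never increases a vertex's degree in its current cluster), and then bounding per-event contributions by the original degree $d_G(v)$; both the ``$O(\log n)$ halvings'' count and the ``contribution $\le d_G(v)$'' bound remain valid regardless of how step~(a) is scheduled relative to step~(b). A secondary subtlety is calibrating $\phi$: it must be small enough for the discard bound in step~(b), yet large enough that after the $O(\log r)$ loss from the approximation algorithm the guaranteed expansion is still $\Omega(1/(r\log^2 n))$; the choice $\phi=c/(r\log n)$ combined with $\log r\le\log n$ satisfies both constraints simultaneously.
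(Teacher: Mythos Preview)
Your approach is the same as the paper's: iterate sparse-cut removal together with low-degree trimming, then bound the two sources of discarded hyperedges separately. The outline and the treatment of items~\ref{item:expansion} and~\ref{item:min-degree} are fine, but the justification you give for the $O(\log n)$ ``halvings'' count in item~\ref{item:size} is wrong as stated.

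You charge to the side of \emph{smaller volume} and assert that a vertex can be on that side at most $O(\log n)$ times because the volume halves. But the starting volume is $\mathrm{vol}_G(V)=rm$, not $n$, so naive halving only gives $O(\log(rm))$, which is $O(r\log n)$ in general (since $m$ can be as large as $n^r$); plugging this in yields a discard bound of order $rm$, not $m$. Your explicit claim that the bound holds ``regardless of how step~(a) is scheduled relative to step~(b)'' is false: if splits are performed before trimming, there is no floor on the volume and the count really can be $\Theta(\log m)$. What actually saves you is that your own algorithm always applies step~(a) before step~(b), so at every split every vertex has degree at least $m/(4n)$; hence the cluster volume at each split is at least $m/(4n)$, and the number of halvings is at most $\log_2(4rn)=O(\log n)$. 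You should invoke this explicitly. The paper avoids the issue altogether by charging to the side with fewer \emph{vertices} (not smaller volume), so the count is trivially $\le\log_2 n$.

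A smaller issue: the $O(\log r)$ result of Louis you cite from Section~\ref{sec:related} is an approximation for the hypergraph \emph{eigenvalue}, not for expansion; Cheeger's inequality converts it to a cut only with a square-root loss depending on the eigenvalue, so it does not directly give an $O(\log r)$-approximation for sparsest cut. The paper instead develops an $O(\log n)$ approximation via an LP relaxation and Bourgain's embedding (Lemma~\ref{lem:sparse-cut}). Any $\mathrm{polylog}(n)$ ratio suffices for your argument, so this is easy to patch.
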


We first provide (in Section~\ref{subsubsec:sparse-cut})
an approximation algorithm for the sparsest cut problem on hypergraphs
by slightly modifying a known algorithm from~\cite{Feige2008}.
We then use it (in Section~\ref{subsubsec:partition})
to prove Lemma~\ref{lem:partition},
where we decompose an input hypergraph into expanders
by iteratively deleting sparse cuts and low-degree vertices.

\subsection{Approximating Sparsest Cut}
\label{subsubsec:sparse-cut}

\begin{lemma}\label{lem:sparse-cut}
  There exists a polynomial-time algorithm that, given a hypergraph $G=(V,E)$, computes a cut $S \subseteq V$ with $\Phi(S) = O(\log n \cdot \Phi(G))$.
\end{lemma}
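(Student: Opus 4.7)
The plan is to adapt the classical LP-relaxation plus Bourgain-embedding approach for graph sparsest cut (à la Leighton--Rao and Linial--London--Rabinovich) to hypergraphs, which is essentially the slight modification used in [Feige2008]. Since $\Phi(G)$ has a volume-normalized denominator, the most convenient LP is the product-demand version, which is well known to be within a constant factor of conductance.

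First I would set up a metric LP relaxation of $\Phi(G)$. The variables are a semi-metric $d(u,v)\ge 0$ together with auxiliary hyperedge-lengths $\ell_e\ge 0$ tied by $\ell_e\ge d(u,v)$ for every $u,v\in e$; we minimize $\sum_{e\in E}\ell_e$ subject to a product-demand normalization $\sum_{u,v\in V} d(u)\,d(v)\cdot d(u,v)=1$. Every cut $(S,\bar S)$ yields a feasible solution via the cut semi-metric rescaled by $1/\bigl(\mathrm{vol}(S)\,\mathrm{vol}(\bar S)\bigr)$, whose LP value is a constant times $\Phi(S)$; hence $\mathrm{OPT}_{LP}=O(\Phi(G))$. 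Solve the LP in polynomial time, then apply Bourgain's theorem to the optimum metric $d^*$ to obtain $\phi:V\to \ell_1^k$ with $k=O(\log^2 n)$ and $d^*(u,v)\le\|\phi(u)-\phi(v)\|_1\le O(\log n)\cdot d^*(u,v)$ for all pairs. The key inequality
\[
\max_{u,v\in e}\|\phi(u)-\phi(v)\|_1\le \sum_{i=1}^k\max_{u,v\in e}|\phi_i(u)-\phi_i(v)|
\]
then upper-bounds the coordinate-summed hyperedge cost by $O(\log n)\cdot \mathrm{OPT}_{LP}$, while the total demand $\sum_{u,v}d(u)d(v)\|\phi(u)-\phi(v)\|_1\ge 1$ by the non-contractive side of Bourgain.

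Next, I would perform threshold rounding within each coordinate: for coordinate $i$ and threshold $t$, the cut $S_{i,t}=\{v:\phi_i(v)\le t\}$ separates a hyperedge $e$ iff $t\in[\min_{u\in e}\phi_i(u),\max_{u\in e}\phi_i(u))$, an interval of length $\max_{u,v\in e}|\phi_i(u)-\phi_i(v)|$. Integrating over $t$ and summing over $i$ shows that the total hyperedge cost collected across the $O(n\log^2 n)$ distinct threshold cuts is exactly the right-hand side of the display above, and the total demand cut is exactly $\sum_{u,v}d(u)d(v)\|\phi(u)-\phi(v)\|_1$. An averaging argument then identifies a single threshold cut whose product-demand sparsity is at most $O(\log n)\cdot\mathrm{OPT}_{LP}=O(\log n)\cdot\Phi(G)$, and the standard equivalence between product-demand sparsity and conductance $\Phi$ (up to a constant factor) converts this into the required expansion bound.

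The main obstacle is verifying that the ``max over hyperedge'' objective interacts correctly with $\ell_1$ coordinate decomposition and threshold rounding. The coordinate-wise upper bound $\max_{u,v\in e}\|\phi(u)-\phi(v)\|_1\le \sum_i\max_{u,v\in e}|\phi_i(u)-\phi_i(v)|$ is loose in general, but threshold rounding charges each coordinate exactly $\max_{u,v\in e}|\phi_i(u)-\phi_i(v)|$, so the bound is tight for rounding purposes and no quality is lost relative to the graph case. With this alignment in place, LP solving, Bourgain's embedding, and the product-demand-to-conductance reduction are black-box standard, yielding the claimed $O(\log n)$-approximation for hypergraph sparsest cut.
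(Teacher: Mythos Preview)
Your overall plan---product-demand LP, Bourgain embedding, threshold rounding---matches the paper's. The gap is in how you bound the numerator after embedding.

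You write the inequality
\[
\max_{u,v\in e}\|\phi(u)-\phi(v)\|_1 \;\le\; \sum_{i}\max_{u,v\in e}|\phi_i(u)-\phi_i(v)|
\]
and say it ``upper-bounds the coordinate-summed hyperedge cost by $O(\log n)\cdot\mathrm{OPT}_{LP}$''. It does not: the coordinate-summed cost is the \emph{right-hand side}, and your inequality only lower-bounds it. From your stated Bourgain guarantee $d^*\le\|\cdot\|_1\le O(\log n)\,d^*$ you can bound the left-hand side by $O(\log n)\cdot\ell_e^*$, but that says nothing about the right-hand side. For graphs the two sides coincide (there is only one pair in $e$), which is why ``no quality is lost'' there; for a hyperedge of size $r$, different coordinates can be maximized by different pairs, and the right-hand side can exceed the left by a factor as large as $k$. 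Since threshold rounding charges exactly $\max_{u,v\in e}|\phi_i(u)-\phi_i(v)|$ per coordinate, the numerator in your averaging is precisely this unbounded right-hand side.

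The missing ingredient is the per-coordinate Lipschitz property of Bourgain's construction: each coordinate is $v\mapsto d^*(v,A)$ and hence satisfies $|f_i(u)-f_i(v)|\le\ell(u,v)$ for every $i$. This gives $\max_{u,v\in e}|f_i(u)-f_i(v)|\le z(e)$ for each coordinate individually, so the per-coordinate numerator is at most $\mathrm{OPT}_{LP}$. The paper invokes exactly this stronger form of Bourgain (their Theorem~\ref{thm:Bourgain}) and then picks the single best coordinate, bounding the denominator via $\max_i\ge\mathbb{E}_i$ together with the average-contraction side. Your sum-over-all-thresholds averaging would also go through once you add the per-coordinate Lipschitz bound (numerator $\le k\cdot\mathrm{OPT}_{LP}$, denominator $\ge k/D$), but not from the pure $\ell_1$-distortion statement you wrote down.
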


We remark that a better approximation ratio $O(\sqrt{\log n})$
was shown by Louis and Makarychev~\cite{LM17}.
Strictly speaking, their definition of expansion is slightly different,
using $|S|$ rather than $\mathrm{vol}(S)$,
but their results probably extend also to our setting. 

Our algorithm is an extension of known approximation algorithms for the sparsest cut problem on ordinary graphs.
Specifically, we follow the rounding procedure techniques of
Feige, Hajiaghayi and Lee~\cite{Feige2008},
which in turn build on the linear programming relaxation approach
introduced by Leighton and Rao~\cite{Leighton1999}
and on the rounding procedures based on metric embeddings devised by
Aumann and Rabani~\cite{AR98}
and Linial, London and Rabinovich~\cite{Linial1995}.

For a hypergraph $G=(V,E)$ and a vertex set $S \subseteq V$, we define
\[
  \phi(S) = \frac{|E(S,V \setminus S)|}{\mathrm{vol}(S) \cdot \mathrm{vol}(V \setminus S)}
\]
(notice the difference in the denominator from $\Phi(S)$),
and let $\phi(G) = \min_{S \subseteq V} \phi(S)$.
We note that
\[
  \frac{\mathrm{vol}(S)\cdot \mathrm{vol}(V \setminus S)}{\mathrm{vol}(G)}  \leq \min\{\mathrm{vol}(S), \mathrm{vol}(V \setminus S)\} \leq \frac{2 \cdot \mathrm{vol}(S)\cdot \mathrm{vol}(V \setminus S)}{\mathrm{vol}(G)},
\]
and hence a $\rho(n)$-approximation algorithm for $\phi(G)$
immediately gives a $2\rho(n)$-approximation for $\Phi(G)$.

For a vertex set $S \subseteq V$, let $1_S \colon V \to \{0,1\}$ be the indicator function of $S$.
Then, we have
\[
  \phi(S) = \frac{\sum_{e \in E}\max_{u,v \in e}|1_S(u) - 1_S(v)|}{\sum_{u,v \in V}d(u) d(v) |1_S(u)-1_S(v)|}.
\]
By relaxing this optimization over cut metrics (distances induced by indicator functions) to optimization over all pseudo-metrics, we obtain the following linear program (LP).
\begin{align*}
  \begin{array}{lll}
  \text{minimize} & \displaystyle \sum_{e \in E} z(e), \\
  \text{subject to}     & \displaystyle \sum_{u, v \in V} d(u) d(v) \ell(u,v) = 1, \\
      & z(e) \geq \ell(u,v) & \forall e \in E, u, v \in e,\\
      & \ell(u,w) \leq \ell(u,v) + \ell(v,w) & \forall u,v,w \in V,\\
      & \ell(u,v) = \ell(v,u) \geq 0 & \forall u,v \in V .
  \end{array}
\end{align*}
This LP has variables $\set{\ell(u,v)}_{u,v\in V}$ and $\set{z(e)}_{e\in E}$,
and its size is $O(n^2+mr^2)$.
Our algorithm solves this LP and then rounds the solution as explained next.

Our rounding procedure is similar to~\cite{Feige2008},
who designed an approximation algorithm for the sparsest vertex-cut problem. They use the following embedding result due to Bourgain.
\begin{theorem}[Bourgain’s embedding~\cite{Bourgain1985}]\label{thm:Bourgain}
Let $\ell : V \times V \to \mathbb{R}$ be a pseudo-metric on an $n$-point set $V$.
Then there exists an embedding $f:V\to\RR^k$
with distortion $D=O(\log n)$ in the following sense:
\begin{align}
  \forall u, v \in V,
  \qquad
  &\max_{i\in[k]} |f_i(u) - f_i(v)|  \leq \ell(u, v), \label{eq:BourgainUB}
  \\
  \forall u, v \in V,
  \qquad
  &\frac{1}{k}\sum_{i\in[k]} |f_i(u) - f_i(v)| \geq \frac{\ell(u, v)}{D}. \label{eq:BourgainLB}
\end{align}
\end{theorem}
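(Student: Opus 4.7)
The plan is to construct $f$ via the classical Fr\'echet-type embedding at geometric scales. For each scale $s \in \{0,1,\ldots,\lfloor \log_2 n\rfloor\}$ and each of $t = \Theta(\log n)$ independent repetitions $j \in [t]$, I would sample a random subset $A_{s,j} \subseteq V$ by including each vertex independently with probability $2^{-s}$, and define the coordinate $f_{s,j}(v) := \min_{w \in A_{s,j}} \ell(v,w)$ (with the convention that the distance to the empty set is $0$, which affects only scales where $A_{s,j} = \emptyset$ with tiny probability). This produces an embedding $f : V \to \RR^k$ with $k = \Theta(\log^2 n)$.

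Inequality~\eqref{eq:BourgainUB} is immediate from the triangle inequality for the pseudo-metric $\ell$: for any nonempty $A \subseteq V$, the function $w \mapsto \min_{a \in A} \ell(w,a)$ is $1$-Lipschitz with respect to $\ell$, so $|f_{s,j}(u) - f_{s,j}(v)| \le \ell(u,v)$ coordinate-wise, and hence the $\ell_\infty$-bound follows.

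The work lies in~\eqref{eq:BourgainLB}, which I would prove pair by pair. Fix $u,v$ and define the ``expanding ball radii'' $\rho_s(u) := \inf\{r \ge 0 : |\{w : \ell(u,w) \le r\}| \ge 2^s\}$ and $\rho_s(v)$ analogously, then set $\hat\rho_s := \min\{\rho_s(u),\rho_s(v),\ell(u,v)/4\}$. Since $\hat\rho_0 = 0$ and $\hat\rho_{s^*} = \ell(u,v)/4$ for some $s^* \le \lfloor\log_2 n\rfloor$, the telescoping sum $\sum_{s < s^*}(\hat\rho_{s+1}-\hat\rho_s)$ is $\Omega(\ell(u,v))$. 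The core claim is that at each scale $s$, with constant probability over $A_{s,j}$, one of the balls $B(\cdot,\hat\rho_{s+1})$ is missed while the other of $B(\cdot,\hat\rho_s)$ is hit, forcing $|f_{s,j}(u)-f_{s,j}(v)| \ge \hat\rho_{s+1}-\hat\rho_s$. Summing over the $O(\log n)$ scales and averaging over the $t$ repetitions yields $\tfrac{1}{k}\sum_i \mathbb{E}|f_i(u)-f_i(v)| = \Omega(\ell(u,v)/\log n)$.

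The main obstacle is justifying the separation event at each scale. The clean way is: by definition of $\rho_s$, the smaller of the two balls has cardinality strictly less than $2^{s+1}$, so it is missed by $A_{s,j}$ with probability at least $(1-2^{-s})^{2^{s+1}} \ge e^{-2}$, while the larger has cardinality at least $2^s$, so it is hit with probability at least $1-(1-2^{-s})^{2^s} \ge 1-e^{-1}$; independence gives a constant joint probability, though one needs a short case analysis since the ``inner'' and ``outer'' vertex roles can swap across scales. Once this per-pair expectation bound is established, a Chernoff bound on the $t = \Theta(\log n)$ iid repetitions per scale upgrades it to a high-probability bound for that pair, and a union bound over the $\binom{n}{2}$ pairs produces a single realization $f$ that satisfies both~\eqref{eq:BourgainUB} and~\eqref{eq:BourgainLB} simultaneously with distortion $D = O(\log n)$; existence then follows by the probabilistic method.
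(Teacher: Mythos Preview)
The paper does not actually prove this theorem; it is stated as a known result, attributed to Bourgain~\cite{Bourgain1985}, with the algorithmic computation credited to Linial, London and Rabinovich~\cite{Linial1995}. There is therefore no ``paper's own proof'' to compare against.

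That said, your sketch is the standard Fr\'echet-at-geometric-scales proof and is essentially correct. A couple of points worth tightening if you were to write it out fully: the independence you invoke for the separation event is justified precisely because the cap $\hat\rho_s \le \ell(u,v)/4$ forces the two balls around $u$ and $v$ to be disjoint (sum of radii at most $\ell(u,v)/2$), which you use implicitly but do not state; and the claim that $\hat\rho_{s^*} = \ell(u,v)/4$ at some $s^* \le \lfloor\log_2 n\rfloor$ needs the observation that once $2^s \ge n$ the ball radius $\rho_s$ is at least the diameter, so the minimum with $\ell(u,v)/4$ is attained. These are routine, and your outline matches the textbook argument.
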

Linial, London and Rabinovich~\cite{Linial1995} showed how to compute
such an embedding with high probability in time $\tO(n^2)$.

\begin{remark}
The above form of Bourgain's embedding is slightly stronger than
the usual statement of embedding into $\ell_1$ (see also~\cite{MR01}).
Indeed, the usual statement follows easily by viewing $f$
as an embedding into $\ell_1$, then
\[
  \forall u, v \in V,
  \qquad
  \frac{\ell(u, v)}{D}
  \leq \frac{1}{k} \|f(u) - f(v)\|_1
  \leq \ell(u, v) ,
\]
which means that scaling $f$ by factor $1/k$ achieves distortion $D$.
\end{remark}

\begin{lemma}\label{lem:rounding}
Given an embedding $f: V \to \mathbb{R}^k$,
one can find in polynomial time a cut $S^* \subseteq V$ such that
\[
  \phi(S^*)
  \leq \min_{i\in[k]} \frac{\sum_{e \in E}\max_{u,v \in e} |f_i(u) - f_i(v)|} {\sum_{u,v \in V}d(u) d(v)|f_i(u)-f_i(v)|} .
\]
\end{lemma}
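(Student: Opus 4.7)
The plan is to use a classical threshold-sweep cut argument along each coordinate of the embedding. For each $i \in [k]$ and each $t \in \mathbb{R}$, consider the level cut $S_t^{(i)} := \{v \in V : f_i(v) \leq t\}$. I will show that among the cuts produced in this way one can find a single $S^*$ whose sparsity $\phi(S^*)$ matches the claimed bound; algorithmically, the cut $S_t^{(i)}$ changes only when $t$ crosses one of the at most $n$ values $\{f_i(v) : v \in V\}$, so the search over $i$ and $t$ involves only $O(nk)$ candidate cuts and is polynomial-time.

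Fix a coordinate $i$, and write $N_i := \sum_{e \in E}\max_{u,v\in e}|f_i(u)-f_i(v)|$ and $D_i := \sum_{u,v\in V}d(u)d(v)|f_i(u)-f_i(v)|$; the goal is to produce a cut $S$ with $\phi(S) \leq N_i/D_i$. The key step is two integration identities. For any hyperedge $e$, writing $a_e := \min_{u\in e}f_i(u)$ and $b_e := \max_{u\in e}f_i(u)$, the edge $e$ is cut by $S_t^{(i)}$ exactly when $a_e \leq t < b_e$; integrating over $t$ therefore yields
$$ \int_{-\infty}^{\infty} |E(S_t^{(i)}, V \setminus S_t^{(i)})|\,dt \;=\; \sum_{e \in E}(b_e - a_e) \;=\; N_i. $$
Analogously, an ordered pair $(u,v)$ satisfies $u \in S_t^{(i)}$ and $v \notin S_t^{(i)}$ exactly when $f_i(u) \leq t < f_i(v)$, so summing $d(u)d(v)$ over such pairs gives
$$ \int_{-\infty}^{\infty} \mathrm{vol}(S_t^{(i)})\cdot\mathrm{vol}(V \setminus S_t^{(i)})\,dt \;=\; \tfrac{1}{2}\sum_{u,v\in V}d(u)d(v)|f_i(u)-f_i(v)| \;=\; \tfrac{1}{2}D_i, $$
with the factor reflecting the ordered-pair convention used throughout the LP.

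Combining these identities with the elementary averaging inequality $\inf_t A(t)/B(t) \leq \int A\,dt / \int B\,dt$ (valid whenever $B>0$), one obtains some threshold $t^*$ for which $\phi(S_{t^*}^{(i)}) \leq N_i/D_i$ (possibly absorbing a constant factor coming from the ordered/unordered convention). Taking the minimum over $i \in [k]$ and selecting the best sweep cut across all coordinates yields the desired $S^*$. The argument is essentially identical to the classical sparsest-cut rounding of Leighton--Rao, Aumann--Rabani, and Linial--London--Rabinovich, so the only genuine thing to check is that the hypergraph-specific cut notion $|E(S, V\setminus S)|$ (a hyperedge is cut if it meets both $S$ and $V\setminus S$) meshes with the sweep picture. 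This is indeed the case: as $t$ grows from $-\infty$ to $+\infty$, the hyperedge $e$ becomes cut the moment $t$ reaches $a_e$ (its first vertex enters $S_t^{(i)}$) and ceases to be cut the moment $t$ passes $b_e$ (its last vertex enters), which is exactly the length-$(b_e - a_e)$ interval used above. This is the only place where $\max_{u,v \in e}$ (rather than a sum over $e$'s edges, as for ordinary graphs) appears naturally, so the extension of the standard proof to hypergraphs is immediate.
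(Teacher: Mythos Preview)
Your proposal is correct and is essentially the same argument as the paper's proof: both take sweep cuts along a single coordinate of $f$ and apply the averaging inequality $\inf_t A(t)/B(t)\le \int A/\int B$ (the paper phrases this as picking a uniformly random threshold and comparing ratios of expectations, which is identical). The only cosmetic differences are that the paper first fixes the minimizing coordinate $i$ and normalizes $f_i$ to $[0,1]$ before sweeping, and the factor-of-$2$ ambiguity you flag is precisely the ordered/unordered convention for $\sum_{u,v}d(u)d(v)|\cdot|$, which the paper treats as matching $\mathrm{vol}(S)\,\mathrm{vol}(V\setminus S)$ throughout.
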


\begin{proof}
Let $i\in [k]$ be the index that minimizes the above ratio,
and define the embedding $g : V \to \mathbb{R}$ by scaling and translating the corresponding $f_i$ such that $\min_v g(v) = 0$ and $\max_v g(v) = 1$.

Pick uniformly at random a threshold $s\in [0, 1]$,
and consider the (random) set $S = \set{v \in V \mid g(v) > s}$.
Then by simple calculations
\begin{align*}
& \forall u,v\in V,
\quad
\EX |1_{S}(u) - 1_{S}(v)|
=
|g(u) - g(v)| ,
\\
& \forall e \in E,
\qquad
\EX \max_{u,v \in e}|1_S(u) - 1_S(v)|
= \max_{u \in e}g(u) - \min_{v \in e}g(v).
\end{align*}

It follows that there must exist $S^*= \set{v\in V \mid g(v)>s^*}$ that is non-trivial (i.e., $S^*\neq \emptyset,V$) for which
\begin{align*}
  \phi(S^*)
  & =  \frac{\sum_{e \in E}\max_{u,v \in e}  |1_{S^*}(u) - 1_{S^*}(v)|} {\sum_{u,v \in V}d(u) d(v)  |1_{S^*}(u)-1_{S^*}(v)|}
  \le \frac{\EX \sum_{e \in E}\max_{u,v \in e}  |1_S(u) - 1_S(v)|} {\EX \sum_{u,v \in V}d(u) d(v)  |1_S(u)-1_S(v)|} \\
  &= \frac{\sum_{e \in E}\max_{u,v \in e} |g(u) - g(v)|} {\sum_{u,v \in V}d(u) d(v)|g(u)-g(v)|}
  = \frac{\sum_{e \in E}\max_{u,v \in e} |f_i(u) - f_i(v)|} {\sum_{u,v \in V}d(u) d(v)|f_i(u)-f_i(v)|} .
\end{align*}
A polynomial-time implementation can simply take the best sweep cut.
\end{proof}

\begin{proof}[Proof of Lemma~\ref{lem:sparse-cut}]
  The algorithm computes an optimal LP solution
  and then applies to it Bourgain's embedding (Theorem~\ref{thm:Bourgain})
  and then Lemma~\ref{lem:rounding} to find a cut $S^*\subset V$.
  This is a (randomized) polynomial-time algorithm,
  and with high probability its output $S^*\subseteq V$ satisfies
  \begin{align*}
    \phi(S^*)
    & \leq  \min_{i\in[k]} \frac{\sum_{e \in E}\max_{u,v \in e} |f_i(u) - f_i(v)|} {\sum_{u,v \in V}d(u) d(v)|f_i(u)-f_i(v)|}
    \leq \min_{i\in[k]} \frac{\sum_{e \in E} \max_{u,v \in e} \ell(u,v)} {\sum_{u,v \in V}d(u) d(v)|f_i(u)-f_i(v)|} \\
    &\leq \frac{\sum_{e \in E} z(e)} {\max_{i\in[k]} \sum_{u,v \in V}d(u) d(v)|f_i(u)-f_i(v)|}
    \leq \frac{\sum_{e \in E} z(e)} {\EX_{i\in[k]} \sum_{u,v \in V}d(u) d(v)|f_i(u)-f_i(v)|} \\
    &\leq \frac{\sum_{e \in E} z(e)} {\sum_{u,v \in V}d(u) d(v) \ell(u,v)/ D}
    = D \cdot \operatorname{LP}
      \leq O(\log n) \cdot \phi(G).
  \end{align*}
  We thus conclude an $O(\log n)$-approximation algorithm for $\phi(G)$.
\end{proof}

\subsection{Proof of Lemma~\ref{lem:partition}}\label{subsubsec:partition}

We now prove Lemma~\ref{lem:partition}. 
We shall refer to the algorithm given in Lemma~\ref{lem:sparse-cut} as \Call{SparseCut}{}.

\begin{algorithm}
  \caption{}\label{alg:partition}
  \begin{algorithmic}[1]
    \Procedure{ExpanderDecomposition}{$G=(V,E)$}
      \State $\mathcal{C} \leftarrow \{V\}$
      \While{\textbf{true}}
        \If{$\exists C \in \mathcal{C}$ and $v \in C$ such that $d_{G[C]}(v) < m/(4n)$}\label{line:min-degree}
          \State remove $v$ from $C$.\label{line:discard}
        \ElsIf{$\exists C \in \mathcal{C}$ s.t.\ \Call{SparseCut}{$G[C]$} finds a cut $S$ with $\Phi_{G[C]}(S) \leq 1/(4r \log n)$}
        \State $\mathcal{C} \leftarrow (\mathcal{C} \setminus \{C\}) \cup \{S,C \setminus S\}$ \label{line:split}
        \Else
          \State \Return $\mathcal{C}$
        \EndIf
      \EndWhile
    \EndProcedure
  \end{algorithmic}
\end{algorithm}

\begin{proof}[Proof of Lemma~\ref{lem:partition}]
  Our algorithm is given in Algorithm~\ref{alg:partition}.
By Lemma~\ref{lem:sparse-cut}, there exists a constant $K >0$ such that, if there is a cut of expansion at most $1/(Kr\log^2 n)$, then \Call{SparseCut}{} finds a cut of expansion at most $1/(4r\log n)$, and hence Guarantee~\ref{item:expansion} holds. Guarantee~\ref{item:min-degree} also holds, since the algorithm only returns when when no vertices violate the min-degree condition, due to Line~\ref{line:min-degree}.

  We now show Guarantee~\ref{item:size}, which claims that at most half the hyperedges are omitted from this clustering.
  First, we bound the number of hyperedges cut by splitting $C$ on Line~\ref{line:split}.
  To this end, we charge some weight to vertices for each cut $(S,C \setminus S)$ found throughout the algorithm.
  Specifically, suppose $|S|\le|C\setminus S|$ --- then we charge cut hyperedges to vertices in $S$ \emph{proportionally to their degree}: Each vertex $v\in S$ is charged
  $$E(S,C\setminus S)\cdot\frac{d_{G[C]}(v)}{\mathrm{vol}_{G[C]}(S)}\le d_{G[C]}(v)\cdot\frac{E(S,C\setminus S)}{\min(\mathrm{vol}_{G[C]}(S),\mathrm{vol}_{G[C]}(C\setminus S))}\le\frac{d_{G[C]}(v)}{4r\log n}\le\frac{d(v)}{4r\log n}.$$
  If $|C\setminus S|\le|S|$, vertices of $C\setminus S$ get charged similarly. Since each vertex is charged only when the size of its containing component is decreased by at least a factor of $2$, each vertex can get charged a maximum of $\log n$ times. This means that in total $v$ is charged at most $d(v)/(4r)$ cut hyperedges.

  Hence, the total number of hyperedges cut by splitting components is bounded by
  \[
    \sum_{v \in V}\frac{d(v)}{4r}  = \frac{m}{4}.
  \]
  The total number of hyperedges cut by discarding low-degree vertices in Line~\ref{line:discard} is at most
  \[
    \frac{m}{4 n} \cdot n = \frac{m}{4}.
  \]
  Altogether, the total number of hyperedges cut by the algorithm is at most $m/4+m/4=m/2$.
\end{proof}

%%% Local Variables:
%%% mode: latex
%%% TeX-master: "000-main"
%%% End:

%!TEX root=./000-main.tex

\section{General Spectral Sparsification of Hypergraphs}\label{sec:general-sparsification}
In this section, we prove Theorem~\ref{thm:general-sparsification}.
We describe our construction in Section~\ref{subsec:general-sparsification-construction} and then prove its correctness in Section~\ref{subsec:general-sparsification-correctness}.

\subsection{Construction}\label{subsec:general-sparsification-construction}

We shall call a hyperedge a \emph{self-loop} if all of its vertices are identical
(i.e., the number of distinct vertices in it is one).
We explicitly prohibit self-loops in the next lemma for a technical reason
inside the proof of Theorem~\ref{thm:general-sparsification}.

\begin{lemma}\label{lem:expander-sparsify}
There is an algorithm that, given a parameter $n$, given $100/n\le\epsilon\le 1/2$ and an $r$-uniform hypergraph $G=(V,E)$ with $|V|\le n$ and expansion $\Phi(G)\ge\Omega(\tfrac{1}{r\log^2n})$ and $r\le \epsilon n/\log^6n$, outputs an $\epsilon$-spectral sparsifier of $G$ with at most $V|r(\epsilon^{-1}\log n)^{O(1)}$ hyperedges and no self-loops with probability at least $1-O((\log n)/n^2)$ in $O(r|E|)$ time.
Let $\Call{ExpanderSparsify}{G,\epsilon}$ be such an algorithm.
\end{lemma}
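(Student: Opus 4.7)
My plan is to derive Lemma~\ref{lem:expander-sparsify} as an essentially direct corollary of Theorem~\ref{thm:expander-sparsification}, followed by a trivial cleanup to guarantee the absence of self-loops. The key observation is that the hypothesis $\Phi(G) \geq \Omega(1/(r\log^2 n))$, together with the assumption $r \leq \epsilon n/\log^6 n$, is calibrated so that the expansion lower bound $350\sqrt{(\log n)/(\epsilon r n)}$ required by Theorem~\ref{thm:expander-sparsification} is automatically satisfied.

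First I would verify the two expansion hypotheses. The upper bound $\Phi \le 2/r$ holds for any $r$-uniform hypergraph since the expansion of any singleton is at most $1/d(v)$ times the number of cut hyperedges, and in particular any good-expander regime sits well below $2/r$ (if $\Phi(G)>2/r$ we may truncate the guarantee to $2/r$ without loss). For the lower bound, taking $\Phi = c/(r\log^2 n)$ for the constant $c$ from the assumption and squaring the needed inequality, it suffices to have
\[
\frac{c^2}{r^2 \log^4 n} \;\ge\; \frac{350^2 \log n}{\epsilon\, r\, n},
\qquad\text{i.e.,}\qquad
r \;\le\; \frac{c^2\, \epsilon\, n}{350^2 \log^5 n}.
\]
The assumption $r \le \epsilon n/\log^6 n$ gives this with room to spare for $n$ large enough, since $\log^6 n \ge (350^2/c^2)\log^5 n$ whenever $\log n \ge 350^2/c^2$.

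Next, I plug the expansion bound into the size guarantee of Theorem~\ref{thm:expander-sparsification}, obtaining
\[
O\!\left(|V|\cdot\frac{(\epsilon^{-1}\log n)^{O(1)}}{\Phi^2 r}\right)
\;=\; O\!\left(|V|\cdot\frac{(\epsilon^{-1}\log n)^{O(1)}\cdot r^2 \log^4 n}{r}\right)
\;=\; |V|\, r\,(\epsilon^{-1}\log n)^{O(1)}
\]
hyperedges, matching the claim. The runtime $O(r|E|)$ and failure probability $O((\log n)/n^2)$ are inherited verbatim from Theorem~\ref{thm:expander-sparsification}.

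Finally, to enforce the no-self-loop property, I simply discard any self-loop from the returned sparsifier (and, for efficiency, also from the input before sampling). A self-loop $e$ consists of a single vertex repeated, so $Q_x(e) = \max_{a,b\in e}(x_a-x_b)^2 = 0$ for every $x \in \mathbb{R}^V$; thus self-loops contribute nothing to either $Q_G(x)$ or $\widetilde{Q}(x)$, and their removal preserves the $(1\pm\epsilon)$-approximation while only decreasing the hyperedge count. The main (and essentially only) obstacle is the algebraic verification above that the hypotheses of Theorem~\ref{thm:expander-sparsification} are met; everything else is bookkeeping.
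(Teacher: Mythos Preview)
Your proposal is correct and follows essentially the same route as the paper: it too derives the lemma as a direct corollary of Theorem~\ref{thm:expander-sparsification} with $\Phi=\Omega(1/(r\log^2 n))$ and then strips self-loops (which contribute zero energy) from the output. You are simply more explicit than the paper about verifying that the assumption $r\le \epsilon n/\log^6 n$ forces the lower-bound hypothesis $\Phi\ge 350\sqrt{(\log n)/(\epsilon r n)}$, and about handling the upper bound $\Phi\le 2/r$ by truncation; the paper leaves both of these implicit.
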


\begin{proof}
  This is a simple corollary of Theorem~\ref{thm:expander-sparsification} when applied with $\Phi=\Omega(\tfrac1{r\log^2n})$, the only difference being the exclusion of self-loops. One can simply remove all the self-loops from the sparsifier obtained from Theorem~\ref{thm:expander-sparsification},
  as this does not change the spectral properties at all.
\end{proof}
\begin{definition}\label{def:tr-cl}
Let $\sim$ be a relation on a ground set $V$. The transitive closure $\approx$ of $\sim$ is a relation on $V$ defined as follows
$$a\approx b\ \Longleftrightarrow\ \exists c_1,\ldots,c_k\in V, a\sim c_1\sim\cdots\sim c_k\sim b.$$
\end{definition}

\begin{definition}\label{def:contract}
Let $G=(V,E)$ be a hypergraph and let $\approx$ be an equivalence relation on $V$.
Then the \emph{contraction} of $G$ with respect to $\approx$, denoted by $G/\approx$, is a hypergraph on the vertex set $V'=V/\approx$ and the multiset of hyperedges
$$E'=\{\{[v] \mid v\in e\} \mid e\in E\},$$
where $[v]$ denotes the equivalence class of $v$ with respect to $\approx$.
We stress that each hyperedge in the contracted hypergraph is itself a multiset,
i.e., for each $e\in E$ there is a corresponding hyperedge in $E'$ of
total multiplicity $|e|$.
Moreover, $E'$ is a multiset and thus may have copies of the same hyperedge (i.e., parallel hyperedges).
\end{definition}

\begin{remark}\label{rem:important-remark}
Since there is a bijection between $E$ and $E'$,
we shall slightly abuse notation and equate the hyperedges of $G$ and of $G/\approx$ (i.e., use one as a shorthand for the other).
This occurs already in the next observation.
\end{remark}

\begin{observation}\label{obs:contraction-sparsification}
Let $G=(V,E)$ be a hypergraph and let $\approx$ be an equivalence relation on $V$.
Let $\widetilde G$ be an $\epsilon$-spectral sparsifier of the contracted hypergraph $G/\approx$ (not of $G$).
Then it need not be true that the energy of $G$ is always approximated by the energy of $\widetilde G$ in the sense that for all $x\in\mathbb R^V$,
\[
  \widetilde Q(x)=(1\pm\epsilon)Q(x).
\]
Here $\widetilde Q$ is the energy of the sparsifier $\widetilde G$ \emph{when interpreted as a subgraph of $G$, not $G/\approx$}. That is, we take the weighted hyperedges found in $\widetilde G$ and interpret them as hyperedges over the vertex set of $G$ (see Remark~\ref{rem:important-remark}).

However, the equation above \emph{does} hold if $x$ is constant on each equivalence class of $\approx$, that is whenever
$$u\approx v\ \Rightarrow\ x_u=x_v.$$
\end{observation}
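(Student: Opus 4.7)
The plan is to prove the observation in two parts: first, to exhibit a counterexample showing that the general approximation $\widetilde Q(x) \approx Q(x)$ can fail on arbitrary $x \in \mathbb R^V$; second, to show that it does hold for every $x$ constant on equivalence classes of $\approx$.

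For the failure part, I would give a minimal counterexample. Take $G$ to be the hypergraph on $\{u,v\}$ with the single edge $e = \{u,v\}$, and let $\approx$ identify $u$ with $v$. Then $G/\approx$ is a single vertex with one self-loop, whose energy is identically zero for every vector. Consequently, the weighted subgraph $\widetilde G$ with $w_e = 0$ is a valid $\epsilon$-spectral sparsifier of $G/\approx$ for every $\epsilon \ge 0$ (both sides of~\eqref{eq:spectral-sparsification} vanish). Interpreting the same $\widetilde G$ as a subgraph of $G$, however, one has $\widetilde Q(x) = 0$, while $Q_G(x) = (x_u - x_v)^2 > 0$ whenever $x_u \ne x_v$. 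This falsifies the general claim.

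For the positive part, I would argue via the natural correspondence of vectors. Given $x \in \mathbb R^V$ constant on every equivalence class of $\approx$, define $x' \in \mathbb R^{V/\approx}$ by $x'([v]) := x_v$; this is unambiguous by assumption. The key step is to show that under the bijection between the hyperedges of $G$ and of $G/\approx$ furnished by Definition~\ref{def:contract} and Remark~\ref{rem:important-remark}, each $e \in E$ and its contracted image $e'$ satisfy
$$Q_x(e) = \max_{a,b\in e}(x_a - x_b)^2 = \max_{[a],[b]\in e'}(x'([a]) - x'([b]))^2 = Q_{x'}(e'),$$
where the max over the multiset $e'$ agrees with the max over its underlying set since identical entries contribute zero to the squared difference. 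Because the same bijection transports edge weights, summing over $E$ yields $Q_G(x) = Q_{G/\approx}(x')$ and $\widetilde Q_G(x) = \widetilde Q_{G/\approx}(x')$. Applying the sparsifier guarantee of $\widetilde G$ on $G/\approx$ at the specific vector $x'$ then gives $\widetilde Q_G(x) = \widetilde Q_{G/\approx}(x') = (1\pm\epsilon)\,Q_{G/\approx}(x') = (1\pm\epsilon)\,Q_G(x)$, as claimed.

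I do not anticipate any real obstacle; the only subtle point is the multiset nature of contracted edges and the potential appearance of self-loops in $G/\approx$, but both are harmless because coincident vertices contribute $0$ to the max, preserving the per-edge identity $Q_x(e) = Q_{x'}(e')$ exactly.
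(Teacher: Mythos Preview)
Your proposal is correct. The paper states this as an observation without proof, treating both parts as self-evident; your counterexample for the failure case and your reduction via the induced vector $x'\in\mathbb R^{V/\approx}$ for the positive case are precisely the natural arguments one would supply, and there are no gaps.
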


We are now ready to define our main algorithm for sparsifying arbitrary $r$-uniform input hypergraphs $G=(V,E)$. We use \textsc{ExpanderDecomposition} produce subsets of $V$ that are good expanders and \textsc{ExpanderSparsify} to sparsify them. Since we get rid of at least half the edges each turn, this process, repeated until no hyperedges remain, would take potentially $\Omega(\log m)=\widetilde\Omega(r\log n)$ rounds. To reduce the number of rounds, we contract clusters into supernodes shortly after sparsifying them. For more intuition on the workings of Algorithm~\ref{alg:main} see Section~\ref{sec:tech-overview-general}.

\begin{algorithm}[H]
\caption{Algorithm sparsifying an arbitrary hypergraph}\label{alg:main}
\begin{algorithmic}[1]
\Procedure{Sparsify}{$G,\epsilon$}
    \State $E^{(0)}\gets E$
    \State $\widetilde E\gets\emptyset$
    \For{$i=0,\ldots,\log m$}
        \If{$i\ge10\log n$}
            \State $\sim_i\ \gets$ the relation on $V$ where $u\sim_i v$ iff $\exists i'\le i-10\log n,\ \exists j, \ [u],[v]\in C^{(i')}_j$\label{line:sim-def}
            \State $\approx_i\ \gets\text{transitive closure of}\sim_i$
            \State $G^{(i)}\gets(V,E^{(i)})/\approx_i$
        \Else
            \State $G^{(i)}\gets(V,E^{(i)})$
        \EndIf
        \State $\left(C^{(i)}_1\ldots,C^{(i)}_{k_i}\right)\gets\textsc{ExpanderDecomposition}(G^{(i)})$
        \For{$j=1,\ldots, k_i$}
        \State $G_j^{(i)}=(C_j^{(i)},E_j^{(i)}) \gets G^{(i)}[C_j^{(i)}]$
            \State $(C_j^{(i)},\widetilde E_j^{(i)},w_j^{(i)}) \gets \textsc{ExpanderSparsify}(G_j^{(i)},\epsilon/10)$\label{line:def-tilde-Eij}
            \State $\widetilde E\gets \widetilde E\cup \widetilde E_j^{(i)}$\label{line:update-tilde-E}
            \State $w|_{E_j^{(i)}}=w_j^{(i)}$\label{line:set-w}
        \EndFor
        \State $E^{(i+1)}\gets E^{(i)}\setminus\cup_{j=1}^{k_i}E_j^{(i)}$\label{line:remove-edges}
    \EndFor
    \State \Return $\widetilde G=(V,\widetilde E)$
\EndProcedure
\end{algorithmic}
\end{algorithm}

$G$ is assumed to be $r$-regular, and $\epsilon$ is assumed to be in $[1000/n,1/2]$.
In Line~\ref{line:sim-def}, $[u],[v]\in C_j^{(i')}$ means that the \textit{supernodes containing $u$ and $v$} were in the same expander of the decomposition at an earlier level $i'$.
Note also the abuse of notation in Lines~\ref{line:def-tilde-Eij} and~\ref{line:update-tilde-E} (in accordance with Remark~\ref{rem:important-remark}): In Line~\ref{line:def-tilde-Eij}, $\widetilde E_j^{(i)}$ is defined as a multiset of hyperedges on the \textit{contracted} vertex set $C_j^{(i)}\subseteq V/\approx_i$, but in Line~\ref{line:update-tilde-E} we treat it as containing hyperedges supported on $V$.
This is justified because elements of $\widetilde E_j^{(i)}$
have clearly corresponding counterparts in $E$
(recall our definition of a sparsifier as a weighted subgraph and
that a contraction maintains a bijection between the hyperedges),
and it would only complicate the notation to make this distinction formal.

Also note that throughout the contractions, our graphs always remain $r$-regular, thanks to the use of multisets as hyperedges (see Definition~\ref{def:contract}). This --- along with Guarantee~\ref{item:expansion} of Lemma~\ref{lem:partition} that the expansion of $G_j^{(i)}$ is $\Omega(\tfrac1{r\log^2n})$ --- allows the use of \textsc{ExpanderSparsify} in Line~\ref{line:def-tilde-Eij}.

Line~\ref{line:set-w} simply means that we set the weights of hyperedges in $E_j^{(i)}$ as in the sparsifier computed in Line~\ref{line:def-tilde-Eij}. This is consistent with the update of $\widetilde E$ in Line~\ref{line:update-tilde-E}.

\subsection{Correctness}\label{subsec:general-sparsification-correctness}
First we bound the total size and number of clusters $C_j^{(i)}$:

\begin{claim}\label{claim:all-clusters}
	$$\sum_{i=0}^{\log m}\sum_{j=1}^{k_i}\left|C_j^{(i)}\right|\le21n\log n.$$
\end{claim}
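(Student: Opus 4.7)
My plan is to switch the order of summation and count cluster-appearances per equivalence class. Specifically,
\[
  \sum_i \sum_j |C_j^{(i)}|
  \;=\; \sum_K \#\bigl\{i : K \in C_j^{(i)} \text{ for some } j\bigr\},
\]
where $K$ ranges over all nonempty subsets of $V$ that appear as a block of $\approx_i$ for some $i$. I will then bound both the number of such $K$ and the number of cluster-appearances of each.

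For the number of distinct $K$: at level $0$, $\approx_0$ consists of $n$ singletons; thereafter $\approx_i$ can only coarsen, and each merging event (i.e.\ a group of $c\ge 2$ blocks collapsing to one) creates exactly one new block while reducing the total block count by at least one. Since the block count starts at $n$ and stays $\geq 1$, there are at most $n-1$ such events over the entire run, so the total number of distinct classes ever appearing is at most $n+(n-1) \leq 2n$.

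For the number of cluster-appearances of a fixed class $K$, I would show the bound $10\log n + 1$ by splitting according to cluster size. \emph{Non-singleton appearances.} If $K \in C_j^{(i)}$ with $|C_j^{(i)}|\geq 2$, then $\sim_{i+10\log n}$ incorporates the pairs arising from $C_j^{(i)}$, so its transitive closure $\approx_{i+10\log n}$ places $K$ inside a strictly larger block; thus $K$ is no longer an equivalence class at level $i+10\log n$. Hence all non-singleton appearances of $K$ lie within a window of at most $10\log n$ consecutive levels starting from the first such appearance, yielding at most $10\log n$ of them. \emph{Singleton appearances.} I claim there is at most one. Whenever $K\in C_j^{(i)}$ (of any cluster size), every hyperedge $e\in E^{(i)}$ whose original vertices all lie inside $K$ has its unique $G^{(i)}$-vertex equal to $K\in C_j^{(i)}$, so $e\in E_j^{(i)}$ and is therefore removed in the passage to $E^{(i+1)}$. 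Because $K$ is a fixed subset of $V$ while it remains an equivalence class, no new such ``self-loops at $K$'' can appear at later levels. But a singleton cluster $\{K\}$ output by \textsc{ExpanderDecomposition} requires $K$ to satisfy the positive minimum-degree threshold of Lemma~\ref{lem:partition}, which forces at least one self-loop at $K$. Hence $K$ can appear in a singleton cluster at most once.

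Putting these bounds together,
\[
  \sum_{i=0}^{\log m}\sum_{j=1}^{k_i} |C_j^{(i)}|
  \;=\; \sum_K \#\{\text{cluster-appearances of }K\}
  \;\leq\; 2n \cdot (10\log n + 1)
  \;\leq\; 21 n \log n,
\]
where the final inequality holds once $n$ is at least a small constant (tiny $n$ is handled trivially using $\sum_j|C_j^{(i)}|\le n$ per level and $\log m \le O(\log n)$). The step I expect to be the main obstacle is the singleton-cluster argument: a priori, a class that has accumulated many self-loops through contraction could be repeatedly picked as a singleton cluster, breaking the $O(\log n)$-per-class bound. The resolution is the observation above that every cluster appearance of $K$ already exhausts all of its self-loops, and no new self-loops at $K$ can be created while $K$ itself remains unchanged as a subset of $V$.
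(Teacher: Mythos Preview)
Your proof is correct and follows essentially the same approach as the paper: count distinct equivalence classes via the laminar-family bound ($\le 2n-1$), then bound each class's appearances by $10\log n$ non-singleton (contraction window) plus $1$ singleton (self-loops get exhausted and cannot reappear). Your presentation is slightly cleaner in that you observe self-loops at $K$ are removed upon \emph{any} cluster appearance of $K$, not just singleton ones, and you correctly invoke the minimum-degree guarantee of Lemma~\ref{lem:partition} to rule out repeated singleton clusters.
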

\begin{proof}
	We first bound the number of \emph{distinct} vertices (including supernodes)
	that appear throughout the execution of Algorithm~\ref{alg:main}, i.e.,
	$$V^*=\bigcup_{i=0}^{\log m}\bigcup_{j=1}^{k_i}C_j^{(i)}.$$
	This set includes vertices from $V$, as well as some contracted supernodes that are subsets of $V$. Note however, that these sets form a laminar family, since supernodes are only constructed by merging previous supernodes --- no supernodes ever get broken apart. Therefore, $|V^*|\le2n-1$.

	Let us first bound the number of \textit{non-singleton} clusters a single $v\in V^*$ can participate in. As soon as $v$ participates in some cluster $C$ of size at least $2$ at level $i$, we know that at level $i+10\log n$ $C$ will be contracted into a different vertex in $V^*$. Therefore, $v$ can participate in at most $10\log n$ clusters of size $2$ or more.

	On the other hand, any $v\in V^*$ can participate in at most one singleton cluster. Indeed as soon as $v$ forms a cluster $\{v\}$ all of its self-loop hyperedges are removed, and it will not have self-loops until it is contracted again. Therefore, $\{v\}$ cannot appear again as a cluster as this would violate Guarantee~\ref{item:size} of Lemma~\ref{lem:partition}.

	Therefore, the total of the sum is at most
	$
	(2n-1)\cdot(10\log n+1) \le 21n\log n
	$, for large enough $n$.
\end{proof}
As an immediate corollary to this, we can conclude that are there at most $21n\log n$ clusters considered throughout the entire algorithm, since the size of a cluster is always at least one. Therefore, \textsc{ExpanderSparsify} --- the only non-deterministic step of our algorithm --- gets called at most $21n\log n$ time, and succeeds \textit{all} of these times with combined probability $1-O((\log^2n)/n)$. From this point on we consider only the event that all of these calls indeed succeed.

Let us consider the size of the sparsifier output by $\textsc{Sparsify}(G,\epsilon)$.
\begin{lemma}\label{lem:general-sparsification-size}
  Let $G$ be $r$-regular hypergraph and let $(r\log^6n)/n\le\epsilon\le1/2$. The hypergraph returned by $\textsc{Sparsify}(G,\epsilon)$ has $nr(\epsilon^{-1}\log n)^{O(1)}$ hyperedges with probability $1-O((\log^2n)/n)$.
\end{lemma}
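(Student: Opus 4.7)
The plan is to bound $|\widetilde E|$ by simply summing the per-call size guarantees of $\textsc{ExpanderSparsify}$, and then invoke Claim~\ref{claim:all-clusters} to bound the resulting sum. First I would verify that every call $\textsc{ExpanderSparsify}(G_j^{(i)},\epsilon/10)$ satisfies the preconditions of Lemma~\ref{lem:expander-sparsify}. The bound $|C_j^{(i)}|\le n$ is immediate; $r$-uniformity is preserved through the contractions since hyperedges are multisets (Definition~\ref{def:contract}) and the input $G$ is $r$-regular; the expansion bound $\Phi(G_j^{(i)})=\Omega(1/(r\log^2 n))$ is exactly Guarantee~\ref{item:expansion} of Lemma~\ref{lem:partition}; the condition $r\le (\epsilon/10)\,n/\log^6 n$ follows from the hypothesis $\epsilon\ge r\log^6 n/n$ after absorbing the factor of $10$ into the $\log^6 n$ slack; and $100/n\le \epsilon/10$ follows since $\epsilon\ge \log^6 n/n$.

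Given that these preconditions hold, Lemma~\ref{lem:expander-sparsify} implies that the $(i,j)$-th call contributes at most $|C_j^{(i)}|\cdot r\cdot (\epsilon^{-1}\log n)^{O(1)}$ hyperedges to $\widetilde E$. Summing over all $(i,j)$ and factoring out the $(i,j)$-independent terms,
\[
|\widetilde E|\;\le\; r\,(\epsilon^{-1}\log n)^{O(1)}\sum_{i=0}^{\log m}\sum_{j=1}^{k_i}|C_j^{(i)}|,
\]
and Claim~\ref{claim:all-clusters} bounds the double sum by $21 n\log n$. Absorbing the $\log n$ factor into the $(\epsilon^{-1}\log n)^{O(1)}$ term yields $|\widetilde E|= nr(\epsilon^{-1}\log n)^{O(1)}$, as desired.

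For the probability bound, I would observe that the total number of calls to $\textsc{ExpanderSparsify}$ is at most $\sum_i k_i\le \sum_{i,j}|C_j^{(i)}|\le 21n\log n$ by Claim~\ref{claim:all-clusters}, since each cluster has at least one (super)vertex. Each individual call succeeds in producing a sparsifier of the claimed size with probability $1-O(\log n/n^2)$ by Lemma~\ref{lem:expander-sparsify}. A union bound over the $O(n\log n)$ calls then gives an overall failure probability of $O(\log^2 n/n)$, matching the lemma's statement.

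There is no genuine obstacle here: the lemma is a bookkeeping argument. The only mildly subtle points are (a) checking that all the numeric preconditions of Lemma~\ref{lem:expander-sparsify} survive the reduction from $\epsilon$ to $\epsilon/10$ and the hypothesis $\epsilon\ge r\log^6 n/n$, and (b) noting that $r$-uniformity is preserved under the contraction operation in Algorithm~\ref{alg:main}, which relies crucially on our multiset convention for hyperedges. Both are handled by the verifications above.
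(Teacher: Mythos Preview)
Your proposal is correct and follows essentially the same approach as the paper: bound each $|\widetilde E_j^{(i)}|$ via Lemma~\ref{lem:expander-sparsify}, sum over all clusters using Claim~\ref{claim:all-clusters}, and union bound over the $O(n\log n)$ calls. You are more explicit than the paper about checking the preconditions of Lemma~\ref{lem:expander-sparsify} (the paper handles these informally in the text surrounding Algorithm~\ref{alg:main} rather than inside the proof), but the argument is otherwise identical.
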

\begin{proof}
The hyperedge multiset of the hypergraph $\widetilde G=(V,\widetilde E)$
is simply the union of $\widetilde E_j^{(i)}$ for all $i$ and $j$,
and by Lemma~\ref{lem:expander-sparsify} the size of each of them is bounded by
$|\widetilde E_j^{(i)}|
  \le |C_j^{(i)}|r(\epsilon^{-1}\log n)^{O(1)}$ with probability $1-O(\log n^2/n)$.
So by Claim~\ref{claim:all-clusters}
\begin{align*}
	|\widetilde E|\le\sum_{i=0}^{\log m}\sum_{j=1}^{k_i}\left|C_j^{(i)}\right|r{\left(\epsilon^{-1}\log n\right)}^{O(1)}\le nr{\left(\epsilon^{-1}\log n\right)}^{O(1)},
\end{align*}
as claimed.
\end{proof}

\begin{lemma}\label{lem:general-sparsification-spectral}
  Let $G$ be $r$-regular hypergraph and let $(r\log^6n)/n\le\epsilon\le1/2$. Then  hypergraph $\textsc{Sparsify}(G,\epsilon)$ an $\epsilon$-spectral sparsifier to $G$ in polynomial time with probability at least $1-O((\log^2n)/n)$.
\end{lemma}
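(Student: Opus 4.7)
The plan is to fix an arbitrary $x\in\mathbb R^V$ and prove $\widetilde Q(x) = (1\pm\epsilon)\,Q(x)$, after which a union bound over the $O(n\log n)$ calls to \textsc{ExpanderSparsify} (counted in Claim~\ref{claim:all-clusters}) yields the claimed failure probability. Every hyperedge of $G$ is sparsified in exactly one pair $(i,j)$ on Line~\ref{line:def-tilde-Eij} before being removed from further consideration on Line~\ref{line:remove-edges}, so $Q(x) = \sum_{i,j} Q_x(E_j^{(i)})$ and $\widetilde Q(x) = \sum_{i,j} \widetilde Q_x(E_j^{(i)})$. It then suffices to control each difference $\widetilde Q_x(E_j^{(i)}) - Q_x(E_j^{(i)})$ piece-wise, accumulating the multiplicative errors to a global $1\pm\epsilon/10$ and bounding the total additive error by $\epsilon Q(x)/2$.

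For each fixed $(i,j)$, Lemma~\ref{lem:expander-sparsify} guarantees that $\widetilde E_j^{(i)}$ is an $\epsilon/10$-spectral sparsifier of the \emph{contracted} hypergraph $G_j^{(i)}$, but only for vectors living on $V(G^{(i)})$. I will therefore introduce an auxiliary vector $x^{(i)}$ on $V(G^{(i)})$ that assigns each supernode $[v]$ a single representative value of $x$ on its constituents (say the degree-weighted mean); by Observation~\ref{obs:contraction-sparsification} this yields $\widetilde Q^{(i)}_{x^{(i)}}(E_j^{(i)}) = (1\pm\epsilon/10)\,Q^{(i)}_{x^{(i)}}(E_j^{(i)})$. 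The remaining work is to bound the two ``contraction gaps'' $\bigl|Q_x(E_j^{(i)}) - Q^{(i)}_{x^{(i)}}(E_j^{(i)})\bigr|$ and $\bigl|\widetilde Q_x(E_j^{(i)}) - \widetilde Q^{(i)}_{x^{(i)}}(E_j^{(i)})\bigr|$. Applying Lemma~\ref{lem:additive-error} hyperedge-by-hyperedge with $\delta_e = \max_{v\in e}|x_v - x^{(i)}_v|$ bounds each gap by $\sum_e O(\delta_e \sqrt{Q_x(e)} + \delta_e^2)$, with an extra weight $w_e$ in the sparsifier version; an AM--GM split sends the cross term $\delta_e\sqrt{Q_x(e)}$ partly into the $\epsilon/10$ multiplicative budget and leaves a residual of the form $O(\delta_e^2/\epsilon)$.

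The main obstacle is charging these $\delta_e^2$ terms (and their weighted analogues) to energies computed at earlier levels. Each $\delta_e$ is at most the internal range of $x$ over some supernode $v_C$ appearing in $e$, and $v_C$ was formed by contracting a cluster $C$ sparsified at a level $i' \le i - 10\log n$. Writing $\delta_C = \max_{u,v\in C}|x_u - x_v|$, Guarantee~\ref{item:expansion} of Lemma~\ref{lem:partition} gives expansion $\Omega(1/(r\log^2 n))$ inside $C$, so Theorem~\ref{thm:hypergraph-cheeger}, applied to $x|_C$ after orthogonalizing against the degree vector of $G_j^{(i')}[C]$, produces a lower bound $Q_x(E_j^{(i')}) \ge \Omega\bigl(\delta_C^2 \, \mathrm{vol}_{G_j^{(i')}}(C)/(r\log^4 n)\bigr)$, and Guarantee~\ref{item:min-degree} further gives $\mathrm{vol}_{G_j^{(i')}}(C) = \Omega(|C|\,|E^{(i')}|/n)$. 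This is a substantial lower bound on $Q_x(E_j^{(i')})$ expressed in terms of $\delta_C$.

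On the other side, since $|E^{(i)}|$ halves at each level and the graph stays $r$-uniform, the total hyperedge-incidence of $v_C$ at all levels $i \ge i'+10\log n$ is at most $O(r\,|E^{(i'+10\log n)}|) \le O(r\,|E^{(i')}|/n^{10})$, even after accounting for the weight inflation $w_e$ from sampling (whose expected contribution per sampled hyperedge is $1$, per Section~\ref{sec:expander-sparsification-construction}). Comparing the two estimates, the $n^{10}$ gap manufactured by the mandatory ten-level wait between sparsifying $C$ and contracting it comfortably dwarfs all the $\mathrm{poly}(r,\log n,1/\epsilon)$ factors appearing above --- and the hypothesis $\epsilon \ge r\log^6 n/n$ provides the margin needed for the $O(\log m) = O(r\log n)$ levels we must ultimately sum over. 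Summing the per-cluster charge over every $(i',j)$ therefore keeps the global additive error below $\epsilon Q(x)/2$, which together with the accumulated $1\pm\epsilon/10$ multiplicative error gives $\widetilde Q(x) = (1\pm\epsilon)\,Q(x)$, as required.
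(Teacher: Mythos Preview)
Your overall strategy matches the paper's: round $x$ to be constant on each supernode, invoke the $\epsilon/10$-sparsifier guarantee on the contracted cluster (Observation~\ref{obs:contraction-sparsification}), bound the two rounding gaps via Lemma~\ref{lem:additive-error}, and charge the residual $\delta^2/\epsilon$ terms to energy computed at an earlier level using Cheeger together with the $n^{10}$ slack manufactured by the ten-level wait.

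There is, however, one genuine gap. You assert that each supernode $v_C$ ``was formed by contracting a cluster $C$ sparsified at a level $i'\le i-10\log n$'', but this is not true: $\approx_i$ is the \emph{transitive closure} of $\sim_i$ (Definition~\ref{def:tr-cl}), so an equivalence class of $\approx_i$ is in general the union of several distinct clusters from several different earlier levels, and need not equal any single $C_{j'}^{(i')}$. Hence there is no well-defined cluster to which you can apply Theorem~\ref{thm:hypergraph-cheeger}. The paper fixes this (inside Claim~\ref{claim:g0}) by unfolding the transitive closure: if $\delta$ is realized by $u\approx_i v$, take a $\sim_i$-path $u=w_0\sim_i\cdots\sim_i w_k=v$ of length $k\le n$; by averaging some consecutive pair has $|x_{w_{\ell-1}}-x_{w_\ell}|\ge\delta/n$, and \emph{that} pair lies in a single earlier cluster $C_{j'}^{(i')}$ with $w_{\ell-1}\not\approx_{i'}w_\ell$. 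The extra factor $n$ lost here is harmlessly absorbed into the $n^{10}$ slack.

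Two smaller points. First, your Cheeger bound $Q_x(E_{j'}^{(i')})\ge\Omega\!\bigl(\delta_C^2\,\mathrm{vol}(C)/(r\log^4 n)\bigr)$ is too optimistic: after centering one only knows a \emph{single} vertex has $|x_v|\ge\delta_C/2$, so $\sum_v x_v^2 d(v)\ge(\delta_C/2)^2 d_{\min}$, not $\delta_C^2\,\mathrm{vol}(C)$; this is exactly why Guarantee~\ref{item:min-degree} of Lemma~\ref{lem:partition} (min-degree $\ge m_{i'}/(4n)$) is needed. Second, the paper does not charge to $Q_x(E_{j'}^{(i')})$ and then sum over $(i',j')$ as you propose; instead it uses Lemma~\ref{lem:contract-compare} to lower-bound the \emph{global} $Q(x)$ by the cluster energy, obtaining a per-cluster additive error of $O(\epsilon Q(x)/n^2)$ and summing over the $O(n\log n)$ clusters from Claim~\ref{claim:all-clusters}. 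Your alternative accounting could likely be made to work, but it would require controlling how many later clusters charge a fixed earlier one, which you have not done.
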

\begin{proof}
We will consider each cluster $C_j^{(i)}$ separately, it suffices to prove that for each $i$ and $j$, the energy of $(V,\widetilde E_j^{(i)})$ approximates the energy of $(V,E_j^{(i)})$ up to small additive and multiplicative errors. We must first verify that each hyperedge of $E$ appears in exactly one of $E_j^{(i)}$. Indeed, any hyperedge can appear in at most one of them: $E_j^{(i)}$ gets removed from $E^{(i+1)}$ at the end of the main for loop in Line~\ref{line:remove-edges}. Furthermore, all hyperedges eventually get removed this way: By Guarantee~\ref{item:size} of Lemma~\ref{lem:partition}, after each round no more than half of the hyperedges remain, so after $\log m+1$ rounds all hyperedges must be gone and $E^{(\log m+1)}=\emptyset$.

Let us fix a single cluster $C_j^{(i)}$ and compare the spectral properties of $E_j^{(i)}$ and $\widetilde E_j^{(i)}$. Let $\widehat C_j^{(i)}$ be the set of vertices from $V$ that make up $C_j^{(i)}$ after contraction by $\approx_i$, that is, formally
$$\widehat C_j^{(i)}=\{v\in V \mid [v]\in C_j^{(i)}\},$$
where $[v]$ denotes the equivalence class of $v$ with respect to $\approx_i$.
At this point it is important to keep in mind the differences between the graphs $\widehat G_j^{(i)}=\left(\widehat C_j^{(i)},E_j^{(i)}\right)$, $G_j^{(i)}=\left(C_j^{(i)},E_j^{(i)}\right)$, $\widetilde G_j^{(i)}=\left(C_j^{(i)},\widetilde E_j^{(i)}\right)$, and $\breve G_j^{(i)}=\left(\widehat C_j^{(i)},\widetilde E_j^{(i)}\right)$. Though $\widehat G_j^{(i)}$ and $G_j^{(i)}$ share their hyperedge multiset, they are not the same, in fact $G_j^{(i)}=\widehat G_j^{(i)}/\approx_i$. Similarly, $\widetilde G_j^{(i)}=\breve G_j^{(i)}/\approx_i$.
Lemma~\ref{lem:expander-sparsify} says that $\widetilde G_j^{(i)}$ is an $\epsilon$-spectral sparsifier to $G_j^{(i)}$, it makes no such guarantee about $\widehat G_j^{(i)}$ and $\breve G_j^{(i)}$.

We will show that for every vector $x\in\mathbb R^V$,
\begin{equation}\label{eq:cluster-approx}
\breve Q(x)=\left(1\pm\frac{4\epsilon}{10}\right)\widehat Q(x)\pm\frac{3\epsilon Q(x)}{n^2},
\end{equation}
where $\widehat Q$ and $\breve Q$ denote the energy with respect to $\widehat G_j^{(i)}$ and $\breve G_j^{(i)}$, respectively. Recall from Observation~\ref{obs:contraction-sparsification} that the equation
\begin{equation}\label{eq:general-main}
\breve Q(x)=\left(1\pm\frac{\epsilon}{10}\right)\widehat Q(x)
\end{equation}
holds when $x$ is constant within all supernodes of $C_j^{(i)}$.
Informally, our plan is to round $x$ to $\widetilde x$ such that it satisfies this requirement and then show that
$$\widehat Q(x)\cong\widehat Q(\widetilde x)\cong\breve Q(\widetilde x)\cong\breve Q(x).$$

We may assume without loss of generality that $i\ge10\log n$, as otherwise no contraction takes place and Equation~\eqref{eq:general-main} holds trivially. Denote the maximum discrepancy between the values of $x$ within supernodes of $C_j^{(i)}$ by
$$\delta=\max_{u,v\in\widehat C_j^{(i)}:\ u\approx_i v}|x_u-x_v|.$$

Our rounding of $x$ to $\widetilde x$ is mostly arbitrary, we enforce only that no coordinate changes by more than an additive $\delta$.

We will show the following three claims in the next section.
\begin{restatable}{claim}{claimgone}\label{claim:g1}
	$$\widehat Q(\widetilde x)=\left(1\pm\frac{\epsilon}{10}\right)\widehat Q(x)\pm\frac{\epsilon Q(x)}{n^2}. $$
\end{restatable}

\begin{restatable}{claim}{claimgtwo}\label{claim:g2}
$$\breve Q(\widetilde x)=\left(1\pm\frac{\epsilon}{10}\right)\widehat Q(\widetilde x). $$
\end{restatable}

\begin{restatable}{claim}{claimgthree}\label{claim:g3}
$$\breve Q(\widetilde x)=\left(1\pm\frac{\epsilon}{10}\right)\breve Q(x)\pm\frac{\epsilon Q(x)}{n^2}. $$
\end{restatable}

By combining Claims~\ref{claim:g1},~\ref{claim:g2}, and~\ref{claim:g3} we get that the quadratic form $\breve Q$ is indeed close to the quadratic form $\widehat Q$, with small additive and multiplicative error, as claimed.
As in the proof of Lemma~\ref{lem:spectral-property}, first note that the multiplicative error between $\widehat Q(x)$ and $\breve Q(x)$ by itself would only amount to $(1\pm4\epsilon/10)$. Similarly, the additive error by itself would be exactly $2\epsilon Q(x)/n^2$, small enough even when summed over all possible clusters $C_j^{(i)}$. This is because the number of clusters throughout the whole algorithm is bounded by $21n\log n$ due to Claim~\ref{claim:all-clusters}.

Formally, we consider one direction of the bound first:
\begin{align*}
	\breve Q(x)&\le{\left(1-\frac{\epsilon}{10}\right)}^{-1}\left[\frac{\epsilon Q(x)}{n^2}+\breve Q(\widetilde x)\right]&\text{By Claim~\ref{claim:g3}}\\
	&\le{\left(1-\frac{\epsilon}{10}\right)}^{-1}\left[\frac{\epsilon Q(x)}{n^2}+\left(1+\frac{\epsilon}{10}\right)\widehat Q(\widetilde x)\right]&\text{By Claim~\ref{claim:g2}}\\
	&\le{\left(1-\frac{\epsilon}{10}\right)}^{-1}\left[\frac{\epsilon Q(x)}{n^2}+\left(1+\frac{\epsilon}{10}\right)\left[\frac{\epsilon Q(x)}{n^2}+\left(1+\frac{\epsilon}{10}\right)\widehat Q(x)\right]\right]&\text{By Claim~\ref{claim:g1}}\\
	&\le \left(1+\frac{4\epsilon}{10}\right)\widehat Q(x) +\frac{3\epsilon Q(x)}{n^2},
\end{align*}
since $\epsilon\le1/2$. The other direction that
$$\breve Q(x)\ge\left(1-\frac{\epsilon}{10}\right)\widehat Q(x)-\frac{3\epsilon Q(x)}{n^2}$$
 follows similarly, which concludes the proof of Equation~\eqref{eq:cluster-approx}.

Finally, we can sum over all clusters $C_j^{(i)}$, noting that their number cannot exceed $21n\log n$, to get that $Q_{\widetilde E}(x)=(1\pm\epsilon)Q(x)$.
\end{proof}

Combining Lemmas~\ref{lem:general-sparsification-size} and~\ref{lem:general-sparsification-spectral}, we get a polynomial-time algorithm that, given an unweighted $r$-uniform hypergraph, constructs an $\epsilon$-spectral sparsifier with $nr (\epsilon^{-1}\log n)^{O(1)}$ hyperedges.
We will discuss how to handle weighted hypergraphs and reduce the running time to $O(mr^2) + n^{O(1)}$ in Section~\ref{subsec:weighted}.

\subsection{Proofs of Claims~\ref{claim:g1},~\ref{claim:g2}, and~\ref{claim:g3}}
\label{sec:general-sparsification-proofs}

Before proceeding on to Claims~\ref{claim:g1},~\ref{claim:g2}, and~\ref{claim:g3} from the previous section, we prove a simple supporting lemma. This allows us to relate the total energy of $G$ to the energies of the various $C_j^{(i)}$ clusters.

\begin{lemma}\label{lem:contract-compare}
	Let $G=(V,E)$ be an arbitrary hypergraph and let $x$ be a vector in $\mathbb R^V$. Let $\approx$ be an equivalence relation on $V$ and define the contraction $G'=(V',E')=G/\approx$. Let $x'\in\mathbb R^V$ be a specification of $x$ on vertices of $G'$ such that
	$$\forall v'\in V',\exists v\in v':\ x'_{v'}=x_v.$$
	That is each vertex in $V'$ takes the value of one of the vertices in $V$ from which it was contracted. Then
	$$Q(x)\ge Q'(x),$$
	where $Q$ and $Q'$ denote the energy with respect to $G$ and $G'$ respectively.
\end{lemma}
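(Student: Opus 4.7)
The plan is to exploit the natural bijection between hyperedges of $G$ and of $G'=G/\approx$ given by contraction: every $e\in E$ corresponds to a unique hyperedge $e'=\{[v]\mid v\in e\}\in E'$ (interpreted as a multiset, per Definition~\ref{def:contract}). I would reduce the global inequality $Q(x)\ge Q'(x')$ to the per-hyperedge inequality $Q_{x'}(e')\le Q_x(e)$ for every $e\in E$, and then sum these across the bijection to obtain $Q'(x')=\sum_{e'\in E'}Q_{x'}(e')\le\sum_{e\in E}Q_x(e)=Q(x)$, which is exactly the statement of the lemma.

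For the per-hyperedge step, I would fix $e\in E$ and pick the pair $(a',b')\in e'\times e'$ achieving the max in $Q_{x'}(e')=(x'_{a'}-x'_{b'})^2$. By the construction of $e'$, we can write $a'=[a]$ and $b'=[b]$ for some $a,b\in e$. The hypothesis on $x'$ supplies representatives $\alpha\in[a]$ and $\beta\in[b]$ with $x'_{[a]}=x_\alpha$ and $x'_{[b]}=x_\beta$. Thus the per-hyperedge task reduces to showing
\[
(x_\alpha-x_\beta)^2 \;\le\; \max_{c,d\in e}(x_c-x_d)^2 \;=\; Q_x(e).
\]

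The hard part is precisely this last bound, because a priori the representatives $\alpha,\beta$ selected by $x'$ need not lie in $e$, so $(x_\alpha-x_\beta)^2$ is not directly comparable to pairwise differences of $x$ restricted to $e$. I would address this by exploiting the existential flexibility in the hypothesis: since $a\in[a]\cap e$ and $b\in[b]\cap e$, there always exist valid representatives sitting inside $e$ itself that witness the hypothesis for this hyperedge. Taking $\alpha=a$ and $\beta=b$ gives $(x_\alpha-x_\beta)^2=(x_a-x_b)^2\le Q_x(e)$ directly from the definition of $Q_x(e)$. Summing this per-hyperedge inequality across $e\in E$ via the hyperedge bijection produced by the contraction then delivers the claimed $Q'(x')\le Q(x)$.
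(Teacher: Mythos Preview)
Your reduction to a per-hyperedge bound via the hyperedge bijection is exactly the paper's approach. But the resolution you offer for the ``hard part'' does not work. Once $x'$ is fixed, the representative $\alpha\in[a]$ with $x'_{[a]}=x_\alpha$ is determined; the hypothesis is an \emph{existence} statement about that single witness, not a license to reselect a witness per hyperedge. Your sentence ``there always exist valid representatives sitting inside $e$ itself that witness the hypothesis'' is the error: the element $a\in[a]\cap e$ is only a witness if $x'_{[a]}=x_a$, which nothing in the hypothesis guarantees. You correctly diagnosed that $\alpha,\beta$ need not lie in $e$, and then, in the next sentence, effectively assumed they do.

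This is not a patchable technicality: the per-hyperedge inequality $Q_{x'}(e')\le Q_x(e)$ is false under the stated hypothesis. Take $V=\{1,2,3\}$, classes $\{1,2\}$ and $\{3\}$, the single edge $e=\{1,3\}$, and $x=(0,100,1)$. With $x'_{\{1,2\}}=x_2=100$ and $x'_{\{3\}}=x_3=1$ the hypothesis is satisfied, yet $Q_x(e)=(0-1)^2=1$ while $Q_{x'}(e')=(100-1)^2=99^2$. The paper's own proof makes the identical leap (it asserts ``each value of $x'$ in $e'$ also appears as a value of $x$ in $e$''), so your proposal is faithful to the paper but inherits the same gap; as literally stated, the lemma does not hold for arbitrary $x'$ satisfying the hypothesis.
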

\begin{proof}
	We examine each hyperedge of $E$ separately. Let $e\in E$ and let the corresponding hyperedge in $E'$ be $e'$. By definition $Q_x(e)=\max_{a,b\in e}(x_a-x_b)^2$ and $Q'_{x'}(e')=\max_{a,b\in e'}(x_a-x_b)^2$. By definition of $x'$, each value of $x'$ in $e'$ also appears as a value of $x$ in $e$. Therefore, $Q_x(e)$ and $Q'_{x'}(e')$ are maximizations of the same formula, with the former having more choice in $x_a$ and $x_b$, so $Q_x(e)\ge Q'_{x'}(e')$. Summing this over all hyperedges we get
	\[
		Q(x)\ge Q'(x'). \qedhere
	\]
\end{proof}

We are now ready to prove a claim bounding the effect of rounding from $x$ to $\widetilde x$ on the energy of a single hyperedge. This is the main technical result of the section that allows us to do contraction in Algorithm~\ref{alg:main}.

\begin{claim}\label{claim:g0}
	For all $e\in E_j^{(i)}$,
	\[
	Q_{\widetilde x}(e)
	= \left(1\pm\frac\epsilon{10}\right) Q_x(e)\pm\frac{\epsilon Q(x)}{2n^3m_i},
	\]
	where $m_i$ is the size of $E^{(i)}$.
\end{claim}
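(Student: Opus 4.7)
My plan is to deduce the claim from Lemma~\ref{lem:additive-error} combined with hypergraph Cheeger (Theorem~\ref{thm:hypergraph-cheeger}) applied to an earlier-level cluster. By construction, the rounding $\widetilde x$ satisfies $\|x-\widetilde x\|_\infty \le \delta$, so Lemma~\ref{lem:additive-error} gives, for any hyperedge $e$,
\[
  |Q_x(e)-Q_{\widetilde x}(e)| \le 4\delta\sqrt{Q_x(e)}+4\delta^2 .
\]
I then split on the relative size of $\sqrt{Q_x(e)}$ and $\delta$. If $\sqrt{Q_x(e)} \ge 50\delta/\epsilon$, both error terms are easily controlled by $(\epsilon/20)Q_x(e)$, producing the multiplicative bound $(\epsilon/10)Q_x(e)$. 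Otherwise $\sqrt{Q_x(e)} < 50\delta/\epsilon$ and the error is at most $O(\delta^2/\epsilon)$, which I must absorb into the additive slack $\epsilon Q(x)/(2n^3 m_i)$. It therefore suffices to prove the bound $\delta^2 \le O(\epsilon^2 Q(x)/(n^3 m_i))$.

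The core step is upper bounding $\delta^2$ via the expansion structure of earlier clusters. For any direct pair $u \sim_i v$ (as in Line~\ref{line:sim-def} of Algorithm~\ref{alg:main}), there exists $i' \le i-10\log n$ and a cluster $C_{j_0}^{(i')}$ whose contraction contains both $[u]$ and $[v]$. The size guarantee of Lemma~\ref{lem:partition} yields $m_\ell \ge 2\, m_{\ell+1}$ per level, so $m_{i'} \ge 2^{10\log n} m_i = n^{10} m_i$, and the min-degree guarantee then forces the minimum degree in $G_{j_0}^{(i')}$ to be at least $n^9 m_i/4$. Since $\Phi(G_{j_0}^{(i')}) = \Omega(1/(r\log^2 n)) \le 2/r$, I can apply Cheeger on the contracted cluster to a centered version $\bar x'$ of a vector $x'$ chosen (via Lemma~\ref{lem:contract-compare}) so that $x'_{[u]}=x_u$ and $x'_{[v]}=x_v$. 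This yields
\[
  (x_u-x_v)^2 \le 2\bigl((\bar x'_{[u]})^2+(\bar x'_{[v]})^2\bigr) \le \frac{O(r\log^4 n)\, Q(x)}{n^9 m_i},
\]
where I use Lemma~\ref{lem:contract-compare} to bound $Q(\bar x')$ by the energy of $x$ in the uncontracted subgraph, which in turn is at most $Q(x)$ since $E_{j_0}^{(i')} \subseteq E$.

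Finally, to handle the transitive closure, any $u \approx_i v$ is connected by a $\sim_i$-chain of length $k \le n$, so Cauchy--Schwarz gives $(x_u-x_v)^2 \le n\sum_{j}(x_{c_j}-x_{c_{j+1}})^2 \le O(r\log^4 n)\,Q(x)/(n^7 m_i)$. Plugging this into the small-energy bound yields $\delta^2/\epsilon \le O(r\log^4 n \cdot Q(x)/(\epsilon n^7 m_i))$, which is dominated by $\epsilon Q(x)/(2n^3 m_i)$ thanks to $\epsilon \ge 100/n$ and $r$ being much smaller than $n$, completing the argument. The main obstacle I expect is the bookkeeping around contraction: Cheeger must be applied on the contracted cluster, while $\delta$ is measured in the original vertex set, and Lemma~\ref{lem:contract-compare} is the linchpin bridging the two.
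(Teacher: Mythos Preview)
Your proposal is correct and follows essentially the same approach as the paper. Both arguments apply Lemma~\ref{lem:additive-error}, split on whether $\delta \le \epsilon\sqrt{Q_x(e)}/50$, and in the small-energy case bound $\delta^2$ by applying hypergraph Cheeger (Theorem~\ref{thm:hypergraph-cheeger}) on an earlier-level cluster $G_{j'}^{(i')}$ with $i' \le i-10\log n$, invoking Lemma~\ref{lem:contract-compare} to pass from $Q(x)$ to the contracted energy and exploiting the $m_{i'} \ge n^{10}m_i$ gap.

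The one organizational difference is in how the $\sim_i$-chain is handled: the paper picks a single step of discrepancy $\ge \delta/n$ by averaging and applies Cheeger once, whereas you bound each direct step and combine via Cauchy--Schwarz. Both routes land on $\delta^2 \le O(r\log^4 n)\,Q(x)/(n^7 m_i)$, which suffices. (Minor remark: in Case~1 the first term $4\delta\sqrt{Q_x(e)}$ is $\tfrac{2\epsilon}{25}Q_x(e)$, slightly larger than $\tfrac{\epsilon}{20}Q_x(e)$, but the total is still below $\tfrac{\epsilon}{10}Q_x(e)$; and as in the paper, one should take the witness $(i',j')$ for $c_j \sim_i c_{j+1}$ so that $[c_j]_{i'} \ne [c_{j+1}]_{i'}$, which can always be arranged by unfolding the contraction to an earlier level.)
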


\begin{proof}

	Recall that the additive error between $x$ and $\widetilde x$ is at most $\delta$, thus by Lemma~\ref{lem:additive-error}
	$$\big|Q_x(e)-Q_{\widetilde x}(e)\big|
	\le 4\delta\left(\sqrt{Q_x(e)}+\delta\right).$$

	We distinguish between two cases based on the size of $\sqrt{Q_x(e)}$ relative to $\delta$. First suppose that $\delta\le\epsilon\sqrt{Q_x(e)}/50$. This is the simpler case, because we immediately get
	$$\big|Q_x(e)-Q_{\widetilde x}(e)\big|
	\le \frac{4\epsilon}{50}\sqrt{Q_x(e)}\cdot\left(\sqrt{Q_x(e)}+\frac{\epsilon}{50}\sqrt{Q_x(e)}\right)
	\le \frac{\epsilon}{10} Q_x(e). $$

	Now consider the second case, $\delta\ge\epsilon\sqrt{Q_x(e)}/50$. This time we get
	$$\big|Q_x(e)-Q_{\widetilde x}(e)\big|
	\le4\delta\left(\frac{50\delta}{\epsilon} + \delta\right)
	\le \frac{204\delta^2}{\epsilon}.$$

	Based on the definition of $\delta$, let $u,v\in\widehat C_j^{(i)}$ be such that $u\approx_i v$ and $|x_u-x_v|=\delta$.
	By Definition~\ref{def:tr-cl} there must exist a sequence of vertices, or \emph{path},
	$u=w_0,\ldots,w_k=v$ such that $w_{\ell-1}\sim_i w_{\ell}$ for all $\ell \in [k]$ ($\sim_i$ is defined in Line~\ref{line:sim-def}).
	Without loss of generality, assume the path length $k$ is minimal and therefore at most $n$.
	Then by averaging, there exists $\ell'\in[k]$ such that  $|x_{w_{\ell'-1}}-x_{w_{\ell'}}| \ge\delta/k \ge\delta/n$.
	By definition of $\sim_i$, there exist $i'\le i-10\log n$ and $j'$ such that $w_{\ell'-1},w_{\ell'}\in C_{j'}^{(i')}$ but $w_{\ell'-1}\not\approx_{i'}w_{\ell'}$.

	We now wish to relate $Q(x)$ to $\delta^2$. We will accomplish this by lower bounding $Q(x)$ by the energy of $G_{j'}^{(i')}$ with respect to some vector $x'$, as per Lemma~\ref{lem:contract-compare}. Let us define $x'\in\mathbb R^{G_{j'}^{(i')}}$ as in Lemma~\ref{lem:contract-compare} such that the supernode of $w_{\ell'-1}$ retains the $x$-value of $w_{\ell'-1}$ and the supernode of $w_{\ell'}$ retains the $x$-value of $w_{\ell'}$. Formally,
	\begin{align*}
		x'_{[w_{\ell'-1}]}&=x_{w_{\ell'-1}},\\
		x'_{[w_{\ell'}]}&=x_{w_{\ell}},
	\end{align*}
	where $[w]$ denotes the equivalence class with respect to $\approx_{i'}$. All other coordinates of $x'$ are defined arbitrarily. By applying Lemma~\ref{lem:contract-compare} and discard unnecessary hyperedges we can conclude that
	$$Q(x)\ge Q'(x'),$$
	the energy of $x$ on $G_{j'}^{(i')}$.

To lower bound the energy of $x$ on $G_{j'}^{(i')}$, we can apply the hypergraph Cheeger inequality (Theorem~\ref{thm:hypergraph-cheeger}).
  We know by Guarantee~\ref{item:expansion} of \textsc{ExpanderDecomposition} that $\Phi\left(G_j^{(i')}\right)\ge\Omega(\tfrac{1}{r\log^2n})$.
	Let $m_{i'}$ be the number of hyperedges in $E^{(i')}$.
	Then by Guarantee~\ref{item:min-degree} of \textsc{ExpanderDecomposition}, the minimum degree of $G_{j'}^{(i')}$ is at least $m_{i'}/4n$. In general, the hypergraph Cheeger inequality states that when $x$ is centered, that is $\sum_{v\in V}x_vd(v)=0$, we have
	$$Q(x)\ge\frac{r\Phi^2}{32}\sum_v x_v^2 d(v).$$
	Our vector $x'$ is not centered with respect to the cluster $C_{j'}^{(i')}$. However, since the difference between $x_{[w_\ell]}$ and $x_{[w_{\ell+1}]}$ is at least $\delta/n$, at least one of them will have absolute value at least $\delta/(2n)$ even when $x$ \textit{is} centered.
	Therefore, the terms corresponding to $[w_{\ell'-1}]$ and $[w_{\ell'}]$ already give
	$$Q(x)\ge Q_{G_{j'}^{(i')}}(x)\ge \frac{r{\Omega\left(\tfrac1{r\log^2n}\right)}^2}{32}\cdot\frac{m_{i'}}{4n}\cdot{\left(\frac{\delta}{2n}\right)}^2.$$
	Putting these together, we have
	\begin{align*}
	\big|Q_x(e)-Q_{\widetilde x}(e)\big|&\le \frac{204\delta^2}{\epsilon}
	\le \frac{204}{\epsilon}\cdot{(2n)}^2\cdot\frac{32}{r\cdot\Omega{\left(\tfrac1{r\log^2n}\right)}^2}\cdot\frac{4n}{m_{i'}}\cdot Q(x)
	\le\frac{O(1)\cdot n^3r\log^4n}{\epsilon m_{i'}}\cdot Q(x)
	\end{align*}

	To relate this to $m_i$, as stated in the claim, recall that by Guarantee~\ref{item:size} of \textsc{ExpanderDecomposition} the total number of hyperedges decreases at least by a factor of $2$ during every iteration of the outer for-loop. So $i'\le i-10\log n$ implies that $m_i\le m_{i'}n^{-10}$, and we get
	$$\big|Q_x(e)-Q_{\widetilde x}(e)\big|
	\le\frac{O(\log^4n)\cdot n^3r}{\epsilon m_i n^{10}} \cdot Q(x)
	\le\frac{\epsilon Q(x)}{2n^3m_i},$$
	where the last inequality is because $r$, $\epsilon^{-1}$, and the $O(\log^4n)$ term are all smaller than $n$, by the assumptions of Lemma~\ref{lem:general-sparsification-spectral} and for large enough $n$.

	Putting the two cases together gives us the additive and multiplicative error terms and completes the proof of Claim~\ref{claim:g0}.
\end{proof}

Claims~\ref{claim:g1} and~\ref{claim:g3} follow as a result of this, while Claim~\ref{claim:g2} is a simple consequence of Observation~\ref{obs:contraction-sparsification}.

\claimgone*

\begin{proof}
	This follows immediately from Claim~\ref{claim:g0}, as $\widehat G_j^{(i)}$ has at most $m_i$ hyperedges,
	\[
	\Big|\widehat Q(x)-\widehat Q(\widetilde x)\Big|\le\sum_{e\in E_j^{(i)}}\Big|Q_x(e)-Q_{\widetilde x}(e)\Big|\le\sum_{e\in E_j^{(i)}}\left[\epsilon Q_x(e)+\frac{\epsilon Q_x(e)}{2n^3m_i}\right]\le\epsilon\widehat Q(x)+\frac{\epsilon Q(x)}{n^2}.
	\qedhere
	\]
\end{proof}

\claimgtwo*

\begin{proof}
	This follows immediately from Equation~\eqref{eq:general-main} because $\widetilde x$ is constant on all equivalence classes of $\approx_i$ in $C_j^{(i)}$.
\end{proof}

\claimgthree*

\begin{proof}
	The proof follows similarly to the proof of Claim~\ref{claim:g1}. However,
	$$\breve Q(x)=\sum_{e\in\widetilde E_j^{(i)}}w_e\cdot Q_x(e),$$
	so we must prove that the total weight $\sum_{e\in\widetilde E_j^{(i)}}w_e$ of the sparsifier is not too large. Note that this is not the same as the size of the sparsifier, which is guaranteed to be small by Lemma~\ref{lem:partition}.

	To do this, note that we can bound the total weight of hyperedges adjacent on a specific vertex, say $v$, by looking at the energy of the vector $\mathbbm1_v$, which has value $1$ on $v$ and $0$ everywhere else. (Here it is important that there are no self-loop hyperedges in $\widetilde E_j^{(i)}$). So we have
	\begin{align*}
	\sum_{e\in\widetilde E_j^{(i)}}w_e&\le\sum_{v\in C_j^{(i)}}\sum_{e\in\widetilde E_j^{(i)}:\ v\in e}w_e
	=\sum_{v\in C_j^{(i)}}Q_{\widetilde G_j^{(i)}}(\mathbbm1_v)
	\le\sum_{v\in C_j^{(i)}}(1+\epsilon)Q_{G_j^{(i)}}(\mathbbm1_v)
	\le2\sum_{v\in C_j^{(i)}}|\widetilde E_j^{(i)}|
	\le2nm_i.
	\end{align*}
	Therefore,
	\[
	\Big|\breve Q(x)-\breve Q(\widetilde x)\Big|\le\sum_{e\in\widetilde E_j^{(i)}}w_e\cdot\Big|Q_x(e)-Q_{\widetilde x}(e)\Big|\le\sum_{e\in\widetilde E_j^{(i)}}w_e\cdot\left[\epsilon Q_x(e)+\frac{\epsilon Q_x(e)}{2n^3m_i}\right]=\epsilon\breve Q(x)+\frac{\epsilon Q(x)}{n^2}.
	\qedhere
	\]
\end{proof}

\subsection{Weighted Hypergraphs and Proof of Theorem~\ref{thm:general-sparsification}}\label{subsec:weighted}

We have so far dealt only with unweighted hypergraphs, so as not to further complicate our algorithms and notation. However, our techniques extend essentially unchanged to weighted ones as well.

One way to see this is to replace weighted hyperedges with parallel hyperedges. Our proofs throughout Sections~\ref{sec:pre}--\ref{sec:general-sparsification} apply to hypergraphs that may contain hyperedges with multiplicity. Given a weighted graph, we can simply scale the weights up (or down if necessary) and approximate them with integer weights arbitrarily closely. A weighted hypergraph where the ratio between the largest and smallest weights is $w_{\max}/w_{\min}$ can be approximated to within a multiplicative $1\pm\eta$ error using $\log(\eta^{-1})\cdot w_{\max}/w_{\min}$ parallel hyperedges to replace each weighted hyperedge. The parameter $\eta$ can be set to $o(\epsilon)$ so as to still produce a good spectral approximation.

One might worry that this increases the running time since the number of hyperedges has technically increased. However, this turns out not to be the case: the running times of all of our algorithms scale polynomially with the number of \emph{distinct} hyperedges. Indeed, parallel edges can be consider simultaneously at every step. In the expander sparsification algorithm of Section~\ref{sec:expander-sparsification} the sampling probability of parallel hyperedges is the same, and at most $r(\epsilon^{-1}\log n)^{O(1)}$ of them are sampled. In Algorithm~\ref{alg:partition} of Section~\ref{sec:general-sparsification} each parallel instance of the same hyperedge gets cut by the same cuts and ends up in the same component. Consequently parallel edges end up on the same level in the same component in Algorithm~\ref{alg:main} of Section~\ref{sec:general-sparsification} and are sampled at the same rate.

In fact, one can verify that our proofs extend even more directly to weighted hypergraphs, without the need for approximating hyperedge weights by integers.

Until now, all of our algorithms have claimed only polynomial running time. Surprisingly it is the above extension that allows us to accelerate the runtime to nearly linear --- even in the case of unweighted input graphs.

\begin{proof}[Proof of Theorem~\ref{thm:general-sparsification}]
	Given a hypergraph $G$, we can apply the previously known hypergraph sparsification algorithm of~\cite{Soma2019} to get a polynomial (in $n$) size sparsifier in nearly linear (that is $O(mr^2) + n^{O(1)}$) time. We can then further sparsify this using our own Algorithm~\ref{alg:main} in time $n^{O(1)}$. Setting the error parameters of both algorithms to $\epsilon/3$ allows us to recover an $\epsilon$-spectral sparsifier of $G$, as desired.

Note that we may drop the $(r\log^6n)/n\le\epsilon$ requirement of Lemma~\ref{lem:general-sparsification-spectral} without loss of generality.
\end{proof}

%%% Local Variables:
%%% mode: latex
%%% TeX-master: "000-main"
%%% End:

%%% Local Variables:
%%% mode: latex
%%% TeX-master: "000-main"
%%% End:

%!TEX root=./000-main.tex

\section{Lower Bounds}\label{sec:lower-bound}

In this section we prove our space lower bound for an arbitrary compression of the cut structure of a hypergraph. In Section~\ref{sec:lower-bound-string} we introduce string compression, and reprove the corresponding lower bound result for completeness. In section~\ref{sec:lower-bound-construction} we construct our generic hard example in Theorem~\ref{thm:lower-bound-construction}. We then state Corollaries~\ref{corollary1},~\ref{corollary2}, and~\ref{corollary3} which result from applying Theorem~\ref{thm:lower-bound-construction} to various specific Ruzsa-Szemer\'edi graph constructions.

\subsection{String Compression}\label{sec:lower-bound-string}

A \emph{string compression scheme (SCS)} is an algorithm for compressing a long string into a short string, such that any subset sum query can be answered with small additive error.
Formally, we define it as follows.
\begin{definition}\label{def:SCS}
For positive integers $\ell,k$ and $\epsilon,g>0$, a pair of functions $\textsc{Encode}:\{0,1\}^\ell\to\{0,1\}^k$ and $\textsc{Decode}:\{0,1\}^k\times2^{[\ell]}\to\mathbb N$ is considered to be an $(\ell,k,\epsilon,g)$-SCS, if there exists a set of strings $\mathcal G\subseteq\{0,1\}^\ell$, such that the following holds.
\begin{itemize}
    \item $|\mathcal G|\ge g\cdot2^\ell$.
    \item For every string $s\in\mathcal G$ and every query $q\in 2^{[\ell]}$, $\left|\textsc{Decode}(\textsc{Encode}(s),q) - |s\cap q|\right|\le\epsilon\ell/2$.
\end{itemize}

\end{definition}

\begin{remark}
In general we use subsets of $[\ell]$ and elements of $\{0,1\}^\ell$ interchangeably. For instance, in the above definition, in $|s\cap q|$, $s$ is considered as a set.
\end{remark}

\begin{remark}
It is important that although a compression scheme may only work on a subset of strings ($\mathcal G$), it must work on \emph{all} queries. In fact, it is trivial to answer almost all queries on all inputs, by simply outputting $|q|\cdot|s|/\ell$.
\end{remark}

The lower following lower bound on the space requirement of string compression schemes has been known, and appears, for example, in~\cite{DinurN03}. We reprove it here for completeness.

\begin{theorem}\label{thm:string-compression}
Suppose $(\textsc{Encode}, \textsc{Decode})$ is an $(\ell,k,\epsilon,g)$-SCS, where $\epsilon\le1/10$. Then
$$k\ge\frac{\log g+3\ell/50}{\log2}-1.$$
\end{theorem}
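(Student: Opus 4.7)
The plan is to set up an information-theoretic counting argument over preimages of the encoder. Fix a codeword $c\in\{0,1\}^k$ and let $T_c=\{s\in\mathcal G : \textsc{Encode}(s)=c\}$. Since $\textsc{Decode}(c,q)$ depends only on $c$, for every query $q\subseteq[\ell]$ and every pair $s,s'\in T_c$ the triangle inequality forces
$$\bigl||s\cap q|-|s'\cap q|\bigr|\le\epsilon\ell.$$
Combining $|\mathcal G|\ge g\cdot 2^\ell$ with $|\mathcal G|=\sum_c|T_c|\le 2^k\cdot\max_c|T_c|$, a sharp upper bound on $|T_c|$ immediately yields a lower bound on $k$.

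The decisive step, and the one that truly exploits the ``for every query'' guarantee, is to show that any two strings in $T_c$ are close in Hamming distance. Given $s,s'\in T_c$, set $A=s\setminus s'$ and $B=s'\setminus s$, and make the \emph{adversarial} choice $q=A$. Since $A$ is disjoint from $s'$ we have $|s\cap A|=|A|$ and $|s'\cap A|=0$, so the constraint above forces $|A|\le\epsilon\ell$; symmetrically $|B|\le\epsilon\ell$. Therefore $|s\triangle s'|=|A|+|B|\le 2\epsilon\ell$, so picking any base point $s^*\in T_c$, the entire preimage $T_c$ lies in the Hamming ball of radius $2\epsilon\ell$ around $s^*$.

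The rest is routine volume counting. The Hamming ball of radius $2\epsilon\ell$ has cardinality at most $\sum_{i\le 2\epsilon\ell}\binom{\ell}{i}\le 2^{H(2\epsilon)\ell}$, by the standard entropy bound valid when $2\epsilon\le 1/2$ (where $H$ denotes binary entropy). Substituting into $g\cdot 2^\ell\le 2^k\cdot 2^{H(2\epsilon)\ell}$ gives
$$k\ge\log_2 g+\bigl(1-H(2\epsilon)\bigr)\ell.$$
For $\epsilon\le 1/10$ one checks $1-H(2\epsilon)\ge 1-H(1/5)\approx 0.278>3/(50\ln 2)\approx 0.087$, so after converting $\log_2 g$ back to natural logarithms via division by $\ln 2$ and absorbing a small rounding loss into the $-1$ term, the stated bound $k\ge(\log g+3\ell/50)/\log 2-1$ follows.

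The main obstacle is really the adversarial-query step in the second paragraph: the SCS guarantee only offers additive slack $\epsilon\ell/2$, which on its own is far too weak to recover individual bits of $s$, but by plugging in the symmetric difference $A$ itself as the query we amplify the tiny per-query slack into a strong global Hamming-distance constraint on the whole preimage. Everything else is a textbook entropy/volume estimate.
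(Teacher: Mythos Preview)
Your proof is correct and follows essentially the same approach as the paper: pigeonhole onto a single codeword, an adversarial query on the symmetric difference to force all preimages into a Hamming ball of radius $2\epsilon\ell$, then a volume bound on that ball. The only cosmetic difference is that the paper bounds the ball volume via a Chernoff estimate rather than the binary-entropy bound you use; your estimate is in fact tighter, so the stated inequality follows with room to spare (no ``rounding loss'' actually needs to be absorbed into the $-1$).
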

\begin{proof}
We know that $\textsc{Encode}$ maps $\mathcal G$ into $\{0,1\}^k$. Therefore, by pigeonhole principle, there must be some set of inputs $\mathcal G_0$ of size at least $|\mathcal G|\cdot2^{-k}\ge g\cdot2^{\ell-k}$ that maps to the same output, say $c_0$. Let $s_0$ be an arbitrary string in $\mathcal G_0$.

Define $B_H(s_0,2\epsilon\ell)$ as the ball of radius $2\epsilon\ell$ in Hamming distance around $s_0$, that is, the set of strings $s \in \{0,1\}^\ell$ such that the number of coordinates where $s$ and $s_0$ differ is at most $2\epsilon\ell$.

\begin{claim}\label{claim:one}
$\mathcal G_0\subseteq B_H(s_0,2\epsilon\ell)$.
\end{claim}
\begin{proof}
Suppose there exists $s\in\mathcal G_0\backslash B_H(s_0,2\epsilon\ell)$, that is $s$ and $s_0$ differ on more than $2\epsilon\ell$ coordinates. Without loss of generality, we may assume that there are more than $\epsilon\ell$ coordinates where $s_0$ is $0$ but $s_1$ is $1$; let the set of such coordinates be $q$. By the definition of a string compression scheme
$$\textsc{Decode}(\textsc{Encode}(s_0),q)=\textsc{Decode}(c_0,q)\le|s_0\cap q|+\epsilon\ell/2=\epsilon\ell/2,$$
but
$$\textsc{Decode}(\textsc{Encode}(s),q)=\textsc{Decode}(c_0,q)\ge|s\cap q|-\epsilon\ell/2=|q|-\epsilon\ell/2>\epsilon\ell/2.$$
This is a contradiction.
\end{proof}

\begin{claim}\label{claim:two}
$|B_H(s_0,2\epsilon\ell)|<2^\ell\cdot2\exp\left(-\frac{\ell(1-4\epsilon)^2}{6}\right)$.
\end{claim}
\begin{proof}
Indeed,
\begin{align*}
    B_H(s_0,2\epsilon\ell)&=B_H(0^\ell,2\epsilon\ell)=2^\ell\cdot\mathbb P(w_H(x)\le2\epsilon\ell),
\end{align*}
where $x$ is a uniformly random vector in $\{0,1\}^\ell$. By Chernoff's bound
$$\mathbb P\left[w_H(x)\le2\epsilon\ell\right]\le\mathbb P\left[\left|w_H(x)-\frac{\ell}{2}\right|\ge\frac{\ell}2\left(1-4\epsilon\right)\right]\le2\exp\left(-\frac{\ell(1-4\epsilon)^2}{6}\right),$$
since $\epsilon\le1/4$, and the claim holds.
\end{proof}
Combining Claims~\ref{claim:one} and~\ref{claim:two} we get that
\begin{align*}
& g\cdot2^{\ell-k}\le2^\ell\cdot2\exp\left(-\frac{\ell{(1-4\epsilon)}^2}{6}\right), \\
\Rightarrow & \log g - k\log2 \le \log2 - \frac{\ell{(1-4\epsilon)}^2}{6}, \\
\Rightarrow & k\ge\frac{\log g -\log2+\frac{\ell{(1-4\epsilon)}^2}{6}}{\log 2}\ge\frac{\log g+3\ell/50}{\log2}-1,
\end{align*}
since $\epsilon\le1/10$.
\end{proof}

\begin{corollary}\label{cor:scs}
For $\ell\ge200$, there does not exist an $(\ell,k,1/10,1/2)$-SCS with $k<\ell/20$.
\end{corollary}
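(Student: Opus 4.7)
The plan is simply to instantiate Theorem~\ref{thm:string-compression} with the parameters $\epsilon = 1/10$ and $g = 1/2$, and verify that the resulting lower bound on $k$ exceeds $\ell/20$ whenever $\ell \ge 200$. Since the theorem gives $k \ge \frac{\log g + 3\ell/50}{\log 2} - 1$ for any $(\ell,k,\epsilon,g)$-SCS with $\epsilon \le 1/10$, plugging in $g = 1/2$ yields
$$k \ge \frac{-\log 2 + 3\ell/50}{\log 2} - 1 = \frac{3\ell/50}{\log 2} - 2,$$
assuming (as in the proof of Theorem~\ref{thm:string-compression}) that $\log$ denotes the natural logarithm.

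The remaining step is to check the arithmetic: I would bound $1/\log 2 > 1/0.7$, so that $\frac{3\ell/50}{\log 2} > \frac{3\ell}{35}$, and then verify $\frac{3\ell}{35} - 2 > \frac{\ell}{20}$ whenever $\ell \ge 200$. Indeed this rearranges to $\ell\bigl(\frac{3}{35} - \frac{1}{20}\bigr) > 2$, i.e., $\ell \cdot \frac{60 - 35}{700} = \frac{25\ell}{700} > 2$, which holds for $\ell \ge 56$ and in particular for $\ell \ge 200$. Thus any $(\ell,k,1/10,1/2)$-SCS must have $k \ge \ell/20$, which contradicts the assumption $k < \ell/20$ and establishes the corollary.

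There is essentially no obstacle here; the only thing to be careful about is the convention for $\log$, which must be consistent with the derivation in Theorem~\ref{thm:string-compression} (where the simplification of the $\exp$ term on the right-hand side forces $\log = \ln$). With that convention fixed, Corollary~\ref{cor:scs} is a direct numerical consequence of Theorem~\ref{thm:string-compression}.
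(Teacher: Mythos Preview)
Your proposal is correct and is exactly the intended derivation: the paper states Corollary~\ref{cor:scs} without proof, as it is meant to follow immediately from Theorem~\ref{thm:string-compression} by plugging in $\epsilon=1/10$ and $g=1/2$, which you have carried out and verified numerically.
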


\subsection{Construction}\label{sec:lower-bound-construction}

We will derive a lower bound on $k$ from the existence of a \emph{Ruzsa-Szemer\'edi (RS) graph}, defined as follows.
\begin{definition}[Ruzsa-Szemer\'edi graph]
	We call an (ordinary) graph a $(t,a)$-RS graph if its edge set is the union of $t$ induced matchings of size $a$.
\end{definition}

Recall the definition of hypergraph cut sparsification schemes from Section~\ref{sec:tech-overview-lower-bound}:
\HCSSdef*

\begin{theorem}\label{thm:lower-bound-construction}
	Suppose there exists a $(t,a)$-RS graph on $n$ vertices where $a\ge6000\sqrt{n\log n}$ and $at\ge480n$. Then, any $(2n,t+1,k,\varepsilon)$-HCSS where $\varepsilon\le a/(60n)$ must have
	$$k = \Omega(at).$$
\end{theorem}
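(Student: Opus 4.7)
The plan is to reduce from the string compression lower bound of Corollary~\ref{cor:scs}: given any hypothetical $(2n, t+1, k, \varepsilon)$-HCSS $(\textsc{Sparsify}, \textsc{Cut})$ with $\varepsilon \le a/(60n)$, I will construct an $(at, k, 1/10, 1/2)$-SCS. Since $\ell := at \ge 480 n$ ensures $\ell \ge 200$, Corollary~\ref{cor:scs} then forces $k \ge \ell/20 = \Omega(at)$, which is the claimed bound.

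Fix a $(t,a)$-RS graph $G^{\mathrm{RS}}$ on bipartition $P \sqcup Q$ with $|P|=|Q|=n$ and induced matchings $M_1,\ldots,M_t$, identifying the $at$ edges of $G^{\mathrm{RS}}$ with coordinates in $[at]$ so that each $M_j$ corresponds to a contiguous block of $a$ coordinates. Given $s \in \{0,1\}^{at}$, let $G_s \subseteq G^{\mathrm{RS}}$ retain exactly the edges with $s_i = 1$, and form the hypergraph $H_s$ on $P \cup Q$ with one hyperedge $e_u := \{u\} \cup \Gamma_{G_s}(u)$ per $u \in P$. Since every $u \in P$ has at most $t$ neighbors in $G^{\mathrm{RS}}$, we have $H_s \in \mathfrak H(2n, t+1)$, so I set $\textsc{Encode}(s) := \textsc{Sparsify}(H_s)$.

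To decode a query $q \subseteq [at]$, split $q = \bigsqcup_{j=1}^t q_j$ along the matchings and estimate each $|s \cap q_j|$ separately via one cut query. For fixed $j$, let $P_j, Q_j$ be the endpoints of $M_j$, $A_j \subseteq P_j$ the $P$-endpoints of $q_j$, and consider $S_j := A_j \cup (Q \setminus Q_j)$ as in Figure~\ref{fig:lower-bounds}. A hyperedge $e_u$ crosses $S_j$ depending on the location of $u$: (i) if $u \in A_j$, then $e_u$ crosses iff $\Gamma_{G_s}(u) \cap Q_j \neq \emptyset$; because $M_j$ is induced, $u$'s $M_j$-partner is its only ``structural'' $Q_j$-neighbor, so up to small extra terms this contributes $|s \cap q_j|$ in total; (ii) if $u \in P_j \setminus A_j$, then $e_u$ crosses iff $u$ has some non-$M_j$ edge in $s$, contributing $T_2(s, q_j)$; and (iii) if $u \in P \setminus P_j$, then $e_u$ crosses iff $u$ has a neighbor in $Q \setminus Q_j$ under $G_s$, contributing $T_3(s, j)$ which does not depend on $q$. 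The decoder outputs $\widehat{|s \cap q_j|} := \textsc{Cut}(\textsc{Encode}(s), S_j) - (|P_j| - |A_j|) - \mu_j$, where $\mu_j := \mathbb{E}_{s'}[T_3(s', j)]$ is computable from $G^{\mathrm{RS}}$ alone.

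The per-matching error splits into the HCSS multiplicative error $\varepsilon\,|E(S_j, \overline{S_j})| \le \varepsilon n \le a/60$, the deviation $|T_3 - \mu_j|$, and the deviation $|T_2 - (|P_j| - |A_j|)|$. For $T_3$, the indicators across distinct $u \in P \setminus P_j$ depend on disjoint edge sets of $s$ and are therefore independent, so Chernoff (Theorem~\ref{thm:chernoff}) gives deviation $O(\sqrt{n \log n})$ with probability $1 - n^{-\Omega(1)}$; a union bound over the $t$ matchings keeps at least half of the strings good. The main obstacle is controlling $T_2$ \emph{uniformly over $q$}: writing $T_2(s, q_j) = \sum_{u \in P_j \setminus A_j} Y_u(s)$ with $Y_u(s) := \mathbb{1}[u \text{ has some non-}M_j \text{ edge in } s]$, the error equals the number of $u \in P_j \setminus A_j$ with $Y_u(s) = 0$, which is dominated by the query-independent quantity $F_j(s) := |\{u \in P_j : Y_u(s) = 0\}|$. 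The key point is that the $Y_u$'s for $u \in P_j$ are independent (they depend on disjoint edge sets), so Chernoff's bound over $s$ gives $F_j(s) \le \mathbb{E} F_j + O(\sqrt{a \log n})$ with high probability; the hypothesis $at \ge 480 n$ (and hence average degree $\ge 480$ in $G^{\mathrm{RS}}$) is used to bound $\mathbb{E} F_j$ via the contribution from low-degree vertices. Summing across the $t$ matchings, the total additive error is at most $t(\varepsilon n + O(\sqrt{n \log n})) \le at/30 \le \tfrac{1}{10} \cdot \ell/2$, where $a \ge 6000\sqrt{n \log n}$ is precisely what lets the $O(\sqrt{n \log n})$ term be absorbed into the $a/30$ per-matching budget. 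The most delicate step will be the uniform-over-$q$ control of $T_2$ in the presence of low-degree vertices—bounding $\sum_j F_j(s)$ simultaneously, and correctly handling the ``extra'' $A_j$-to-$Q_j$ edges produced by overlapping matchings in item (i)—which together constitute the technical crux of the argument.
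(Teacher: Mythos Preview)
Your plan is essentially the paper's proof: same encoding $H_s$, same cut $S_j=A_j\cup(Q\setminus Q_j)$, same three-type case analysis, and the same use of Chernoff over $s$ to define the good set $\mathcal G$. Two clarifications will streamline your write-up. First, the hypothesis only gives an RS graph on $n$ vertices, not a bipartite one on $n+n$; the paper first takes the bipartite double cover (each edge $(u,v)$ becomes $((u,0),(v,1))$ and $((v,0),(u,1))$), obtaining $t$ induced matchings of size $2a$ on $2n$ vertices and setting $\ell=2at$. Second, your worry about ``extra $A_j$-to-$Q_j$ edges'' in type~(i) is misplaced: because $M_j$ is an \emph{induced} matching on $P_j\cup Q_j$, the only $G^{\mathrm{RS}}$-edge from any $u\in P_j$ into $Q_j$ is its $M_j$-edge, so the type-(i) count is \emph{exactly} $|s\cap q_j|$ with no error term. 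For type~(ii) the paper takes a slightly simpler route than your per-$j$ Chernoff: it applies Markov to $\sum_j|\{u\in P_j:d_s(u)<2\}|$, whose expectation is $\sum_{u}d(u)(d(u)+1)2^{-d(u)}\le 2n$; the assumption $at\ge 480n$ is then used not to bound individual $\mathbb{E}F_j$ (high average degree does not preclude low-degree vertices) but simply to absorb this $O(n)$ term into the $at/60$ budget.
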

This is equivalent to Theorem~\ref{thm:lower-bound}.
\begin{proof}
	Let us fix such a $(t,a)$-RS graph $G$ on $n$ vertices, and a $(2n,t+1,k,\varepsilon)$-HCSS $(\textsc{Sparsify},\textsc{Cut})$.
	We will use this HCSS as a black box to construct a string compression scheme using $k$ bits of space, then bound $k$ by Corollary~\ref{cor:scs}.
	First let us convert $G$ into a bipartite graph $G'$. Let the vertex set of $G'=(V',E')$ be $V\times\{0,1\}$ where $P=V\times\{0\}$ and $Q=V\times\{1\}$ are the two sides of the bipartition. For each edge $e=(u,v)\in E$, we add two edges to $E'$: $((u,0),(v,1))$ and $((v,0),(u,1))$, ensuring that $G'$ is indeed bipartite.
	Note that $E'$ is the union of $t$ induced matchings of size $2a$. Let us call these matchings $M_1,\ldots M_t$ and let each $M_j$ be supported on $P_j$ in $P$ and $Q_j$ in $Q$. The maximum degree in $G'$ is $t$.

	We will use $G'$ to design a compression of strings of length $\ell = 2ta$. Note that there are exactly $2ta$ edges of $G'$. Let $\phi$ be an arbitrary bijection from $E'$ to $[\ell]$.
    For a string $s \in \{0,1\}^\ell$, Let $E_s$ be the subset of $E'$ defined as
	$$E_s=\left\{e\in E':s_{\phi(e)}=1\right\}.$$
	Thus the graph $G_s=(P\cup Q,E_s)$ encodes the string $s$. We then transform $G_s$ into the hypergraph $H_s=(P\cup Q,E^H_s)$. Let $E^H_s$ consist of one hyperedge corresponding to each vertex $u\in P$:
	$$E^H_s=\left\{\{u\}\cup \Gamma_s(u)\mid u\in P\right\},$$
	where $\Gamma_s$ denotes the neighborhood in $G_s$.

	Our compression function $\textsc{Encode}$ is then simply to sparsify $H_s$ using $\textsc{Sparsify}$.
	This can indeed be done, since $H_s$ is a hypergraph with $2n$ vertices and each edge has cardinality at most $t+1$. It remains to define the decoding function $\textsc{Decode}$.

	Given a query $q\subseteq[\ell]$, we must estimate the size of $s\cap q$, the number of coordinates of $s$ within $q$ having value $1$.
	To do this, we partition $q$ into segments $q^1,\ldots,q^t$, and then estimate the size of each $s\cap q^j$. Specifically, let
	$$q^j=\{i\in q \mid \phi^{-1}(i)\in M_j\}.$$
	We can then define
	$$\textsc{Decode}(\textsc{Sparsify}(H_s),q)=\sum_{j=1}^t\textsc{Decode}^j(\textsc{Sparsify}(H_s),q^j).$$
	Here $\textsc{Decode}^j$ remains undefined for now. In what follows we will define it such that
	$$\textsc{Decode}^j(\textsc{Sparsify}(H_s),q^j)\cong|s\cap q^{j}|.$$
	To estimate the size of $s \cap q^j$, we will observe the cut $E_s^H(S,\overline S) = E_s^H(S_s^j,\overline S_s^j)$ defined as follows:
	\begin{itemize}
		\item From $P$, $S$ contains the subset of vertices in $P_j$ corresponding to edges in $q^j$. Formally
		$$S\cap P = \{P\cap e \mid e\in M_j \text{ s.t.\ } \phi(e)\in q^j\}.$$
		\item From $Q$, $S$ contains all vertices except $Q_j$.
	\end{itemize}
	We will prove the the size of the cut $(S,\overline S)$ is closely related to the size of $s\cap q$, as long as $s$ satisfies some nice properties.

	Note that each hyperedge in $E_s^H$ corresponds to a vertex in $P$: for $u\in P$ we denote the hyperedge $\{u\}\cup\Gamma_s(u)$ as $e_u$. We will bound the contribution of $e_u$ to the cut $(S,\overline S)$ for all $u$ in each of the following three categories:
	\begin{enumerate}
		\item $u\in P_j\cap S$:\\
			Let the edge from $M_j$ adjacent on $u$ be $f_u$. For any such $u$, $e_u$ crosses the cut if and only if $s_{\phi(f_u)}=1$. Indeed, if $s_{\phi(f_u)}=1$, then $f\in E_s$ and $f\cap Q\in Q_j\subseteq \overline S$. On the other hand, if $s_{\phi(f_u)}=0$ then $f\not\in E_s$ and all edges adjacent on $u$ in $E_s$ correspond to matchings different from $M_j$ (that is $M_k$ for $k\neq j$). Since $M_j$ is induced by the property of RS-graphs, $\Gamma_s(u)\subseteq Q\setminus Q_j\subseteq S$.
			Therefore, the total amount of hyperedges crossing the cut from this category is exactly $|s\cap q^j|$.

		\item $u\in P_j\setminus S$:\\
			In this case $e_u$ crosses the cut unless $d_s(u)<2$. Indeed, if $d_s(u)\ge2$ then at least one edge adjacent on $u$ in $G_s$ does \textit{not} come from $M_j$. The other endpoint of this edge is in $Q\setminus Q_j\subseteq S$, whereas $u$ itself is in $\overline S$ by definition. In the the case where $d_s(u)<2$ we cannot say whether $e_u$ crosses the cut or not.
			Therefore, the number of hyperedges crossing the cut from this category is approximately $m-|q^j|$ (that is all of them), but with a possible error of
			$$|\{u\in P_j \mid d_s(u)<2\}|.$$
		\item $u\in P\setminus P_j$:\\
			In this case we cannot say anything about the number of edges crossing the cut, except that it is unlikely to deviate from its expectation when $s$ is considered to be uniformly random on $\{0,1\}^\ell$. Let
			$$Z_j=|\{u\in P\setminus P_j \mid \Gamma_s(u)\not\subseteq Q_j\}|,$$
			or the number of hyperedges in $E_s^H$ from this category crossing the cut.
	\end{enumerate}
	Overall, we can approximate the size of the cut $(S,\overline S)$ in $H_s$ by
	\begin{equation}\label{eq:cut-approx-a}
	|s\cap q^j| + (a-|q^j|) + \mathbb E_s Z_j,
	\end{equation}
	with an maximum additive error of
	\begin{equation}\label{eq:cut-approx-b}
	\big|\{u\in P_j \mid d_s(u)<2\}\big| + \big|Z_j-\mathbb E_s Z_j\big|.
	\end{equation}

	Conversely, this allows us to approximate $|s\cap q^j|$ using the size of the same cut in our $(2n,t+1,k,\varepsilon)$-HCSS\@.
    Therefore, we define $\textsc{Decode}^j$ as follows:
	\begin{equation}\label{eq:decode-def}
	\textsc{Decode}^j(\textsc{Sparsify}(H_s),q^j) = \textsc{Cut}(\textsc{Sparsify}(H_s),S)-(a-|q^j|)-\mathbb E_s Z_j.
	\end{equation}
	It remains to bound the total error introduced by the inaccuracies above.

	We will define the set of good input strings, $\mathcal G$ to be those where this additive error is small across all $j$'s, and we will prove that this contains a majority of possible input strings.

	\begin{claim}\label{cl:good-bound}
		Let $\mathcal G$ be the set of strings $s\in\{0,1\}^\ell$ such that
		$$\sum_{j=1}^t\left( \big|\{u\in P_j \mid d_s(u)<2\}\big| + \big|Z_j-\mathbb E_s Z_j\big|\right)\le8n + 100t\sqrt{n\log n}.$$
		Then $|\mathcal G|\ge2^{\ell-1}$.
	\end{claim}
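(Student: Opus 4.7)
The plan is to prove both summands are small with good probability over a uniformly random string $s \in \{0,1\}^\ell$, then take a union bound. Recall that since $s$ is uniform on $\{0,1\}^\ell$ and $\phi$ is a bijection between $E'$ and $[\ell]$, each edge of $G'$ is independently included in $E_s$ with probability $1/2$.

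For the first summand, I switch the order of summation to rewrite
\[
\sum_{j=1}^t \big|\{u \in P_j : d_s(u) < 2\}\big| = \sum_{u \in P} d(u) \cdot \mathbb{1}[d_s(u) < 2],
\]
using the fact that $u \in P_j$ iff $u$ is incident to some (unique) edge of $M_j$, so the number of $j$'s with $u \in P_j$ equals $d(u)$, the degree of $u$ in $G'$. Since $d_s(u) \sim \mathrm{Bin}(d(u), 1/2)$, a direct calculation gives
\[
d(u) \cdot \mathbb{P}[d_s(u) < 2] = d(u)(d(u)+1)/2^{d(u)} \leq 3/2
\]
for every $d(u) \geq 1$ (the maximum occurs at $d(u) \in \{2,3\}$). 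Summing over the $|P| = n$ vertices yields expected value at most $3n/2$, so by Markov's inequality the first summand is at most $6n \leq 8n$ with probability at least $3/4$.

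For the second summand, I fix $j$ and observe that $Z_j = \sum_{u \in P \setminus P_j} \mathbb{1}[\Gamma_s(u) \not\subseteq Q_j]$ is a sum of independent indicators: the event for each $u$ depends only on the values of $s$ on edges incident to $u$ in $G'$, and since $G'$ is bipartite these edge sets are disjoint for distinct $u \in P$. Since $|P \setminus P_j| \leq n$, the additive Chernoff bound (Theorem~\ref{thm:chernoff}) applied with deviation $\tau = 50\sqrt{n \log n}$ yields
\[
\mathbb{P}\big[|Z_j - \mathbb{E}_s Z_j| > 50\sqrt{n \log n}\big] \leq 2\exp\big(-\Omega(\log n)\big) = n^{-\Omega(1)},
\]
where the hidden constant can be made as large as desired. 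Taking a union bound over $j \in [t]$ (noting $t \leq \binom{n}{2}/a$ is polynomially bounded in $n$, since the $t$ matchings have disjoint edge sets), with probability $1 - o(1)$ we have $|Z_j - \mathbb{E}_s Z_j| \leq 50\sqrt{n \log n}$ for every $j$ simultaneously, so the second summand is at most $50t\sqrt{n \log n} \leq 100t\sqrt{n \log n}$.

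Combining the two events by a union bound, both inequalities hold with probability at least $3/4 - o(1) \geq 1/2$ for $n$ large enough, which gives $|\mathcal G| \geq 2^{\ell - 1}$. The only point requiring care is the independence structure for $Z_j$ (which follows from bipartiteness, ensuring distinct $u \in P$ index disjoint subsets of coordinates of $s$); the rest is routine concentration.
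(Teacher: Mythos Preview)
Your proof is correct and follows essentially the same approach as the paper: compute the expectation of the first summand by switching the order of summation and apply Markov's inequality, then apply a Chernoff bound to each $Z_j$ and take a union bound over $j\in[t]$. Your treatment is in fact slightly more careful in places (e.g., the sharper bound $d(d+1)2^{-d}\le 3/2$, and the explicit justification of independence of the indicators in $Z_j$ via the bipartite structure of $G'$).
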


	\begin{proof}
		Consider $s$ to be a random string, chosen uniformly on $\{0,1\}^\ell$.
        We will prove that $\mathbb P[s\in\mathcal G]\ge1/2$. We do this by considering the two bad events
        \begin{align*}
        	&\sum_{j=1}^t\left|\left\{u\in P_j|d_s(u)<2\right\}\right|>8n,\\
        	&\sum_{j=1}^t\left|Z_j-\mathbb EZ_j\right|>100t\sqrt{n\log n},
        \end{align*}
        and prove that neither happens with probability more than $1/4$.

		To bound the probability of the first event consider the expectation of the sum:
		\begin{align*}
		& \mathbb E\sum_{j=1}^t\big|\{u\in P_j \mid d_s(u)<2\}\big|=\mathbb E\sum_{j=1}^t\sum_{u\in P_j}\mathbbm1(d_s(u)<2)
		=\sum_{j=1}^t\sum_{u\in P_j}\mathbb P[d_s(u)<2]\\
		&=\sum_{u\in P}\sum_{j:u\in P_j}\mathbb P[d_s(u)<2]
		=\sum_{u\in P}|\{j \mid u\in P_j\}|\cdot\mathbb P[d_s(u)<2]\\
		&=\sum_{u\in P}d(u)\cdot(d(u)+1)\cdot2^{-d(u)}
		\le2n,
		\end{align*}
		as the function $d(d+1)\cdot2^{-d}$ is bounded by $2$ for all non-negative $d$.

		This means, that by the Markov inequality
		$$\mathbb P\left[\sum_{j=1}^t\big|\{u\in P_j \mid d_s(u)<2\}\big|>8n\right]\le \frac{1}{4}.$$

		Now, for the second bad event, we apply Chernoff bound (Theorem~\ref{thm:chernoff}).
        Note that $$Z_j=|\{u\in P\setminus P_j \mid \Gamma_s(u)\not\subseteq Q_j\},$$ is
		the sum of $n-m$ independent random variables bounded by one. Therefore,
		$$\mathbb P\left[|Z_j-\mathbb EZ_j|>\delta n\right]\le2\exp\left(-\frac{\delta^2 n}{3}\right).$$
		Setting $\delta$ to $100\sqrt{(\log n)/n}$ and taking union bound over $j = 1,\ldots,t$ gives us that
		$$\mathbb P\left[\sum_{j=1}^t|Z_j-\mathbb EZ_j|>100t\sqrt{n\log n}\right]\le \frac{1}{4}.$$

		Putting the bounds on the first and second event together gives us the statement of the claim.
	\end{proof}

	This $\mathcal G$ will be our set of good inputs in our $(\ell,k,1/10,1/2)$-SCS\@.
    Claim~\ref{cl:good-bound} essentially shows that the error in our estimate of $|s\cap q|$ would be at most $8n + 100t\sqrt{n\log n}$ without the inaccuracy introduced by our cut sparsifier. Since the size of the cut $(S,\overline S)$ is at most $n$ (the total number of hyperedges in the hypergraph $H_s$), this introduces an additional $\varepsilon n$ additive error.

	Formally, when $s\in\mathcal G$
	\begin{align*}
	&\big||s\cap q-\textsc{Decode}(\textsc{Encode}(s),q)\big|\\
	=&\left||s\cap q|-\sum_{j=1}^t\textsc{Decode}^j(\textsc{Sparsify}(H_s),q^j)\right|\\
	\le&\sum_{j=1}^t\big||s\cap q^j|-\textsc{Decode}^j(\textsc{Sparsify}(H_s),q^j)\big|\\
\le&\sum_{j=1}^t\big||s\cap q^j|-\textsc{Cut}(\textsc{Sparsify}(H_s),S)+(a-|q^j|)+\mathbb E_sZ_j\big|~~~~~~~~~~~~~~~~~~~~~~~~~~~~~~~~~~~\text{by equation~\ref{eq:decode-def}}\\
	\le&\sum_{j=1}^t\left(\big||s\cap q^j|-|E^H_s(S_s^j,\overline S_s^j)|+(a-|q^j|)+\mathbb E_s Z_j\big|+\big||E_s^H(S_s^j,\overline S_s^j)|-\textsc{Cut}(\textsc{Sparsify}(H_s),S_s^j)\big|\right)\\
	\le&\sum_{j=1}^t\Big(\big|\{u\in P_j \mid d_s(u)<2\}\big| + \big|Z_j-\mathbb E_s Z_j\big|+\varepsilon n\Big)~~~~~~~~~~~~~~~~~~~~~~~~~~~~~~~~~~~~~~~\text{by equations~\ref{eq:cut-approx-a} and~\ref{eq:cut-approx-b}}\\
	\le&\left(8n + 100t\sqrt{n\log n}\right) + \sum_{j=1}^t\varepsilon n~~~~~~~~~~~~~~~~~~~~~~~~~~~~~~~~~~~~~~~~~~~~~~~~~~~~~~~~~~~~~~~~~~~~~~~~~~~~\text{since $s\in\mathcal G$}\\
	\le&8n + 100t\sqrt{n\log n}+\varepsilon tn
	\end{align*}

	This is less than $\ell/20=at/20$ due to the theorem's assumptions on the parameters. Therefore, $(\textsc{Encode},\textsc{Decode})$ is a $(\ell,k,1/10,1/2)$-SCS with the set of good inputs being $\mathcal G$. By Corollary~\ref{cor:scs} $k$ must be at least $\Omega(\ell)=\Omega(at)$.
\end{proof}

We can now apply Theorem~\ref{thm:lower-bound-construction} to several RS-graph constructions known in the literature. Note that if there exists a $(t,a)$-RS graph, one can always reduce the parameters to get an $(t',a')$-RS graph for $t'\le t$ and $a'\le a$. We begin with Fischer et al.~\cite{Fischer2002}, which proves the existence of $(n^{\Omega(1/\log\log n)},n/3-o(1))$-Ruzsa-Szemer\'edi graphs, resulting in the following corollary.
\begin{corollary}\label{corollary1}
	Any $(n,r,k,\varepsilon)$-HCSS with $r=n^{O(1/\log\log n)}$ and small constant $\varepsilon$ requires $k=\Omega(nr)$ space.
\end{corollary}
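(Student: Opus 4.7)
}
The plan is a direct instantiation of Theorem~\ref{thm:lower-bound-construction} with the Ruzsa--Szemer\'edi graphs of Fischer et~al.~\cite{Fischer2002}. Writing $N$ for the vertex count of the target HCSS (so we want an $(N,r,k,\varepsilon)$-HCSS), I would first set $n=N/2$ and invoke Fischer et~al. to obtain a $(t^*,a^*)$-RS graph on $n$ vertices with $t^*=n^{\Omega(1/\log\log n)}$ and $a^*=n/3-o(n)$. Since RS parameters can always be shrunk (by dropping matchings and restricting matching sizes), and since the hypothesis $r=N^{O(1/\log\log N)}$ combined with $n=N/2$ gives $r-1\le t^*$ for all sufficiently large $N$, I can reduce this to a $(t,a)$-RS graph with $t=r-1$ and $a=a^*=\Theta(n)$.

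The next step is to check the three numerical hypotheses of Theorem~\ref{thm:lower-bound-construction}. First, $a=\Theta(n)$ dominates $6000\sqrt{n\log n}$ for large enough $n$, so the condition $a\ge 6000\sqrt{n\log n}$ holds. Second, $at=\Theta(n)\cdot(r-1)\ge 480n$ as soon as $r$ exceeds a fixed constant, so by replacing the trivially true regime $r=O(1)$ (where the bound $k=\Omega(nr)=\Omega(n)$ is subsumed by, e.g., a connectivity argument) I may assume $r$ is at least this constant. Third, the bound $\varepsilon\le a/(60n)$ reduces to $\varepsilon$ being at most a fixed constant close to $1/180$, which is ensured by ``small constant $\varepsilon$'' in the corollary statement.

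With the hypotheses verified, Theorem~\ref{thm:lower-bound-construction} applied to this $(t,a)$-RS graph on $n=N/2$ vertices yields that every $(2n,t+1,k,\varepsilon)=(N,r,k,\varepsilon)$-HCSS satisfies $k=\Omega(at)=\Omega(n\cdot r)=\Omega(Nr)$, which is exactly the claimed lower bound.

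The proof is essentially a parameter chase, so there is no substantial technical obstacle; the only mildly delicate point is ensuring that the asymptotics of $n^{\Omega(1/\log\log n)}$ are preserved under the change of variable $n\leftrightarrow N/2$ (which they are, since $\log\log(N/2)=\log\log N-o(1)$), so that the maximal $t^*$ produced by Fischer et~al.'s construction still accommodates every $r$ in the range $r=N^{O(1/\log\log N)}$.
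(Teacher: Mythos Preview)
Your proposal is correct and follows exactly the approach the paper intends: the paper states the corollary as a direct instantiation of Theorem~\ref{thm:lower-bound-construction} with the $(n^{\Omega(1/\log\log n)}, n/3-o(n))$-RS graphs of Fischer et~al., after noting that RS parameters can always be reduced. Your write-up is simply a more explicit version of that one-line derivation, with the parameter checks spelled out.
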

In other words, any data structure that can provide a $(1+\varepsilon)$-approximation to the size of all cuts in an $r$-uniform hypergraph with $n$ vertices and $r=n^{O(1/\log\log n)}$ for small constant $\epsilon\in (0, 1)$ requires $\Omega(n r)$ bits of space.
This is tight due to the hypergraph cut sparsifier construction of~\cite{Chen20}. A different construction, also from~\cite{Fischer2002}, is able to achieve an $(n^c,n/O(\sqrt{\log\log n/\log n}))$-RS graph for some small enough constant $c$. This results in the following:
\begin{corollary}\label{corollary2}
	For some constant $c$, any $(n,r,k,\varepsilon)$-HCSS with $r=O(n^c)$ and $\varepsilon=O(\sqrt{\log\log n/\log n})$ requires $k=\Omega(nr/\sqrt{\log n/\log\log n})$ space.
\end{corollary}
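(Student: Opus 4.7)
The plan is to instantiate Theorem~\ref{thm:lower-bound-construction} directly, using the second Ruzsa-Szemer\'edi construction from Fischer et al., in exactly the same way Corollary~\ref{corollary1} instantiates the first construction. That construction, on $n'$ vertices, produces an $((n')^c, a^*)$-RS graph with $a^* = \Omega\bigl(n'/\sqrt{\log n'/\log\log n'}\bigr)$ for some small constant $c > 0$. To obtain an HCSS lower bound on $n$ vertices, I would set $n' = n/2$ so that the hypergraph produced by the reduction in Theorem~\ref{thm:lower-bound-construction} has exactly $2n' = n$ vertices, and I would select $t = \min\bigl((n')^c,\, r-1\bigr) = \Theta(r)$ many of the induced matchings (any subset of the matchings still forms a $(t,a)$-RS graph). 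This yields hyperedges of size $t+1 = O(r)$ as required, while preserving $a = \Omega\bigl(n/\sqrt{\log n/\log\log n}\bigr)$.

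Next I would verify the three quantitative hypotheses of Theorem~\ref{thm:lower-bound-construction} with these parameters. For the matching-size condition $a \geq 6000\sqrt{n'\log n'}$, note that $a = \Omega\bigl(n/\sqrt{\log n/\log\log n}\bigr) = \omega(\sqrt{n\log n})$, so the condition holds for all sufficiently large $n$. The precision condition $\varepsilon \leq a/(60 n') = O\bigl(\sqrt{\log\log n/\log n}\bigr)$ matches the hypothesis on $\varepsilon$ in the corollary for an appropriate choice of the hidden constant. The condition $at \geq 480 n'$ holds whenever $r = \Omega\bigl(\sqrt{\log n/\log\log n}\bigr)$; for smaller $r$ the target bound $\Omega\bigl(nr/\sqrt{\log n/\log\log n}\bigr)$ is $O(n)$ and follows trivially from the (logarithmic) space needed to name the vertex set, so this regime can be disposed of separately.

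Applying Theorem~\ref{thm:lower-bound-construction} then yields
\[
k \;=\; \Omega(at) \;=\; \Omega\!\left( r \cdot \frac{n}{\sqrt{\log n/\log\log n}} \right) \;=\; \Omega\!\left(\frac{nr}{\sqrt{\log n/\log\log n}}\right),
\]
which is exactly the conclusion of the corollary. I expect no genuine technical obstacle here: the entire content is an appropriate instantiation of Theorem~\ref{thm:lower-bound-construction} with a different RS-graph construction than that used in Corollary~\ref{corollary1}, and the proof reduces to checking the parameter inequalities above.
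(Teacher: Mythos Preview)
Your proposal is correct and matches the paper's approach exactly: the paper simply states that the second Fischer et al.\ construction gives an $(n^c, \Theta(n/\sqrt{\log n/\log\log n}))$-RS graph and plugs it into Theorem~\ref{thm:lower-bound-construction}, without even spelling out the parameter checks that you carry out. One small quibble: your disposal of the regime $r = o(\sqrt{\log n/\log\log n})$ via ``the (logarithmic) space needed to name the vertex set'' is not quite right---naming $[n]$ takes $O(\log n)$ bits, not $\Omega(n)$---but the conclusion is still fine, since the known $\Omega(n/\varepsilon^2)$ lower bound for cut sparsification of ordinary graphs already exceeds the target $\Omega(nr/\sqrt{\log n/\log\log n})$ in that regime.
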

Finally, the original construction of Ruzsa and Szemeredi~\cite{ruzsa1978triple} guarantees the existence of an $(n/3,n/2^{O(\sqrt{\log n})})$-RS graphs, implying:
\begin{corollary}\label{corollary3}
	Any $(n,r,k,\varepsilon)$-HCSS with $\varepsilon = 2^{-\Omega(\sqrt{\log n})}$ requires $k=nr/2^{O(\sqrt{\log n})}$ space.
\end{corollary}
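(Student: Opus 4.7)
The plan is to instantiate Theorem~\ref{thm:lower-bound-construction} with the classical Ruzsa--Szemer\'edi construction of~\cite{ruzsa1978triple}, which furnishes, for every sufficiently large $m$, a $(t,a)$-RS graph on $m$ vertices with $t = m/3$ and $a = m / 2^{O(\sqrt{\log m})}$. Since $2^{O(\sqrt{\log m})}$ is sub-polynomial in $m$, both hypotheses $a \geq 6000\sqrt{m \log m}$ and $at \geq 480 m$ of Theorem~\ref{thm:lower-bound-construction} are comfortably satisfied once $m$ is large, and the corresponding error ceiling becomes $\varepsilon \leq a/(60m) = 1/2^{O(\sqrt{\log m})}$.

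Given the target parameters $n$ and $r$, the plan is to take $m = \Theta(r)$ so that $t+1 \leq r$, and to assemble $\lfloor n/(2m)\rfloor$ vertex-disjoint copies of this construction into a single hypergraph $H$ on $n$ vertices with maximum hyperedge size $r$. Each copy independently encodes a binary string of length $2at = \Theta(r^2 / 2^{O(\sqrt{\log r})})$ via the same edge-labeling used in the proof of Theorem~\ref{thm:lower-bound-construction}, and any HCSS for $H$ must in particular estimate every cut supported inside a single copy. Replaying the string-compression reduction one copy at a time packages the HCSS into an SCS for the concatenated binary string of total length
\[
L \;=\; \Theta(n/r) \cdot \Theta\!\left(r^2 / 2^{O(\sqrt{\log r})}\right) \;=\; \Theta\!\left(nr / 2^{O(\sqrt{\log n})}\right),
\]
where the final equality uses $r \leq n$. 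Corollary~\ref{cor:scs} then forces $k = \Omega(L) = nr / 2^{O(\sqrt{\log n})}$. The degenerate case $r = \Theta(n)$ is just a single copy and reduces to a direct application of Theorem~\ref{thm:lower-bound-construction} to the RS graph on $n/2$ vertices.

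The main obstacle is the error accounting for the multi-copy reduction. Inside each copy, the HCSS introduces additive error at most $\varepsilon$ times the cut size, which is at most $m = \Theta(r)$ for any cut supported on that copy; this noise must be much smaller than the per-copy signal of order $a = r / 2^{O(\sqrt{\log r})}$, which forces $\varepsilon \leq 2^{-\Omega(\sqrt{\log r})}$. Because $r \leq n$, the hypothesis $\varepsilon = 2^{-\Omega(\sqrt{\log n})}$ comfortably implies this bound. The ``good string'' counting of Claim~\ref{cl:good-bound} must also be repeated per copy and combined by a union bound over the $\Theta(n/r)$ copies, but this only enlarges the hidden constants in the failure probability; everything else is routine bookkeeping.
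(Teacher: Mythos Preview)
Your multi-copy construction is unnecessarily indirect, and the error accounting you defer to the last paragraph contains a real gap. The paper's route is much simpler: take the Ruzsa--Szemer\'edi graph on $N = n/2$ vertices (so the resulting hypergraph has $n$ vertices), with parameters $t_0 = N/3$ and $a = N/2^{O(\sqrt{\log N})}$, and then \emph{discard matchings} to bring $t$ down to $r-1$. The paper explicitly notes before the corollaries that one can always pass from a $(t,a)$-RS graph to a $(t',a')$-RS graph with $t' \le t$ and $a' \le a$. Applying Theorem~\ref{thm:lower-bound-construction} directly to this reduced graph yields a $(2N,\,r,\,k,\,\varepsilon)$-HCSS lower bound of $k = \Omega(a\cdot(r-1)) = \Omega(r \cdot n / 2^{O(\sqrt{\log n})})$ whenever $\varepsilon \le a/(60N) = 2^{-O(\sqrt{\log n})}$. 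No copies, no extra union bound, no per-copy bookkeeping.

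The specific gap in your argument is the handling of Claim~\ref{cl:good-bound}. That claim only guarantees that a random string is ``good'' with probability at least $1/2$. If you take $\Theta(n/r)$ independent copies and insist that \emph{every} copy's string be good (which is what you need for the per-copy decoding to succeed), the probability of this intersection is a priori only $2^{-\Theta(n/r)}$, not $\ge 1/2$; a union bound in the other direction gives nothing either. So the set $\mathcal{G}$ of good concatenated strings you can certify this way is far too small to invoke Corollary~\ref{cor:scs}. This is fixable --- one can tighten the Markov step in Claim~\ref{cl:good-bound} to push the per-copy failure probability well below $1/2$, then use a Chernoff bound on the \emph{fraction} of bad copies and absorb their contribution into the additive error budget --- but that is substantially more work than ``only enlarges the hidden constants,'' and it is all avoided by the paper's one-line reduction of $t$.
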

These results imply that for any value of $r$, it is impossible to compress the cut structure of a hypergraph with $n$ vertices and maximum hyperedge size $r$, with significantly less than $nr$ space, and a polynomial scaling in the error (that is with $nr^{1-\Omega(1)}\varepsilon^{-O(1)}$ space).

%%% Local Variables:
%%% mode: latex
%%% TeX-master: "000-main"
%%% End:
%!TEX root=./000-main.tex

\section{Spectral Sparsification of Directed Hypergraphs}\label{sec:directed-hypergraph-sparsification}
In this section, we discuss spectral sparsification of directed hypergraphs.
First we introduce some notions and study basic properties of directed hypergraphs in Section~\ref{subsec:directed-preliminaries}.
Then, we discuss spectrally sparsifying directed hypergraphs with hyperedges having nearly equal overlap (a concept to be defined in Section~\ref{subsec:directed-preliminaries}).
Finally, we prove Theorem~\ref{thm:directed-hypergraph-sparsification} in Section~\ref{subsec:directed-hypergraph-sparsification}.

\subsection{Preliminaries}\label{subsec:directed-preliminaries}

A \emph{directed hypergraph} $G=(V,E)$ is a pair of a vertex set $V$ and a set $E$ of hyperarcs, where a \emph{hyperarc} $e \in E$ is an ordered pair of two disjoint vertex sets $h(e) \subseteq V$, the \emph{head}, and $t(e) \subseteq V$, the \emph{tail}.
The \emph{size} of a hyperarc $e \in E$ is $|h(e)| + |t(e)|$. We restrict ourselves to dealing with only \textit{simple} directed hypergraphs, that is, in Section~\ref{sec:directed-hypergraph-sparsification} $E$ is always considered to be a set as opposed to a multiset.

We say that a vertex set $S \subseteq V$ \emph{cuts} a hyperarc $e \in E$ if $S \cap t(e) \neq \emptyset$ and $(V \setminus S) \cap h(e) \neq \emptyset$.
The \emph{energy} of a hyperarc $e$ with respect to a vector $x \in \mathbb{R}^V$ is defined as
$$\max_{a\in t(e),b\in h(e)}(x_a-x_b)_+^2,$$
where $(\alpha)_+ = \max\{\alpha,0\}$.
The energies of a set of arcs, or of an entire vector with respect to $G$, is defined identically to the undirected case. So in particular the energy of $x$ with respect to $G$ is
\[
Q(x) = \sum_{e \in E}\max_{a\in t(e), b\in h(b)}{(x_a - x_b)}_+^2.
\]

Note that $Q(1_S)$, where $1_S \in \mathbb{R}^V$ is the characteristic vector of $S$, is equal to the number of hyperarcs cut by $S$.
Identically to Definition~\ref{def:spectral-sparsifier}, for $\epsilon > 0$, a weighted subgraph $\widetilde{G}$ of $G$ is said to be a $\epsilon$-spectral sparsifier of $G$ if
\[
\widetilde Q(x) = (1\pm\epsilon)Q(x),
\]
where $Q(x)$ and $\widetilde{Q}(x)$ are energy of $x$ with respect to $G$ and $\widetilde{G}$, respectively.

In constructing our sparsifier, a useful object to consider will be the \emph{clique graph} of $G$, the directed (ordinary) multigraph we get by replacing each hyperarc in $G$ with a directed bipartite clique. Formally, the clique of a hyperarc $e \in E$ is the set of arcs $C(e)=\{(a,b) \mid a\in t(e),b\in h(e)\}$.
The clique graph of a set of hyperarcs $E'\subseteq E$ is the multi-union of the individual cliques
$C(E')=\biguplus_{e\in E'}C(e)$.
Finally, the clique graph of $G$ itself is $C(G)=(V,C(E))$.
In the following, we make some observation about the multiplicities of arcs in the clique graph.

\begin{definition}
Given a hypergraph $G=(V,E)$, we say that a subset of hyperarcs $E'\subseteq E$ $k$-{\emph{overlapping}} if every arc in $C(E')$ appears with multiplicity at least $k$.
Furthermore, the \emph{overlap} $k(e)$ of a single hyperarc $e \in E$ is defined as the largest $k$ such that there exists a $k$-overlapping set of hyperarcs containing $e$.
\end{definition}

Informally, we will use the inverse overlap of each hyperarc as a sampling rate in constructing our sparsifier. Thus, the following lemma will be a useful bound on the sum of these rates:

\begin{lemma}\label{lem:overlap}
	Let $G=(V,E)$ be a directed hypergraph. Then, we have
	$$\sum_{e\in E}\frac1{k(e)}\le n^2.$$
\end{lemma}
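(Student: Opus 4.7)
The plan is to bound $\sum_e 1/k(e)$ by a telescoping/fractional charging argument based on the ``$k$-core'' structure of the directed hypergraph. For each $k\ge 1$, define $E^{\ge k}:=\{e\in E:k(e)\ge k\}$; a short argument---witnesses $F_e$ for $e\in E^{\ge k}$ lie inside $E^{\ge k}$ (every element of $F_e$ has overlap $\ge k$), and unions of $k$-overlapping sets remain $k$-overlapping---shows that $E^{\ge k}$ is itself $k$-overlapping. Letting $A_k:=C(E^{\ge k})\subseteq V\times V$ and $B_k:=A_k\setminus A_{k+1}$, the sets $\{B_k\}_{k\ge 1}$ partition $A_1$ into disjoint pieces, with $\sum_k|B_k|=|A_1|\le n^2$.

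The key structural observation I would establish is that any hyperarc $e$ with $k(e)=k$ satisfies $C(e)\cap B_k\ne\emptyset$. Indeed, if $C(e)\subseteq A_{k+1}$ then $E^{\ge k+1}\cup\{e\}$ would still be $(k+1)$-overlapping (no new arcs enter the combined clique, and every arc in $A_{k+1}$ already has multiplicity $\ge k+1$ in $E^{\ge k+1}$), forcing $k(e)\ge k+1$, a contradiction.

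Using this, I would set up a fractional assignment in which each hyperarc $e$ with $k(e)=k$ distributes weight $1/k$ over arcs in $C(e)\cap B_k$. A Hall/max-flow argument shows the assignment can be done so that each arc in $V\times V$ receives total weight at most $1$; summing over $e$, $\sum_e 1/k(e)\le\sum_k|B_k|=|A_1|\le n^2$.

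The main obstacle is proving the Hall condition: for every $k$ and every $F\subseteq E^{\ge k}\setminus E^{\ge k+1}$, one needs $|F|\le k\cdot|N(F)\cap B_k|$, where $N(F)=\bigcup_{e\in F}C(e)$. I would argue by contradiction, applying iterative bipartite peeling to the incidence graph between $F$ and $N(F)\cap B_k$: peel off any arc whose multiplicity in the current left set is $\le k$, and any hyperarc whose arcs in $B_k$ are no longer all contained in the current right set. An averaging/potential argument (using that the initial edge count is $\sum_{e\in F}|C(e)\cap B_k|\ge|F|>k\cdot|N(F)\cap B_k|$, so the initial right-degree average exceeds $k$) yields a non-empty fixed point $(F^*,R^*)$ in which every arc in $C(F^*)\cap B_k\subseteq R^*$ has multiplicity $\ge k+1$ in $F^*$. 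Then $F^*\cup E^{\ge k+1}$ is $(k+1)$-overlapping, so by maximality of the $(k+1)$-core one has $F^*\subseteq E^{\ge k+1}$; combined with $F^*\subseteq F\subseteq E^{\ge k}\setminus E^{\ge k+1}$ this forces $F^*=\emptyset$, the desired contradiction.
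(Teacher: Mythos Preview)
Your proof is correct, and it rests on the same structural fact as the paper's---the $k$-core decomposition of $E$ into the sets $E^{\ge k}=\{e:k(e)\ge k\}$---but your execution is considerably more elaborate. The paper runs a single greedy peeling that simultaneously \emph{computes} the overlap of every hyperarc and \emph{constructs} an explicit charging $f:E\to V\times V$: when an arc $(u,v)$ is peeled at level $k^*$ (i.e., its multiplicity in the current $E'$ is $\le k^*$), all $\le k^*$ hyperarcs containing it are removed and charged to $(u,v)$; since these hyperarcs have overlap exactly $k^*$, each ordered pair is charged at most $1$, and summing over $V\times V$ gives $n^2$. Your route instead defines the arc layers $B_k$, proves the Hall condition $|F|\le k\,|N(F)\cap B_k|$ for all $F\subseteq\{e:k(e)=k\}$ via a nested peeling argument, and then invokes a fractional matching. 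Two remarks: first, the fractional matching is unnecessary---applying Hall just to $F=F_k:=\{e:k(e)=k\}$ already gives $|F_k|/k\le |N(F_k)\cap B_k|\le|B_k|$, and summing over $k$ finishes; second, your ``averaging/potential'' step is underspecified. The correct potential is the Hall deficiency $\Phi(L)=|L|-k\,|N_H(L)|$: when you peel an arc $f$ of degree $d\le k$ (and hence its $d$ left-neighbours), $|L|$ drops by $d$ while $|N_H(L)|$ drops by at least $1$, so $\Phi$ is non-decreasing and stays positive, keeping $L$ non-empty. The averaging observation alone (average right-degree $>k$) does not prevent the cascade from emptying $L$; you need this potential explicitly.
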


\begin{proof}
	Consider the following simple algorithm:
	\begin{algorithm}[H]
		\caption{}\label{alg:overlap}
		\begin{algorithmic}[1]
			\Procedure{OverlapPeeling}{$G=(V,E)$}
			\State $E' \gets E$.
			\For{$k=1,\ldots,2^{n-2}$}\label{line:dp-for}
			\State $E'_k\gets E'$.
			\While{there exists $(u,v)\in C(E')$ with multiplicity at most $k$}\label{line:dp-while}
			\For{all hyperarcs $e \in E'$ such that $(u,v)\in C(e)$}
			\State $f(e)\gets(u,v)$.
			\State $E'\gets E'\setminus \{e\}$.
			\EndFor
			\EndWhile
			\EndFor
			\EndProcedure
		\end{algorithmic}
	\end{algorithm}
This algorithm iterates through all possible overlaps (from $1$ to $2^{n-2}$) and peels off all hyperarcs with this overlap, until no hyperarcs remain. The algorithm maintains several variables ($E'_k$ and $f(e)$) that are not used. However, these will be useful in proving the lemma.

\begin{claim}
	The set $E'_k$ has overlap $k$ for all $k$.
\end{claim}\label{claim:ek-level}
\begin{proof}
	Indeed, the variable $k$ is augmented in the for-loop at Line~\ref{line:dp-for} only after exiting the while-loop at Line~\ref{line:dp-while}. This means that there no longer existed any pairs $(u,v)$ in $C(E')$ with multiplicity at most $k-1$, and therefore $E'$ was $k$-overlapping. (The exception to this argument is $k=1$, however all sets are $1$-overlapping by definition.)
\end{proof}

\begin{claim}\label{cla:overlap-k^*}
	If a hyperarc $e$ is removed at a time when $k=k^*$, then it has overlap exactly $k^*$.
\end{claim}
\begin{proof}
It is easy to see that $e$ has overlap \textit{at least} $k^*$, since it was an element of $E'_{k^*}$ which is itself $k^*$-overlapping by Claim~\ref{claim:ek-level}.

We prove that $e$ has overlap \textit{at most} $k^*$ by induction.
By induction, we can assume that all hyperarcs removed before $e$ had overlap corresponding to the value of $k$ at the time, that is, at most $k^*$.
Let $E^*$ be the current value of $E'$ at the time just before $e$ is removed. Suppose for contradiction that $e$ is at least $(k^*+1)$-overlapping, or equivalently there exists a $(k^*+1)$-overlapping set containing $e$, say $\widetilde E_{k^*+1}$.
However, no hyperarc removed before $e$ could be in this set, since we know they are at most $k^*$-overlapping. So $\widetilde E_{k^*+1}\subseteq E^*$. But some arc in $C(e)$ has multiplicity only at most $k^*$ in $E^*$, which is a contradiction.
\end{proof}

\begin{claim}
	For any pair $(u,v)\in V^2$, we have
	$$\sum_{e:f(e)=(u,v)}\frac1{k(e)}\le1.$$
\end{claim}

\begin{proof}
	First note that all pairs $(u,v)$ are only considered once in the while-loop of Line~\ref{line:dp-while} throughout the whole algorithm. Indeed, once a pair is considered, all  hyperarcs containing it are removed and $(u,v)$ is no longer in $C(E')$.

	Suppose $(u,v)$ is removed in this way when $k=k^*$.
	Then all hyperarcs $e$ such that $f(e)=(u,v)$ have overlap at most $k^*$. On the other hand, there are at most $k^*$ such hyperarcs due to the condition in Line~\ref{line:dp-while}. This concludes the proof of the claim.
\end{proof}

From here the lemma statement follows simply:
\begin{align*}
\sum_{e\in E}\frac1{k(e)}&=\sum_{(u,v)\in V^2}\sum_{e:f(e)=(u,v)}\frac1{k(e)}
\le\sum_{(u,v)\in V^2}1
=n^2. \qedhere
\end{align*}
\end{proof}
\begin{remark}\label{rem:overlap-time-complexity}
	Note that, by Claim~\ref{cla:overlap-k^*}, we can compute overlaps of hyperarcs by running Algorithm~\ref{alg:overlap}.
	Furthermore, we can make it run in polynomial time by, instead of incrementing $k$ at Line~\ref{line:dp-for}, updating $k$ to be the smallest multiplicity of an edge in $C(E')$.
\end{remark}

\subsection{Nearly Equally Overlapping Directed Hypergraphs}\label{subsec:nearly-equal-densities}

In this section, we consider the simpler case where every hyperarc has a similar overlap.
\begin{lemma}\label{lem:directed-main}
	There is an algorithm that, given $0<\epsilon\le1/2$ and a directed hypergraph $G=(V,E)$ such that every hyperarc has overlap between $k$ and $2k$ for some $k \geq 1$, and each hyperarc has size at most $r\le\sqrt{\epsilon n}/11$, outputs in polynomial time a weighted subgraph $\widetilde G = (V,\widetilde E,w)$ of $G$ satisfying the following with probability $1-O(1/n)$:
	\begin{itemize}
	\item $\widetilde{G}$ is an $\epsilon$-spectral sparsifier of $G$,
	\item $|\widetilde{E}| = O(n^2r^2\log n/\epsilon^2)$.
	\end{itemize}
\end{lemma}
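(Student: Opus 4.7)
The plan is importance sampling: for each hyperarc $e$, independently set $w_e = 1/p_e$ with probability $p_e := \min\{\lambda/k(e),1\}$ and $w_e=0$ otherwise, where $\lambda = \Theta(r^2\log n/\epsilon^2)$. By Lemma~\ref{lem:overlap}, $\mathbb{E}|\widetilde{E}| \le \lambda\sum_{e\in E}1/k(e)\le\lambda n^2 = O(n^2r^2\log n/\epsilon^2)$, and a standard Chernoff bound shows $|\widetilde{E}|$ concentrates up to a constant factor. The real work is in proving $\widetilde{Q}(x)=(1\pm\epsilon)Q(x)$ simultaneously for every $x\in\mathbb{R}^V$.

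To reduce to a countable problem I would use the same two-step strategy as in Section~\ref{sec:expander-sparsification}. First, by the translation and positive-scale invariance of $Q$, restrict attention to $x$ with $x_{v_0}=0$ for a fixed base vertex and $Q(x)=1$ (vectors with $Q(x)=0$ are trivial since $\widetilde Q(x)=0$ too). Second, partition the hyperarcs by their energies: $E_i := \{e\in E : Q_x(e)\in(2^{-i},2^{-i+1}]\}$ for $i=1,\dots,i^*=O(\log n)$, together with a tail set $E_*$ of hyperarcs of negligible energy. By triangle inequality it suffices to show, for every $i$, that $\widetilde{Q}_x(E_i) = (1\pm O(\epsilon))Q_x(E_i)\pm O(\epsilon/\log n)$ simultaneously over all $x$, together with $\widetilde Q_x(E_*)\approx Q_x(E_*)$.

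Following the technical overview, the key novelty relative to Section~\ref{sec:expander-sparsification} is that the discretization is applied to the derived energy vector $Q_x\in\mathbb{R}^E$ rather than to $x$ itself. Concretely, for each level $i$ I define $Q_x^{(i)}$ by rounding $Q_x(e)$ on $e\in E_i$ to the nearest element of a $(1+\epsilon/10)$-geometric grid inside $(2^{-i},2^{-i+1}]$, and setting $Q_x^{(i)}(e)=0$ for $e\notin E_i$. Per-coordinate rounding immediately gives
\[
  Q_x(E_i) = (1\pm\epsilon/10)\, Q_x^{(i)}(E_i)
  \quad\text{and}\quad
  \widetilde{Q}_x(E_i) = (1\pm\epsilon/10)\, \widetilde{Q}_x^{(i)}(E_i).
\]
For a fixed $Q_x^{(i)}$ the sum $\widetilde{Q}_x^{(i)}(E_i)=\sum_{e\in E_i}w_e\mathbbm{1}[e\in\widetilde{E}]\,Q_x^{(i)}(e)$ is a sum of independent non-negative terms of mean $Q_x^{(i)}(E_i)$, each with range at most $Q_x^{(i)}(e)/p_e\le O(2^{-i}k/\lambda)$, so the additive--multiplicative Chernoff inequality (Theorem~\ref{thm:am-Chernoff}) yields $\widetilde{Q}_x^{(i)}(E_i) = (1\pm\epsilon/10)Q_x^{(i)}(E_i)\pm\epsilon/(10\log n)$ with failure probability $\exp\!\bigl(-\Omega(\lambda\,2^i\epsilon^2/(k\log n))\bigr)$.

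The principal obstacle is then the union bound over all possible vectors $Q_x^{(i)}$. My plan is to enumerate them by discretizing the underlying $x$ at a level-dependent precision: at level $i$, vertices with $x_v^2$ below an $i$-dependent threshold are rounded to $0$ and the remaining coordinates to an inverse-polynomial grid, chosen so that $Q_{\widetilde x}(e) = (1\pm\epsilon/10)Q_x(e)$ on $E_i$. Under the normalization $Q(x)=1$, together with the nearly-equal-overlap bound $|E|\le 2n^2 k$, at most $O(2^i)$ vertices contribute meaningfully at level $i$, so the total count of possible $Q_x^{(i)}$ is bounded by $\exp(O(2^i\log(n/\epsilon)))$, mirroring the enumeration in Claim~\ref{claim:2}. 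The hypothesis $11r\le\sqrt{\epsilon n}$ is precisely the slack needed so that the Chernoff exponent $\Omega(\lambda\,2^i\epsilon^2/(k\log n))$ dominates this enumeration cost for $\lambda = \Theta(r^2\log n/\epsilon^2)$; without it the sampling rate would be insufficient to absorb the union-bound loss. Finally $E_*$ is handled by a direct bound in the style of Claim~\ref{claim:4}, after which summing the per-level guarantees over the $O(\log n)$ scales completes the proof.
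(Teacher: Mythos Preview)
Your high-level architecture (independent sampling, level decomposition, additive--multiplicative Chernoff per level, union bound over a discretized net) is right, but the specific discretization and enumeration you propose does not close. The paper itself notes in the technical overview that for the directed case ``we do not know of a way to make the discretization argument work by rounding $x$ itself'' --- yet that is exactly your plan for the union bound.

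Concretely, under your normalization $Q(x)=1$ and your bucketing $E_i=\{e:Q_x(e)\in(2^{-i},2^{-i+1}]\}$, each summand $w_eQ_x^{(i)}(e)$ has range $\le 2^{-i+1}/p_e\le 2^{-i+2}k/\lambda$, so the Chernoff exponent is $\Theta(\lambda\,2^i\epsilon^2/(k\log n))=\Theta(r^2 2^i/k)$. On the other hand your enumeration cost, whether you count by hyperarcs (choose $\le 2^i$ of them from $|E|\le 2kn^2$) or by the $\le r\cdot 2^i$ incident vertices, is at least $\exp(\Omega(2^i\log n))$. The ratio you need, $r^2/k\gtrsim \log n$, fails once $k$ is moderately large; $k$ can be as big as $n^{r-2}$. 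Your assertion that ``at most $O(2^i)$ vertices contribute meaningfully'' is not justified by the normalization $Q(x)=1$, and I do not see a way to make it true.

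The paper avoids this by working one level down, in the \emph{clique graph} $C(G)$. It normalizes $Q^C(x)=1$ (so $Q(x)\ge 1/r^2$), buckets \emph{arcs} $f\in C(E)$ by $Q^C_x(f)\in(2^{-i}/k,\,2^{-i+1}/k]$, and then assigns a hyperarc $e$ to the level of its maximum arc. Two things happen simultaneously: (i) the summand range for $e\in E_i$ becomes $Q_x(e)/p\le 2^{-i+1}/(pk)$, so the $k$ in the range and the $k$ in $p=\Theta(r^2\log n/(k\epsilon^2))$ cancel and the Chernoff exponent is $\Theta(2^i)$ with no $k$; and (ii) crucially, because every arc has multiplicity $\ge k$ in $C(E)$ (this is exactly the overlap hypothesis), at most $2^i$ \emph{distinct} arcs can have $Q^C_x(f)>2^{-i}/k$, giving an enumeration of size $\exp(O(2^i\log n))$ by choosing the $\le 2^i$ arcs from $V^2$ and their rounded values in a grid of step $1/(kn^3)$. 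So the discretized object is the vector of rounded \emph{arc} energies $Q_x^{C,(i)}$, from which $Q_x^{(i)}(e)=\max_{f\in C(e)}Q_x^{C,(i)}(f)$ is derived; $x$ itself is never rounded. This arc-level discretization, together with the multiplicity-$\ge k$ bound from the overlap assumption, is the missing idea in your proposal.

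One smaller point: the hypothesis $11r\le\sqrt{\epsilon n}$ is not what balances Chernoff against the union bound. In the paper it is used only at the very end, to ensure that the accumulated additive $O(1/n)$ errors (from rounding and from $E_*$) are dominated by $\epsilon Q(x)\ge \epsilon/r^2$; the Chernoff--union-bound tension is handled purely by the choice of $p$.
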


\paragraph{Construction}
Let us construct $\widetilde G=(V,\widetilde E)$ by sampling each hyperarc independently with the same probability $p=1000r^2\log n/(k\epsilon^2)$ and scaling them up by $1/p$. Let the weight of each hyperarc $e$ in $\widetilde G$ be denoted as $w_e$. Then $w_e$ is an independent random variable taking value $1/p$ with probability $p$ and value $0$ otherwise, for each $e$.

Clearly, we can compute the output in $O(m)$ time.
Also, we can bound the size of $\widetilde E$ easily:
\begin{lemma}\label{lem:d-size-bound}
	We have $\mathbb E[|\widetilde E|] = 2000n^2r^2 \log n/\epsilon^2$ and
	\[
		\mathbb P\left[|\widetilde E| > 4000n^2r^2 \log n/\epsilon^2\right]\le2\exp\left(-\frac{2pkn^2}{3}\right).
	\]
\end{lemma}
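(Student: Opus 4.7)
The plan is to decompose the proof into two independent parts: first an expectation bound via Lemma~\ref{lem:overlap}, then a standard Chernoff concentration. Observe that the stated target $4000n^2r^2\log n/\epsilon^2$ is exactly $4pkn^2$ by the definition $p=1000r^2\log n/(k\epsilon^2)$, so the job is really to show $\mathbb{E}[|\widetilde E|]\le 2pkn^2$ and then that $|\widetilde E|$ exceeds twice its mean with probability at most $2\exp(-2pkn^2/3)$.

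For the expectation, I would first bound $|E|$. Since by assumption every hyperarc $e\in E$ has overlap $k(e)\le 2k$, Lemma~\ref{lem:overlap} gives
\[
\frac{|E|}{2k} \;\le\; \sum_{e\in E}\frac{1}{k(e)} \;\le\; n^2,
\]
so $|E|\le 2kn^2$. Since $|\widetilde E|$ is a sum of $|E|$ independent Bernoulli$(p)$ indicators (each hyperarc is sampled independently with probability $p$), linearity of expectation yields
\[
\mathbb{E}[|\widetilde E|]\;=\;p|E|\;\le\;2pkn^2\;=\;\frac{2000\,n^2r^2\log n}{\epsilon^2}.
\]
This matches the claimed value (read as an upper bound).

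For the tail bound, I would apply the multiplicative Chernoff inequality (Theorem~\ref{thm:chernoff}) to the Binomial random variable $|\widetilde E|$. To handle the fact that $\mu:=\mathbb{E}[|\widetilde E|]$ may be smaller than $2pkn^2$, I would use a simple stochastic dominance: let $Y$ be an independent Binomial$(2kn^2-|E|,\,p)$, so that $|\widetilde E|+Y$ is Binomial$(2kn^2,p)$ with mean exactly $2pkn^2$. Then
\[
\mathbb{P}\bigl[|\widetilde E|>4pkn^2\bigr]\;\le\;\mathbb{P}\bigl[|\widetilde E|+Y>2\cdot 2pkn^2\bigr]\;\le\;\exp\!\left(-\frac{2pkn^2}{3}\right),
\]
by the standard form $\mathbb{P}[X\ge 2\mu]\le\exp(-\mu/3)$ for Binomial $X$. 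The factor of $2$ in the statement simply absorbs the usual two-sided Chernoff constant and is not tight.

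There is essentially no technical obstacle here: the work is packaged entirely into Lemma~\ref{lem:overlap} (for the counting bound $|E|\le 2kn^2$) and into the Chernoff bound; the only minor subtlety is the stochastic-dominance step needed to express the tail in terms of $pkn^2$ rather than the true mean $p|E|$, and this is a one-line coupling argument.
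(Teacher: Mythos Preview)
Your proposal is correct and matches the paper's proof essentially line for line: bound $|E|\le 2kn^2$ via Lemma~\ref{lem:overlap}, compute the expectation, then apply Chernoff. The stochastic-dominance coupling you add is unnecessary, since the paper's Theorem~\ref{thm:chernoff} is stated for any $\mu\ge\mathbb{E}S$ (not just $\mu=\mathbb{E}S$), so you may take $\mu=2pkn^2$, $\delta=1$, $a=1$ directly.
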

\begin{proof}
Note that since the overlap of each hyperarc is at most $2k$, there are at most $2kn^2$ hyperarcs in total (in $E$) by Lemma~\ref{lem:overlap}.
Each hyperarc is sampled with probability $p$ to be in $\widetilde E$, so $\mathbb E[|\widetilde E|] = 2pkn^2=2000n^2r^2 \log n/\epsilon^2$, as claimed. By Chernoff bounds (Theorem~\ref{thm:chernoff}), the claimed concentration inequality holds.
\end{proof}

\paragraph{Correctness}
We now examine the spectral properties of $\widetilde G$. Recall that $C(G)$ is the clique graph of $G$. Let us denote by $Q^C$ the energy with respect to the clique graph. We may assume without loss of generality that $Q^C(x)=1$, since whether $Q(x) = (1\pm \epsilon)\widetilde{Q}(x)$ holds or not is unaffected by scaling $x$. Define $\overline{\mathbb R^V}$ to be the set of vectors $x$ such that this is satisfied. Note that this means that $Q_x(E)\ge1/r^2$. Indeed
\begin{align*}
	Q_x(E)&=\sum_{e\in E}\max_{u\in t(e),\ v\in h(e)}{(x_u-x_v)}_+^2
	=\sum_{e\in E}\max_{f\in C(e)}Q^C_x(f)
	\ge\frac1{r^2}\sum_{e\in E}\sum_{f\in C(E)}Q_x^C(f)\\
	&=\frac1{r^2}\sum_{f\in C(E)}Q_x^C(f)
	=\frac{Q^C(x)}{r^2}
	=\frac{1}{r^2}.
\end{align*}

Let us categorize the arcs in $C(E)$ based on their contributions to the total energy $Q^C(x)=1$ in $C(G)$.
The categories are
$$C_i=\left\{f\in C(E)\ \middle|\ Q^C_x(f)\in\left(\frac{2^{-i}}{k},\frac{2^{-i+1}}{k}\right]\right\},$$
for $i=1,\ldots,i^*$ where $i^*:=\lceil3\log n\rceil$, as well as
$$C_*=\left\{f\in C(E)\middle|\ Q^C_x(f)\le \frac{2^{-i^*}}{k} \right\}.$$
Recall that $C(E)$ is a multiset, and consequently so are $C_i$ and $C_*$. Since each arc $f$ appears with multiplicity at least $k$, any single arc can contribute at most $1/k$ to the energy. Therefore, all arcs of $C(G)$ are covered by these categories.

We then partition the hyperarcs into similar categories: A hyperarc $e$ gets into category $i$ (or $E_i$) if $i$ is the smallest number for which $C(e)$ contains an arc in $C_i$. Formally
\begin{align*}
	E_i&=\left\{e\in E\ \middle|\ i=\max\{j \mid C(e)\cap C_j\neq\emptyset\}\right\}\;(i =1,\ldots,i^*), \text{ and}\\
	E_*&=\left\{e\in E\ \middle|\ C(e)\subseteq C_*\right\}.
\end{align*}

To prove that $\widetilde{G}$ is an $\epsilon$-spectral sparsifier, we will show that, for all $i$, $Q_x(E_i)\approx \widetilde Q_x(E_i)$. Similarly to the proof of Theorem~\ref{thm:expander-sparsification} we will introduce a discretization of $Q_x(E_i)$. However, unlike in the proof of Theorem~\ref{thm:expander-sparsification}, instead of rounding the vertex potentials $x_v$, we will round the energies of hyperarcs, that is, $Q_x(e)$ for $e \in E$.

Let us first define $Q^{C,(i)}_x(f)$, the rounding of $Q^C_x(f)$. Firstly, if $Q_x^C(f)\le2^{-i}/k$, that is the arc $f$ is not relevant to $E_i$, we define $Q_x^{C,(i)}(f)$ to be zero. Otherwise, let $Q^{C,(i)}_x(f)$ be the rounding of $Q^C_x(f)$ to the nearest integer multiple of $1/(kn^3)$. Analogously with the definition of $Q_x$, for $e\in E$ let
\begin{align*}
	Q^{(i)}_x(e)&=\max_{f\in C(e)}Q^{C,(i)}_x(f), \quad
	Q^{(i)}_x(E')=\sum_{e\in E'}Q^{(i)}_x(E'),\\
	\widetilde Q^{(i)}_x(e)&=w_e Q^{(i)}_x(e), \quad
	\widetilde Q^{(i)}_x(E')=\sum_{e\in E'}\widetilde Q^{(i)}_x(e).
\end{align*}

Informally, we prove the following chain of approximations for each $i$:
$$Q_x(E_i)\cong Q_x^{(i)}(E_i)\cong\widetilde Q_x^{(i)}(E_i)\cong\widetilde Q_x(E_i),$$
as well as
$$Q_x(E_*)\cong\widetilde Q_x(E_*).$$

We make this formal in the following claims:

\begin{restatable}{claim}{claimdone}\label{claim:d1}
	For all $x \in \mathbb{R}^V$ and all $i=1,\ldots,i^*$,
	$$Q_x^{(i)}(E_i)=Q_x(E_i)\pm\frac{2}{n}.$$
\end{restatable}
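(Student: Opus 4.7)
The plan is to establish a tiny per-hyperarc error bound $|Q_x^{(i)}(e) - Q_x(e)| \le 1/(2kn^3)$ for each $e \in E_i$, then sum over $E_i$ and control its cardinality via Lemma~\ref{lem:overlap}.

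For the per-hyperarc bound, I would fix $e \in E_i$ and let $f^* \in C(e)$ attain the maximum, i.e., $Q_x^C(f^*) = Q_x(e)$. The key observation is that since $e \in E_i$, the highest-energy arc of $C(e)$ sits in $C_i$, so $Q_x^C(f^*) > 2^{-i}/k$. This places $f^*$ above the zeroing threshold, so $Q_x^{C,(i)}(f^*)$ is genuinely the rounded value and differs from $Q_x^C(f^*)$ by at most $1/(2kn^3)$. For every other arc $f \in C(e)$, either $Q_x^C(f) \le 2^{-i}/k$, in which case $Q_x^{C,(i)}(f) = 0 \le Q_x^C(f)$, or the rounded value satisfies $|Q_x^{C,(i)}(f) - Q_x^C(f)| \le 1/(2kn^3)$; in both cases $Q_x^{C,(i)}(f) \le Q_x^C(f) + 1/(2kn^3) \le Q_x(e) + 1/(2kn^3)$. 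Taking $\max$ over $f \in C(e)$ sandwiches $Q_x^{(i)}(e)$:
$$Q_x(e) - \tfrac{1}{2kn^3} \le Q_x^{C,(i)}(f^*) \le Q_x^{(i)}(e) \le Q_x(e) + \tfrac{1}{2kn^3}.$$

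To finish, I would sum over $e \in E_i$ to get $|Q_x^{(i)}(E_i) - Q_x(E_i)| \le |E_i|/(2kn^3)$ and then bound $|E_i|$ using the hypothesis that every hyperarc has overlap between $k$ and $2k$: by Lemma~\ref{lem:overlap}, $|E|/(2k) \le \sum_{e \in E} 1/k(e) \le n^2$, so $|E_i| \le |E| \le 2kn^2$. Thus the total error is at most $2kn^2 / (2kn^3) = 1/n \le 2/n$, as claimed.

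The main subtlety — and the only place where care is needed — is handling the $\max$ together with the zeroing: the worry is that zeroing out some arcs in $C(e)$ might drag $Q_x^{(i)}(e)$ well below $Q_x(e)$. The argument resolves this by exploiting the \emph{definition} of $E_i$ to guarantee that the specific maximizer $f^*$ is never zeroed out, which is exactly what forces $Q_x^{(i)}(e) \ge Q_x^{C,(i)}(f^*) \ge Q_x(e) - 1/(2kn^3)$. All other steps are routine rounding arithmetic and an application of the already-established overlap bound.
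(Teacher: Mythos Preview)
Your proof is correct and follows essentially the same approach as the paper: a per-hyperarc rounding bound (the paper states this as a separate Claim~\ref{claim:d0} with the slightly looser constant $1/(kn^3)$) followed by summing over $E_i$ and using the overlap hypothesis with Lemma~\ref{lem:overlap} to bound $|E_i|\le|E|\le 2kn^2$. Your presentation of the per-hyperarc bound is a bit cleaner than the paper's case split, and your observation that the maximizer $f^*$ is never zeroed out is exactly the key point the paper uses.
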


\begin{restatable}{claim}{claimdtwo}\label{claim:d2}
	For all $i=1,\ldots,i^*$,
	$$\mathbb P\left[\forall x \in \overline{\mathbb{R}^V}:\ \widetilde Q_x^{(i)}(E_i)=\left(1\pm\frac{\epsilon}{2}\right)Q_x^{(i)}(E_i)\pm\frac{\epsilon Q(x)}{10\log n}\right]\ge1-\frac1n.$$
\end{restatable}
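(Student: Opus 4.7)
My plan is to mirror the structure of Claim~\ref{claim:2}: fix $i$, establish concentration for each discretized configuration via the additive-multiplicative Chernoff bound (Theorem~\ref{thm:am-Chernoff}), and then take a union bound over the polynomially-tractable set of possible configurations. For a fixed $x\in\overline{\mathbb R^V}$, the quantity $\widetilde Q_x^{(i)}(E_i)=\sum_{e\in E_i}w_e\,Q_x^{(i)}(e)$ is a sum of independent random variables across $e\in E_i$ with expectation $Q_x^{(i)}(E_i)$, and each summand (when $w_e=1/p$) is bounded by $2^{-i+2}/(pk)$. The range bound follows from the definition of $E_i$: $e\in E_i$ forces $\max_{f\in C(e)}Q_x^C(f)\le 2^{-i+1}/k$, and the rounding to multiples of $1/(kn^3)$ contributes negligible additional slack. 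Applying Theorem~\ref{thm:am-Chernoff} with multiplicative slack $\epsilon/2$ and additive slack $\epsilon Q(x)/(10\log n)$ then yields a per-configuration failure probability of $2\exp\bigl(-\Omega(\epsilon^2 Q(x)\,pk\cdot 2^i/\log n)\bigr)$. Using the bound $Q(x)\ge 1/r^2$ derived just before the claim from the normalization $Q^C(x)=1$, together with $pk=\Theta(r^2\log n/\epsilon^2)$, this simplifies to $\exp(-\Omega(2^i\log n))$, where the leading constant in $p$ is taken large enough (possibly absorbing one extra $\log n$ factor) to drive the exponent up.

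The heart of the argument is the union bound. The random variable $\widetilde Q_x^{(i)}(E_i)$ depends on $x$ only through the pair $(E_i,Q_x^{(i)}|_{E_i})$, which is, up to boundary effects, determined by the rounded arc-energy function $Q_x^{C,(i)}\colon V\times V\to\{0\}\cup\{j/(kn^3):j\in[n^3]\}$. The crucial count is that the support of $Q_x^{C,(i)}$ has size at most $2^i$: every vertex pair $(u,v)$ with $Q_x^C((u,v))>2^{-i}/k$ appears with multiplicity at least $k$ in $C(E)$ by the overlap hypothesis (this is the same bound used in Section~\ref{subsec:nearly-equal-densities} to derive $Q(x)\ge 1/r^2$), and therefore contributes at least $k\cdot 2^{-i}/k=2^{-i}$ to the normalization $Q^C(x)=1$. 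Choosing such a support from $V\times V$ and assigning a rounded value in $\{0,1/(kn^3),\ldots,1/k\}$ to each element bounds the total number of distinct signatures by $\binom{n^2}{2^i}\cdot(n^3+1)^{2^i}\le n^{O(2^i)}$.

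The main obstacle I anticipate is the boundary ambiguity: two vectors $x,x'$ with the same rounded function $Q_x^{C,(i)}$ may still induce different $E_i$'s if some hyperarc's maximum arc-energy lies near the interval endpoints $2^{-i}/k$ or $2^{-i+1}/k$. I plan to handle this either by enriching each nonzero arc's entry with a constant-size boundary flag (which only multiplies the signature count by $2^{O(2^i)}$, leaving the $n^{O(2^i)}$ asymptotic intact), or by absorbing the contribution of the ambiguous boundary hyperarcs directly into the additive slack $\epsilon Q(x)/(10\log n)$, using that each such hyperarc contributes only $O(2^{-i}/k)$ while $Q(x)\ge 1/r^2$ supplies the needed budget. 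Combining the Chernoff estimate with this enumeration yields a total failure probability at most $n^{O(2^i)}\cdot\exp(-\Omega(2^i\log n))\le 1/n$ once the constant in $p$ is taken sufficiently large, which is exactly the claimed bound.
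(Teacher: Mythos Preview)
Your proposal is correct and follows the same route as the paper: additive-multiplicative Chernoff per discretized configuration, then a union bound over at most $n^{O(2^i)}$ signatures, using the key observation that the overlap hypothesis forces every arc in $C(E)$ to have multiplicity $\ge k$, so at most $2^i$ distinct arcs can have $Q_x^C(f)>2^{-i}/k$. The paper resolves your boundary ambiguity by recording, for each nonzero arc, its full category index $j\in\{1,\ldots,i\}$ (contributing a factor $(3\log n+1)^{2^i}$ to the enumeration) rather than a binary flag; your binary flag distinguishing $C_i$ from $\bigcup_{j<i}C_j$ also suffices and is slightly tighter, though it makes no difference to the $n^{O(2^i)}$ total. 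Your suspicion about the extra $\log n$ is well-founded: with $pk=\Theta(r^2\log n/\epsilon^2)$ and $Q(x)\ge 1/r^2$, the Chernoff exponent is only $\Theta(2^i)$ while the union bound is $\exp(\Theta(2^i\log n))$; the paper's final display silently drops a $\log n$ from the Chernoff denominator, and an honest accounting indeed needs $pk=\Theta(r^2\log^2 n/\epsilon^2)$.
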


\begin{restatable}{claim}{claimdthree}\label{claim:d3}
	For all $i=1,\ldots,i^*$,
	$$\mathbb P\left[\forall x \in \overline{\mathbb{R}^V}:\ \widetilde Q_x^{(i)}(E_i)=\widetilde Q_x(E_i)\pm\frac{4}{n}\right]\ge1-\frac1n.$$
\end{restatable}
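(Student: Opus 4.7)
The plan is to show that the discretization error per hyperarc is $O(1/(kn^3))$ and that the total weight of the sparsifier is $O(kn^2)$ with high probability; the product of these yields the desired $4/n$ additive error. Crucially, the per-hyperarc error bound depends on $x$ only through which hyperarcs lie in $E_i$, while the bound on the total weight is entirely $x$-independent, so a single high-probability event over the sampling suffices to handle all $x$ uniformly.

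First, I would prove the per-hyperarc bound: for every $x \in \overline{\mathbb{R}^V}$ and every $e \in E_i$,
\[
  \big|Q_x(e) - Q_x^{(i)}(e)\big| \le \frac{1}{2kn^3}.
\]
Let $f^* \in \argmax_{f \in C(e)} Q_x^C(f)$, so that $Q_x(e) = Q_x^C(f^*)$. Since $e \in E_i$, the arc $f^*$ satisfies $Q_x^C(f^*) > 2^{-i}/k$ (its energy puts it in some $C_j$ with $j \le i$), so $Q_x^{C,(i)}(f^*)$ is rounded to the nearest multiple of $1/(kn^3)$ rather than zeroed, giving $|Q_x^{C,(i)}(f^*) - Q_x^C(f^*)| \le 1/(2kn^3)$. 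For every other $f \in C(e)$, either $Q_x^C(f) \le 2^{-i}/k$, in which case $Q_x^{C,(i)}(f) = 0 \le Q_x^C(f^*)$, or $Q_x^C(f) > 2^{-i}/k$, in which case $Q_x^{C,(i)}(f) \le Q_x^C(f) + 1/(2kn^3) \le Q_x^C(f^*) + 1/(2kn^3)$. Taking the maximum over $f \in C(e)$ on both sides yields the claimed bound.

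Next, I would bound the total sparsifier weight. Since $w_e \in \{0, 1/p\}$, we have $\sum_{e \in E} w_e = |\widetilde E|/p$. Lemma~\ref{lem:d-size-bound} gives $|\widetilde E| \le 4000 n^2 r^2 \log n / \epsilon^2 = 4pkn^2$ with probability at least $1 - 2\exp(-2pkn^2/3)$, and the choice $p = 1000 r^2 \log n/(k\epsilon^2)$ makes this probability at least $1 - 1/n$ (in fact, much larger). Conditioning on this single event, $\sum_{e \in E} w_e \le 4kn^2$ deterministically.

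Combining the two estimates, for every $x \in \overline{\mathbb{R}^V}$ simultaneously,
\[
  \big|\widetilde Q_x^{(i)}(E_i) - \widetilde Q_x(E_i)\big|
  \le \sum_{e \in E_i} w_e \, \big|Q_x(e) - Q_x^{(i)}(e)\big|
  \le \frac{1}{2kn^3} \sum_{e \in E} w_e
  \le \frac{4kn^2}{2kn^3}
  = \frac{2}{n}
  \le \frac{4}{n},
\]
which proves the claim. There is no substantial obstacle: the only care needed is to verify that the rounding threshold $2^{-i}/k$ lies strictly below $Q_x^C(f^*)$ for $e \in E_i$, so that zeroing sub-threshold arcs never corrupts the maximum; this is precisely why the threshold was chosen as $2^{-i}/k$ in the definition of $Q_x^{C,(i)}$.
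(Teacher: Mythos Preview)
Your proposal is correct and follows essentially the same approach as the paper: you re-derive the per-hyperarc rounding bound (the paper's Claim~\ref{claim:d0}, with a slightly sharper constant), invoke Lemma~\ref{lem:d-size-bound} to bound $|\widetilde E|$ and hence the total weight $\sum_e w_e = |\widetilde E|/p$, and combine the two by summing over $E_i$. The paper's proof is nearly identical, writing the final bound as $|\widetilde E|/(pkn^3) \le 4/n$.
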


\begin{restatable}{claim}{claimdfour}\label{claim:d4}
	$$\mathbb P\left[\forall x \in \overline{\mathbb{R}^V}:\ \widetilde Q_x(E_*)=Q_x(E_*)\pm\frac{6}{n}\right]\ge1-\frac1n.$$
\end{restatable}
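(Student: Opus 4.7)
The strategy is to show both $Q_x(E_*)$ and $\widetilde Q_x(E_*)$ are individually very small (at most $O(1/n)$), so the triangle inequality handles the whole range of $x\in\overline{\mathbb{R}^V}$ at once, once we condition on an $x$-independent event.

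First I would observe that the definition of $E_*$ gives a uniform bound on the energy of each of its hyperarcs: for every $x$ and every $e\in E_*$, all arcs $f\in C(e)$ lie in $C_*$, so
$$Q_x(e)=\max_{f\in C(e)}Q_x^C(f)\le \frac{2^{-i^*}}{k}\le \frac{1}{n^3 k}.$$
Combined with the overlap bound $|E|\le 2kn^2$ (an immediate consequence of Lemma~\ref{lem:overlap} together with $k(e)\ge k$), this already yields the deterministic estimate $Q_x(E_*)\le |E|/(n^3k)\le 2/n$ for every $x\in\overline{\mathbb{R}^V}$.

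Next I would handle the sampled side by conditioning only on the event $\mathcal{E}=\{|\widetilde E|\le 4000 n^2 r^2\log n/\epsilon^2\}$, which is an event about the random sampling and does \emph{not} depend on $x$. Lemma~\ref{lem:d-size-bound} gives $\mathbb P[\mathcal{E}]\ge 1-1/n$ (in fact much more, since $2pkn^2=\Omega(r^2 n^2\log n/\epsilon^2)$). On $\mathcal{E}$, for every $x\in\overline{\mathbb{R}^V}$,
$$\widetilde Q_x(E_*)=\sum_{e\in E_*\cap\widetilde E}\frac{1}{p}Q_x(e)\le |\widetilde E|\cdot\frac{1}{p}\cdot\frac{1}{n^3 k}\le\frac{4000 n^2 r^2\log n/\epsilon^2}{p\,n^3 k}=\frac{4}{n},$$
where the final equality uses $p=1000 r^2\log n/(k\epsilon^2)$.

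Combining the two bounds via the triangle inequality gives $|\widetilde Q_x(E_*)-Q_x(E_*)|\le 6/n$ on the event $\mathcal{E}$, simultaneously for all $x\in\overline{\mathbb{R}^V}$. Since $\mathbb P[\mathcal{E}]\ge 1-1/n$, this establishes the claim. The only (very mild) subtlety is to remember that although $E_*$ itself depends on $x$, both the uniform per-hyperarc bound $1/(n^3 k)$ and the event $\mathcal{E}$ are $x$-free, so no union bound over $x$ is needed — this is what makes the claim essentially trivial once the two ingredients above are in place.
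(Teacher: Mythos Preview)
Your proposal is correct and follows essentially the same approach as the paper: bound $Q_x(E_*)\le |E|/(kn^3)\le 2/n$ deterministically and $\widetilde Q_x(E_*)\le |\widetilde E|/(pkn^3)\le 4/n$ on the high-probability event from Lemma~\ref{lem:d-size-bound}, then apply the triangle inequality. Your explicit remark that the conditioning event is $x$-independent (so no union bound over $x$ is needed) is a helpful clarification that the paper leaves implicit.
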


Before proving the above claims, which we do in the next section, we conclude the analysis of correctness of the sparsifier.
\begin{lemma}\label{lem:d-spectral-property}
	The directed hypergraph $\widetilde{G}$ is an $\epsilon$-spectral sparsifier of $G$ with probability $1-O(1/n)$.
\end{lemma}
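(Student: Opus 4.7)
The plan is to mirror the proof of Lemma~\ref{lem:spectral-property} very closely: take a union bound over Claims~\ref{claim:d2},~\ref{claim:d3},~\ref{claim:d4} for every $i=1,\ldots,i^\ast$, condition on all of these events, and then chain the four claims together level-by-level to obtain $\widetilde Q(x) = (1\pm\epsilon)Q(x)$ simultaneously for every $x \in \overline{\mathbb R^V}$. Since each claim holds with probability at least $1-1/n$ and $i^\ast = O(\log n)$, the total failure probability is $O((\log n)/n)$, giving the claimed high-probability guarantee.

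Conditioned on these events, I fix an arbitrary $x\in\overline{\mathbb R^V}$ and, for each $i$, chain the approximations in the order
\[
\widetilde Q_x(E_i) \;\stackrel{\text{Claim~\ref{claim:d3}}}{\approx}\; \widetilde Q_x^{(i)}(E_i) \;\stackrel{\text{Claim~\ref{claim:d2}}}{\approx}\; Q_x^{(i)}(E_i) \;\stackrel{\text{Claim~\ref{claim:d1}}}{\approx}\; Q_x(E_i).
\]
Concretely, for the upper direction I will compute
\begin{align*}
\widetilde Q_x(E_i)
&\le \tfrac{4}{n} + \widetilde Q_x^{(i)}(E_i) \\
&\le \tfrac{4}{n} + \bigl(1+\tfrac{\epsilon}{2}\bigr) Q_x^{(i)}(E_i) + \tfrac{\epsilon Q(x)}{10\log n} \\
&\le \tfrac{4}{n} + \bigl(1+\tfrac{\epsilon}{2}\bigr)\bigl[\tfrac{2}{n} + Q_x(E_i)\bigr] + \tfrac{\epsilon Q(x)}{10\log n} \\
&\le \tfrac{8}{n} + \bigl(1+\tfrac{\epsilon}{2}\bigr)Q_x(E_i) + \tfrac{\epsilon Q(x)}{10\log n},
\end{align*}
and then sum over $i=1,\ldots,i^\ast$, combining with Claim~\ref{claim:d4} to handle $E_\ast$, to obtain
\[
\widetilde Q(x)
\;\le\; Q(x) + \tfrac{6}{n} + i^\ast\cdot\tfrac{8}{n} + \tfrac{\epsilon}{2}Q(x) + \tfrac{i^\ast \epsilon Q(x)}{10\log n}
\;\le\; \bigl(1+\tfrac{4\epsilon}{5}\bigr)Q(x) + O\!\bigl(\tfrac{\log n}{n}\bigr).
\]
The lower direction is symmetric.

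The main obstacle — and the only place the theorem's hypothesis on $r$ enters — is absorbing the leftover additive error $O((\log n)/n)$ into $\epsilon Q(x)$. Here I will use the normalization $Q^C(x)=1$ together with $|C(e)|\le r^2$ to get $Q(x)\ge Q^C(x)/r^2 = 1/r^2$, and then invoke the assumption $11r\le\sqrt{\epsilon n}$ to conclude $\epsilon Q(x)\ge 121/n$, so that the additive slack can be charged against $\epsilon Q(x)$ (possibly after tightening constants in the Chernoff calculations underlying Claim~\ref{claim:d2}, or absorbing a mild $\log n$ factor into the constants hidden by the sampling rate $p$). Combining the two directions yields $\widetilde Q(x)=(1\pm\epsilon)Q(x)$ for every $x$ simultaneously, proving that $\widetilde G$ is an $\epsilon$-spectral sparsifier of $G$ with probability $1-O((\log n)/n) = 1-O(1/n)$.
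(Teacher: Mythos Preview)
Your proposal is correct and follows essentially the same approach as the paper: union bound over Claims~\ref{claim:d2}--\ref{claim:d4} across all $i$, chain the four approximations level by level, sum over $i$ together with the $E_*$ term, and absorb the residual additive error using $Q(x)\ge 1/r^2$ and the hypothesis $11r\le\sqrt{\epsilon n}$. Your hedge about the $O((\log n)/n)$ additive slack versus the $\Theta(1/n)$ bound on $\epsilon Q(x)$ is in fact more careful than the paper, which silently writes $12/n$ in place of the honest $i^*\cdot \tfrac{6}{n}+\tfrac{6}{n}$; either way the constants in $p$ have enough room to absorb a $\log n$ factor, so no genuine gap arises.
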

\begin{proof}
	The statements of Claims~\ref{claim:d2},~\ref{claim:d3}, and~\ref{claim:d4} all hold with high probability. Let us consider the event that they all hold simultaneously, then by Claims~\ref{claim:d1},~\ref{claim:d2}, and~\ref{claim:d3},
	\begin{align*}
		& \big|Q_x(E_i)-\widetilde Q_x(E_i)\big| \le\big|Q_x(E_i)-Q_x^{(i)}(E_i)\big|+\big|Q_x^{(i)}(E_i)-\widetilde Q^{(i)}_x(E_i)\big|+\big|\widetilde Q_x^{(i)}(E_i)-\widetilde Q_x(E_i)\big|\\
		&\le\frac{2}{n}+\frac{\epsilon}{2} Q_x^{(i)}(E_i)+\frac{Q(x)}{10\log n}+\frac4n
		\le \frac{\epsilon}{2} Q_x(E_i) + \frac{Q(x)}{10\log n}+\frac6n,
	\end{align*}
	using that $\epsilon\le1$. Summing this over $i=1\ldots,i^*=\lceil3\log n\rceil$ and adding Claim~\ref{claim:d4} we get
	\begin{align*}
		& \big|Q(x)-\widetilde Q(x)\big| \le\sum_{i=1}^{i^*}\big|Q_x(E_i)-\widetilde Q_x(E_i)\big|+\big|Q_x(E_*)-\widetilde Q_x(E_*)\big| \\
		& \le\sum_{i=1}^{i^*}\left[\frac{\epsilon}{2} Q_x(E_i)+\frac{\epsilon Q(x)}{10\log n}+\frac6n\right]+\frac{6}{n}
		 \le \frac{\epsilon}{2}Q(x)+\frac{4\epsilon }{10}Q(x) + \frac{12}{n}
		\le\epsilon Q(x),
	\end{align*}
	since $\epsilon Q(x)/10\ge\epsilon/(10r^2)\ge12/n$ because $11r\le\sqrt{\epsilon n}$.
\end{proof}
Lemma~\ref{lem:directed-main} follows by Lemmas~\ref{lem:d-size-bound} and~\ref{lem:d-spectral-property} and a union bound.

\subsection{Proofs of Claims~\ref{claim:d1},~\ref{claim:d2},~\ref{claim:d3}, and~\ref{claim:d4}}\label{subsubsec:nearly-equal-densities-claims}

We begin with a preliminary claim examining the difference between $Q_x$ and $Q_x^{(i)}$ on a single hyperarc.
\begin{claim}\label{claim:d0}
	For all $x \in \mathbb{R}^V$, all $i=1,\ldots,i^*$, and any hyperarc $e\in E_i$,
	$$Q_x^{(i)}(e) = Q_x(e)\pm\frac{1}{kn^3}.$$
\end{claim}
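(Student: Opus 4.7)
The plan is to exhibit the single arc in $C(e)$ that realizes $Q_x(e)$, show it survives the level-$i$ rounding with only tiny additive error, and then argue that no other arc can inflate the maximum under the rounded scale.

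First I would pick $f^\star \in \arg\max_{f\in C(e)} Q^C_x(f)$, so that $Q_x(e) = Q^C_x(f^\star)$. Since $e \in E_i$, there is at least one arc of $C(e)$ lying in $C_i$, which has energy strictly greater than $2^{-i}/k$. Because $f^\star$ is the arc of maximum energy in $C(e)$, it must itself satisfy $Q^C_x(f^\star) > 2^{-i}/k$; otherwise it would be dominated by the arc in $C_i$, contradicting maximality. By the definition of $Q^{C,(i)}_x$, this means $f^\star$ falls into the ``round to the nearest multiple of $1/(kn^3)$'' branch rather than the ``round to zero'' branch, so
\[
\bigl|Q^{C,(i)}_x(f^\star) - Q^C_x(f^\star)\bigr| \le \tfrac{1}{2kn^3}.
\]
This immediately yields the lower bound
\[
Q^{(i)}_x(e) \;\ge\; Q^{C,(i)}_x(f^\star) \;\ge\; Q^C_x(f^\star) - \tfrac{1}{2kn^3} \;=\; Q_x(e) - \tfrac{1}{2kn^3}.
\]

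For the opposite direction I would bound each arc separately. For every $f \in C(e)$, $Q^{C,(i)}_x(f)$ is either $0$ (which is at most $Q^C_x(f)$ since energies are non-negative), or it equals $Q^C_x(f)$ rounded to the nearest multiple of $1/(kn^3)$, in which case $Q^{C,(i)}_x(f) \le Q^C_x(f) + 1/(2kn^3)$. In either scenario $Q^{C,(i)}_x(f) \le Q^C_x(f) + 1/(2kn^3) \le Q_x(e) + 1/(2kn^3)$, and taking the maximum over $f \in C(e)$ preserves the bound to give $Q^{(i)}_x(e) \le Q_x(e) + 1/(2kn^3)$.

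Combining the two directions sandwiches $Q^{(i)}_x(e)$ within $1/(2kn^3) \le 1/(kn^3)$ of $Q_x(e)$, which is exactly the claim. There is no real obstacle here; the only subtlety worth highlighting is the rounding-to-zero case, which is disarmed by the observation above that the energy-maximizing arc $f^\star$ of an $e \in E_i$ cannot possibly lie in the threshold regime $Q^C_x \le 2^{-i}/k$.
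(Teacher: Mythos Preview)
Your proof is correct and follows essentially the same approach as the paper: pick the energy-maximizing arc, observe it cannot be thresholded to zero because $e\in E_i$ forces it above $2^{-i}/k$, and then control the rounding error in each direction. Your presentation is in fact slightly cleaner than the paper's case split, and you correctly sharpen the error to $1/(2kn^3)$ before relaxing it to the stated bound.
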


\begin{proof}
	Suppose first that $Q_x(e)\ge Q_x^{(i)}(e)$. Recall that $e\in E_i$ and by definition of $E_i$ there exist arcs in $C(e)\cap C_i$. In this case let $f=\argmax_{f\in C(e)}Q_x^C(f)$, guaranteeing that $f\in C_i$. Therefore, by definition $Q_x^{C,(i)}(f)$ is not zero, but a rounding to the nearest integer multiple of $1/(kn^3)$. Therefore,
	\begin{align*}
		Q_x(e)-Q_x^{(i)}(e)&\le Q_x^C(f)-Q_x^{C,(i)}(f)\le\frac1{kn^3}.
	\end{align*}

	Now suppose that $Q_x(e)< Q_x^{(i)}(e)$. In this case we define $f$ to be $\argmax_{f\in C(e)}Q_x^{C,(i)}(e)$. Now if $Q_x^{C,(i)}(f)$ is zero the claim holds trivially, so we may assume that $Q_x^{C,(i)}(f)$ is instead a rounding to the nearest integer multiple of $1/(kn^3)$:
	\begin{align*}
	Q_x^{(i)}(e)-Q_x(e)&\le Q_x^{C,(i)}(f)-Q_x^C(f)\le\frac1{kn^3}. \qedhere
	\end{align*}
\end{proof}

\claimdone*

\begin{proof}%[Proof of Claim~\ref{claim:d1}]
	We can simply sum over the hyperarcs in $E_i$. Since $|E_i|\le |E|\le 2kn^2$, we have that
	\[
	\big|Q_x(E_i)-Q_x^{(i)}(E_i)\big|\le\sum_{e\in E_i}\big|Q_x(e)-Q_x^{(i)}(e)\big|\le\frac{|E_i|}{kn^3}\le\frac2n. \qedhere
	\]
\end{proof}

\claimdtwo*

\begin{proof}%[Proof of Claim~\ref{claim:d2}]
	We prove the stronger claim
$$\mathbb P\left[\forall x \in \overline{\mathbb{R}^V}:\ \widetilde Q_x^{(i)}(E_i)=\left(1\pm\frac{\epsilon}{2}\right)Q_x^{(i)}(E_i)\pm\frac{\epsilon}{10r^2\log n}\right]\ge1-\frac1n,$$
	replacing $Q(x)$ by $1/r^2$ in the allowable additive error, which depends on $x$ only through $Q_x^{(i)}$ and $E_i$.

	We first consider a single setting of $x$ (and consequently $E_i$ and $Q_x^{(i)}$). Since $\mathbb E[\widetilde Q_x^{(i)}(e)]=Q_x^{(i)}(e)$, we can apply additive-multiplicative Chernoff (Theorem~\ref{thm:am-Chernoff}) to get the desired bound. Each independent random variable ($\widetilde Q_x^{(i)}(e)$ for $e\in E_i$) is in the range $[0,2^{-i+1}/(pk)]$ by definition of $E_i$. Therefore we get
	\begin{align*}
		\mathbb P\left[\big|\widetilde Q_x^{(i)}(E_i)-Q_x^{(i)}(E_i)\big|>\frac{\epsilon}{2} Q_x^{(i)}(E_i)+\frac{\epsilon}{10r^2\log n}\right]&\le2\exp\left(-\frac{\epsilon/2\cdot\frac{\epsilon}{10r^2\log n}}{3\cdot 2^{-i+1}/(pk)}\right)\\
		&=2\exp\left(-\frac{\epsilon^2pk\cdot2^i}{120r^2\log n}\right).
	\end{align*}
	We will now use a union bound to prove that this holds simultaneously for all possible settings of $E_i$ and $Q_x^{(i)}$. Recall that by definition $\bigcup_{j=1}^i C_j$ contains exactly arcs of $C(E)$ that contribute more than $2^{-i}/k$ energy to the total energy of $Q^C(x)=1$. There are at most $k\cdot2^i$ such arcs, but since each arc appears with multiplicity at least $k$, there are at most $2^i$ \textit{distinct} arcs. The $Q_x^{C,(i)}$-value of all arcs not in $\bigcup_{j=1}^i C_j$ is zero. To select this multiset of non-zero valued arcs there are
	$$\binom{n^2}{2^i}\le n^{2\cdot2^i}=\exp\left(2\cdot2^i\log n\right)$$
	options. Furthermore, for each relevant arc, we must choose its $Q_x^{C,(i)}$-value: This is an integer multiple of $1/kn^3$ in the range $[-1/k,1/k]$ and so there are $2n^3$ options per arc---for a total of
	$${\left(2n^3\right)}^{2^i}\le\exp\left(4\cdot2^i\log n\right)$$
	options. Finally, for each relevant arc, we must choose which category among $E_1,\ldots,E_i$ it belongs to (as this may not be evident from the value of $Q_x^{C,(i)}$). This is an additional $i\le3\log n+1$ options per arc---for a total of
	$${\left(3\log n+1\right)}^{2^i}\le\exp\left(2^i\log n\right),$$
	options among all arcs.

	Ultimately, there are
	$$\exp\left(2\cdot2^i\log n + 4\cdot2^i\log n+2^i\log n\right)=\exp\left(7\cdot2^i\log n\right)$$
	possible settings of $(E_1,\ldots,E_i,Q_x^{C,(i)})$.

	Combining the above Chernoff bound for a single setting of $x$ with this union bound we get the statement of the claim:
	\begin{align*}
		&\mathbb P\left[\forall x:\ \big|Q_x^{(i)}(E_i)-\widetilde Q_x^{(i)}(E_i)\big|> \frac{\epsilon}{2} Q_x^{(i)}(E_i)+\frac{\epsilon}{10r^2\log n}\right]\\
		\le&2\exp\left(7\cdot2^i\log n\right)\cdot\exp\left(-\frac{\epsilon^2pk\cdot2^{-i}}{120r^2}\right)\\
		=&2\exp\left(2^i\cdot\left(7\log n-\frac{\epsilon^2pk}{120r^2}\right)\right)\\
		\le & \frac1n,
	\end{align*}
	since $pk=1000r^2\log n/\epsilon^2$.
\end{proof}

\claimdthree*

\begin{proof}%[Proof of Claim~\ref{claim:d3}]
	We consider the high probability event that $|\widetilde E|\le4pkn^2$. (Lemma~\ref{lem:d-size-bound}). Similarly to the proof of Claim~\ref{claim:d1} we simply sum over all edges of $E_i$. Note that if $e\in\widetilde E$,
	\[
		\big|\widetilde Q_x(e)-\widetilde Q_x^{(i)}(e)\big|\le\sum_{e\in E_i}\big|\widetilde Q_x(e)-\widetilde Q^{(i)}_x(e)\big|=\sum_{e\in E_i\cap\widetilde E}\frac1p\big|Q_x(e)-Q_x^{(i)}(e)\big|\le\frac{|\widetilde E|}{pkn^3}\le\frac{4}{n}. \qedhere
	\]
\end{proof}

\claimdfour*

\begin{proof}%[Proof of Claim~\ref{claim:d4}]
	Note that
	$$\big|Q_x(E_*)-\widetilde Q_x(E_*)\big|\le Q_x(E_*)+\widetilde{Q}_x(E_*).$$
	We bound the two terms separately:
	$$Q_x(E_*)\le|E_*|\cdot \frac{1}{kn^3}\le|E|\cdot\frac1{kn^3}\le\frac2n,$$
	and
	$$\widetilde Q_x(E_*)\le|\widetilde E|\cdot\frac{1}{pkn^3}\le\frac{4}{n},$$
	with high probability by Lemma~\ref{lem:d-size-bound}.
\end{proof}

\subsection{Proof of Theorem~\ref{thm:directed-hypergraph-sparsification}}\label{subsec:directed-hypergraph-sparsification}
\begin{proof}[Proof of Theorem~\ref{thm:directed-hypergraph-sparsification}]
	Given the results of Lemma~\ref{lem:directed-main}, we only need to decompose $G$ into directed hypergraphs with their hyperedges having nearly the same overlap.
	We will repeatedly separate and sparsify hyperarcs of the highest overlap until no hyperarcs remain. This results in an $\epsilon$-spectral sparsifier of $G$, since the quality of being an $\epsilon$-spectral sparsifier is additive.

	Consider the following simple algorithm, where \textsc{UniformSamplingSparsify} denotes the sparsification algorithm given in Lemma~\ref{lem:directed-main}:
	\begin{algorithm}[H]\label{alg:directed}
		\caption{Directed hypergraph sparsification}
		\begin{algorithmic}[1]
			\Procedure{Sparsify}{$G=(V,E)$}
			\State $\widetilde E\gets\emptyset$.
			\While{$E\neq\emptyset$}
				\State $2k\gets$ the largest overlap among hyperedges in $E$.
				\State $E'\gets$ the maximal $k$-overlapping set in $E$.\label{line:maximal}
				\State $E\gets E\setminus E'$.
				\State $\widetilde E\gets\widetilde E\cup\textsc{UniformSamplingSparsify}(V,E')$.
			\EndWhile
			\State \Return $(V,\widetilde E)$.
			\EndProcedure
		\end{algorithmic}
	\end{algorithm}
	Note first that the maximal set of a certain multiplicity (in Line~\ref{line:maximal}) is indeed unique. It follows from definition that the union of hyperarc sets of multiplicity $k$ still has multiplicity $k$. Therefore, $E'$ contains all hyperarcs of overlap at least $k$ form (the current) $E$. Furthermore, removing $E'$ from $E$ reduces the maximum overlap of any hyperarc to below $k$, so by a factor of at least $2$.
	Since the maximum overlap started out is at most $n^{r-2}$, Algorithm~\ref{alg:directed} terminates in at most $r\log n$ iterations.
	Since the size of $\widetilde E$ increased by at most $O(n^2r^2\log n/\epsilon^2)$ in each iteration, by Lemma~\ref{lem:directed-main}, this results in $|\widetilde E| = O(n^2r^3\log^2 n/\epsilon^2)$, as claimed.

	The running time is polynomial because we can compute overlaps of hyperarcs in polynomial time by Remark~\ref{rem:overlap-time-complexity}, and hence can compute the largest overlap $k$ and the maximal $k$-overlapping set in polynomial time.
\end{proof}

\fi

\ifprocs
\begin{acks}
\else
\section*{Acknowledgement}
\fi
This project has received funding from the European Research Council (ERC) under the European Union's Horizon 2020 research and innovation programme (grant agreement No 759471).
\ifprocs
\end{acks}
\fi

\ifprocs
\bibliographystyle{alphaurl}
\bibliography{999-bibliography}
\else
\bibliographystyle{alphaurl}
\bibliography{999-bibliography}
\fi

\begin{appendix}
\section{Technical lemmas}\label{app:A}
\subsection{Concentration Inequalities}
The following concentration bound is standard.
\begin{theorem}[Chernoff bound, see e.g.~\cite{DBLP:books/daglib/0021015}]\label{thm:chernoff}
	Let $X_1,\ldots,X_n$ be independent random variables in the range $[0,a]$. Let $\sum_{i=1}^n X_i=S$. Then for any $\delta\in [0,1]$ and $\mu\ge\mathbb ES$,
	$$\mathbb P[|S-\mathbb ES|\ge\delta\mu]\le2\exp\left(-\frac{\delta^2\mu}{3a}\right).$$
\end{theorem}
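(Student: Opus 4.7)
The plan is to follow the textbook derivation via Chernoff's method (moment generating functions), handling the scale factor $a$ and the ``pessimistic mean'' $\mu \geq \mathbb{E}S$ with standard tricks. First I would reduce to the unit-range case by setting $Y_i = X_i / a$, so each $Y_i \in [0,1]$ and $T = \sum Y_i = S/a$ has $\mathbb{E}T = \mathbb{E}S / a$. The claimed bound then becomes $\mathbb{P}[|T - \mathbb{E}T| \geq \delta \mu/a] \leq 2\exp(-\delta^2 (\mu/a)/3)$, so it suffices to prove the theorem in the case $a=1$, replacing $\mu$ by $\mu/a$.

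Next I would handle the two tails separately by Chernoff's method. For the upper tail, I would use the standard inequality $\mathbb{E}[e^{t Y_i}] \leq \exp((e^t - 1)\mathbb{E} Y_i)$ valid for $Y_i \in [0,1]$ (from convexity: $e^{ty} \leq 1 + y(e^t - 1)$ on $[0,1]$, then $1+x \leq e^x$). Multiplying over independent $Y_i$'s gives $\mathbb{E}[e^{tT}] \leq \exp((e^t-1)\mathbb{E}T)$. By Markov, for any $t > 0$,
\[
\mathbb{P}[T \geq (1+\delta')\mu'] \leq \exp\bigl((e^t-1)\mathbb{E}T - t(1+\delta')\mu'\bigr),
\]
where $\mu' = \mu/a$ and $\delta' = \delta$. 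Choosing $t = \ln(1+\delta')$, using $\mathbb{E}T \leq \mu'$, and then applying the elementary inequality $(1+\delta')\ln(1+\delta') - \delta' \geq \delta'^2/3$ for $\delta' \in [0,1]$ yields $\mathbb{P}[T - \mathbb{E}T \geq \delta' \mu'] \leq \exp(-\delta'^2 \mu'/3)$. For the lower tail, an analogous argument with $t < 0$ and the inequality $(1-\delta')\ln(1-\delta') + \delta' \geq \delta'^2/2 \geq \delta'^2/3$ gives $\mathbb{P}[\mathbb{E}T - T \geq \delta' \mu'] \leq \exp(-\delta'^2 \mu'/2) \leq \exp(-\delta'^2 \mu'/3)$. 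A union bound over the two tails produces the factor of $2$.

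The one subtle step is justifying the replacement of $\mathbb{E}S$ by $\mu \geq \mathbb{E}S$ in the statement. For the upper tail this is immediate: $\{S \geq \mathbb{E}S + \delta \mu\} \subseteq \{S \geq \mathbb{E}S + \delta' \mathbb{E}S\}$ where $\delta' = \delta\mu/\mathbb{E}S \geq \delta$, so the MGF bound (which is monotone in the deviation) gives exactly $\exp(-\delta^2 \mu /3)$ since the threshold parameter $(1+\delta')\mathbb{E}T$ in the Markov step equals $\mathbb{E}T + \delta \mu'$. For the lower tail, if $\delta \mu > \mathbb{E}S$ the event is empty, and otherwise the same rescaling works with $\delta' = \delta \mu/\mathbb{E}S \leq 1$, so both cases are covered. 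I expect no serious obstacle: the only care needed is to apply Chernoff with the deviation parameter $\delta\mu$ (not $\delta \mathbb{E}S$) when optimizing $t$, so that the resulting exponent is $\delta^2 \mu/3$ rather than $\delta^2 \mathbb{E}S/3$; this is mechanical once the substitution $\delta' = \delta \mu / \mathbb{E}S$ is made.
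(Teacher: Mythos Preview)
The paper does not prove this statement at all: it is stated in Appendix~A as a standard concentration inequality with a textbook citation, and no proof is given. Your plan is the standard moment-generating-function derivation and is correct; in particular, your handling of the ``pessimistic mean'' $\mu \geq \mathbb{E}S$ by using $\mathbb{E}T \leq \mu'$ inside the exponent after choosing $t = \ln(1+\delta)$ is the clean way to get the exponent $\delta^2\mu/(3a)$ rather than $\delta^2\mathbb{E}S/(3a)$.

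One minor point of care: in your final paragraph you reuse the symbol $\delta'$ with a different meaning ($\delta' = \delta\mu/\mathbb{E}S$) than in the preceding paragraph (where you set $\delta' = \delta$). The second argument via rescaling is fine for the lower tail, but for the upper tail this rescaled $\delta'$ can exceed $1$, so the inequality $(1+\delta')\ln(1+\delta') - \delta' \geq \delta'^2/3$ is no longer available in that range. You don't actually need this alternative route, since your first argument (choose $t=\ln(1+\delta)$ and use $\mathbb{E}T \leq \mu'$ directly) already handles the upper tail cleanly for all $\mu \geq \mathbb{E}S$; I would simply drop the redundant $\delta' = \delta\mu/\mathbb{E}S$ discussion for the upper tail to avoid confusion.
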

The following slight variation, allowing for both multiplicative and additive error, will be the most convenient for our purposes throughout the paper.
\begin{theorem}[Additive-multiplicative Chernoff bounds \cite{Badanidiyuru2013}]\label{thm:am-Chernoff}
	Let $X_1,\ldots X_n$ be independent random variables in the range $[0,a]$.  Let $\sum_{i=1}^n X_i=S$. Then for all $\delta\in[0,1]$ and $\alpha>0$,
	$$\mathbb P[|S-\mathbb ES|\ge\delta\mathbb ES+\alpha]\le2\exp\left(-\frac{\delta\alpha}{3a}\right).$$
\end{theorem}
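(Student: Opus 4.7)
My plan is to reduce the additive-multiplicative form to the purely multiplicative form from Theorem~\ref{thm:chernoff}. Let $\mu = \mathbb{E}S$ and rewrite the deviation as $\delta \mu + \alpha = \delta' \mu$ where $\delta' \eqdef \delta + \alpha/\mu$. Then the event $\{|S-\mu|\ge \delta\mu+\alpha\}$ is identical to $\{|S-\mu|\ge \delta'\mu\}$, and the proof splits naturally on whether $\delta'\le 1$ or $\delta'>1$.

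In the first case ($\delta'\le 1$), Theorem~\ref{thm:chernoff} directly gives the bound $2\exp(-\delta'^2\mu/(3a))$. Expanding the exponent,
\[
\delta'^2\mu = \Big(\delta+\tfrac{\alpha}{\mu}\Big)^{\!2}\mu = \delta^2\mu+2\delta\alpha+\tfrac{\alpha^2}{\mu}\ \ge\ 2\delta\alpha\ \ge\ \delta\alpha,
\]
which yields the desired $2\exp(-\delta\alpha/(3a))$.

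In the second case ($\delta'>1$), I handle the two tails separately. The lower tail $\Pr[\mu-S\ge \delta'\mu]$ is zero, since $S\ge 0$ and $\delta'\mu>\mu$ would force $S<0$. For the upper tail, I invoke the standard large-deviation form of Chernoff, namely $\Pr[S\ge(1+\delta')\mu]\le \exp(-\delta'\mu/(3a))$ for $\delta'\ge 1$ (a companion to Theorem~\ref{thm:chernoff}, provable by the same Markov-on-the-MGF argument). Since $\delta\le 1$ implies $\alpha\ge \delta\alpha$, we have $\delta'\mu=\delta\mu+\alpha\ge \delta\alpha$, and the upper tail is bounded by $\exp(-\delta\alpha/(3a))$. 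Combining both tails and inflating by the harmless factor of $2$ yields the stated inequality in both cases.

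The only mild obstacle is that Theorem~\ref{thm:chernoff} is stated only for $\delta\in[0,1]$, so the proof needs to either quote the large-deviation companion bound or, alternatively, absorb Case 2 by truncating $\delta'$ at $1$ and using monotonicity of the deviation event in $\delta'$ together with a direct MGF computation. Either route is standard, and the key insight --- that $\delta'^2 \mu \ge 2\delta\alpha$ --- is what makes the mixed additive/multiplicative form fall out of the pure multiplicative one.
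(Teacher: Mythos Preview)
The paper does not prove this theorem; it is stated in the appendix and attributed to \cite{Badanidiyuru2013} without proof. Your derivation from the purely multiplicative Chernoff bound (Theorem~\ref{thm:chernoff}) is correct and is the natural way to recover this result: the key observation that $\delta'^2\mu \ge 2\delta\alpha$ in Case~1 and $\delta'\mu = \delta\mu + \alpha \ge \alpha \ge \delta\alpha$ in Case~2 is exactly what is needed. The only loose end you yourself flag --- that Case~2 appeals to the large-deviation companion $\Pr[S\ge(1+\delta')\mu]\le\exp(-\delta'\mu/(3a))$ for $\delta'\ge 1$, which is not stated in the paper --- is harmless, as it follows from the same MGF argument underlying Theorem~\ref{thm:chernoff}. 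You should also note the degenerate case $\mu=0$ separately (then $S=0$ a.s.\ and the bound is trivial), since otherwise $\delta'=\delta+\alpha/\mu$ is undefined.
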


\subsection{Proof of Hypergraph Cheeger's Inequality}
\begin{proofof}{Theorem~\ref{thm:hypergraph-cheeger}}
	Recall that $\sum_{v\in V}x_vd(v)=0$. Suppose for contradiction that there exists a vector $x \in \mathbb{R}^V$ such that $Q(x)<\frac{r\Phi^2}{32}\sum_{v\in V}x_v^2d(v)$. Let $\widetilde x\in\mathbb R^V$ be such $x$ shifted such that $\sum_{v\in V}x_vd_x(v)=0$,	where $d_x(v)$ denotes the degree of $v$ in $G_x=(V,E_x)$.
	Then, we have
	\begin{align*}
	& Q(x)
	<\frac{r\Phi^2}{32}\sum_{v\in V}x_v^2d(v)
	\le\frac{r\Phi^2}{32}\sum_{v\in V}\widetilde x_v^2d(v)
	=\frac{r\Phi^2}{32}\sum_{e\in E}\sum_{v\in e}\widetilde x_v^2
	\\
	&\le\frac{r^2\Phi^2}{32}\sum_{e\in E}\max_{v\in e}\widetilde x_v^2
	\le \frac{r^2\Phi^2}{32}\sum_{(a,b)\in E_x}\left(\widetilde x_a^2+\widetilde x_b^2\right)
	=\frac{r^2\Phi^2}{32}\sum_{v\in V}\widetilde x_v^2d_x(v),
	\end{align*}

	The second inequality follows since $x$ is centered, that is $\sum_{v\in V}x_vd(v)=0$.

	This means, by Cheeger's inequality for ordinary graphs~\cite{Alon1985,Alon1986}, that there exists a vertex set $S$ of expansion $\frac{r\Phi}4$ in $G_x$.
	Moreover, $S$ can be chosen to be a \emph{sweep cut} with respect to $x$ (regardless of the degree vector) in the sense that $S$ is of the form $\{v \in V \mid x_v \leq \tau\}$ or $\{v \in V \mid x_v \geq \tau\}$ for some $\tau \in \mathbb{R}$.
	Let $S\subseteq V$ be the smaller side of the cut (in volume).
	Let $\eta := |E_x(S,V \setminus S)|$ and $\zeta := |E(S)|$.
	Then, we have
	$$\eta\le\frac{r\Phi}4\sum_{v\in S}d_x(v)=\frac{r\Phi}4(\eta+2\zeta).$$
	Since $\Phi\le\frac2r$, $\frac{r\Phi}{4}\le\frac12$ and so $\zeta\ge\frac{\eta}{r\Phi}$.
	Since $S$ is a sweep cut with respect to $x$, each edge of $G_x$ crossing the cut $(S,V\setminus S)$ corresponds to a distinct hyperedge of $G$ also crossing the cut, and each edge of $G_x$ fully inside $S$ translates to a hyperedge of $G$ fully inside $S$.
	Therefore, the number of edges crossing the cut $(S,V \setminus S)$ in $G$ is still $\eta$ and $\sum_{v\in S}d(v)>r\zeta\ge\frac\eta\Phi$.
	Similarly, $\sum_{v\in V \setminus S}d(v)>\frac\eta\Phi$. Therefore, the expansion of the cut $(S,V \setminus S)$ in $G$ is less than $\Phi$, which is a contradiction.
\end{proofof}

\end{appendix}

\end{document}